\newcommand{\ug}{\textsc{UniqueGames}}
\newcommand{\Ex}{\E}
\newcommand{\defeq}{\overset{\rm def}{=}}
\newcommand{\smallsetexpansion}{{\sc SmallSetExpansion}}
\newcommand{\uniquegames}{{\sc Unique Games}}
\renewcommand{\ug}{{\sc Unique Games}}
\newcommand{\mkcsp}{{\sf Max}\mbox{-}k\mbox{-}{\sf CSP}}
\newcommand{\mcspt}{\ensuremath{{\sf Max}\mbox{-}2\mbox{-}{\sf CSP}}}
\newcommand{\cH}{\mathcal{H}}
\renewcommand{\cS}{\mathcal{S}}
\newcommand{\cN}{\mathcal{N}}
\newcommand{\cE}{{\mathcal{E}}}
\newcommand{\wh}[1]{\widehat{#1}}
\newcommand{\non}{\nonumber}
\newcommand{\dksh}{\ensuremath{{\sf D}k{\sf SH}}}
\newcommand{\ldksh}{\ensuremath{{\sf At}\mbox{-}{\sf Most}\mbox{-}{\sf D}k{\sf SH}}}
\newcommand{\dks}{\ensuremath{{\sf D}k{\sf S}}}
\newcommand{\cM}{\mathcal{M}}
\newcommand{\corr}[1]{\underset{#1}{\sim}}
\newcommand{\bgamma}{\bar{\gamma}}
\newcommand{\bomega}{\boldsymbol{\omega}}
\newcommand{\uw}{{\sf uw}}
\newcommand{\Inf}[2]{{\sf Inf}_{#1}\left[#2\right]}
\newcommand{\mcsp}{{\sc Max\mbox{-}CSP}}
\title{A Characterization of Approximability for Biased CSPs}
\author{ Suprovat Ghoshal \\ University of Michigan \\ suprovat@umich.edu
	\and Euiwoong Lee \\ University of Michigan \\ euiwoong@umich.edu}
\begin{document}

\begin{titlepage}
	\maketitle
	\begin{abstract}
		A $\mu$-biased \mcsp~instance with predicate $\psi:\{0,1\}^r \to \{0,1\}$ is an instance of Constraint Satisfaction Problem (CSP) where the objective is to find a labeling of relative weight at most $\mu$ which satisfies the maximum fraction of constraints. Biased CSPs are versatile and express several well studied problems such as {\sc Densest}-$k$-{\sc Sub(Hyper)graph} and {\sc SmallSetExpansion}.\\ 
		
		In this work, we explore the role played by the bias parameter $\mu$ on the approximability of biased CSPs. We show that the approximability of such CSPs can be characterized (up to loss of factors of arity $r$) using the bias-approximation curve of {\sc Densest}-$k$-{\sc SubHypergraph} (${\sf D}k{\sf SH}$). In particular, this gives a tight characterization of predicates which admit approximation guarantees that are independent of the bias parameter $\mu$. \\
		
		Motivated by the above, we give new approximation and hardness results for $\dksh$. In particular, assuming the {\em Small Set Expansion Hypothesis} (SSEH), we show that $\dksh$ with arity $r$ and $k = \mu n$ is \NP-hard to approximate to a factor of $\Omega(r^3\mu^{r-1}\log(1/\mu))$ for every $r \geq 2$ and $\mu < 2^{-r}$. We also give a $O(\mu^{r-1}\log(1/\mu))$-approximation algorithm for the same setting. Our upper and lower bounds are tight up to constant factors, when the arity $r$ is a constant, and in particular, imply the first tight approximation bounds for the {\sc Densest}-$k$-{\sc Subgraph} problem in the linear bias regime. Furthermore, using the above characterization, our results also imply matching algorithms and hardness for every biased CSP of constant arity.
	\end{abstract}
\end{titlepage}

\tableofcontents

\newpage

\section{Introduction}

Constraint Satisfaction Problems (CSPs) are a class of extensively studied combinatorial optimization problems in theoretical computer science. Typically, an instantiation of an $r$-CSP $\Psi(V,E,[R],\{\Pi_e\}_{e \in E})$ is characterized by an underlying $r$-ary hypergraph $\Psi = (V,E)$ with label set $[R]$, and a constraint $\Pi_e \subset [R]^r$ for every edge $e \in E$. The objective is to find a labeling $\sigma:V \to [R]$ that satisfies the maximum fraction of constraints -- here, the labeling $\sigma$ {\em satisfies} a hyperedge $e= (v_1,\ldots,v_r)$ if $(\sigma(v_1),\sigma(v_2),\ldots,\sigma(v_r)) \in \Pi_e$. The expressive power of CSPs is evident from the long list of fundamental and well-studied combinatorial optimization problems that can be expressed as a CSP: {\sc Max-Cut}~\cite{GW94,OW94}, {\sc Coloring}~\cite{Wigderson82,KMS98}, {\sc Unique Games}~\cite{Khot02a,KKMO07} are all examples of CSPs, each of which has been studied extensively by themselves (see~\cite{MM17CSP} for a comprehensive overview). The tight interplay between CSPs and Probabilistically Checkable Proofs (PCPs) has led to a line of works spanning decades resulting in substantial progress in the theory of approximation of CSPs, eventually leading to landmark results such as tight upper and lower bounds for every CSP assuming the Unique Games Conjecture~\cite{Rag08}.

A well-studied variant of CSPs are CSPs with cardinality constraints i.e., CSPs where there are {\em global} constraints on the relative weight of vertices that can be assigned a particular label. Perhaps one of the simplest instantiations of a CSP with a cardinality constraint is the {\sc Densest}-$k$-{\sc Subgraph} (D$k$S) problem. Here, given an undirected graph $G = (V,E)$ and a parameter $k \in \mathbbm{N}$, the objective is to find a subset of $k$-vertices such that the number edges induced by the subset is maximized. It is easy to see that this is an instantiation of a Boolean CSP of arity $2$ with the underlying graph as $G$, the edge constraints being {\sf AND}, and the global cardinality constraint is that exactly $k$ vertices of the CSP can be assigned the label $1$. Furthermore, this is also a relaxation of the $k$-{\sc Clique} problem, and consequently, it is not surprising that there has been many works which study its approximability~\cite{FS97,BCCFV10,BCGVZ12,Man17}. 

While the unconstrained version of this problem  -- i.e., {\sc Max-AND} with no negations -- is trivially polynomial time solvable, the additional simple cardinality constraint makes the problem significantly harder. In particular, Raghavendra and Steurer~\cite{RS10} showed that assuming the {\em Small Set Expansion Hypothesis} (SSEH), \dks~is \NP-hard to approximate to any constant factor. Furthermore, Manurangsi~\cite{Man17} showed assuming the {\em Exponential Time Hypothesis} (ETH), there are no polynomial time algorithms which can approximate \dks~up to an almost polynomial ratio. A similar phenomena was also observed by Austrin and Stankovic~\cite{AS19} for the setting of cardinality constrained {\sc Max-Cut} as well. Furthermore, the nature of how the approximability of the CSP is affected is also predicate dependent. For instance, while in the case of D$k$S, the cardinality constraint makes it constant factor inapproximable for any constant, in the case of {\sc Max-Cut}, there exists a $0.858$-approximation factor for any $k = \mu n$ with $\mu \in (0,1)$. Hence we are motivated to ask the following question:

\begin{itemize}
	\item[$\triangleright$] {\it Can we characterize predicates which admit approximation factors which are independent of the bias parameter $\mu$?} 
\end{itemize}

In this work, we focus on understanding the above phenomena at a more fine grained level. In particular, we aim to explicitly quantify the role of the bias parameter $\mu:= k/n$ in the approximability of a Boolean CSP with a cardinality constraint. Formally, for any $\mu \in (0,1)$, the $\mu$-biased instance of a Boolean CSP $\Psi(V,E)$ with $r$-ary predicate $\psi:\{0,1\}^r \to \{0,1\}$ (denoted by $\Psi_{(\mu)}{(V,E)}$) is an instantiation where the objective is to find a labeling $\sigma:V \to \{0,1\}$ of relative weight at most $\mu$ which satisfies the maximum fraction of edge constraints in $\Psi$. Furthermore, let $\alpha_{\leq \mu}(\psi)$ -- referred to as the {\em bias-approximation curve} -- denote the optimal approximation factor efficiently\footnote{Here, we say that a $\alpha$-factor approximation is efficiently achievable if the problem of finding an $\alpha$-approximate solution to the biased \mcsp~problem is in \P.} achievable for $\mu$-biased vertex weighted instances on predicate $\psi$. Given this setup, it is natural to ask the following:

\begin{itemize}
	\item[$\triangleright$] {\it Can we give matching upper and lower bounds for $\alpha_{\leq \mu}(\psi)$ for every constant bias $\mu$ and predicate $\psi$?} 
\end{itemize}

The above, despite being a natural question, has only been studied for very specific instantiations of $\psi$ such as {\sc Max}-$k$-{\sc Vertex Cover}~\cite{Man17}, {\sc Max-Cut}~\cite{AS19}. Furthermore, tight lower bounds are known for even fewer settings such as the almost satisfiable regimes of {\sc SmallSetExpansion}, {\sc Max-Bisection}, {\sc BalancedSeparator}~\cite{RST12}, and as such, a finer understanding of $\alpha_{\leq \mu}(\cdot)$ is absent even for natural problems such as \dks.

\subsection{Our Main Results}

In this work, we make substantial progress towards answering the above questions. In order to formally state our results, we need to introduce some additional notation. Given a predicate $\psi:\{0,1\}^r \to \{0,1\}$, let $\psi^{-1}(1)$ be the set of accepting strings for predicate $\psi$.  
Let $\cM_\psi$ denote the set of {\em minimal elements} of $\psi^{-1}(1)$ under the ordering imposed by the containment relationship\footnote{Here the containment relationship refers to the containment relationship induced by interpreting the Boolean strings as indicators of subsets.}. We will think of instances of {\sc Max-CSPs} as vertex weighted, and an instance of $\mu$-biased CSP with predicate $\psi$ is one where the objective is to find a global labeling of the vertices with {\em relative weight}\footnote{Given a labeling $\sigma:V \to \{0,1\}$, its relative weight with respect to vertex weight function $w:V \to \{0,1\}$ is defined as $w(\sigma):= \sum_{i:\sigma(i) = 1} w(i)/w(V)$, where $w(V)$ denotes the total vertex weight.} at most $\mu$ which satisfies the maximum fraction of constraints. 
Furthermore, we say that a predicate $\psi$ is {\em bias dependent} if $\inf_{\mu \in (0,1/2)} \alpha_{\leq \mu}(\psi) = 0$. For any $i \geq 2$, we use $\dksh_i$ to denote the {\sc Densest}-$k$-{\sc SubHypergraph} problem on hypergraphs of arity $i$. Finally, for any $i \in \mathbbm{Z}_{\geq 0}$, we use $\alpha^{\uw}_{(\mu)}(\dksh_i)$ to denote the bias approximation curve of uniformly weighted $\dksh_i$ instance. 

Our first result is the following theorem which completely characterizes predicates which are bias dependent. 

\begin{restatable}{rethm}{predcond}				\label{thm:pred-cond}
	 The following holds assuming SSEH. A predicate $\psi:\{0,1\}^r \to \{0,1\}$ is bias independent if and only if $\cM_{\psi}  \subseteq \cS_{\leq 1}$, where $\cS_{\leq 1}$ is the set of $r$-length strings of Hamming weight at most $1$.
\end{restatable}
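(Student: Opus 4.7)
The statement is an iff, and I would prove the two directions by very different means: the forward implication (bias-independent $\Rightarrow \cM_\psi \subseteq \cS_{\leq 1}$) by a hardness reduction invoking SSEH, and the reverse implication by an algorithm with a $\mu$-independent approximation guarantee.

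\textbf{Hardness direction.} I prove the contrapositive: if some $z \in \cM_\psi$ has $|\mathrm{supp}(z)| \geq 2$, then $\psi$ is bias-dependent. The reduction starts from biased \dks, which is inapproximable to any constant factor under SSEH by Raghavendra--Steurer. Fix any two positions $i_1, i_2 \in \mathrm{supp}(z)$. Given a \dks\ instance $(G=(V,E),k)$, I build a vertex-weighted $\psi$-CSP on $V \cup \{u_1, u_0\}$: for each edge $(u,v) \in E$, create a $\psi$-hyperedge placing $u$ at coordinate $i_1$, $v$ at coordinate $i_2$, the ``always-$1$'' dummy $u_1$ at every remaining coordinate of $\mathrm{supp}(z)$, and the ``always-$0$'' dummy $u_0$ at every coordinate in $[r] \setminus \mathrm{supp}(z)$. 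Assign weights $w(v) = 1$ on $V$, $w(u_1) = \epsilon \ll 1$, and $w(u_0) = M$, with $M$ tuned so that the bias cap renders $u_0 = 1$ infeasible while the $\psi$-CSP bias $\mu$ corresponds exactly to the \dks\ budget $k/|V|$ (which is possible whenever $\mu$ is a sufficiently small constant relative to $k/|V|$). Then every bias-feasible labeling has $u_0 = 0$ and, up to $O(\epsilon)$ loss in value, $u_1 = 1$; by the minimality of $z$ each $\psi$-hyperedge is then satisfied iff $\sigma(u) = \sigma(v) = 1$, exactly matching the \dks\ objective. Sending $\mu \to 0$ transports \dks\ inapproximability to $\psi$-CSP and yields $\inf_\mu \alpha_{\leq \mu}(\psi) = 0$.

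\textbf{Algorithmic direction.} Assume $\cM_\psi \subseteq \cS_{\leq 1}$. If $0^r \in \psi^{-1}(1)$, the all-zero labeling satisfies every constraint and gives ratio $1$ for every $\mu$. Otherwise $\cM_\psi = \{e_i : i \in S\}$ for some nonempty $S \subseteq [r]$: every $\psi$-accepting edge must carry a $1$ at some $S$-coordinate, and conversely setting a single $S$-coordinate vertex of an edge to $1$ (with all other coordinates $0$) produces a weight-$1$ accepting assignment. This sandwiches $\mathrm{OPT}_\psi$ from above by the optimum of a weighted Max-Coverage instance whose hyperedges are the $|S|$-sets of $S$-coordinate vertices, and the standard greedy / LP rounding gives a $\mu$-independent $(1-1/e)$-approximation for Max-Coverage. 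I would then complete the returned vertex set to a full $\psi$-satisfying labeling by choosing the non-$S$ coordinate labels; for constant arity $r$, enumerating or randomly rounding over the at most $2^{r-|S|}$ possible completions costs only an $r$-dependent multiplicative factor, leaving the final approximation $\mu$-independent.

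\textbf{Main obstacle.} The delicate step is the algorithmic direction when $\psi$ is non-monotone: the Max-Coverage relaxation only guarantees that some $S$-coordinate of a ``covered'' edge is set to $1$, but translating this into a genuine $\psi$-accepting completion requires control over the remaining coordinates, which are shared across edges with potentially conflicting demands. Handling this cleanly---by enumeration, randomised rounding, or an SDP-based completion---is the heart of the argument, and the arity-dependent slack it introduces is precisely the ``loss of factors of arity $r$'' alluded to in the abstract.
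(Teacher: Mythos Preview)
Your hardness direction is correct and actually more direct than the paper's. You reduce straight to \dks\ (arity~$2$) by filling the remaining $\mathrm{supp}(z)$-positions with negligible-weight always-$1$ dummies and the positions outside $\mathrm{supp}(z)$ with a heavy always-$0$ dummy, then invoke Raghavendra--Steurer. The paper instead reduces to $\dksh_{\|\beta\|_0}$ using only always-$0$ dummies (Lemma~\ref{lem:pred-hard}) and then appeals to its own Theorem~\ref{thm:dksh-hard} for the hardness of higher-arity $\dksh$. For the purely qualitative statement of Theorem~\ref{thm:pred-cond}, your route is more economical. The soundness of your reduction hinges on exactly the right point: once $u_0$ is forced to $0$, any accepting pattern must have support contained in $\mathrm{supp}(z)$, and by minimality of $z$ the only such accepting string is $z$ itself; hence if any always-$1$ dummy is set to $0$ no constraint is satisfied, and if all are $1$ the constraint collapses to $\sigma(u)=\sigma(v)=1$. (One cosmetic fix: you need one dummy per position, since the paper's hyperedges are ordered tuples of distinct vertices.)

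The algorithmic direction has a real gap at the ``completion'' step. Your Max-Coverage upper bound $\mathrm{OPT}_\psi \le \mathrm{OPT}_{\mathrm{Cov}}$ is fine, but the plan to ``choose the non-$S$ coordinate labels'' and ``enumerate over $2^{r-|S|}$ completions'' misreads the structure: there is no per-edge freedom---every coordinate of every edge is a vertex with a single global label, and the indicator $\mathbbm{1}_T$ may fail to satisfy a covered edge not because non-$S$ coordinates are unset but because \emph{too many} coordinates are $1$ and the resulting string falls outside $\psi^{-1}(1)$. The fix (and what the paper does, both in Claim~\ref{cl:min-val} and in the rounding step of Lemma~\ref{lem:at-most}) is to \emph{sub-sample}: independently keep each $v\in T$ with probability $1/2$. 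For a covered edge with witness $i\in S$, $v_i\in T$, the event that the sub-sampled assignment equals the minimal pattern $e_i$ has probability at least $2^{-r}$, yielding an $\Omega_r(1)\cdot(1-1/e)$ approximation with weight at most $\mu$. Your ``randomised rounding'' remark in the obstacle paragraph gestures in this direction, but the concrete mechanism you wrote down would not work.
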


As a useful exercise, we instantiate the above theorem for $\psi := {\sf NEQ}$ (i.e, Biased {\sc Max-Cut}) and $\psi := {\sf AND}$ (i.e., \dks). Note that the ${\sf NEQ}^{-1}(1) = \{(0,1),(1,0)\} \subset \cS_{\leq 1}$, where as ${\sf AND}^{-1}(1) = \{(1,1)\} \not\subset \cS_{\leq 1}$, which using Theorem \ref{thm:pred-cond} implies that the former admits a bias independent approximation factor, whereas the latter would be bias dependent. Our next theorem gives an unconditional tight characterization (up to factors of $r$) of the bias-approximation curve of a predicate in terms of $\dksh$.

\begin{restatable}{rethm}{predapprx}				\label{thm:pred-approx}
	For every integer $r \geq 2$ and for any $\mu \in (0,1/2)$, the following holds for any predicate $\psi:\{0,1\}^r \to \{0,1\}$:
	\[
	\alpha_{\leq \mu}(\psi) \asymp_r \min_{\beta \in \cM_\psi} \alpha^\uw_{(\mu)}\left({\sf D}k{\sf SH}_{{\|\beta\|_0}}\right),
	\]
	where $\|\cdot\|_0$ denotes the Hamming weight of a string, and $\asymp_r$ used to denote that the two sides are equal up to multiplicative factors depending on $r$.
\end{restatable}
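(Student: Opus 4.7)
The plan is to prove the theorem via two matching reductions between biased $\psi$-CSP and uniformly-weighted $\dksh$, one for each direction of $\asymp_r$, each losing only a multiplicative factor depending on $r$. The key structural fact used throughout is that $\cM_\psi$ is an antichain of size at most $2^r$ under coordinate-wise $\leq$: by minimality, no two distinct $\beta,\beta' \in \cM_\psi$ satisfy $\mathrm{supp}(\beta') \subseteq \mathrm{supp}(\beta)$.

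For the \emph{algorithm direction}, given a biased $\psi$-CSP instance $\Psi$, I construct for each $\beta \in \cM_\psi$ a $\dksh_{\|\beta\|_0}$ instance $H_\beta$ on the same vertex-weighted vertex set by projecting each $\psi$-edge $(v_1,\ldots,v_r)$ to the hyperedge $(v_{p_1},\ldots,v_{p_{\|\beta\|_0}})$ indexed by $\mathrm{supp}(\beta)$. The main observation is that every $\psi$-edge $e$ satisfied by a labeling $\sigma$ has $\sigma|_e$ dominating some $\beta(e) \in \cM_\psi$, and hence corresponds to a satisfied hyperedge of $H_{\beta(e)}$. By pigeonhole there exists $\beta^* \in \cM_\psi$ for which the optimum $\sigma^*$ of $\Psi$ satisfies at least $\mathrm{OPT}(\Psi)/|\cM_\psi| \geq \mathrm{OPT}(\Psi)/2^r$ hyperedges of $H_{\beta^*}$. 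Running the best $\dksh_{\|\beta\|_0}$-approximation on each $H_\beta$ and returning the best labeling found therefore yields an approximation factor at least $(1/2^r) \min_{\beta} \alpha^{\uw}_{(\mu)}(\dksh_{\|\beta\|_0})$, after absorbing a routine $O_r(1)$-factor loss for reducing vertex-weighted $\dksh$ to its uniformly-weighted variant.

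For the \emph{hardness direction}, I reduce uniformly-weighted $\dksh_i$ to biased $\psi$-CSP, for an arbitrary $\beta \in \cM_\psi$ of Hamming weight $i$. Given a $\dksh_i$ instance $H$ on $n$ vertices with budget $\mu n$, the $\psi$-CSP instance lives on $V(H) \cup D$, where $D$ consists of $r - i$ shared dummy vertices of large weight $W = \Theta_r(n)$ (one per coordinate outside $\mathrm{supp}(\beta)$); each hyperedge of $H$ becomes a $\psi$-edge placing its vertices at coordinates of $\mathrm{supp}(\beta)$ and the dummies elsewhere. The antichain property then yields the crucial identity
\[
\psi\bigl(x_{\mathrm{supp}(\beta)},\,\mathbf{0}_{\overline{\mathrm{supp}(\beta)}}\bigr) = 1 \iff x_{\mathrm{supp}(\beta)} = \mathbf{1},
\]
since any accepting extension with all-zero dummies must dominate some $\beta' \in \cM_\psi$ whose support lies in $\mathrm{supp}(\beta)$, which by antichain forces $\beta' = \beta$. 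Thus, whenever all dummies are labeled $0$, the $\psi$-CSP value coincides with the $\dksh$ value of $\sigma$ restricted to $V(H)$; applying the reduction for the $\beta$ minimising $\alpha^{\uw}_{(\mu)}(\dksh_{\|\beta\|_0})$ then transfers the desired hardness.

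The main obstacle is the \emph{calibration} step in the hardness reduction: I must choose the dummy weights (and, if necessary, low-weight inert padding) so that (i) every feasible labeling is forced to leave all dummies at $0$, and (ii) the effective bias restricted to $V(H)$ is $\Theta_r(\mu)$. The difficulty is that $\psi$ is typically monotone at the dummy coordinates --- already $\psi(x_1,x_2,x_3) = (x_1 \wedge x_2) \vee x_3$ with $\beta = (1,1,0)$ satisfies $\psi(x_1,x_2,1) = 1$, so flipping the dummy on makes every $\psi$-edge trivially satisfied --- and hence no gadget built from $\psi$-constraints can penalise a dummy being set to $1$. The only available lever is the weight/bias budget itself, which couples (i) and (ii); taking $W = \Theta_r(n)$ to enforce (i) inevitably rescales the effective bias on $V(H)$ by an $O_r(1)$ factor. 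This rescaling is absorbed into $\asymp_r$ because $\alpha^{\uw}_{(\mu')}(\dksh_i)$ varies by at most an $O_r(1)$ multiplicative factor under $\mu' = \Theta_r(\mu)$ in the regimes of interest, as witnessed by the sharp $\Theta(\mu^{i-1}\log(1/\mu))$ behavior established later in the paper.
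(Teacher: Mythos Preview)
Your algorithm direction has a genuine gap. You correctly argue that the optimal labeling $\sigma^*$ of $\Psi$ induces at least ${\sf Val}_{\leq\mu}(\Psi)/2^r$ hyperedges in $H_{\beta^*}$ for some $\beta^*\in\cM_\psi$, but you do not explain how a good subset $S$ for $H_{\beta^*}$ is converted back into a good $\psi$-labeling. Setting $\sigma=\mathbbm{1}_S$ only guarantees $\sigma(e)|_{{\rm supp}(\beta^*)}=\mathbf{1}$ for each hyperedge of $H_{\beta^*}$ induced by $S$; on the coordinates outside ${\rm supp}(\beta^*)$ the label is uncontrolled, so one only knows $\sigma(e)\succeq\beta^*$, and since $\psi$ need not be upward-closed this does not imply $\psi(\sigma(e))=1$. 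For instance, with $\psi^{-1}(1)=\{(1,1,0),(0,0,1)\}$ and $\beta^*=(1,1,0)$, any edge $e$ whose third vertex also lies in $S$ gets $\sigma(e)=(1,1,1)\notin\psi^{-1}(1)$. The paper closes this gap (Lemma~\ref{lem:at-most}, Claim~\ref{cl:alg-1b}) by randomly resetting each $1$ of $\sigma$ to a uniform $\{0,1\}$ bit independently: any edge with $\sigma(e)\succeq\beta^*$ then lands exactly on $\beta^*$ with probability at least $2^{-r}$, recovering the guarantee up to another $2^{-r}$ factor while only decreasing the relative weight.

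Two smaller points. In the hardness direction you justify absorbing the $\Theta_r(1)$ bias rescaling by citing the $\Theta(\mu^{i-1}\log(1/\mu))$ curve ``established later in the paper'', but that curve is conditional on SSEH whereas Theorem~\ref{thm:pred-approx} is stated unconditionally; the paper instead uses the elementary unconditional Lemma~\ref{lem:comp}, which shows $\alpha^\uw_{(\ell\mu)}(\dksh_r)\asymp_{\ell,r}\alpha^\uw_{(\mu)}(\dksh_r)$ by direct subsampling. And the weighted-to-unweighted step you call ``routine'' is precisely where the paper is forced to introduce the multiplicative bias slack $\mu(1+\eta)$ (Remark~\ref{rem:smooth}, Theorem~\ref{thm:dksh-weight}); without that slack the two settings are provably inequivalent up to $O_r(1)$ factors, as Appendix~\ref{sec:example} demonstrates.
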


{\bf Hardness and Approximation for \dksh}. Theorem \ref{thm:pred-approx} directly implies that we can reduce the task of understanding the bias-approximation curve of general boolean predicates to that of \dksh, up to loss of multiplicative factors dependent on $r$. We introduce an additional notation: let $\delta^{(r)}_{\mu|V|}(H)$ denotes the optimal value of $\dksh_r$ with bias $\mu$ on hypergraph $H$ of arity $r$.

Our first result here is the following theorem which gives the first bias dependent hardness for \dksh.

\begin{restatable}{rethm}{dkshhard}				\label{thm:dksh-hard}
	The following holds assuming SSEH for every $r \geq 2$ and $\mu < 2^{-r}$. Given a hypergraph $H = (V,E)$ of arity $r$, it is $\NP$-hard to distinguish between the following two cases:
	\[
	\textnormal{\bf YES Case}: \delta^{(r)}_{\mu|V|}(H) \gtrsim \frac{\mu}{r^3\log(1/\mu)}
	 \ \ \ \ \ \  \textnormal{ and } \ \ \ \ \ \ 
	 \textnormal{\bf NO Case}: \delta^{(r)}_{\mu|V|}(H) \lesssim  \mu^r.
	\]
\end{restatable}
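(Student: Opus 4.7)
My plan is to reduce from the Small Set Expansion Hypothesis to $\dksh_r$ via a random-walk construction. I would start with an SSE instance $G = (V,E)$ on a regular graph with parameters $(\eta, \delta \approx \mu)$ chosen so that in the YES case $G$ has a set $S \subseteq V$ of relative size $\mu$ with expansion $\Phi(S) \leq \eta$, while in the NO case every $T \subseteq V$ of relative size at most $\mu$ has expansion $\Phi(T) \geq 1 - \eta$. From $G$, construct the hypergraph $H$ on vertex set $V$ whose hyperedges are the $r$-tuples $(v_1,\ldots,v_r)$ obtained from length-$(r-1)$ walks on $G$, so that uniformly sampling a hyperedge of $H$ is equivalent to sampling a stationary random walk of length $r-1$ on $G$.

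For the YES case, the density of hyperedges fully inside $S$ equals the probability that a stationary length-$(r-1)$ walk stays in $S$, which is at least $\mu(1-\eta)^{r-1}$ by the Markov property together with $\Phi(S) \leq \eta$. Since $\mu < 2^{-r}$ gives $r\mu \ll 1$, an appropriate choice of $\eta$ (polynomially small in $\mu$) yields density at least $\Omega(\mu/(r^3 \log(1/\mu)))$. For the NO case, for any $T$ with $|T| = \mu|V|$, the density of hyperedges in $T$ is $\Pr[v_1,\ldots,v_r \in T]$; by the Markov property this factors as $\Pr[v_1 \in T] \cdot \prod_{i=2}^{r} \Pr[v_i \in T \mid v_{i-1} \in T]$, with each conditional factor bounded by $1 - \Phi(T) \leq \eta$ under the uniform stationary distribution on a regular $G$. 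Choosing $\eta \leq \mu$ thus gives density at most $\mu \cdot \eta^{r-1} \leq \mu^r$, as required.

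The main obstacle is parameter balancing: the NO analysis requires $\eta \leq \mu$, much smaller than the constant $\eta$ afforded by the basic statement of SSEH. I would address this via amplification---taking walks of length $\Theta(\log(1/\mu))$ on $G$ to drive the effective expansion down polynomially---and then distilling $r$-uniform hyperedges from these longer walks. This amplification accounts for the $\log(1/\mu)$ factor in the YES density, while the $r^3$ loss reflects the cost of compressing long-walk information into $r$-uniform hyperedges while preserving both YES and NO bounds simultaneously. A secondary subtlety is that $r$-walks may revisit vertices and hence not form a proper $r$-uniform hyperedge; this can be handled either by restricting to walks with distinct vertices or by a vertex-cloning preprocessing of $G$, at the cost of only constant factors. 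Tracking these parameter trade-offs through the reduction---and verifying that the conditional Markov bound in the NO case does not leak beyond $\eta$ under the hereditary small-set expansion property of SSEH---is the most delicate part of the argument.
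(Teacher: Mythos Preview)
Your walk-based reduction has two genuine gaps, and the paper's proof takes a completely different route to avoid them.

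\textbf{Parameter mismatch with SSEH.} You write ``parameters $(\eta,\delta\approx\mu)$,'' but SSEH does not let you choose $\delta$: for each $\epsilon$ (and $M\le 1/\sqrt{\epsilon}$) the hypothesis only guarantees hardness at some $\delta=\delta(\epsilon,M)$ that may have nothing to do with your target bias $\mu$. Your completeness needs the good set to have volume exactly $\mu$, while your soundness needs $\eta\le\mu$; SSEH gives you one of these, not both. The paper decouples the two by building the hypergraph not on $V$ but on $V^R\times\{0,1\}^R_\mu\times\{\bot,\top\}^R_\beta$, where the $\{0,1\}^R_\mu$ factor supplies the bias $\mu$ independently of the SSE volume $\delta$ (which enters only through $R=1/(r\beta\delta)$). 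Your ``amplification by longer walks'' does not address this: taking powers of $G$ changes expansion but leaves the volume of any fixed set unchanged.

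\textbf{The Markov factorization in the NO case is invalid.} You write $\Pr[v_1,\dots,v_r\in T]=\Pr[v_1\in T]\prod_{i\ge 2}\Pr[v_i\in T\mid v_{i-1}\in T]$, but the walk is Markov in the \emph{vertices}, not in the events $\{v_j\in T\}$. After conditioning on $v_1,\dots,v_{i-1}\in T$, the distribution of $v_{i-1}$ inside $T$ is no longer uniform; it concentrates on vertices with many $T$-neighbors, so $\Pr[v_i\in T\mid v_1,\dots,v_{i-1}\in T]$ can be much larger than $1-\Phi(T)$. Concretely, $T$ could contain a small dense core on which the walk, once it enters, stays with probability close to $1$; the SSE NO guarantee rules this out only for cores whose volume lies in $[\delta/M,M\delta]$, and you have no control over the core's size. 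This is exactly why the paper does not attempt a direct combinatorial soundness argument: it instead runs a long-code dictatorship test (Figure~\ref{fig:dksh-test}) whose soundness goes through the invariance principle and the iterated Gaussian stability bound $\Gamma^{(r)}_\rho(\mu)\le 3\mu^r$ (Theorem~\ref{thm:ks-stab}), together with the noise-with-leakage operators $M_z$ of \cite{RST12} to force the averaged long codes to have bias close to $\mu$ (Lemma~\ref{lem:bias-conc}). The $r^3\log(1/\mu)$ loss in completeness comes from the correlation parameter $\rho=1/(C' r^2\log(1/\mu))$ required for the stability bound, not from any walk-length amplification.
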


The above theorem implies that $\dksh_r$ with bias parameter $\mu$ is hard to approximate up to factor a of $ O(r^3) \cdot \mu^{r-1}\log(1/\mu)$. We complement the above hardness result with the following theorem which gives bias dependent approximation for $\dksh$.

\begin{restatable}{rethm}{dkshalg}				\label{thm:dksh-alg}
	The following holds for any $\mu \in (0,1)$ and $r \geq 2$. There exists a randomized algorithm which on input a hyperegraph $H = (V,E)$ of arity $r$, runs in time $|V|^{{\rm poly}(1/\mu)}$ and returns a set $S \subset V$ such that $|S| = \mu n$ and $|E_H[S]| \geq C\mu^{r-1}\log(1/\mu) \cdot \delta^{(r)}_{\mu|V|}(H)$.
\end{restatable}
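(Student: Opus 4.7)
My plan is to design an algorithm combining a Sherali--Adams (SA) hierarchy relaxation with a conditional-marginal rounding scheme; the subexponential running time $|V|^{\mathrm{poly}(1/\mu)}$ is spent on solving the SA relaxation at level $\ell = \mathrm{poly}(1/\mu)$.

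\textbf{Baseline.} Uniform random sampling (each vertex picked independently with probability $\mu$) yields $\mu^r|E|$ induced hyperedges in expectation. Since $\delta^{(r)}_{\mu n}(H)\leq |E|$ trivially, this is already a $\mu^r$-approximation, so the task reduces to gaining an extra factor of $\log(1/\mu)/\mu$ over this baseline.

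\textbf{Relaxation and rounding.} I would relax the problem to a pseudo-expectation $\widetilde{\mathbb{E}}$ on the polynomial ring in $0/1$ indicators $\{x_v\}_{v \in V}$ with constraints $x_v^2 = x_v$ and $\sum_v x_v = \mu n$, and objective $\sum_e \widetilde{\mathbb{E}}\left[\prod_{v \in e} x_v\right]$, at SA level $\ell = \Theta(\log(1/\mu)/\mu)$. For the rounding, I would iteratively sample a seed $T$ of $\ell$ vertices by picking each in turn from the pseudo-distribution conditioned on the prior picks being in the chosen set, and then include each remaining vertex $v$ in the output $S$ independently with probability equal to its conditional marginal $\widetilde{\mathbb{E}}[x_v\mid T \subseteq S]$, followed by a standard trimming step to enforce $|S| = \mu n$ exactly at a constant-factor loss.

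\textbf{Analysis and main obstacle.} The main technical step, and the hardest part, is showing the $\log(1/\mu)$ gain from conditioning. I would bucket the hyperedges into $O(\log(1/\mu))$ groups by the magnitude of their pseudo-marginals $\widetilde{\mathbb{E}}\left[\prod_{v \in e} x_v\right]$, and argue bucket-by-bucket. For hyperedges whose pseudo-marginals are close to the product $\prod_{v \in e} \mu$, random sampling alone suffices; for hyperedges with much larger pseudo-marginals, the extra mass is precisely what conditioning exploits --- conditioning on a random seed vertex sharply boosts the conditional marginals of the other endpoints of these ``heavy'' hyperedges. A rigorous bound follows from a global-correlation / conditional-independence argument in the style of Barak--Raghavendra--Steurer: after conditioning on a random seed of size $\ell$, the average pairwise pseudo-correlation on the remaining vertices is at most $O(1/\ell)$, so the SA solution factorizes well on most hyperedges and the rounded $S$ preserves an $\Omega(1)$ fraction of the pseudo-mass in each bucket. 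Summing the per-bucket contributions against the trivial $\mu^r$-baseline on the ``product-like'' bucket then yields the claimed $\mu^{r-1}\log(1/\mu)$ factor.
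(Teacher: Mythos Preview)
Your proposal has a genuine gap at the crucial step. The Barak--Raghavendra--Steurer global-correlation machinery you invoke bounds the \emph{average pairwise} pseudo-covariance after conditioning; it does not by itself control the $r$-wise product $\widetilde{\mathbb{E}}\bigl[\prod_{v\in e} x_v\mid T\bigr]$ that the rounding needs for arity $r\ge 3$. Pairwise decorrelation is not enough to make the pseudo-distribution factorize on $r$-tuples: you would need an inductive argument reducing $r$-wise to pairwise, and that step (which itself costs powers of $\mu$) is absent. Second, the bucketing step is working the wrong way. Splitting into $O(\log(1/\mu))$ buckets lets you focus on one scale at the cost of a $\log$ \emph{loss}; summing the buckets back up does not produce a $\log(1/\mu)$ \emph{gain} unless each bucket separately already achieves the target ratio, which you have not argued. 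Finally, the seed size $\ell=\Theta(\log(1/\mu)/\mu)$ is not justified: BRS-style potential arguments drive down average mutual information by roughly $1/\ell$ per edge, so making the additive $O(|E|/\ell)$ error small compared to $\mathrm{OPT}$ can require $\ell$ depending on $|E|/\mathrm{OPT}$, not just on $\mu$.

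The paper's proof is structurally different and much simpler. It does not touch hierarchies at all: it (i) reduces $\dksh_r$ to $\dks$ by replacing each hyperedge by its $\binom{r}{2}$-clique with weight scaled by $\mu^{r-2}$, so that $\delta_{\mu n}(G')\ge \mu^{r-2}\,\delta^{(r)}_{\mu n}(H)$; (ii) invokes, as a black box, an $\Omega(\mu\log(1/\mu))$-approximation for $\dks$ obtained by combining the \cite{CHK11} reduction from $\dks$ to \mcspt~over alphabet $[1/\mu]$ with the $\Omega((\log R)/R)$-approximation of \cite{KKT15}; and (iii) lifts the resulting $\mu n$-vertex set $S$ back to a hypergraph solution by independently setting each vertex to $\mathbbm{1}_S$ with probability $\alpha=2/r$ and to a fresh $\mathrm{Bernoulli}(\mu)$ sample otherwise. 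A short computation shows that for each hyperedge the event ``the two vertices of some pair land via $S$ and the other $r-2$ land via the $\mu$-coin'' already recovers $\Theta(\mu^{r-2})$ times the $G'$-weight of the pair, and summing yields the $\Omega(\mu^{r-1}\log(1/\mu))$ factor. The $\log(1/\mu)$ gain comes entirely from the arity-$2$ black box, not from any bucketing or correlation-decay argument.
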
  

The upper and lower bounds on the optimal approximation factor from the above theorems are tight up to factor $O(r^3)$, and are therefore tight up to multiplicative constants for constant $r$. In particular, for the setting of $r = 2$ i.e, {\sc Densest}-$k$-{\sc Subgraph}, the above imply the tight approximation bound of $\Theta(\mu\log(1/\mu))$. Finally, Theorems \ref{thm:pred-approx}, \ref{thm:dksh-hard} and \ref{thm:dksh-alg} together imply the following corollary which gives tight bias dependent approximation bounds for every constant $r$.

\begin{corollary}			\label{corr:csp-approx}
	The following holds for any predicate $\psi:\{0,1\}^r \to \{0,1\}$ assuming SSEH.
	\[
	\alpha_{\leq  \mu}(\psi) \asymp_r \min_{\beta \in \cM_{\psi}} \mu^{\|\beta\|_0 - 1} \log(1/\mu).
	\]	
\end{corollary}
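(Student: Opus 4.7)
The corollary is a direct chaining of the three preceding theorems, specialized to the constant-$r$ regime.

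First, I would invoke Theorem~\ref{thm:pred-approx} to reduce the bias-approximation curve of an arbitrary predicate $\psi$ to a minimum over $\dksh$ of various arities:
\[
\alpha_{\leq \mu}(\psi) \;\asymp_r\; \min_{\beta \in \cM_\psi} \alpha^\uw_{(\mu)}\bigl(\dksh_{\|\beta\|_0}\bigr).
\]
This shifts the entire problem to pinning down $\alpha^\uw_{(\mu)}(\dksh_i)$ for each arity $i \leq r$ that arises as $\|\beta\|_0$ for some $\beta \in \cM_\psi$.

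Next, for each such $i \geq 2$, Theorem~\ref{thm:dksh-alg} applied with arity $i$ provides a polynomial-time algorithm (polynomial since $r$, and hence $i$, is constant) achieving approximation factor $\Omega(\mu^{i-1}\log(1/\mu))$, giving the lower bound $\alpha^\uw_{(\mu)}(\dksh_i) \gtrsim \mu^{i-1}\log(1/\mu)$. Symmetrically, Theorem~\ref{thm:dksh-hard} applied with arity $i$ (under SSEH) yields a matching inapproximability of $O(i^3 \mu^{i-1}\log(1/\mu))$ whenever $\mu < 2^{-i}$. Combining the two gives $\alpha^\uw_{(\mu)}(\dksh_i) \asymp_i \mu^{i-1}\log(1/\mu)$, and substituting into the identity from Theorem~\ref{thm:pred-approx} yields the stated formula, since $i \leq r$ is constant and all $i$-dependent factors collapse into $\asymp_r$.

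The main point to verify is that the boundary cases do not break the characterization. Specifically, for $\mu \in [2^{-i}, 1/2]$ Theorem~\ref{thm:dksh-hard} does not directly apply, and for $i \in \{0,1\}$ the problem $\dksh_i$ is trivially solvable exactly. However, in both regimes $\mu^{i-1}\log(1/\mu)$ is bounded between two constants depending only on $r$, so the trivial approximation ratio $1$ already satisfies $1 \asymp_r \mu^{i-1}\log(1/\mu)$, and no separate argument is required. The only genuine slack is the $O(r^3)$ gap between the algorithmic guarantee and the hardness bound, which is precisely why the characterization is stated up to $r$-dependent factors; this is the one bookkeeping obstacle, and nothing deeper is required.
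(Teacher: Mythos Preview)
Your chaining of Theorems~\ref{thm:pred-approx}, \ref{thm:dksh-hard}, and \ref{thm:dksh-alg} is exactly the intended route, and your treatment of the range $\mu \in [2^{-i},1/2]$ for $i \ge 2$ is fine: there $\mu^{i-1}\log(1/\mu)$ genuinely lies between two positive constants depending only on $i \le r$, so the algorithmic lower bound already pins down $\alpha^\uw_{(\mu)}(\dksh_i)$ up to $r$-dependent factors.

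However, your handling of $i \in \{0,1\}$ contains a false claim. You assert that ``in both regimes $\mu^{i-1}\log(1/\mu)$ is bounded between two constants depending only on $r$'', but for $i=1$ this quantity equals $\log(1/\mu)$ and for $i=0$ it equals $\mu^{-1}\log(1/\mu)$, both of which diverge as $\mu \to 0$. So you cannot conclude $1 \asymp_r \mu^{i-1}\log(1/\mu)$ from boundedness --- that boundedness simply does not hold.

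The correct patch is the opposite observation: for $i \in \{0,1\}$ and $\mu$ small, $\mu^{i-1}\log(1/\mu) \ge 1$, while $\alpha^\uw_{(\mu)}(\dksh_i) = \Theta(1)$ (Observations~\ref{obs:triv} and~\ref{obs:cover}). Since approximation ratios are always at most $1$, both sides of the claimed identity collapse to $\Theta_r(1)$ once the right-hand side is read with the implicit cap at $1$ --- equivalently, once you note that the minimum over $\beta \in \cM_\psi$ is attained at the \emph{largest} Hamming weight, so the $i \le 1$ terms are never the binding constraint unless $\cM_\psi \subseteq \cS_{\le 1}$, in which case both sides are $\Theta_r(1)$ by Theorem~\ref{thm:pred-cond}. (The corollary statement itself is slightly informal on this point; your job is just not to justify it with an incorrect boundedness assertion.)
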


\begin{remark}
 	The above results also generalize readily to the setting where the variables are allowed to be negated by applying the above results (Theorems \ref{thm:pred-cond} and \ref{thm:pred-approx}) to each of the $2^r$ predicates obtained by applying the $2^r$ negation patterns to the literals.
 \end{remark}

\begin{remark}\label{rem:smooth}
	We point out that in our setting, we allow algorithms to output solutions with relative weight slightly larger than $\mu$, say $\mu(1 + \eta)$, where $\eta$ is a constant. This additional multiplicative slack is indeed necessary as Theorem \ref{thm:pred-approx} does not hold in the case where algorithms are constrained to output a solution of relative weight at most $\mu$. This is mainly due to the observation that in general, weighted instances of $\dksh$ can be much harder than unweighted instances (even for the same $k$) -- we illustrate this concretely using an example in Appendix \ref{sec:example}. Allowing a constant multiplicative slack in the relative weight enables us to bypass this technical difficulty. Furthermore, we note that our upper bound for the bias approximation curve (i.e, the hardness) also holds for algorithms which are allowed this multiplicative slack -- for details we refer the readers to Lemma \ref{lem:pred-hard}.
\end{remark}

{\bf Application to {\sc Max}-$k$-{\sc CSP}}. Extending our techniques from Theorem \ref{thm:dksh-hard}, we also prove the following new approximation bound for {\sc Max}-$k$-{\sc CSP}s in the large alphabet regime.

\begin{restatable}{rethm}{csphard}				\label{thm:csp-hard}
	The following holds assuming the Unique Games Conjecture, for every $k \geq 2$ and $R \geq 2^k$. Given a {\sc Max}-$k$-{\sc CSP} instance $\Psi(V,E,[R],\{\Pi_e\}_{e \in E})$, it is \NP-hard to distinguish between 
	\[
	\textnormal{\bf YES Case}: {\sf Opt}(\Psi) \geq \frac{C_1}{k^2\log(R)}
	\ \ \ \ \ \  \textnormal{ and } \ \ \ \ \ \ \
	\textnormal{\bf NO Case}: {\sf Opt}(\Psi) \leq \frac{C_2}{R^{(k-1)}}.
	\]
	where $C_1,C_2 > 0$ are absolute positive constants independent of $R$ and $k$.
\end{restatable}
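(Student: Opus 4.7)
My plan is to reduce from Unique Games to Max-$k$-CSP, following the same overall template as the Densest-$k$-SubHypergraph hardness of Theorem~\ref{thm:dksh-hard} but starting from a Unique Games instance (as allowed under UGC) in place of a Small Set Expansion instance. The key observation driving the choice of parameters is that a Max-$k$-CSP over alphabet $R$ is equivalent, via one-hot encoding $x_v \leftrightarrow (\mathbf{1}[x_v = a])_{a \in [R]}$, to a Boolean $\mu$-biased CSP at bias $\mu = 1/R$ whose minimal accepting strings have Hamming weight exactly $k$. By Corollary~\ref{corr:csp-approx} the tight hardness one should expect at this bias is $\mu^{k-1}\log(1/\mu) = \log(R)/R^{k-1}$, which matches the gap $R^{k-1}/(k^2\log R)$ claimed in Theorem~\ref{thm:csp-hard} up to a $\mathrm{poly}(k)$ factor (the ``$\asymp_r$'' loss).

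\textbf{Construction and analysis.} Given a UG instance $\mathcal{L}(U,V,E,[L],\{\pi_e\})$ with completeness $1-\eta$ vs.\ soundness $\eta$, I create one CSP variable $x_v$ per UG vertex with alphabet $[R] \supseteq [L]$. For each UG edge $(u,w)$, I sample $k$ vertices $v_1,\ldots,v_k$ from correlated random walks originating at $\{u,w\}$ and impose a $k$-ary constraint that accepts $(a_1,\ldots,a_k) \in [R]^k$ iff each $a_i \in [L]$ is consistent with a common UG label at $u$ and $w$ via $\pi_{uv_i}$ (resp.\ $\pi_{wv_i}$). For completeness, the planted UG labeling (injected into $[R]$) satisfies an $\Omega(1/(k^2\log R))$-fraction of constraints once the correlation/noise parameter $\rho$ is tuned to $\Theta(1/\log R)$; the $\log R$ factor is the natural cost of biased noise at bias $1/R$ and the $k^2$ absorbs union bounds over the $k$ queries. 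For soundness, any assignment with value above $C/R^{k-1}$ must place heavy biased Fourier mass on some low-degree coordinate by biased hypercontractivity, which is then decoded (via an invariance / influence-decoding argument) into a UG labeling of value $\Omega(\eta)$, contradicting UGC.

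\textbf{Main obstacle.} The principal technical difficulty is the soundness analysis at the very small bias $\mu = 1/R$: we need biased hypercontractivity estimates for the noise operator $T_\rho$ on the $1/R$-biased Boolean cube, combined with an invariance step that converts Max-$k$-CSP value exceeding the ``uniform random baseline'' $1/R^{k-1}$ into influential coordinates which decode to a UG labeling. Achieving the tight match between completeness $1/(k^2\log R)$ and soundness $1/R^{k-1}$ forces $\rho$ to sit at the threshold $\Theta(1/\log R)$, which is precisely where the biased hypercontractive bounds are tight up to constants. The hypothesis $R \ge 2^k$ (equivalently $\mu \le 2^{-k}$) ensures this regime is non-trivial and mirrors the condition $\mu < 2^{-r}$ in Theorem~\ref{thm:dksh-hard}, where the analogous reduction for $\dksh$ was carried out.
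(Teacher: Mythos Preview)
Your proposal has the right high-level template (reduce from UG via a dictatorship test with correlated noise, completeness from the planted labeling, soundness by influence decoding), but the construction as written cannot work, and two of the key technical points are misidentified.

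\textbf{The construction.} You write ``one CSP variable $x_v$ per UG vertex with alphabet $[R]\supseteq[L]$'' and then sample $k$ \emph{vertices} via ``correlated random walks.'' With only $|V_{\cG}|$ variables there is no long-code table and nothing to do influence decoding on; the invariance/decoding step you invoke later has no object to act on. The paper's instance has variables indexed by pairs $(v,x)\in V_{\cG}\times[R]^t$ (long-code positions) with alphabet $[R]$; the test samples a vertex $v$, then $k$ neighbors $w_1,\dots,w_k\sim N_{\cG}(v)$, and $k$ correlated long-code positions $z_1,\dots,z_k$ from the $\cA_{k,\rho}$ distribution on $[R]^t$, accepting iff the $k$ folded long-code values $\tilde f_{w_j}(\pi_{w_j\to v}\circ z'_j)$ all agree. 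The correlation is in the \emph{long-code coordinates}, not among UG vertices.

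\textbf{Where the $k^2$ comes from.} You attribute the $k^2$ to ``union bounds over the $k$ queries'' on the completeness side. That is not the source. Completeness is simply $\Omega(\rho)$ (the probability that coordinate $\sigma(v)$ is fully correlated across the $k$ copies). The $k^2$ appears because the soundness bound---the iterated Gaussian stability estimate $\Gamma^{(k)}_\rho(\mu)\le 3\mu^k$ (Lemma~\ref{lem:ks-stab1})---requires $\rho\le 1/(C'k^2\log(1/\mu))$. Setting $\rho$ larger than this would break soundness, not completeness; your $\rho=\Theta(1/\log R)$ is off by exactly this $k^2$.

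\textbf{The soundness tool.} The analysis does not live on the $1/R$-biased Boolean cube via one-hot encoding, and the key estimate is not ``biased hypercontractivity.'' The paper works directly on the uniform $R$-ary cube $[R]^t$; folding guarantees each coordinate function $g_v^{(i)}$ has mean exactly $1/R$, and the product of low-influence functions under $\cA_{k,\rho}$ is bounded by the iterated Gaussian stability $\Gamma^{(k)}_\rho(1/R)\le 3R^{-k}$ via a multidimensional Borell isoperimetric inequality (Theorem~\ref{thm:ks-stab}). Summing over the $R$ coordinate functions gives soundness $O(R^{-(k-1)})$. The one-hot picture is a useful intuition for why $\mu=1/R$ is the right scale, but it does not supply the needed product-of-values bound.
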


The above implies that {\sc Max}-$k$-{\sc CSP}s on label sets $[R]$ are Unique Games hard to approximate up to a factor of $\Omega(k^2R^{-(k-1)}\log(R))$, this improves on the previous lower bound $\Omega(k^3R^{-(k-1)}\log(R))$ implicit in the work of Khot and Saket~\cite{KS15}\footnote{In particular, Khot and Saket~\cite{KS15} show that any $\alpha$-integrality gap linear programs for $k$-CSPs on label set $[R]$ can be lifted to $\Omega(\alpha k^{3} \log R)$ hardness assuming UGC. Combining this with the known $R^{-(k-1)}$-LP integrality gap derives the $O(k^3 \log R /R^{k-1})$-hardness.}. Furthermore, since \cite{MNT16} gave a $O(R^{-(k-1)}\log(R))$-approximation algorithm for {\sc Max}-$k$-{\sc CSP}s, Theorem \ref{thm:csp-hard} is tight up to factor of $O(k^2)$, and in particular is tight for all constant $k$.

\subsection{Related Works}

{\bf CSPs with Global Constraints}. There have been several works which study specific instances of CSPs with global constraints. Of particular interest is the {\sc Max-Bisection} problem which is {\sc Max-Cut} with a global bisection constraint. The question of whether {\sc Max-Bisection} is strictly harder than {\sc Max-Cut} has been a tantalizingly open question that has been studied by several works \cite{FJ97}\cite{Ye99}\cite{Zwick02}\cite{RT12}, the current best known approximation factor being $0.8776$ by Austrin, Benabbas and Georgiou~\cite{ABG12}. Another well studied problem in the framework is the \smallsetexpansion~problem, due to its connection to the SSEH~\cite{RS10} and its consequences. In particular,  Raghavendra, Steurer and Tetali~\cite{RST10a} gave an algorithm, which when a graph have a set of volume $\delta$ with expansion $\epsilon$, outputs a set of volume at most $O(\delta)$ with expansion at most $O(\sqrt{\epsilon \log (1/\delta)})$, which was later shown to be tight by Raghavendra, Steurer and Tulsiani~\cite{RST12}.  There have been several works which also give frameworks for approximating  general CSPs with global constraints. Guruswami and Sinop~\cite{GS11} gave Lasserre hierarchy based algorithms for the setting when underlying label extended graph has low threshold rank. Raghavendra and Tan~\cite{RT12} also propose a Lasserre hierarchy based framework for general settings. More recently, \cite{BansalSticky20} also study such CSPs using {\sc Sticky Brownian Motion} based rounding algorithms. 

{\bf $\dks$ and $\dksh$.} There is a long line of works which study the complexity of approximating $\dks$. Feige \cite{Feige3Sat} showed constant \dks~is \APX-hard assuming the hardness of refuting random $3$-SAT formulas. Subsequently Khot~\cite{Khot06} also established \APX-hardness assuming no sub-exponential time algorithms exist for {\sf SAT}. Stronger inapproximability results are known under alternative hypotheses. The SSEH of Raghavendra and Steurer~\cite{RS10} immediately implies constant factor inapproximability of $\dks$ where as Manurangsi~\cite{Man17} showed almost polynomial ratio ETH based hardness. There are also results which establish running time lower bounds under alternative hypotheses~\cite{CCKLMNT17},\cite{MRS21}. On the algorithmic front, Feige and Seltser~\cite{FS97} give a $n/k$-approximation algorithm for $\dks$. For $k$-independent bounds, Fiege, Kortsarz and Peleg~\cite{FKP01} gave a $n^{1/3 - \epsilon}$ -approximation algorithm, which was later improved to a $n^{1/4 + \epsilon}$ by~\cite{BCCFV10}. In comparison, there have been relatively fewer works which study the hypergraph variant i.e., \dksh. Assuming the existence of certain one way functions, Applebaum \cite{App13} showed that \dksh~is hard to approximate for hypergraphs on $n$-vertices up to a factor of $n^\epsilon$, for some constant $\epsilon > 0$. The results of \cite{M17} also implies that assuming SSEH, $\dksh$ is inapproximable for any constant factor with large enough arity.

{\bf Max-CSPs on large alphabets.} There is a vast literature which study CSPs on non-boolean alphabets. For arity the setting of arity $2$,  Kindler, Kolla and Trevisan~\cite{KKT15} gave a $\Omega((1/R)\log R)$-approximation algorithm, this matches the Unique Games based hardness from~\cite{KKMO07}. For the setting of larger arities, Makarychev and Makarychev~\cite{MM12} gave a $\Omega(k/R^{k-1})$-approximation algorithm, which is tight for the setting of $k \geq R$ from the work of Chan~\cite{Chan16}. For the setting of small $k \leq R$, the best known lower bound is the Unique Games based $\Omega(R^{-(k-1)}(\log R)^{k/2})$-hardness by Manurangsi, Nakkiran and Trevisan~\cite{MNT16}, for which they also give a $O(R^{-(k-1)}\log R)$-approximation algorithm. The tightest known lower bound is due to the work of \cite{KS15}, whose result along with a known $R^{-k + 1}$-integrality gap for linear programs implies a $\Omega(k^3 R^{-(k-1)} \log R)$-hardness for $2$-CSPs assuming UGC.

\section{Overview: Bias Independence Characterization}

In this section, we briefly describe the challenges towards establishing our results and the techniques used to address them. 

\subsection{Characterization of Bias Independence via Minimal Sets}

Our first step is to understand what makes the optimal approximation factor for a predicate bias dependent. For the purpose of exposition, we shall just focus on the behavior of predicates in the range $\mu \in (0,1/2)$, these ideas presented here will extend naturally to the setting $\mu \in (1/2,1)$ as well. Furthermore, as a warm up, we will first restrict our attention to {\em symmetric} Boolean predicates i.e., predicates whose set of accepting strings is permutation invariant. In particular, one can always express a symmetric predicate $\psi:\{0,1\}^r \to \{0,1\}$ as 
\begin{equation}			\label{eqn:symm-decomp}
\psi := \psi_{i_1} \vee \psi_{i_2} \vee \cdots \vee \psi_{i_t}
\end{equation}
for some $i_1,\ldots,i_t \in \{0,1,\ldots,r\}$ where $\psi_j$ is the predicate which accepts a string if and only if the string has Hamming weight $j$. The above decomposition hints at the following possibility that in order to characterize the bias-approximation curve of $\psi$, it would suffice to characterize the approximation curves of $\psi_{i_1},\ldots,\psi_{i_t}$ (we shall elaborate further on this in the latter part of this section). Hence as a further simplification, we will now restrict our attention on the class of predicates $\{\psi_j\}$.

{\bf Easy Cases $\psi_0,\psi_r$}. To begin with, the cases $i = 0,r$ can be characterized almost immediately. When $i = 0$ i.e., $\psi_i$ is the predicate which accepts if and only if the string is all zeros. Then the all zeros labeling would satisfy all constraints, and hence this trivially yields a $1$-approximation algorithm for $\psi_0$. On the other hand when $i = r$, then $\psi_i$ corresponds to the ${\sf AND}$ predicate, and in particular expresses the {\sc Densest}-$k$-{\sc SubHypergraph} (\dksh) problem. Since for $r = 2$, there exists a $\mu$-approximation algorithm~\cite{FS97} and SSEH impiles that \dks~is constant factor inapproximable~\cite{RS10}, it follows that $\lim_{\mu \to 0} \alpha_{(\mu)}(\psi_r) = 0$. 

{\bf $\psi_i$'s are as hard as \dksh}. As a next step, we can show that for any $i \in \{2,\ldots,r-1\}$, the $i^{th}$ symmetric predicate $\psi_i$ is at least as hard as $\dksh_{i}$, up to loss of some factors in $r$. The key idea underlying this observation is the following: given a $\dksh_i$ instance $\Psi_{(\mu)}(V,E)$ one can construct a $\mu$-biased CSP $\Psi'$ on predicate $\psi_r$, by adding $r - i$ dummy vertices, each with infinite weight, and then including these dummy vertices in every hyperedge $e \in E$. Since any finite weight labeling of $\Psi'$ would be forced to assign $0$'s to the dummy vertices, it follows that the set of edges satisfied by any $\mu$-biased labeling in $\Psi'$ with respect to predicate $\psi_i$ exactly corresponds to the set of hyperedges induced by the set indicated by that labeling in $V$. Given this one-to-one correspondence between labelings\footnote{In the actual reduction, the biases of the labeling can differ up to a multiplicative factor of $r$, but we ignore this issue here to keep the presentation simple.} in $\Psi$ and $\Psi'$ it follows that an $\alpha$-approximation algorithm for approximated biased CSPs with predicate $\psi_r$ yields an $\alpha$-approximation algorithm for $\dksh_i$, and hence $\psi_i$ is at least as bias dependent as ${\sf AND}_i$.

{\bf $\psi_i$'s are as easy as \dksh}. Furthermore, we can also establish a converse to the above by showing that $\psi_i$'s are at least as easy to approximate as $\dksh_i$ (again, with loss of some multiplicative factors in $r$).  Given a $\mu$-biased instance $\Psi'_{(\mu)}(V,E)$ with predicate $\psi_i$, we can naturally construct a hypergraph $H = (V,E')$ on the same vertex set where the set of hyperedges is as follows: for every constraint $e \in E$, and every $i$-sized subset $S \subset e$, we introduce a hyperedge $(e,S)$. It turns out that unlike in the previous case, here we can't establish an exact one-to-one correspondence between labelings in $\Psi$ and $H$. However, we can still establish an approximate version of it. In particular, we can show that (i) the optimal $\mu$-biased value of $H$ (w.r.t. predicate ${\sf AND}_i$) is at least the $\mu$-biased value of $\Psi$ (w.r.t. predicate $\psi_i$) and (ii) given a $\mu$-biased labeling $\sigma$, by sub-sampling, we can construct another $\mu$-biased labeling $\sigma'$ such that the expected fraction of constraints satisfied by $\sigma'$ in $\Psi'$ is at least $2^{-r}$ fraction of hyperedges induced by the sets indicate by $\sigma$ in $H$. This approximate one-to-one correspondence implies that that $\alpha(\mu)$-approximation algorithm for $\dksh_i$ imply a $\Omega_r(\alpha(\mu))$-approximation algorithm for $\mu$-biased CSPs with predicate $\psi_i$. 

The above arguments combined together roughly establishes the following:
\begin{equation}			\label{eqn:bias-symm}
\alpha_{(\mu)}\left({\sf D}k{\sf SH}_i\right) \lesssim_r \alpha_{(\mu)}\left(\psi_i\right) \lesssim_r \alpha_{(\mu)}\left({\sf D}k{\sf SH}_i\right),
\end{equation}
where $\lesssim_r$ hides multiplicative factors in $r$. In particular, \eqref{eqn:bias-symm} completely characterizes the bias dependence of predicate $\{\psi_i\}^r_{i = 1}$. For $i \in \{2,\ldots,r\}$, predicate $\psi_i$ is as hard as (up to factors of $r$) as $\dksh_i$, and therefore are bias dependent. On the other hand $\psi_1$ is at least as easy as biased CSP with the single variable {\sf AND} formulae as predicates, which is exactly solvable in polynomial time, and hence, $\psi_1$ admits a bias independent approximation factor.

{\bf Handling General Symmetric Predicates}. Now recall that a symmetric predicate $\psi:\{0,1\}^r \to \{0,1\}$ can be always expressed as $\psi = \vee_{j \in [t]} \psi_{i_j}$. Clearly, the approximability of $\psi$ is determined by the choice of $i_1,\ldots,i_t$, and in particular, it is natural to suggest that $\psi$ is as easy to approximate as the easiest predicate i.e., $\psi_{i_j}$ with the smallest $i_j$ (which we denote by $i^*$). It turns out that this is indeed the right characterization as we can establish that 
\begin{equation}			\label{eqn:bias-symm1}
\alpha_{(\mu)}\left(\psi\right) \asymp_r \min_{\ell \in \{i_1,\ldots,i_t\}}\alpha_{(\mu)}\left({\sf D}k{\sf SH}_\ell\right),
\end{equation}
where $\asymp_r$ implies that the LHS is within multiplicative factors of $r$ of the RHS. While the hardness of $\psi$ using $\dksh_{i^*}$ again follows by  introducing dummy vertices with infinite weights, establishing the converse -- i.e., $\psi_{i^*}$ is as easy as ${\sf D}k{\sf SH}_{i^*}$ -- requires more work due to the following issue. Given a $\mu$-biased CSP $\Psi_{(\mu)}(V,E)$ on predicate $\psi$, it might be the case that all labelings which assign strings of weight $i^*$ to a significant fraction of edges satisfy negligible fraction of constraints in comparison to the optimal $\mu$-biased labeling. Since our previous argument relied on the existence of labelings which assign strings of weight $i^*$ to a large fraction of constraints, it cannot be used argue good approximation for such instances. This is remedied by ruling out the existence of such instances. In particular, given any labeling $\sigma$, we show that by sub-sampling we can construct another labeling $\sigma'$ which assigns satisfies a significant fraction of edges in comparison to $\sigma$, while assigning them strings of weight exactly $i^*$.

{\bf Handling General Predicates using Minimal Sets}. Now we relax our setting to that of general Boolean predicates $\psi:\{0,1\}^r \to \{0,1\}$. Note that since $\psi$ is not symmetric, it is no longer guaranteed to admit a decomposition of the form \eqref{eqn:symm-decomp}, and therefore it is not clear if one can still characterize the bias approximation curve of $\psi$ using that of $\dksh$. In order to motivate our characterization here, consider the  following notion of partial ordering among predicates. For a pair of Boolean predicate $\psi,\tilde{\psi}:\{0,1\}^r \to \{0,1\}$, we say $\psi \succeq \tilde{\psi}$ if $\psi^{-1}(1) \supseteq \tilde{\psi}^{-1}(1)$, and for every accepting string $\beta \in \psi^{-1}(1)$, there exists $\tilde{\beta} \in \psi^{-1}(1) \cap \tilde{\psi}^{-1}(1)$ such that ${\rm supp}({\beta}) \supseteq {\rm supp}(\tilde{\beta})$. For any such $\psi, \tilde{\psi}$-pair it is not too difficult to show that $\alpha_{(\mu)}({\psi}) \gtrsim_r \alpha_{(\mu)}(\tilde{\psi})$.	

Indeed, given a $\mu$-biased instance $\Psi_{(\mu)}(V,E)$ on predicate $\psi$, let $\sigma:V \to \{0,1\}$ be the $\mu$-biased labeling which achieves the optimal value, say $\gamma$. Now given $\sigma$, consider the following sub-sampling process to construct $\sigma': V \to \{0,1\}$. For every $i \in V$ we do the following independently: if $\sigma(i) = 1$, sample $\sigma'(i)$ uniformly from $\{0,1\}$ otherwise, we set it $0$. Now fix a constraint $e$ satisfied by $\sigma$ and let $\tilde{\beta}_e \in \psi^{-1}(1) \cap \tilde{\psi}^{-1}(1)$ such that ${\rm supp}(\sigma(e)) \supseteq {\rm supp}(\tilde{\beta}_e)$. Since the sub-sampling is independent, with high probability $\sigma'$ is at most $\mu$ biased. Furthermore, 
\[
\Ex_{e \sim E}\Pr_{\sigma'}\Big[\tilde{\psi}\left(\sigma'(e)\right) = 1 \Big] 
\geq \Pr_{e \sim E}\Big[\sigma(e) = 1\Big]\Pr_{\sigma'}\Big[\sigma'(e) = \tilde{\beta}_e\Big] \geq 2^{-r} \gamma,
\]
i.e., $\sigma'$ satisfies at $2^{-r}\gamma$-fraction of constraints $e$ in $\Psi$ by assigning strings from $\tilde{\psi}^{-1}(1)$. Furthermore, since $\psi^{-1}(1) \supseteq \tilde{\psi}^{-1}(1)$, if a labeling satisfies at least $\gamma'$ fraction of edges in $\Psi$ with respect to predicate $\tilde{\psi}$, it also satisfies at least $\gamma'$-fraction of edges with respect to predicate $\psi$. Hence, the  $\alpha_{(\mu)}(\tilde{\psi})$-approximation algorithm for $\tilde{\psi}$ is also a $\Omega(2^{-r}\alpha_{(\mu)}(\tilde{\psi}))$-approximation algorithm for $\psi$. In summary, this establishes that whenever $\psi \succeq \tilde{\psi}$ we have $\alpha_{(\mu)}({\psi}) \gtrsim_r \alpha_{(\mu)}(\tilde{\psi})$.

The above partial ordering and its properties immediately imply that a predicate $\psi$ will be at least as easy as the set of minimal elements dominated by it. A reduction based argument will also show that it is as hard as its minimal elements. Furthermore, a straightforward argument also shows that the minimal elements $\tilde{\psi}$ dominated by $\psi$ are predicates for which the accepting set is a singleton set i.e, they satisfy $|\tilde{\psi}^{-1}(1)| = 1$. It turns out that for such predicates, the arguments used in the setting of symmetric predicates generalize readily. These observations taken together imply the following characterization. Given a predicate $\psi$, we have 
\begin{equation}			\label{eqn:bias-symm3}
\alpha_{(\mu)}\left(\psi\right) \asymp_r \min_{\beta \in \cM_\psi}\alpha_{(\mu)}\left({\sf D}k{\sf SH}_{\|\beta\|_0}\right),
\end{equation}
where $\cM_\psi$ is the set of minimal elements of $\psi^{-1}(1)$. In particular, the above immediately reduces our task to characterizing the bias-approximation curve of $\dksh_i$. The remainder of this section deals with the complexity theoretic aspects of $\dksh$. 

{\bf About Weighted vs. Unweighted settings}. We conclude the first part of the overview by discussing some of added complications that arise while handling vertex weights. While we ignore the difference between weighted and non-weighted settings in the above discussion, the precise statements of our results (Theorem \ref{thm:pred-approx} in particular) actually relate the bias approximation curve of weighted biased CSP problems to that of unweighted D$k$SH. While this does not affect the arguments used to lower bound the bias approximation curve, it presents several subtle challenges in the direction of the upper bound where we use algorithms for {\em unweighted} $\dksh$ as blackboxes for solving {\em weighted} biased CSP instance. In particular, to achieve matching upper and lower bounds, we allow the vertex weights of the CSP to be polynomially large (for e.g., recall that the reduction from $\dksh$ to biased CSPs sets the weights of the dummy vertices to infinite). However, for such weighted instances, techniques for reducing to the unweighted setting don't apply as is and in fact, biased CSPs with arbitrary vertex weights can be strictly harder than unweighted biased CSPs (e.g, see Section \ref{sec:example}).  This issue is addressed by allowing a multiplicative slack in the bias of the labeling -- i.e, where we allow algorithms to output solutions with relative weight at most $\mu(1 + \eta)$ for some constant $\eta$ -- this multiplicative slack is crucially used in relating the approximation curve of weighted $\dksh$ to that of unweighted $\dksh$. Note that this relaxation does not change the lower bound on the approximation curve since our reduction from $\dksh$ to biased CSP holds as is even in this setting. We refer the readers to Section \ref{sec:dksh-red} for more details on this point.

\section{Overview: Hardness of $\dksh$}

While hardness of approximating $\dksh$~for general arities is relatively less explored, there has been substantial work on lower bounds for $\dks$. The strongest known results here are the constant factor inapproximability by Raghavendra and Steurer~\cite{RS10} assuming SSEH, and the almost polynomial ratio hardness by Manurangsi~\cite{Man17} assuming ETH. However the techniques from the above works don't apply to our setting since we seek to explicitly quantify the bias approximation curve. Instead, our approach towards establishing Theorem \ref{thm:dksh-hard} would be to treat $\dksh$~as instances of {\sc Max-AND} subject to a global cardinality constraint. Hence, as is standard, our reduction will use the framework of composing a {\em dictatorship test} with an appropriate {\em outer verifier}. This is a well studied approach that has been used successfully to show often optimal inapproximability bounds for \mcsp s~(see~\cite{Khot-longcode} for an overview of such reductions). 

Informally for a bias $\mu$, the $(c(\mu),s(\mu))$ dictatorship test (for ${\sf AND}$ predicate) in our setting is a distribution $\cD$ over tuples of element i.e., hyperedges $(x_1,\ldots,x_r)$ drawn from some product probability space $\Omega^t$. The distribution naturally defines a weighted hypergraph $\cH = (\Omega^t,\cD,w)$ on the set of vertices $\Omega^t$ and every Boolean function  $f:\Omega^t \to \{0,1\}$ indicates a subset $S_f$ of $\Omega^t$, and therefore can be associated with weight $w(S_f)$. In addition, we seek the following properties from the distribution:
\begin{itemize}
	\item {\bf Completeness}. If $f:\Omega^t \to \{0,1\}$ is a dictator of weight $\mu$, then the $S_f$ induces at least $c(\mu)$-fraction of hyperedges in $\cH$.
	\item {\bf Soundness}. If $f:\Omega^t \to \{0,1\}$ is a function of weight $\mu$ such that $S_f$ induces at least $s(\mu)$-fraction of hyperedges in $\cH$, then $f$ has at least one influential coordinate.
\end{itemize}  
Here the notion of dictators and influential coordinates are the natural analogues of their counterparts in a long code test. The above is typically the key component in dictatorship test reductions, where it is well understood that a $(c,s)$-dictatorship test for a predicate $\psi$ almost immediately leads to $(s/c)$-hardness of approximation (for the unconstrained \mcsp) by composing it with a suitable outer verifier such as {\sc LabelCover} or {\sc UniqueGames}. Therefore, the obvious first challenge here is to design a family of bias dependent dictatorship tests with the right completeness soundness tradeoff. However, unlike the standard unconstrained setting, the composition step requires stronger properties from the outer verifier. In particular, it needs to ensure that a global bias constraint on the set of averaged long code tables $\Ex_{v \sim V} \left(\E_x g_v(x) \right)= \mu$ translates to local bias control i.e., $\Ex_x g_v(x) \approx \mu$ for most $v \in V$ (we shall discuss this issue formally later in the Section \ref{sec:outer}). In the remainder of this section, we discuss the design aspects of the dictatorship test and the choice of outer verifier.

\subsection{Choice of Dictatorship Test}				\label{sec:dict}	

A somewhat loose restatement of our completeness and soundness properties from above would be that we require a distribution over a space $\Omega^t$ with most edges incident that are incident on the set indicated by a dictator function stay inside the set, or equivalently, dictator cuts have small edge expansion. In fact, objects with this property, namely the {\em noisy hypercube} and its variants, have been successfully used to show optimal hardness of several problems such {\sc MaxCut}, {\sc UniqueGames}~\cite{KKMO07}, {\sc SmallSetExpansion}~\cite{RST12}. Due to the additional bias constraint, this motivates us to study the $\mu$-biased $(1 - \rho)$- noisy hypercube $\cH_{\mu,\rho}$. We describe the dictatorship test as a distribution over hyperedges for $\cH_{\mu,\rho}$ in Figure \ref{fig:noise-test} below.

\begin{figure}[ht!]
	\begin{mdframed}
		\vspace{3pt}
		\begin{center}
			\underline{\bf Hyperedge Distribution on $\cH_{\mu,\rho}$}
		\end{center}
		\vspace{3pt}
		{\bf Distribution}. 
		\begin{enumerate}
			\item Sample $x \sim \{0,1\}^t_\mu$.
			\item Sample independent $\rho$-correlated copies $x_1,\ldots,x_r \underset{\rho}{\sim} x$.
			\item Output hyperedge $(x_1,\ldots,x_r)$.
		\end{enumerate}
	\end{mdframed}
\caption{Biased Noisy Hypercube Test}
\label{fig:noise-test}
\end{figure} 

Here $\{0,1\}^t_\mu$ denotes the probability space where each bit $i \in [t]$ is independently sampled from the Bernoulli distribution with bias $\mu$. Furthermore, fixing a $x \in \{0,1\}^t$, a $\rho$-correlated copy $x'$ of $x$ is sampled by setting each bit $x'(i)$ to $x(i)$ with probability $\rho$ and re-sampling $x'(i) \sim \{0,1\}_\mu$ with probability $1 - \rho$. The completeness and soundness of the above test can be analyzed using standard Fourier analytic techniques. A useful first observation is that the fraction of hyperedges induced by a set indicated by a function $f:\{0,1\}^t \to \{0,1\}$ can be expressed as 
\[
w(E[S_f]) = \Ex_{x_1,\ldots,x_r}\left[\prod_{i \in [r]} f(x_i)\right].
\]
For completeness, we see that when $f$ is the dictator function $f(x) = x(1)$, then
\[
\Ex_{x_1,\ldots,x_r}\left[\prod_{i \in [r]} f(x_i)\right] 
\geq \Pr_x\big[x(1) = 1\big] \Pr_{\left(x_i\right)^r_{i = 1}| x} \Big[\forall i \in [r], x_i(1) = x(1)\Big] \geq \mu\rho^r
\]
and $w(S_f) = \Ex_x\left[x(1)\right] = \mu$ i.e., $f$ satisfies the weight constraint. On other hand, for analyzing the soundness guarantee, fix a function $f$ having no influential coordinates and $w(S_f) = \mu$, then by a combination of {\em Invariance Principle} and {\em Gaussian Stability bounds} (for e.g, see Theorem 2.10~\cite{KS15}), it can be shown that  
\begin{equation}			\label{eqn:sound-val}
\Ex_{x_1,\ldots,x_r}\left[\prod_{i \in [r]} f(x_i)\right] \leq  2 \mu^r
\end{equation}
when $\rho := 1/\sqrt{r^2\log(1/\mu)}$. Combining the above, we get a completeness-soundness ratio of $\mu^{r-1}(\log (1/\mu))^{r/2}$, which is off by a factor of $\log(1/\mu))^{r/2 - 1}$ from the desired ratio. The issue here is that the completeness value of the test has a $\rho^r$ multiplier due to the independent $\rho$-correlated re-sampling. In particular, conditioned on $x(1) = 1$, the completeness pays an extra multiplier of $\rho^r$ since for every $i \in [t]$, $x_i(1)$ can be chosen to resampled independently with probability $1 - \rho$.

To fix the above, we allow the noise pattern of variables $x_1,\ldots,x_r$ to be correlated instead of being fully independent. Formally, observe that we can reinterpret the original $\rho$-correlated sampling along a coordinate $j \in [t]$ in the following way.
\begin{itemize}
	\item[(i)] Sample $\theta_1(j),\theta_2(j),\ldots,\theta_r(j) \sim \{0,1\}^t_\rho$.
	\item[(ii)] For every $i \in [r]$, do the following: if $\theta_{i}(j) = 1$, set $x_i(j) = x(j)$ otherwise sample $x_i(j) \sim \{0,1\}_\mu$ independently.
\end{itemize}
The above results in a distribution where each pair of $x_i,x_j$ variables are $\rho^2$-correlated. Now, the crucial observation here is that, as is the case with noise stability type arguments, the soundness analysis for the above test distribution just relies on the second moment structure, and in particular, just uses the fact that the $(x_i,x_j)$ variables are pairwise $\rho^2$-correlated. This is due to the folklore observation that for any distribution on $x_1,\ldots,x_r$ that is pairwise $\rho^2$-correlated, using techniques from \cite{Mos10} one can show 
\[
\Ex_{x_1,\ldots,x_r} \left[\prod_{i \in [r]} f(x_i)\right] \approx \prod_{i \in [r]} \Ex_{x_i}\big[f(x_i)\big].
\]
Furthermore, it is well known (for e.g, \cite{KS15}) that this weaker condition on the test distribution can be realized with more correlated noise patterns. In particular, we can consider the following alternative distribution: 
\[
\textnormal{\it W.p. $\rho^2$, set $\theta_1(j),\ldots,\theta_r(j)$ to $1$, othewise set $\theta_1(j),\ldots,\theta_r(j)$ to $0$}
\]
In other words, for any $j \in [t]$, with probability $\rho^2$, we set all $x_i(j)$ variables to $x(j)$, otherwise we resample all variables independently. It is easy to see that the above again results in a distribution where each $(x_i,x_j)$ variables pair is $\rho^2$-correlated. However, note that under the new distribution, the probability of realizing a all-ones assignment along any coordinate $j \in [t]$ is at least
\[
\Pr\left[x(j) = 1\right] \cdot \Pr\left[\theta(j) = 1 \right] = \rho^2 \mu,
\]
which improves on $\mu\rho^r$ from the previous test distribution -- this is the key component towards deriving the intended completeness-soundness ratio. We conclude our discussion by giving a brief sketch of the completeness and soundness analysis of the test with respect to the new distribution. Call the new distribution over $r$-tuples $\cD^*$. For the completeness, for a dictator function $f(x) = x(1)$, we proceed as before and get
\begin{align*}
\Ex_{(x_1,\ldots,x_r) \sim \cD^*}\left[\prod_{i \in [r]} f(x_i)\right] 
&\geq \Pr_{x \sim \{0,1\}^t_\mu}\Big[x(1) = 1\Big] \Pr_{\left(x_i\right)^r_{i = 1}| x} \Big[\forall i \in [r], \ x_i(1) = x(1)\Big] \\
&\geq \mu \Pr_{\theta_1 \sim \cD_{k,\rho}} \Big[\forall i \in [r], \ \theta_i(1) = 1\Big] \\
& = \mu\rho^2.
\end{align*}
Our soundness analysis employs the noise stability analysis from Khot and Saket~\cite{KS15}. Consider a function $f:\{0,1\}^t \to \{0,1\}$ having no influential coordinates satisfying $\Ex_{x}\left[f(x)\right] = \mu$. The first step is to observe that since under the test distribution, the variables $x_1,\ldots,x_r$ are pairwise $\rho^2$-correlated, using a multidimensional version of {\em Borell's Isoperimetric Inequality} (Theorem \ref{thm:ks-stab1}), one can show that
\[
\Ex_{x_1,\ldots,x_r}\left[\prod_{i \in [r]} f(x_i)\right] \leq \Gamma^{(r)}_{\rho^2}(\mu)
\]
where $\Gamma^{(r)}_{\rho^2}(\mu)$ is the iterated $r$-ary Gaussian stability of a halfspace with volume $\mu$ (see \eqref{eqn:gamma-def} for a formal definition). Furthermore, when $\rho^2 \leq O(1/(r^2\log(1/\mu)))$, \cite{KS15} shows that 
\[
\Gamma^{(r)}_{\rho^2}(\mu) \lesssim \mu^r
\]
which concludes the soundness analysis.

\subsection{Choice of Outer Verifier}				\label{sec:outer}

Given the above dictatorship test, we now proceed to discuss the composition step. Typically, for CSPs without global constraints, a dictatorship test for a predicate can be plugged in almost immediately into \ug~(or often even {\sc LabelCover}), and result in hardness matching the completeness soundness ratio. However, that technique fails to work for CSPs with global cardinality constraints since this does not provide {\em local bias control}. Formally, the composition step with the above dictatorship test will introduce a long code table $f_v : \Omega^t \to \{0,1\}$ for every vertex $v$ of the outer verifier CSP (say \uniquegames), denoted by $\Psi$. Then, it embeds the dictatorship test with the outer verifier in such a way that the overall reduction can be thought of as the following two step process.
\begin{itemize}
	\item Sample a vertex $v \in \Psi$. Let $g_v$ denote the weighted average of the long codes of the neighbors of $v$ in $\Psi$.
	\item Test $g_v$ on the distribution $(x_1,\ldots,x_t)$ i.e, accept if and only if 
	\[ 
	g_v(x_1) = \cdots = g_v(x_t) = 1
	\]
\end{itemize} 
The key idea used in the above setup is that if $\Psi$ admits a labeling $\sigma$ which satisfies most edges, then the set dictator assignment $f_v = x_{\sigma(v)}$ induces a large fraction of edges. This is because since most edges are consistent with the labeling $\sigma$, this translates to the effect that even the averaged function $g_v$ still behaves like $f_v$ which is a dictator, and hence, the test accepts with probability close to the completeness of the distribution. On the other hand, if the optimal value of $\Psi$ is small, then for any fixed labeling, most edges will be inconsistent, and therefore most averaged functions don't have influential coordinates. Now suppose in addition, we could guarantee that for most averaged function $g_v$, we have $\Ex[g_v] \approx \mu$, we can use the soundness guarantee of the distribution to argue that the test accepts with probability at most the soundness value of the test.

However, note that the composition step as is can only guarantee that the expected bias of a long code $\{f_v\}$ for a randomly chosen vertex $v$ is $\mu$, and as such this does not imply the above concentration guarantee that is needed to argue soundness. In particular, in the context of the reduction, this can allow cheating assignments where the adversary can set a subset of long codes to be the constant all ones functions and the remaining to be all zeros, which can cause the above analysis to fail. Therefore, in order to ensure that even under such assignments, the biases of the averaged long codes concentrate around the global bias $\mu$, the averaging operator should have good mixing properties, or equivalently, have large spectral gap. This requires the use of non-standard outer verifiers such as the Quasirandom PCP~\cite{Khot06} or \smallsetexpansion~\cite{RS10,RST12}. In particular, we shall use the SSE based framework introduced in \cite{RST12}. 

The key component in the \cite{RST12}~framework is a family of noise operators $\{M_z\}$, referred to as {\em Noise Operators with Leakage}. Formally, given a string $z \in \{\bot,\top\}^t$ the corresponding noise operator $M_z$ on the space $\Omega^t$ is defined as follows. For any $\omega \in \Omega^t$, one can sample $\omega' \sim M_z(\omega)$ using the following process. For every $i \in [t]$, do the following independently: if $z(i) = \top$, set $\omega'(i) = \omega(i)$ otherwise re-sample $\omega'(i) \sim \Omega$. In particular, for a random draw of $\{\bot,\top\}^t_\beta$, the noise operator $M_z$ behaves like the standard noise operator $T_\beta$ on the space $L_2(\Omega)$. By incorporating the above noise operator in the averaging step, one can guarantee that the spectral gap of the averaging operator is at least $1 - \beta$, thus guaranteeing the aforementioned concentration on the biases. We describe our overall reduction in Figure \ref{fig:noise-test1}.

\begin{figure}[ht!]
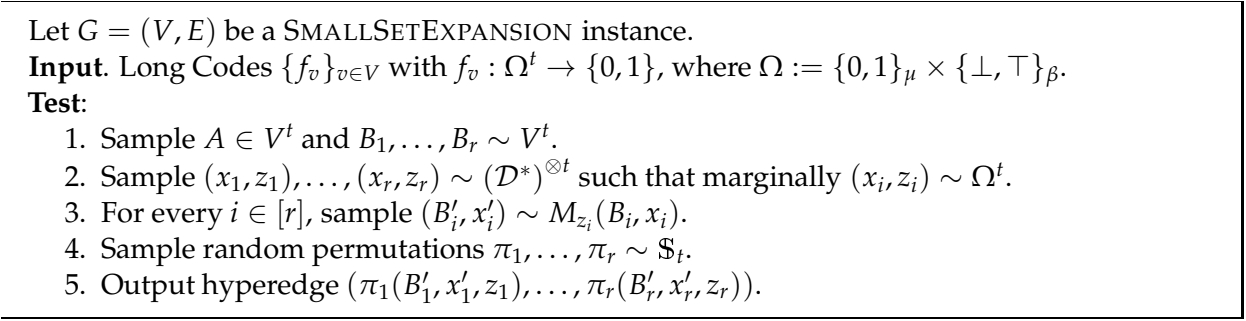

	\begin{mdframed}
		Let $G = (V,E)$ be a \smallsetexpansion~instance.\\
		{\bf Input}. Long Codes $\{f_v\}_{v \in V}$ with $f_v:\Omega^t \to \{0,1\}$, where $\Omega := \{0,1\}_\mu \times \{\bot,\top\}_\beta$. \\
		{\bf Test}: 
		\begin{enumerate}
			\item Sample $A \in V^t$ and $B_1,\ldots,B_r \sim V^t$.
			\item Sample $(x_1,z_1),\ldots,(x_r,z_r) \sim \left(\cD^*\right)^{\otimes t}$ such that marginally $(x_i,z_i) \sim \Omega^t$.
			\item For every $i \in [r]$, sample $(B'_i,x'_i) \sim M_{z_i}(B_i,x_i)$.
			\item Sample random permutations $\pi_1,\ldots,\pi_r \sim \mathbbm{S}_t$.
			\item Output hyperedge $\left(\pi_1(B'_1,x'_1,z_1),\ldots,\pi_r(B'_r,x'_r,z_r)\right)$.
		\end{enumerate}
	\end{mdframed}
	\caption{Reduction from SSE (Informal)}
	\label{fig:noise-test1}
\end{figure} 

The analysis of the above reduction combines the arguments from Section \ref{sec:dict} with techniques from \cite{RST12}. A key difference is that since our test distribution works in the almost uncorrelated regime, we need a more careful completeness analysis i.e,. the reduction from \cite{RST12} works in the setting $\rho \to 1$ whereas our tests are based in the setting $\rho \to 0$. Another additional component in our proof is that is that we strengthen $k$-ary noise stability estimate for functions over the $[R]$-ary hypercube (i.e, \eqref{eqn:sound-val}) with small low degree influences to the setting of arbitrary probability spaces.

\subsection{Approximation algorithm for $\dksh$}

Our approximation algorithm for $\dksh$ is similar in spirit to the $O(R^{k-1}\log R)$-approximation algorithm for $\mkcsp$'s from \cite{MNT16}. Following their approach, the overall reduction consists of two components:
\begin{itemize}
	\item A $O(\mu \log (1/\mu))$-approximation algorithm for $\dks$.
	\item A reduction from $\dksh_r$ to $\dks$ which shows that $\alpha_{(\mu)}(\dksh_r) \geq \mu^{r-2} \alpha_{(\mu)}(\dks)$.
\end{itemize} 
For the first point, we use a $O(\mu \log(1/\mu))$-approximation algorithm that can be obtained by using a reduction (\cite{CHK11}) from $\dks$ to $\mcspt$ on label set $\mu$ and then combining it with the $O(\log R/R)$-approximation algorithm for $\mcspt$ from \cite{KKT15}. The second point is based on the reduction from $\mkcsp_R$ to $\mcspt$ in \cite{MNT16} -- in our setting, this translates to reducing a $\dksh$ instance $H$ into the $\dks$ instance $G$ by adding the clique-expansion of every edge in $H$ to $G$. Combining these ideas immediately yields the $O(\mu^{r-1 \log(1/\mu)})$-approximation algorithm for $\dksh$.

\part{Bias Dependence Characterization}
\section{Preliminaries}

We introduce some notation and definitions that will be used in this part. Throughout we will be dealing with Boolean predicates $\psi:\{0,1\}^r \to \{0,1\}$. For any $i \in [r]$ we shall use $\psi_i$ to denote the $r$-ary predicate supported on strings of Hamming weight exactly $i$, and we shall use $\cS_i$ to denote the set of $r$-length Boolean strings with Hamming weight exactly $i$. Furthermore, for any $i \in \mathbbm{Z}_{\geq 1}$, we use ${\bf 1}_i$ and ${\bf 0}_i$ to denote the all ones and all zeros vectors of length $i$. We will drop the indexing with $i$ and denote ${\bf 1} = {\bf 1}_{i}$ and ${\bf 0} = {\bf 0}_i$ when the length of the string is clear from context. 

{\bf Labelings}. For any set $V$, a Boolean labeling is a mapping $\sigma:V \to \{0,1\}$. Given a weight function $w:V \to \mathbbm{R}_{\geq 0}$, the weight of the labeling $\sigma$ is denoted by $w(\sigma) = \sum_{i \in \sigma^{-1}(1)} w(i)$, and the relative weight is defined as $w(\sigma)/w(V)$, where $w(V) = \sum_{i \in V} w(i)$ is the total weight. Note that the relative weights of the vertices define a probability distribution on $V$, and we use  $i \sim w$ to denote a draw of a vertex $i$ from $V$ using this distribution. We use ${\rm supp}(\sigma)$ to denote the set of non-zero indices in $\sigma$ and for any ordered set $e = (i_1,\ldots,i_t) \subseteq V$, we use $\sigma(e)$ to denote the vector obtained by applying $\sigma$ coordinate-wise i.e., 
\[
\sigma(e) = \Big(\sigma(i_1),\sigma(i_2),\ldots,\sigma(i_t)\Big).
\]
Finally, for a pair of labelings $\sigma_1: V_1 \to \{0,1\}$ and $\sigma_2:V_{2} \to \{0,1\}$ defined on disjoint sets $V_1$ and $V_2$, we define the concatenated labeling $\sigma: = \sigma_1 \circ \sigma_2$ to be the following labeling on $V_1 \sqcup V_2$:
\[
\sigma(i) = 
\begin{cases}
	\sigma_1(i) & \mbox{ if } i \in V_1, \\
	\sigma_2(i) & \mbox{ if } i \in V_2.
\end{cases}
\] 
   
\subsection{Biased CSPs}

An instance of a Biased CSP $\Psi = \Psi(V,E,w,\mu,\psi)$ is a {\sc Max-CSP} instance with predicate $\psi$ with the constraint hypergraph $(V,E)$ and nonnegative vertex weight function $w:V \to \mathbbm{R}_{\geq 0}$. We will always identify the vertex set $V$ as $[n]$, and treat the hyperedges in $E$ as ordered with respect to the natural total ordering on $[n]$. The objective of $\Psi$ is to compute the optimal labeling $\sigma:V \to \{0,1\}$ corresponding to 
\begin{eqnarray*}
	\textnormal{Maximize} 		& 	\Pr_{e \sim E}\left[\psi\left(\sigma(e)\right) = 1 \right] \\
	\textnormal{Subject to}		& 	\Ex_{i \sim w}\left[\sigma(i) \right]  \leq \mu,		   
\end{eqnarray*} 
We use ${\sf Val}_{\leq \mu}(\Psi)$ to denote the optimal value achievable. In addition, for a specific labeling $\sigma:V \to \{0,1\}$ we use ${\sf Val}_\sigma(\Psi)$ to denote the value achieved by the labeling $\sigma$ i.e,
\[
{\sf Val}_{\leq \mu}(\Psi) := \Pr_{e \sim E} \left[\psi\left(\sigma(e)\right) = 1\right],
\]
where $e \sim E$ is used to denote a uniformly random draw of an edge from $E$.

\subsection{Densest SubHypergraph Problems}

We formally define the densest subhypergraph problems used in our setup.

\begin{definition}[{\sc Densest}-$k$-{\sc SubHypergraph} ($\dksh$)] 
A $\dksh$ instance is identified by a unweighted $r$-uniform hypergraph $H = (V,E)$. Then the objective is to find a subset $S \subset V$ of size $k$ that maximizes the fraction of edges induced on $S$ i.e., the goal is to compute
\begin{eqnarray*}
	\textnormal{Maximize} 		& 	\Pr_{e \sim E}\left[\sigma(e) = {\bf 1}\right] \\
	\textnormal{Subject to}		& 	\Ex_{i \sim V}\left[\sigma(i) \right]  = \mu,		   
\end{eqnarray*} 
where $\mu := k/|V|$. We will use $\alpha^\uw_{(\mu)}\left(\dksh_r\right)$ to denote the best approximation factor achievable in polynomial time for $r$-uniform unweighted $\dksh_r$ instances with $k = \mu |V|$. Furthermore, given a $\dksh_r$ instance $H$, we use the notation ${\sf Val}_{(\mu)}(H)$ to denote the optimal value achievable for $H$ with $k = \mu |V|$. For a labeling $\sigma:V \to \{0,1\}$, we use ${\sf Val}_{\sigma}(H)$ to define the value of the labeling on $H$ as defined above i.e.,
\[
{\sf Val}_\sigma(H) := \Pr_{e \sim E} \left[\sigma(e) = {\bf 1}_r\right].
\]
\end{definition}

\begin{remark}
	Note that we use the notation $\alpha^{\uw}_{(\mu)}$  instead of $\alpha^\uw_{\leq \mu}$ notation for $\dksh$ to emphasize that $\dksh$ uses the constraint $k = \mu|V|$ instead of $k \leq \mu|V|$, although the two constraints are equivalent when the instance is uniformly weighted.
\end{remark}

We shall also need the ``at most'' + ``weighted'' variant of the above, which we denote using $\ldksh$.

\begin{definition}[$\ldksh$]
An $\ldksh$ instance is identified by a {\bf vertex weighted} hypergraph $H = (V,E,w,\mu)$ and a bias parameter $\mu \in (0,1)$. Then the objective is to find a subset $S \subset V$ of relative weight at most $\mu$ that maximizes the fraction of edges induced on $S$ i.e., the goal is to compute:
\begin{eqnarray*}
	\textnormal{Maximize} 		& 	\Pr_{e \sim E}\left[\sigma(e) = {\bf 1}\right] \\
	\textnormal{Subject to}		& 	\Ex_{i \sim w}\left[ \sigma(i)\right] \leq \mu.		   
\end{eqnarray*} 
We use the notation ${\sf Val}_{\leq \mu}(H)$ to denote the optimal value achievable in the objective for $H$ using sets of relative weight at most $\mu$.
\end{definition}

We shall need the following folklore lemma which relates the bias approximation curve for $\dksh$ for different biases.

\begin{restatable}{lem}{comp}			\label{lem:comp}
	For any $\ell \in \mathbbm{N}$, $r \in \mathbbm{N}_{\geq 2}$ and $\mu \in (0,1/\ell)$ we have
	\[
	\alpha^\uw_{(\ell \mu)}\left(\dksh_r\right) \asymp_{\ell,r} \alpha^\uw_{(\mu)}\left(\dksh_r\right),
	\]
	where $\asymp_{\ell,r}$ notation implies that both sides are equivalent up to multiplicative factors depending only on $\ell$ and $r$
\end{restatable}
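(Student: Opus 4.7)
The plan is to establish both directions of the equivalence by showing that on any fixed hypergraph, the optimal values at biases $\mu$ and $\ell\mu$ are within multiplicative factors depending only on $\ell$ and $r$, and that approximate solutions transfer between the two regimes via random subsampling (from $\ell\mu$ to $\mu$) or arbitrary extension (from $\mu$ to $\ell\mu$).

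The key structural step is to prove the sandwich
\[
\Omega_{\ell,r}\bigl({\sf Val}_{(\ell\mu)}(H)\bigr) \;\leq\; {\sf Val}_{(\mu)}(H) \;\leq\; {\sf Val}_{(\ell\mu)}(H)
\]
for every $r$-uniform hypergraph $H$ on $n$ vertices. The upper bound is immediate: any optimal $\mu n$-vertex set $S^\ast$ can be enlarged by $(\ell-1)\mu n$ arbitrary vertices, and this can only add induced edges. For the lower bound, I would take an optimal $\ell\mu n$-vertex set $T^\ast$ and sample $S \subset T^\ast$ uniformly of size $\mu n$. For each fixed edge $e \in E[T^\ast]$, the survival probability is $\binom{\ell k - r}{k - r}/\binom{\ell k}{k} = \prod_{i=0}^{r-1}(k-i)/(\ell k - i)$, which for $k = \mu n \geq 2r$ is bounded below by $\Omega_{\ell,r}(\ell^{-r})$; a standard averaging argument then produces a subset $S$ with the claimed number of induced edges.

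For the direction $\alpha^{\uw}_{(\mu)}(\dksh_r) \gtrsim_{\ell,r} \alpha^{\uw}_{(\ell\mu)}(\dksh_r)$, I would take an algorithm $A$ achieving the latter ratio, run it on $H$ at bias $\ell\mu$ to obtain $T$ of size $\ell\mu n$ with ${\sf Val}_T(H) \geq \alpha^{\uw}_{(\ell\mu)} \cdot {\sf Val}_{(\ell\mu)}(H) \geq \alpha^{\uw}_{(\ell\mu)} \cdot {\sf Val}_{(\mu)}(H)$, and then subsample $S \subset T$ of size $\mu n$ uniformly (derandomized by the method of conditional expectations on the induced-edge count). The sampling calculation above yields ${\sf Val}_S(H) \gtrsim_{\ell,r} \alpha^{\uw}_{(\ell\mu)} \cdot {\sf Val}_{(\mu)}(H)$, which is exactly the required approximation guarantee for bias $\mu$. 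Symmetrically, for $\alpha^{\uw}_{(\ell\mu)}(\dksh_r) \gtrsim_{\ell,r} \alpha^{\uw}_{(\mu)}(\dksh_r)$, I would run an $\alpha^{\uw}_{(\mu)}$-algorithm on $H$ at bias $\mu$ to obtain $S$ of size $\mu n$, invoke the structural lower bound to conclude ${\sf Val}_S(H) \gtrsim_{\ell,r} \alpha^{\uw}_{(\mu)} \cdot {\sf Val}_{(\ell\mu)}(H)$, and then extend $S$ to an $\ell\mu n$-vertex set $T$ by adding any $(\ell-1)\mu n$ vertices from $V \setminus S$, which only increases the induced-edge count.

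There is no deep obstacle here; the only subtleties are bookkeeping. First, $\mu n$ and $\ell\mu n$ need to be integers, which can be ensured up to harmless rounding absorbed in the $\asymp_{\ell,r}$ notation. Second, the subsampling bound $\prod_{i=0}^{r-1}(k-i)/(\ell k - i) \geq \Omega_{\ell,r}(\ell^{-r})$ requires $k \geq 2r$ (equivalently $n \geq 2r/\mu$); whenever $n$ is smaller than this threshold the instance has size $O_{\ell,r}(1)$ and can be solved by exhaustive enumeration in constant time, so it contributes nothing to the asymptotic approximation ratio. Putting the two directions together yields $\alpha^{\uw}_{(\ell\mu)}(\dksh_r) \asymp_{\ell,r} \alpha^{\uw}_{(\mu)}(\dksh_r)$ as claimed.
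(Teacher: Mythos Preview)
Your proposal is correct and follows essentially the same approach as the paper: both directions use random subsampling (incurring an $\ell^{-r}$ loss) to pass from an $\ell\mu$-sized set to a $\mu$-sized one, and arbitrary extension to go the other way, combined with the monotonicity ${\sf Val}_{(\mu)}(H) \leq {\sf Val}_{(\ell\mu)}(H)$. The paper states the subsampling bound more loosely (as $\ell^{-r}$ rather than via the exact hypergeometric product) and omits the derandomization and small-$n$ bookkeeping you include, but the argument is the same.
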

\begin{proof}
	Clearly, it suffices to show that the LHS and RHS quantities satisfy $\gtrsim_{\ell,r}$ and $\lesssim_{\ell,r}$ relationships with respect to each other.
	
	{\bf The $\gtrsim_{\ell,r}$ direction}. We begin by showing that $\alpha^\uw_{(\ell \mu)}(\dksh_r) \geq \ell^{-r} \cdot \alpha^\uw_{(\mu)} (\dksh_r)$. Let $H = (V,E)$ be a $\dksh_r$ instance with bias constraint $\ell \mu$. Let $S^* \subset V$ be the optimal $\ell \mu|V|$-sized subset achieving $\delta_{\ell \mu|V|}(H)$. Now consider a random $\mu |V|$ sized subset $S'$ of $S^*$. It follows that 
	\[
	\Ex_{S'}\Big[|E_H[S']|\Big] \geq \ell^{-r} |E_H[S^*]| = \ell^{-r} \delta_{\ell \mu |V|} (H)
	\]
	which in turn implies that 
	\[
	\delta_{\mu|V|}(H) \geq \ell^{-r} \delta_{\ell \mu|V|} (H).
	\]
	The above immediately suggests the following algorithm: given $H$, run the $\alpha^\uw_{(\mu)}(\dksh_r)$ algorithm on $H$ to output a $\mu|V|$ sized subset $\hat{S}$. Then arbitrarily add $\mu(\ell - 1)|V|$ additional vertices to make it a set $S$ of size $\ell \mu |V|$. The above chain arguments imply that 
	\begin{align*}
		|E_H[S]| \geq |E_H[S']| \geq \alpha^\uw_{(\mu)}\left(\dksh_r\right) \delta_{\mu |V|} (H) \geq \ell^{-r} \alpha^\uw_{(\mu)}\left(\dksh_r\right) \cdot \delta_{\ell \mu|V|}(H).
	\end{align*} 
	Since the above holds for any $\dksh_r$ instance, this establishes the $\gtrsim_{\ell,r}$ direction.
	
	{\bf The $\lesssim_{\ell,r}$ direction}. This direction again uses similar arguments. Given a $\dksh_r$ instance with bias constraint $\mu$, we do the following:
	\begin{itemize}
		\item Run the $\alpha^\uw_{(\ell \mu)}(\dksh_r)$ approximation algorithm on $H$ to output a set $S'$ of size $\ell \mu |V|$ which satisfies 
		\[
		|E_H[S']| \geq \alpha^\uw_{ (\ell \mu)} \left(\dksh_r\right) \cdot \delta_{\ell \mu |V|}(H) \geq \alpha^\uw_{(\ell \mu)}(\dksh_r) \cdot \delta_{\mu |V|} (H).
		\]
		\item Sub-sample a random $\mu |V|$ sized subset $S$ of $S'$ and output $S$. It is easy to argue that 
		\[
		\Ex_S\left[|E_H[S]|\right] \geq \ell^{-r} |E_H[S']| \geq \ell^{-r} \alpha^\uw_{(\ell \mu)}(\dksh_r) \cdot \delta_{\mu |V|} (H).
		\]
	\end{itemize}
	Since the above again holds for any $\dksh_r$ instance $H$, this concludes the proof of the $\lesssim_{\ell,r}$-direction.
\end{proof}

\section{Bias Independence}

In this part, we give a clean characterization of conditions under which the approximability of the predicate is bias dependent. Recall that given a predicate $\psi:\{0,1\}^r \to \{0,1\}$, we will use $\alpha_{\leq \mu}(\psi)$ to denote the optimal $\mu$-biased approximation factor achievable using polynomial time algorithms. We formally define bias independence as follows.

\begin{definition}[Bias Independence]
	A predicate $\psi:\{0,1\}^r \to \{0,1\}$ is bias independent if and only if there exists a constant $c_{\psi} \in (0,1)$ such that 
	\[
	\inf_{\mu \in (0,1/2) } \alpha_{\leq \mu}(\psi) = c_\psi.
	\]
\end{definition}
In other words, a predicate is bias independent, if there exists a polynomial time constant factor approximation algorithm for solving \mcsp~instances on the predicate subject to the constraint $\Ex_{v \sim V} [\sigma(v) ]\leq \mu$ for any constant $\mu \in (0,1/2)$. We proceed towards formalizing the condition of interest. Given a predicate $\psi:\{0,1\}^r \to \{0,1\}$, let $\psi^{-1}(1)$ denote the set of accepting strings of the predicate. Alternatively, thinking of $\psi^{-1}(1)$ as a family of subsets equipped with a partial ordering under the {\em containment} relationship, we can define a set of minimal elements (w.r.t containment relationship), denoted by $\cM_\psi$, for that predicate. Our first main result is the following theorem which gives a tight characterization of bias dependence for predicates.

\predcond*

In addition to the above, we show the following theorem which shows that the approximability of bias dependent predicates is tightly connected to that of \dksh, up to factors of $r$.

\predapprx*

In the rest of the section we shall prove Theorems \ref{thm:pred-cond} and Theorem \ref{thm:pred-approx} using various intermediate results that relate the bias approximation curve of a predicate $\psi$ to that of $\dksh$. 

\subsection{Hardness Using $\dksh$}

We show the following hardness result.
\begin{lemma}					\label{lem:pred-hard}
	The following holds for every $\mu \in (0,1/2)$. Let $\psi:\{0,1\}^r \to \{0,1\}$ be a $r$-ary predicate such that there exists a minimal element $\beta \in \cM_{\psi}$ which does not belong to $\cS_0 \cup \cS_1$. Then
	\[
	\alpha_{\leq \mu'}(\psi) \leq \alpha^{\uw}_{(\mu)}(\dksh_{i^*})
	\]
	where $i^* = \|\beta\|_0$ and $\mu' = \mu/(r - i^* + 1)$.
\end{lemma}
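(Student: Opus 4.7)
The plan is a value-preserving reduction from unweighted $\dksh_{i^*}$ to $\mu'$-biased $\psi$-CSP. Specifically, I would show that any $i^*$-uniform hypergraph $H = (V, E)$ with $|V| = n$ can be converted into a vertex-weighted CSP instance $\Psi$ on predicate $\psi$ such that ${\sf Val}_{\leq \mu'}(\Psi) = {\sf Val}_{(\mu)}(H)$, and such that a $\mu'$-biased labeling of $\Psi$ with value $v$ yields in polynomial time a size-at-most-$\mu n$ subset of $V$ inducing a $v$-fraction of the hyperedges of $H$. This transfers any $\alpha_{\leq \mu'}(\psi)$-approximation for $\Psi$ into an approximation of the same quality for $\dksh_{i^*}$, proving $\alpha^{\uw}_{(\mu)}(\dksh_{i^*}) \geq \alpha_{\leq \mu'}(\psi)$.

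The construction pads each $i^*$-edge into an $r$-ary $\psi$-constraint using heavy dummy variables. Let $S := \mathrm{supp}(\beta) \subseteq [r]$, so $|S| = i^*$. Take the CSP vertex set to be $V \sqcup \{d_1, \ldots, d_{r - i^*}\}$, giving each $v \in V$ weight $1$ and each dummy $d_j$ weight $n$. For every hyperedge $e = (v_1, \ldots, v_{i^*}) \in E$, add a constraint in $\Psi$ whose ordered $r$-tuple places $v_1, \ldots, v_{i^*}$ in the positions indexed by $S$ (in order) and places the dummies $d_1, \ldots, d_{r - i^*}$ in the positions of $[r] \setminus S$. The total vertex weight is $T = n(r - i^* + 1)$, so a $\mu'$-biased labeling has total budget $\mu' T = \mu n$. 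Because $\mu < 1/2 < 1$, a single dummy's weight $n$ already exceeds $\mu n$, so every feasible labeling must pin all dummies to $0$; conversely any subset of $V$ of size at most $\mu n$ combined with $0$'s on the dummies is $\mu'$-biased.

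With dummies pinned to $0$, each constraint assigns a string whose support lies in $S$. By minimality of $\beta$ in $\psi^{-1}(1)$, the only accepting string with support contained in $S$ is $\beta$ itself, so the constraint is satisfied if and only if every $V$-variable in it receives label $1$, i.e., if and only if the hyperedge $e$ is induced by $\sigma^{-1}(1) \cap V$. This yields a bijection between satisfied constraints of $\Psi$ and induced hyperedges of $H$, and since the number of constraints equals $|E|$, the two fractions coincide, giving the value identity. The one place where everything must fit at once is the joint calibration of dummy weight and bias: we need $\mu' T = \mu n$ so that any $\mu n$-subset of $V$ is realizable as a feasible labeling, while simultaneously requiring each dummy weight to exceed the entire budget to forbid setting any dummy to $1$. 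Both conditions are met by taking $W = n$ and $\mu' = \mu / (r - i^* + 1)$, using only the hypothesis $\mu < 1/2$; minimality of $\beta$ then handles the rest.
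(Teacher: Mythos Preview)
Your proposal is correct and takes essentially the same approach as the paper: both pad each $i^*$-edge with $r-i^*$ dummy vertices of weight $n$, calibrate the bias to $\mu' = \mu/(r-i^*+1)$ so that the budget is exactly $\mu n$ and any dummy is automatically forbidden, and then invoke minimality of $\beta$ to identify satisfied constraints with induced hyperedges. The only small addition the paper makes explicit is padding the decoded subset up to size exactly $\mu n$ (which can only help for the {\sf AND} predicate), a step you leave implicit.
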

\begin{proof}
	Without loss of generality, let the minimal element in $\psi^{-1}(1)$ be of the form $\beta = \sum_{j = 1}^{i^*} e_j $ i.e, it is the vector consisting of $i^*$-ones followed by zeros. We will establish the lemma by reducing $\dksh$ instances with arity $i^*$ to biased CSP instances on predicate $\psi$. Our reduction is described in Figure \ref{fig:ref-1}.

	\begin{figure}[ht!]
		\begin{mdframed}
			Given  $\dksh$ instance $H = (V,E)$ of arity $i^*$, with $k = \mu|V|$. From $H$, we construct a new vertex weighted $\mu'$-biased instance $\Psi'(V',E',w,\mu',\psi)$ with predicate $\psi$ as follows.\\
				
			{\bf Vertex Set}. The vertex set is $V' = V \cup \{a_j\}^{r}_{j = i^* + 1}$, where $a_{i^* + 1},\ldots,a_r$ are dummy vertices. \\
			
			{\bf Constraint Set}. The constraint set $E'$ is as follows. For every hyperedge $e \in E$, introduce the ordered constraint on $e' :=  e \circ (a_j)^{r}_{j = i^* + 1}$. \\
			
			{\bf Vertex Weights}. Furthermore, we define the vertex weights $w: V' \to \mathbbm{R}_{\geq 0}$ as follows. For every $v \in V$, we set $w(v) = 1$, and for every $a_j \in V' \setminus V$, we set $w(a_j) = n$ where $n = |V|$. Finally the objective is to maximize the fraction of constraints satisfied in $\Psi'$ over assignments with relative weight at most $\mu' = \mu/(r- i^* + 1)$. This completes the description of  $\Psi'$. \\
			
			{\bf Algorithm}. Let $\cA$ denote the $\alpha^* := \alpha_{\leq \mu'}(\psi)$-approximation algorithm for $\mu'$ biased CSPs with predicate $\psi$. Now consider the following procedure for solving $H$.

			\begin{enumerate}
				\item Construct $\Psi' = (V',E',w,\mu',\psi)$ from $H$ as above.
				\item Compute a labeling $\sigma'$ of relative weight at most $\mu'$ by running algorithm $\cA$ on predicate $\Psi'$.		\label{step:appx}
				\item Let $\sigma:= \sigma'|_V$. where $\sigma'|_{V}$ is the restriction of $\sigma'$ to $V$.
				\item If $w(\sigma) < \mu |V|$, label $\mu|V| - w(\sigma)$-arbitrary $0$-labeled vertices in $\sigma$ as $1$ such that the resulting labeling $\sigma_f$ satisfies $w(\sigma_f) = \mu|V|$.
				\item Return $\sigma_f$.   
			\end{enumerate}
		\end{mdframed}
	\caption{$\dksh \to \psi$ Reduction}	
	\label{fig:ref-1}
	\end{figure}
	 
	The following is the main claim which states the guarantee of the above reduction.
	
	\begin{claim}			\label{cl:red}
		The algorithm in Figure \ref{fig:ref-1} returns a labeling $\sigma_f:V \to \{0,1\}$ of weight $\mu n$ in $H$ such that 
		\[
		{\sf Val}_{\sigma_f}(H) \geq \alpha_{\leq \mu'}(\psi) \cdot {\sf Val}_{(\mu)}(H).
		\]
	\end{claim}
	
	Before we prove the above, note that the above claim concludes the proof of the lemma since it yields an $\alpha_{\leq \mu'}(\psi)$-approximation algorithm for $\dksh$ instances with $k = \mu|V|$, thus implying that $\alpha^\uw_{(\mu)}(\dksh_{i^*}) \geq \alpha_{\leq \mu'}(\psi)$.

	\begin{proof}[Proof of Claim \ref{cl:red}]
	The proof of the claim involves establishing the following:
	\begin{enumerate}
		\item Firstly we will show that the optimal value of $\Psi'$ is large i.e, ${\sf Val}_{\leq \mu'}(\Psi') \geq {\sf Val}_{(\mu)}(H)$ (Eq. \eqref{eqn:opt-1}). This would imply that the labeling $\sigma'$ in step \ref{step:appx} has large value in $\Psi'$ (Eq. \eqref{eqn:sig}).
		\item Next we shall show that the feasibility of $\sigma'$ implies that its support is contained in $V$, and hence it can be trivially decoded to a labeling $\sigma$ (and consequently $\sigma_f$) for $H$ which has large value (Eq. \eqref{eqn:sig2}).
	\end{enumerate}	
	
	We now proceed to establish the above steps.
		
	{\bf $\Psi'$ has large value}. We introduce an additional notation. For any edge $e \in E'$, we write $e = e^{\leq i^*} \circ e^{> i^*}$ where $e^{\leq i^*} := e \cap V$.
	Now, let $\sigma^*:V \to \{0,1\}$ be a labeling of relative weight $\mu$ for $H$ which achieves the optimal value ${\sf Val}_{(\mu)}(H)$ in $H$. Then, consider the labeling $\sigma_1:V' \to \{0,1\}$ which we define as follows:
	\[
	\sigma_1(i) := 
	\begin{cases}
		\sigma^*(i) & \mbox{ if } i \in V, \\
		0 		  & \mbox{ if } i \in V' \setminus V,
	\end{cases}
	\] 
	The definition of $\sigma_1$ implies that the relative weight of $\sigma_1$ in $\Psi'$ is exactly $w(\sigma^*)/((r - i^* + 1) n) = \mu n/((r - i^* + 1)n) = \mu'$ i.e., $\sigma_1$ is a feasible labeling for $\Psi'$. Hence,
	\begin{align}
		{\sf Val}_{\leq \mu'}(\Psi') \geq {\sf Val}_{\sigma_1}(\Psi') 
		&\geq \Pr_{e \sim E'}\left[\sigma_1(e) = \beta\right]		\non \\
		&= \Pr_{e \sim E'}\left[\{\sigma_1(e^{\leq i^*}) = {\bf 1}\} \wedge \{\sigma_1(e^{> i^*}) = {\bf 0}\}\right] 		\non\\
		&= \Pr_{e \sim E'}\left[\sigma_1(e^{\leq i^*}) = {\bf 1}\right] 	\tag{Since $\sigma_1(a_j) = 0,~\forall j$}	\non\\
		&= \Pr_{e \sim E}\left[\sigma^*(e) = {\bf 1}\right] 		\non\\
		&= {\sf Val}_{(\mu)}(H).				\label{eqn:opt-1}
	\end{align}

	Therefore, Line \ref{step:appx} in Figure \ref{fig:ref-1} returns a labeling $\sigma'$ of relative weight at most $\mu'$ in $\Psi'$ satisfying 
	\begin{equation}			\label{eqn:sig}
	{\sf Val}_{\sigma'}(\Psi') \geq \alpha^* \cdot {\sf Val}_{\leq \mu}(\Psi') \geq \alpha^* \cdot {\sf Val}_{(\mu)}(H),
	\end{equation}
	where recall that $\alpha^* = \alpha_{\leq \mu}(\psi)$ and the last step is due to \eqref{eqn:opt-1}. 
	
	{\bf Decoding $\sigma_f$ from $\sigma'$}. Next we claim that since $\sigma'$ has relative weight at most $\mu'$ in $\Psi'$, we must have ${\rm supp}(\sigma') \subseteq V$; since otherwise we have  
	\begin{equation}			\label{eqn:empty}
	\frac{w(\sigma')}{w(V')} \geq \frac{n\left|\left\{ j : \sigma'(a_j) = 1 \right\}\right|}{(r - i^* + 1)n} 
	\geq \frac{1}{r - i^* + 1} 
	> \frac{\mu}{r - i^* + 1} = \mu',
	\end{equation}
	which is a contradiction. This immediately implies the following observations for the labeling $\sigma'$:
	\begin{itemize}
		\item For any edge $e \in E'$, we have $\sigma'(e^{>i^*}) = {\bf 0}$.
		\item If $\sigma'$ satisfies edge $e \in E'$, then $\sigma'(e) = \beta$. This follows from the observation that since $\beta \in \cM_{\psi}$, it is the unique accepting string for $\psi$ whose last $(r - i^*)$-bits are set to $0$.
	\end{itemize}
	
	Using the above, for any edge $e = e^{\leq i^*} \circ e^{>i^*} \in E'$ we have the following chain of implications:
	\begin{align*}
		\psi(\sigma'(e)) = 1 
		& \Rightarrow \sigma'(e) = \beta 			\tag{Minimality of $\beta$} \\
		& \Rightarrow \left\{\sigma'(e^{\leq i^*}) = {\bf 1}\right\} \wedge \left\{\sigma'(e^{>i^*}) = {\bf 0}\right\} \\
		& \Rightarrow \left\{\sigma(e^{\leq i^*}) = {\bf 1}\right\},  
	\end{align*}
	i.e, whenever an edge $e \in E'$ is satisfied by the labeling $\sigma'$ the corresponding truncated edge $e^{\leq i^*} \in E$ is satisfied by $\sigma$. Hence, using the above observations and the one-to-one correspondence between the constraints in $H$ and $\Psi'$ we have 
	\begin{equation}				\label{eqn:sig2}
	{\sf Val}_{\sigma}(H) =\Pr_{e \sim E}\left[\sigma(e) = {\bf 1}\right] \geq \Pr_{e \sim E'}\left[\psi(\sigma'(e)) = 1\right] = \alpha^* \cdot {\sf Val}_{(\mu)}(H).
	\end{equation}
	Finally, observe that the weight of the truncated labeling $\sigma$ is identical to that of $\sigma'$ which is at most $\mu|V|$, which is then extended to a labeling $\sigma_f$ of weight exactly $\mu|V|$. Additionally, since ${\rm supp}(\sigma_f) \supseteq {\rm supp}(\sigma)$, we have ${\sf Val}_{\sigma_f}(H) \geq {\sf Val}_{\sigma}(H)$. These observations put together complete the proof of the claim.
\end{proof}

\end{proof}

\section{Algorithm using $\dksh$}

In this section, we establish an (approximate) converse to the Lemma \ref{lem:pred-hard} in the following lemma which lower bounds the bias of approximation curve of general CSPs using that of $\dksh$.

\begin{lemma}				\label{lem:at-most-0}
	The following holds for any predicate $\psi:\{0,1\}^r \to \{0,1\}$, and any choices of $\mu \in (0,1/2)$ and $\eta \in (\mu^2,1)$. Given a biased CSP $\Psi(V,E,w,\mu,\psi)$, there exists an efficient randomized algorithm which on input $\Psi$ returns a labeling $\sigma':V \to \{0,1\}$ of relative weight at most $\mu(1 + \eta)$ such that 
	\[
	{\sf Val}_{\sigma'}(\Psi) \gtrsim_r \eta^r \cdot {\sf Val}_{\leq \mu}(\Psi) \cdot \min_{\beta \in \cM_\psi} \alpha^\uw_{(\mu)}\left(\dksh_{\|\beta\|_0}\right) 
	\] 
\end{lemma}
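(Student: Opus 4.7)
The plan is to enumerate $\beta \in \cM_\psi$ and, for each one, build a derived $\dksh_{\|\beta\|_0}$ instance. Let $\sigma^*$ be a $\mu$-biased labeling achieving $\gamma := {\sf Val}_{\leq \mu}(\Psi)$. For every edge $e$ satisfied by $\sigma^*$, the string $\sigma^*(e) \in \psi^{-1}(1)$ dominates some minimal element $\beta_e \in \cM_\psi$ (i.e., ${\rm supp}(\beta_e) \subseteq {\rm supp}(\sigma^*(e))$). Since $|\cM_\psi| \leq 2^r$, by pigeonhole there is a fixed $\beta^* \in \cM_\psi$ with ${\rm supp}(\beta^*) \subseteq {\rm supp}(\sigma^*(e))$ on at least a $\gamma/2^r$ fraction of edges. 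The algorithm does not know $\beta^*$, so it tries every $\beta \in \cM_\psi$ and returns the best candidate labeling produced below.

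\textbf{Constructing the $\dksh$ instance.} For a fixed $\beta$ with $i := \|\beta\|_0$, define the derived hypergraph $H_\beta = (V, E_\beta)$ of arity $i$, inheriting the vertex weights of $\Psi$, where each ordered constraint $e = (e_1,\ldots,e_r) \in E$ contributes the hyperedge $(e_j : \beta_j = 1)$ to $E_\beta$. When $\beta = \beta^*$, the set $S^* := {\rm supp}(\sigma^*)$ has weighted relative measure at most $\mu$ and induces at least a $\gamma/2^r$ fraction of the hyperedges of $H_{\beta^*}$, so $H_{\beta^*}$ is an $\ldksh_i$ instance with optimum at least $\gamma/2^r$ at bias $\mu$. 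To invoke the \emph{unweighted} $\dksh_i$ oracle, I would bucket the vertices by weight on a geometric scale of $(1+\Theta(\eta))$, enumerate the polynomially many ways the hypothetical $\ldksh$ optimum could split across buckets, and within each guessed split treat the vertices as unit-weight and run the $\alpha^\uw_{(\mu)}(\dksh_i)$-approximation. The $\eta$-slack in the output bias absorbs the rounding error, while the loss in value is only $O_r(1)$. This produces a set $S \subseteq V$ of weighted relative measure at most $\mu(1 + \eta/2)$ inducing at least $\Omega_r(\alpha^\uw_{(\mu)}(\dksh_i) \cdot \gamma)$ fraction of the hyperedges in $H_\beta$.

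\textbf{Randomized rounding and conclusion.} Convert $S$ to $\sigma' : V \to \{0,1\}$ as follows: set $\sigma'(v) = 0$ for $v \notin S$, and for each $v \in S$ independently set $\sigma'(v) = 1$ with probability $p := \eta/2$ and $\sigma'(v) = 0$ otherwise. Then $\Ex[w(\sigma')/w(V)] \leq p \cdot \mu(1+\eta/2) < \mu(1+\eta)/2$, and for any $e \in E$ whose derived hyperedge in $H_\beta$ lies inside $S$,
\[
\Pr\big[\sigma'(e) = \beta\big] \;\geq\; p^i (1-p)^{r-i} \;\gtrsim_r\; \eta^r,
\]
since the $i$ positions of $\beta$ are each labeled $1$ with probability $p$ and each of the at most $r-i$ remaining positions of $e$ is labeled $0$ with probability at least $1-p$. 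Combining with the preceding step yields $\Ex[{\sf Val}_{\sigma'}(\Psi)] \gtrsim_r \eta^r \cdot \alpha^\uw_{(\mu)}(\dksh_i) \cdot \gamma$; since the algorithm returns the best labeling over $\beta \in \cM_\psi$ and $\alpha^\uw_{(\mu)}(\dksh_i) \geq \min_{\beta \in \cM_\psi} \alpha^\uw_{(\mu)}(\dksh_{\|\beta\|_0})$, we obtain the stated bound. Standard Markov + Chernoff arguments, together with running the randomized rounding repeatedly and keeping the best output, boost this expectation into a constant-probability guarantee on both the bias and the value. The main obstacle is the weighted-to-unweighted reduction in the second paragraph: naive weight-class enumeration loses a $\log W_{\max}$ factor, and as Remark \ref{rem:smooth} emphasizes, weighted $\dksh$ is genuinely harder than unweighted $\dksh$ at the same $k$, so the slack $\eta$ must be spent delicately here (and the lemma's hypothesis $\eta > \mu^2$ likely stems from the rounding tolerance required in this step).
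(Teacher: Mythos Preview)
Your high-level skeleton matches the paper's proof exactly: enumerate $\beta\in\cM_\psi$, build the arity-$\|\beta\|_0$ hypergraph $H_\beta$ by projecting each constraint onto the coordinates where $\beta$ is $1$, pigeonhole to find a good $\beta^*$, solve the resulting weighted $\ldksh$ instance via an unweighted $\dksh$ oracle, and sub-sample the returned set to force the $0$-pattern on the remaining coordinates. The paper packages the first and last steps as Claim~\ref{cl:min-val} and Lemma~\ref{lem:at-most} respectively.

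The genuine gap is precisely the step you flag as ``the main obstacle'': the weighted-to-unweighted reduction. Your geometric-bucketing sketch does not work as stated. With no a priori bound on $W_{\max}/W_{\min}$ the number of buckets is not polynomial, and even with polynomially many buckets, ``enumerate the polynomially many ways the optimum could split across buckets'' is not a polynomial-time procedure (there are $n^{B}$ such splits for $B$ buckets). The paper's Theorem~\ref{thm:dksh-weight} resolves this differently: it isolates the at most $\mu^{-10}$ \emph{heavy} vertices (relative weight $>\mu^{10}$), brute-forces all $2^{\mu^{-10}}$ partial labelings of them, and on the residual \emph{bounded-weight} instance uses the standard ``replicate each vertex proportionally to its weight'' trick (Lemma~\ref{lem:dksh-red-2}, where the hypothesis $\eta>\mu^2$ is used for Hoeffding concentration). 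The $\eta^r$ loss in the lemma arises here, because the residual bias after fixing the heavy vertices can be as small as $\mu\eta$, and comparing $\alpha^{\uw}_{(\mu\eta)}$ to $\alpha^{\uw}_{(\mu)}$ via Lemma~\ref{lem:comp} costs a factor $\eta^{-r}$.

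A related point: the paper's sub-sampling step uses $p=1/2$, not $p=\eta/2$, losing only $2^{-r}=O_r(1)$ there; the $\eta^r$ in the final bound comes entirely from the weighted-to-unweighted reduction. Your choice $p=\eta/2$ is correct but burns the $\eta^r$ budget in the rounding, which then obligates you to achieve an $O_r(1)$-loss weighted-to-unweighted reduction---a strictly harder claim than what the paper proves, and one your sketch does not substantiate.
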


The proof of the above lemma involves a sequence of algorithmic reductions which eventually reduces to the setting of unweighted $\dksh$ instances. We give a brief outline of the various steps involved:

\begin{itemize}
	\item Firstly, we consider the setting of ``single string predicates'' i.e. predicates whose accepting set is a singleton and show that they can be reduced to weighted $\ldksh$ instances (Section \ref{sec:single-pred}).
	\item Then, we use use the algorithm for single string predicates as a black-box to derive approximation guarantees for general predicates (Section \ref{sec:gen-pred}).
	\item Finally, we use folklore techniques to reduce weighted $\ldksh$ instances to uniformly weighted $\dksh$ instances at the cost of a multiplicative slack in the bias (Section \ref{sec:dksh-red}).
\end{itemize}

The rest of this and the next section establishes the various steps sketched above.

\subsection{Approximating Single String Predicates}				\label{sec:single-pred}

In this section, we show that we can use approximation algorithms for \dksh~as black-boxes for approximating \mcsp's on single string predicates -- i.e, predicates whose set of accepting strings is a singleton. In particular, for any $\beta \in \{0,1\}^r$, we define the $\psi_\beta$-predicate instance on a constraint graph $\Psi(V,E)$ as follows. Define an ordering $v_1 \prec v_2 \prec \cdots \prec v_n$ on the set of vertices. Then a labeling $\sigma:V \to \{0,1\}$ satisfies the {\em ordered} constraint $e = (v_{i_1},v_{i_2},\ldots,v_{i_r}) \in E$ (where the vertices are ordered according to $\prec$) if and only if $(\sigma(v_{i_1}),\ldots,\sigma(v_{i_r})) = \beta$.

It is easy to see that $\dksh$ can be recovered as a special case of $\psi_\beta$ (i.e., $\beta = (1,\ldots,1)$). Our main observation here is that up to factors depending on $r$, every $\psi_\beta$-predicate is at least as easy to approximate as $\dksh_{\|\beta\|_0}$.

\begin{lemma}			\label{lem:at-most}
	The following holds for any $\beta \in \{0,1\}^r$, $\mu \in (0,1/2)$ and $\eta \in (0,1)$. Given an instance $\Psi(V,E,w,\mu,\psi_\beta)$ with single string predicate constraints $\psi_\beta$, there exists an efficient algorithm which returns a labeling $\sigma:V \to \{0,1\}$ of relative weight at most $(1 + \eta) \mu$ satisfying:
	\[
	{\sf Val}_{\sigma}(\Psi) \gtrsim_r  \eta^r \alpha^\uw_{(\mu)} \left(\dksh_{\|\beta\|_0}\right) \cdot {\sf Val}_{\leq \mu}(\Psi).
	\]
\end{lemma}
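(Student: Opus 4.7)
The plan is to reduce the biased CSP $\Psi$ to a weighted $\ldksh$ instance of arity $i^\ast := \|\beta\|_0$, apply the weighted-to-unweighted reduction of Section~\ref{sec:dksh-red} to obtain a densely induced vertex set, and then convert that set into a labeling of $\Psi$ via an independent sub-sampling step that forces the $0$-positions of $\beta$ to be assigned $0$.

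\textbf{Construction.} Let $T \subseteq [r]$ denote the set of positions $j$ where $\beta_j = 1$, so $|T| = i^\ast$. For each ordered constraint $e = (v_{i_1}, \ldots, v_{i_r}) \in E$, I define a hyperedge $h(e) := \{v_{i_j} : j \in T\}$ of arity $i^\ast$, and let $H = (V, \{h(e)\}_{e \in E}, w)$ inherit the vertex weights of $\Psi$ with hyperedges counted as a multiset. By construction, for any $S \subseteq V$ the fraction of hyperedges of $H$ induced by $S$ equals the fraction of constraints $e \in E$ whose $1$-positions all lie in $S$.

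\textbf{Correspondence and algorithm.} I would first establish ${\sf Val}_{\leq \mu}(H) \geq {\sf Val}_{\leq \mu}(\Psi)$: if $\sigma^\ast$ is an optimal $\mu$-biased labeling of $\Psi$ and $S^\ast = (\sigma^\ast)^{-1}(1)$, then every constraint satisfied by $\sigma^\ast$ has all its $1$-positions in $S^\ast$, so the corresponding hyperedge $h(e)$ is induced in $H$; since $w(S^\ast) \leq \mu \cdot w(V)$ the inequality follows. Feeding $H$ through the weighted-to-unweighted reduction of Section~\ref{sec:dksh-red} then produces a set $\hat{S} \subseteq V$ with $w(\hat{S})/w(V) \leq (1+\eta)\mu$ and $|E_H[\hat{S}]|/|E| \gtrsim_r \eta^r \alpha^{\uw}_{(\mu)}(\dksh_{i^\ast}) \cdot {\sf Val}_{\leq \mu}(H)$.

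\textbf{Recovering a labeling.} Independently for each $v \in V$, I would set $\sigma(v) = 1$ with probability $q$ if $v \in \hat{S}$ and $\sigma(v) = 0$ otherwise, where $q \in (0,1)$ is a constant (say $q = 1/2$). For any constraint $e$ whose $1$-positions all lie in $\hat{S}$, independence of the coordinates of $\sigma$ gives $\Pr[\sigma(e) = \beta] \geq q^{i^\ast}(1-q)^{r-i^\ast} = \Omega_r(1)$, since a $0$-position lying in $\hat{S}$ is still assigned $0$ with probability $1-q$. Taking expectations yields $\Ex[{\sf Val}_\sigma(\Psi)] \gtrsim_r |E_H[\hat{S}]|/|E|$ while $\Ex[w(\sigma)/w(V)] \leq q(1+\eta)\mu \leq (1+\eta)\mu$; a standard Markov-plus-repetition argument then extracts a single labeling achieving both guarantees up to constant loss, which, chained with the two previous inequalities, yields the claimed bound.

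The main technical obstacle is the weighted-to-unweighted step: the instance $H$ inherits possibly polynomially large weights from $\Psi$ (e.g.\ the dummy vertices of weight $n$ used in Lemma~\ref{lem:pred-hard}), and weighted $\ldksh$ can be strictly harder than unweighted $\dksh$ at the same bias, so a blackbox call is not valid on the nose. This is exactly why the statement tolerates a $(1+\eta)\mu$ slack in the bias and pays an $\eta^r$ factor, with the constraint $\eta > \mu^2$ arising from the granularity at which weights can be rounded during the reduction; the subsequent $0$-position sub-sampling contributes only a further $\Omega_r(1)$ factor.
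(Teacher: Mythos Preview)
Your proposal is correct and follows essentially the same approach as the paper: reduce to weighted $\ldksh_{i^\ast}$ by projecting each constraint onto its $1$-positions, invoke Theorem~\ref{thm:dksh-weight} to obtain a set $\hat{S}$ of relative weight $\leq (1+\eta)\mu$, then sub-sample each $v \in \hat{S}$ independently with probability $1/2$ to force the $0$-positions of $\beta$ to be hit with probability $\Omega_r(1)$. One small simplification: since ${\rm supp}(\sigma) \subseteq \hat{S}$ by construction, the weight bound $w(\sigma)/w(V) \leq (1+\eta)\mu$ holds with probability $1$ (the paper uses this in Claim~\ref{cl:alg-1a}), so you only need repetition to extract a good-value labeling from the expectation, not Markov on the weight.
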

\begin{proof}
	By reordering, assume $\beta = \sum^{i^*}_{j = 1} e_j$. We shall prove the lemma by demonstrating an efficient $\Omega_r\left(\alpha^\uw_{(\mu)}\left(\dksh_{\|\beta\|_0}\right)\right)$-approximation algorithm. The algorithm for the lemma is described in Figure \ref{fig:alg-1}.
	
	\begin{figure}[ht!]
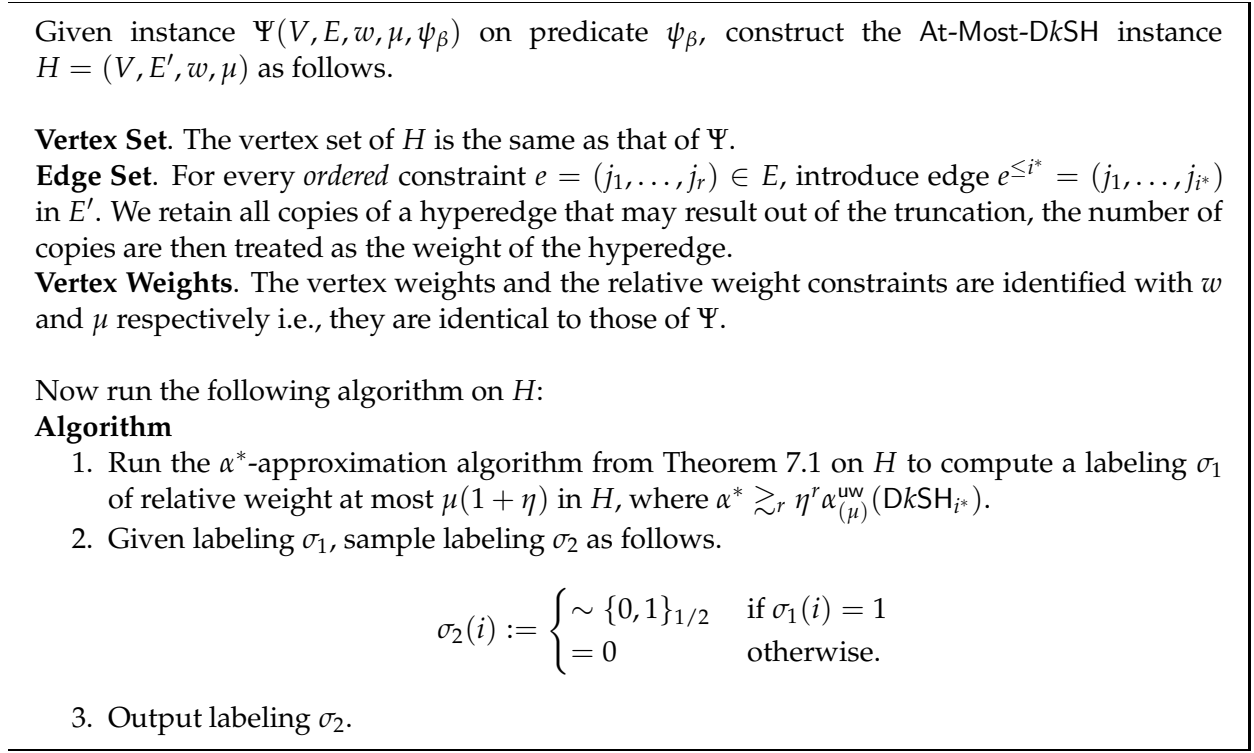

		\begin{mdframed}
			Given instance $\Psi(V,E,w,\mu,\psi_\beta)$ on predicate $\psi_\beta$, construct the $\ldksh$ instance $H =(V,E',w,\mu)$ as follows. \\
			
			{\bf Vertex Set}. The vertex set of $H$ is the same as that of $\Psi$. \\
			{\bf Edge Set}. For every {\em ordered} constraint $e = (j_1,\ldots,j_r) \in E$, introduce edge $e^{\leq i^*} = (j_1,\ldots,j_{i^*})$ in $E'$. We retain all copies of a hyperedge that may result out of the truncation, the number of copies are then treated as the weight of the hyperedge. \\
			{\bf Vertex Weights}. The vertex weights and the relative weight constraints are identified with $w$ and $\mu$ respectively i.e., they are identical to those of $\Psi$. \\
			
			Now run the following algorithm on $H$:
			
			{\bf Algorithm}
			\begin{enumerate}
				\item Run the $\alpha^*$-approximation algorithm from Theorem \ref{thm:dksh-weight} on $H$ to compute a labeling $\sigma_1$ of relative weight at most $\mu(1 + \eta)$ in $H$, where $\alpha^*  \gtrsim_r \eta^r{\alpha}^\uw_{(\mu)}(\dksh_{i^*})$.
				\item Given labeling $\sigma_1$, sample labeling $\sigma_2$ as follows.
				\[
				\sigma_2(i) := 
				\begin{cases}
					\sim \{0,1\}_{1/2} & \mbox{ if } \sigma_1(i) = 1 \\
					= 0 & \mbox{ otherwise. }
				\end{cases}
				\]
				\item Output labeling $\sigma_2$.	
			\end{enumerate}
		\end{mdframed}
		\caption{Approximating $\psi_\beta$}
		\label{fig:alg-1}
	\end{figure}
	
	We shall now analyze the above algorithm. To begin with, we have the following claim.
	\begin{claim}
		The instance $H$ constructed in Figure \ref{fig:alg-1} satisfies
		\[
		{\sf Val}_{\leq \mu}(H) \geq {\sf Val}_{\leq\mu}(\Psi).
		\]
	\end{claim}
	
	\begin{proof}
		Let $\sigma^*:V \to \{0,1\}$ be the optimal labeling of weight at most $\mu$ for $\Psi$ on predicate $\psi_\beta$. Then for any edge $e = (j_1,\ldots,j_r)\in E$ we have 
		\[
		\psi_{\beta}(\sigma^*(e) ) = 1\Rightarrow \bigwedge_{\ell \in [i^*]} \sigma^*(j_\ell) = 1,
		\]
		i.e., whenever $\sigma^*$ satisfies an edge $e$ in $\Psi$, it also induces the corresponding truncated hyperedge $e^{\leq i^*}$ in $H$. Then using the one-to-one correspondence between the constraints in $\Psi$ and $H$ we have ${\sf Val}_{\sigma^*} (H) \geq {\sf Val}_{\sigma^*}(\Psi)$. Furthermore, note that the vertex weights for $H$ and $\Psi$ are identical and hence $\sigma^*$ is also a labeling of relative weight at most $\mu$ in $H$. Hence,
		\[
		{\sf Val}_{\leq \mu} (H) \geq {\sf Val}_{\sigma^*}(H) \geq {\sf Val}_{\sigma^*}(\Psi) = {\sf Val}_{\leq\mu}(\Psi).
		\]
	\end{proof}
	
	From the above claim, it follows that in step $1$, the algorithm from Theorem \ref{thm:dksh-weight} returns a labeling $\sigma_1$ of relative weight at most $\mu(1 + \eta)$ in $H$ such that 
	\begin{equation}			\label{eqn:beta-val}
		{\sf Val}_{\sigma_1}\left(H\right) \gtrsim_r 
		\alpha^* \cdot {\sf Val}_{\leq \mu}(H) \geq \alpha^* \cdot {\sf Val}_{\leq \mu}\left(\Psi\right).
	\end{equation}
	where $\alpha^* \gtrsim_r \eta^r \cdot \alpha^\uw_{(\mu)}(\dksh_{\|\beta\|_0})$. 
	
	Now we derive the guarantees for the final labeling $\sigma_2$ using $\sigma_1$. Towards that, we have the following observations.
	
	\begin{claim}				\label{cl:alg-1a}
		With probability $1$, the relative weight of $\sigma_2$ in $\Psi$ is at most $\mu(1 + \eta)$.
	\end{claim} 
	\begin{proof}
		From the sub-sampling we know that ${\rm supp}(\sigma_2) \subseteq {\rm supp}(\sigma_1)$, and hence $w(\sigma_2) \leq w(\sigma_1) \leq\mu(1 + \eta)$ with probability $1$.
	\end{proof}
	Next we want to bound the fraction of edges satisfied by $\sigma_2$. Towards that, we have the following claim:
	\begin{claim}				\label{cl:alg-1b}
		Let $e \in E$ be an edge for which the truncated edge $e^{\leq i^*}$ is satisfied by labeling $\sigma_1$. Then,
		\[
		\Pr_{\sigma_2}\left[ \sigma_2 \mbox{ satisfies } e^{\leq i^*}  \right] \geq 2^{-r}.
		\]
	\end{claim}
	\begin{proof}
		Let $e = (j_1,\ldots,j_r)$. Then,
		\begin{align}
			\Pr_{\sigma_2}\Big[ \sigma_2 \mbox{ satisfies } e  \Big]
			& = \Pr_{\sigma_2}\Big[ \sigma_2(e) = \beta  \Big] \\
			& = \Pr_{\sigma_2} \left[\bigwedge_{\ell \leq i^*} \sigma_2(j_\ell) = 1, \bigwedge_{\ell > i^*} \sigma_2(j_\ell) = 0\right] \non\\
			& \overset{1}{=} \Pr_{\sigma_2} \left[\bigwedge_{\ell \leq i^*} \sigma_2(j_\ell) = 1, \bigwedge_{\ell > i^*: \sigma_1(j_\ell) = 1} \sigma_2(i) = 0\right] 		\label{eqn:expr}\\
			& \overset{2}{=} 2^{-i^* + |\{\ell > i^*: \sigma_1(j_\ell) = 1\}|} 	\non\\
			& \geq 2^{-r}		\non
		\end{align}
		where in step $1$, we use the fact that whenever $\sigma_1(j_\ell) = 0$, we have $\sigma_2(j_\ell) = 0$ with probability $1$. For step $2$,  we observe that  since $\sigma_1$ satisfies $e^{\leq i^*}$, $\sigma_1(j_\ell) = 1$ for every $\ell \leq i^*$. Furthermore, conditioned on $\sigma_1(j_\ell) = 1$, $\sigma_2(j_\ell)$ is an independent uniform $\{0,1\}$ random variable. Therefore, for every index $j_\ell$ present in the expression \eqref{eqn:expr}, $\sigma_2(j_\ell)$ is an independent $\{0,1\}_{1/2}$ random variable, which implies step $2$.  
	\end{proof}
	The above claim along with the bound on \eqref{eqn:beta-val} yields the following corollary.
	\begin{corollary}				\label{corr:alg-1c}
		\[
		\Ex_{\sigma_2}\left[{\sf Val}_{\sigma_2}(\Psi)\right] \geq 2^{-r} \cdot \alpha^* \cdot {\sf Val}_{\leq \mu} \left(\Psi\right).
		\]
	\end{corollary}
	\begin{proof}
		Let $E_{\rm sat} \subset E$ be the subset of edges $e$ for which the truncated edge $e^{\leq i^*}$ is satisfied by  $\sigma_1$ i.e, $\sigma_1(e^{\leq i^*}) = {\bf 1}_{i^*}$. Then,
		\begin{align*}
			\Ex_{\sigma_2}\Big[{\sf Val}_{\sigma_2}(H)\Big]
			&= \Ex_{e \sim E} \Pr_{\sigma_2}\Big[\sigma_2 \mbox{ satisfies } e\Big] \\
			&= \Pr_{e \sim E}\Big[e \in E_{\rm sat}\Big] \Ex_{e \sim E| e\in E_{\rm sat}} \Pr_{\sigma_2}\Big[\sigma_2 \mbox{ satisfies } e\Big] \\
			&\geq 2^{-r} \Pr_{e \sim E}\Big[e \in E_{\rm sat}\Big] 			\tag{Claim \ref{cl:alg-1b}}\\
			& = 2^{-r} \Pr_{e \sim E}\Big[\sigma_1(e^{\leq i^*}) = {\bf 1}_{i^*}\Big] \\
			& = 2^{-r} \Pr_{e' \sim E'}\Big[\sigma_1 \mbox{ satisfies } e'\Big] \\
			& = 2^{-r} \cdot {\sf Val}_{\sigma_1}(H) 			\\
			&  \geq 2^{-r} \cdot \alpha^* \cdot {\sf Val}_{\leq \mu} \left(\Psi\right).		\tag{Using \eqref{eqn:beta-val}}
		\end{align*}
	\end{proof}

	{\bf Cleaning Up}. Using Corollary \ref{corr:alg-1c}, it follows that the above algorithm returns a labeling of relative weight at most $\mu(1 + \eta)$ in $\Psi$ which satisfies 
	\[
	\Ex_{\sigma_2}\left[{\sf Val}_{\sigma_2}\left(\Psi\right)\right] \geq 2^{-r} \alpha^* \cdot {\sf Val}_{\leq \mu}\left(\Psi\right).
	\]
	Furthermore, with probability 1, $\sigma_2$ is of relative weight at most $\mu(1 + \eta)$ (Claim \ref{cl:alg-1a}). Therefore, assuming ${\sf Val}_{\leq \mu}\left(\Psi\right) \geq \mu^{O(r)}$, with probability at least $2^{-r} \alpha^* \mu^{O(r)}$, the labeling $\sigma_2$ is of weight at most $\mu(1 + \eta)$ and satisfies:
	\[
	{\sf Val}_{\sigma_2}\left(\Psi\right) \geq 2^{-r-1} \alpha^* \cdot {\sf Val}_{\leq \mu}\left(\Psi\right)
	\]
	Hence repeating the algorithm for $(1/\mu)^{O(r)}$-independent rounds and choosing the best labeling would yield a labeling satisfying the guarantees of the lemma with high probability.
\end{proof}

\subsection{Proof of Lemma \ref{lem:at-most-0}}				\label{sec:gen-pred}

Using Lemma \ref{lem:at-most}, we now prove Lemma \ref{lem:at-most-0}.

\begin{proof}
	The algorithm for Theorem \ref{lem:at-most-0} is described as Algorithm \ref{alg:ref-0}. 
	
	\begin{algorithm}[ht!]
		\SetAlgoLined
		{\bf Input}. Instance $\Psi(V,E,w,\mu,\psi)$ of arity $r$ with predicate $\psi$. \;
		\For{$ \beta \in \cM_\psi$}
		{
			Consider the instance $\Psi_{\beta}(V,E,w,\mu,\psi_\beta)$ which is the instance $\Psi$ with the edge constraints $\psi$ replaced by $\psi_{\beta}$\; 
			Use the algorithm from Lemma \ref{lem:at-most} on $\Psi_\beta$ to compute a labeling $\sigma_\beta:V \to \{0,1\}$ of relative weight at most $\mu(1 + \eta)$ in $\Psi_\beta$\;
		}
		Let 
		\begin{equation}				\label{eqn:sigp-def}
		\sigma' = \argmax_{\beta \in \cM_{\psi}} {\sf Val}_{\sigma_\beta}(\Psi).
		\end{equation}			\label{step:sigp-def}
		Return labeling $\sigma'$\;
		\caption{``At-most'' Approximation Algorithm}
		\label{alg:ref-0}
	\end{algorithm} 
	
	Towards analyzing the above algorithm, we shall first need the following claim which shows that one can always find a labeling of relative weight at most $\mu$ which satisfies a significant fraction of edges (in comparison to the optimal) using patterns from the minimal set $\cM_\psi$. 
	
	\begin{claim}				\label{cl:min-val}
		There exists a labeling $\sigma:V \to \{0,1\}$ of relative weight at most $\mu$ such that 
		\begin{equation}			\label{eqn:sat}
			\Pr_{e \sim E} \left[\sigma(e) \in \cM_\psi \right] \geq 2^{-r} \cdot{\sf Val}_{\leq \mu}(\Psi).
		\end{equation}	
	\end{claim}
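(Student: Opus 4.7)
The plan is to construct the desired $\sigma$ by random sub-sampling of the optimal $\mu$-biased labeling $\sigma^*$ of $\Psi$. Specifically, I define a randomized labeling $\sigma$ as follows: independently for each $i \in V$, set $\sigma(i) = 1$ with probability $1/2$ whenever $\sigma^*(i) = 1$, and set $\sigma(i) = 0$ whenever $\sigma^*(i) = 0$. I will then verify that $\sigma$ is always feasible and satisfies \eqref{eqn:sat} in expectation, after which a standard averaging argument produces the desired deterministic labeling.

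First, since ${\rm supp}(\sigma) \subseteq {\rm supp}(\sigma^*)$ with probability $1$, we have $w(\sigma) \leq w(\sigma^*) \leq \mu \cdot w(V)$ deterministically. Hence every realization of $\sigma$ has relative weight at most $\mu$, and no concentration argument on the weight is needed. Next, I exploit the minimality of $\cM_\psi$: for every edge $e \in E$ with $\sigma^*(e) \in \psi^{-1}(1)$, the defining property of $\cM_\psi$ guarantees the existence of some $\tilde{\beta}_e \in \cM_\psi$ with ${\rm supp}(\tilde{\beta}_e) \subseteq {\rm supp}(\sigma^*(e))$. Because the coordinates of $e$ are sub-sampled independently, and the $\|\sigma^*(e)\|_0 \leq r$ coordinates on the support of $\sigma^*(e)$ each take a prescribed value with probability $1/2$ (while the remaining coordinates are fixed to $0$),
\[
\Pr\bigl[\sigma(e) = \tilde{\beta}_e\bigr] = 2^{-\|\sigma^*(e)\|_0} \geq 2^{-r}.
\]

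Combining this with linearity of expectation yields
\[
\Ex\left[\Pr_{e \sim E}[\sigma(e) \in \cM_\psi]\right] \geq 2^{-r} \cdot \Pr_{e \sim E}\bigl[\sigma^*(e) \in \psi^{-1}(1)\bigr] = 2^{-r} \cdot {\sf Val}_{\leq \mu}(\Psi),
\]
so some realization of $\sigma$ simultaneously achieves the target value and is feasible. The only conceptual point worth highlighting, and essentially the sole ``obstacle,'' is to notice that the sub-sampling distribution is chosen so that feasibility holds pointwise rather than in expectation, which is what allows us to avoid any loss from conditioning on a weight event. Everything else is a direct application of the definition of $\cM_\psi$.
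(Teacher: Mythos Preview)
Your proof is correct and follows essentially the same approach as the paper: take the optimal labeling $\sigma^*$, independently sub-sample each $1$-coordinate with probability $1/2$, observe that feasibility holds pointwise since ${\rm supp}(\sigma)\subseteq{\rm supp}(\sigma^*)$, and use minimality of $\cM_\psi$ to find a target $\tilde\beta_e\preceq\sigma^*(e)$ hit with probability $2^{-\|\sigma^*(e)\|_0}\ge 2^{-r}$. The paper's argument is identical up to notation.
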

	\begin{proof}
		We establish the claim using the probabilistic method. Let $\sigma^*: V \to \{0,1\}$ be the optimal labeling of relative weight at most $\mu$ achieving ${\sf Val}_{\leq \mu}(\Psi)$ in $\Psi$. Then given $\sigma^*$, we construct $\sigma:V \to \{0,1\}$ as follows:
		\begin{equation}				\label{eqn:sig-def0}
		\sigma(i) := 
		\begin{cases}
			\sim \{0,1\}_{1/2} & \mbox{ if } \sigma^*(i) = 1, \\
			0 & \mbox{ if } \sigma^*(i) = 0.
		\end{cases}
		\end{equation}
		By definition, we have ${\rm supp}(\sigma) \subseteq {\rm supp}(\sigma^*)$ and hence, $\sigma$ is also a labeling of relative weight at most $\mu$ with probability $1$. Now towards establishing \eqref{eqn:sat}, let $E' \subseteq E$ be the subset of edges satisfied by the labeling $\sigma^*$ in $\Psi$. For every edge $e \in E'$, let $\beta_e \in \cM_\psi$ be such that $\beta_e \preceq \sigma^*(e)$ (note that such a $\beta_e$ must exist since $\sigma^*(e) \in \psi^{-1}(1)$ whenever $e \in E'$). Then,
		\begin{align}
			&\Ex_{\sigma}\Ex_{e \sim E} \left[ \mathbbm{1}_{\{\sigma(e) \in \cM_{\psi}\}}\right] \\		
			& \geq \Pr_{e \sim E}\left[e \in E'\right]\Ex_{e \sim E|e \in E'} \Ex_{\sigma}\left[ \mathbbm{1}_{\{\sigma(e) \in \cM_{\psi}\}}\right] 
			\non\\
			& \geq \Pr_{e \sim E}\left[e \in E'\right]\Ex_{e \sim E | e \in E'} \Ex_{\sigma}\left[ \mathbbm{1}_{\{\sigma(e) = \beta_e\}}\right] 
			\tag{Since $\beta_e \in \cM_\psi$}		\non\\
			& {=} \Pr_{e \sim E}\left[e \in E'\right]\Ex_{e = (i_1,\ldots,i_r) \sim E| e \in E'} \Pr_{\sigma}\left[\bigwedge_{j \in [r]} \sigma(i_j) = \beta_e(j) \right] 		\non\\
			& \overset{1}{=} \Pr_{e \sim E}\left[e \in E'\right]\Ex_{e = (i_1,\ldots,i_r) \sim E| e \in E'} \Pr_{\sigma}\left[ \bigwedge_{j \in [r], \sigma^*(i_j) = 1} \sigma(i_j) = \beta_e(j)\right] 		\non\\
			& \overset{2}{=} \Pr_{e \sim E}\left[e \in E'\right]\Ex_{e = (i_1,\ldots,i_r) \sim E | e \in E'} \left[2^{-\Big|\{j \in [r]: \sigma^*(i_j) = 1\}\Big|}\right] 			\non\\
			& \geq 2^{-r} \Pr_{e \sim E}\left[e \in E'\right] 		\non\\
			& = 2^{-r} \Pr_{e \sim E}\left[\psi(\sigma^*(e)) = 1\right] 		\non\\
			& = 2^{-r} {\sf Val}_{\leq \mu}(\Psi),			\label{eqn:sig-val}
		\end{align}
		where step $1$ can be argued as follows. Note that for any $e = (i_1,\ldots,i_r) \in E'$, and for any $j \in [r]$,  $\sigma^*(i_j) = 0$ implies the events $\{\beta_e(j) = 0\}$ (since  $\beta_e \preceq \sigma^*(e))$ and $\{\sigma(i_j) = 0\}$ (using the definition of $\sigma$ from \eqref{eqn:sig-def0}). Hence, 
		\[
		\bigwedge_{j \in [r]: \sigma^*(i_j) = 0} (\sigma(i_j) = \beta(j))
		\]
		holds with probability $1$. Therefore, in order to have $\sigma(e) = \beta_e$, it suffices to ensure that $\sigma(i_j) = \beta_e(j)$ for every $j \in [r]$ such that $\sigma^*(i_j) = 1$. In step $2$, we use the fact that conditioned on $\sigma^*(i_j) = 1$, each $\sigma(i_j)$ is an independent uniform $\{0,1\}$-random variable. 
		
		Now to conclude the proof, observe that every realizable labeling in the support of distribution of $\sigma$ has relative weight at most $\mu$, and in expectation, $\sigma$ satisfies \eqref{eqn:sig-val}. Hence there must exists a labeling of relative weight at most $\mu$ satisfying \eqref{eqn:sig-val}.
		
	\end{proof}
	
	Let $\sigma:V \to \{0,1\}$ be the labeling guaranteed by Claim \ref{cl:min-val}. Then,
	\begin{align*}
	2^{-r} {\sf Val}_{\leq \mu}\Big(\Psi\Big) &\leq \Pr_{e \sim E} \left[\bigvee_{\beta \in \cM_\psi} \sigma(e) = \beta\right] \\
											  &\leq \sum_{\beta \in\cM_{\psi}} \Pr_{e \sim E} \Big[\sigma(e) = \beta\Big] \\
											  &\leq |\cM_\psi|\max_{\beta \in\cM_{\psi}}~\Pr_{e \sim E} \Big[\sigma(e) = \beta\Big] 	\\
											  &\leq 2^r\max_{\beta \in\cM_{\psi}}~\Pr_{e \sim E} \Big[\sigma(e) = \beta\Big], 	
	\end{align*}
	which implies that there exists $\beta^* \in \cM_{\psi}$ satisfying
	\[
	\Pr_{e \sim E} \left[\sigma(e) = \beta^*\right] \geq 2^{-2r} {\sf Val}_{\leq \mu}(\Psi).
	\]
	Furthermore, note that in Algorithm \ref{alg:ref-0}, the instance $\Psi_{\beta^*}$ shares the same vertex set and vertex weights as $\Psi$ and hence $\sigma$ is a labeling of relative weight at most $\mu$ in $\Psi_{\beta^*}$ as well. Hence,
	\begin{equation}			\label{eqn:beta}
		{\sf Val}_{\leq \mu} \left(\Psi_{\beta^*}\right) \geq {\sf Val}_{\sigma}(\Psi_{\beta^*}) =
		\Pr_{e \sim E} \left[\sigma(e) = \beta^*\right] \geq 2^{-2r} {\sf Val}_{\leq \mu}\left(\Psi\right).
	\end{equation}
	
	Therefore, in the for loop, instantiating $\beta = \beta^*$, running the algorithm from Lemma \ref{lem:at-most} on $\Psi_{\beta^*}$ returns a labeling $\sigma_{\beta^*}:V \to \{0,1\}$ of relative weight at most $\mu(1 + \eta)$ satisfying:
	\begin{align}
		{\sf Val}_{\sigma_{\beta^*}}\left(\Psi_{\beta^*}\right) 
		&\gtrsim_r \eta^r\alpha^\uw_{(\mu)}\left(\dksh_{\|\beta^*\|_0}\right) \cdot {\sf Val}_{\leq \mu}\left(\Psi_{\beta^*}\right) 	\non\\
		&\geq 2^{-2r}{\sf Val}_{\leq \mu}\left(\Psi\right) \cdot \eta^r\alpha^\uw_{(\mu)}\left(\dksh_{\|\beta^*\|_0}\right) \tag{Using \eqref{eqn:beta}} 
		\non\\
		&\geq 2^{-2r}{\sf Val}_{\leq \mu}\left(\Psi\right) \cdot \eta^r\min_{\beta \in \cM_{\psi}}\alpha^\uw_{(\mu)}\left(\dksh_{\|\beta\|_0}\right) 
		\label{eqn:star}.
	\end{align} 
	
	{\bf Putting Things Together}. We conclude the proof by arguing that the labeling $\sigma'$ (from \eqref{eqn:sigp-def}) satisfies the guarantees stated by the lemma. We first observe that whenever the labeling $\sigma_{\beta^*}$ satisfies a constraint $e \in E$  in $\Psi_{\beta^*}$ with respect to $\psi_{\beta^*}$, it also satisfies the edge $e$ with respect to constraint $\psi$ in $\Psi$ (since $\beta^* \in \psi^{-1}(1)$) and hence ${\sf Val}_{\sigma_{\beta^*}}(\Psi) \geq {\sf Val}_{\sigma_{\beta^*}}(\Psi_{\beta^*})$. Therefore the final labeling $\sigma'$ from Line \ref{step:sigp-def} satisfies:
	\begin{align*}
	{\sf Val}_{\sigma'}(\Psi) 
	& = \max_{\beta \in \cM_\psi} {\sf Val}_{\sigma_\beta}(\Psi) \\
	&\geq {\sf Val}_{\sigma_{\beta^*}}(\Psi) \\
	&\geq {\sf Val}_{\sigma_{\beta^*}}(\Psi_{\beta^*}) \\
	&\gtrsim_r \eta^r \cdot {\sf Val}_{\leq \mu}\left(\Psi\right) \cdot \min_{\beta \in \cM_{\psi}}\alpha^\uw_{(\mu)}\left(\dksh_{\|\beta\|_0}\right). \tag{Using \eqref{eqn:star}}
	\end{align*}
	Now note that for every $\beta \in \{0,1\}^r$, the labeling $\sigma_\beta$ is also of relative weight at most $\mu(1 + \eta)$ in $\Psi$ using the guarantee of Lemma \ref{lem:at-most}. Again, since $\sigma' = \sigma_{\beta'}$ for some $\beta' \in \{0,1\}^r$,  $\sigma'$ has relative weight at most $\mu(1 + \eta)$ in $\Psi$. 
	
\end{proof}

\section{Reducing weighted $\ldksh$ to $\dksh$.}				\label{sec:dksh-red}

In this section, we give a reduction from weighted $\ldksh$ instance to uniformly weighted $\dksh$ instances, as stated in the following theorem.

\begin{theorem}				\label{thm:dksh-weight}
	For any constant $\eta \in (\mu,1)$, there exists an efficient algorithm that on input $\ldksh$ instances $H = (V,E,w,\mu)$ with arbitrary vertex weights, returns a labeling $\tilde{\pi}:V \to \{0,1\}$ of $V$ with relative weight at most $\mu(1 + \eta)$ satisfying
	\[
	{\sf Val}_{\tilde{\pi}}(H) \gtrsim_r \left(\eta^r \alpha^{\sf uw}_{(\mu)} \left(\dksh_r\right)\right) \cdot {\sf Val}_{\leq \mu}(H),
	\]
	where $\alpha^{\sf uw}_{(\mu)}\left(\dksh_r\right)$ is the optimal approximation guarantee for uniformly weighted $\mu$-biased $\dksh_r$ instances.
\end{theorem}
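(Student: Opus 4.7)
The aim is to reduce the weighted $\ldksh$ instance $H$ to a uniformly-weighted $\dksh_r$ instance $H'$ so that the black-box $\alpha^{\uw}_{(\mu)}(\dksh_r)$-algorithm may be invoked. The $(1+\eta)$ slack in the bias is unavoidable because arbitrary weight distributions may concentrate almost all of their mass on a few vertices (as illustrated in Appendix \ref{sec:example}), so a naive blow-up fails without additional room.

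\textbf{Heavy versus light separation.} Let $W = w(V)$ and set $\tau := \eta \mu W / (Cr)$ for a large constant $C$. Call $v$ heavy if $w(v) > \tau$ and light otherwise; any $\mu$-biased feasible labeling contains at most $\mu W / \tau = O(r/\eta)$ heavy vertices. I would enumerate over the $|V|^{O(r/\eta)}$ possible subsets $T$ of heavy vertices in the optimum -- a polynomial number for constant $r,\eta$ -- and for each guess solve the residual problem on the light part of $V$ with reduced budget $\mu W - w(T)$, keeping at the end the best labeling produced across all guesses.

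\textbf{Blow-up and rounding.} For each guess $T$, I build $H'$ by replacing each light vertex $v$ with $n_v := \lceil w(v)/\tau'\rceil$ unit-weight copies (where $\tau' = \tau/\mathrm{poly}(r,1/\eta)$), and for each hyperedge $e\in E$ that is consistent with $T$ include a polynomial-sized uniformly random sample of hyperedges across random copies of the light coordinates of $e$. A standard Chernoff argument shows that, for every $x \in [0,1]^V$ on light vertices, the fraction of hyperedges in $H'$ induced by including each copy $v^{(j)}$ independently with probability $x_v$ equals $\E_{e\sim E}\bigl[\prod_{v\in e} x_v\bigr]$ up to additive $o(\eta^r)$. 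Running the black-box algorithm on $H'$ with a target cardinality matching the residual budget returns a set $S'$; decoding it to the fractional vector $x_v := |S'\cap \mathrm{copies}(v)|/n_v$ gives by construction $\sum_v w(v)x_v \leq \mu W - w(T)$ (up to discretization error) and the $H'$-density of $S'$ equals the value of $x$ in $H$ up to the same error.

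\textbf{Final rounding and where $\eta^r$ arises.} I then produce $\tilde\pi$ by including every $v \in T$ deterministically and each light $v$ independently with probability $x_v$. Markov's inequality on the random weight gives $w(\tilde\pi) \leq \mu(1+\eta)W$ with probability $\Omega(\eta)$, while the expected value of $\tilde\pi$ matches (up to error) the $H'$-density of $S'$, hence is at least $\alpha^{\uw}_{(\mu)}(\dksh_r)\cdot {\sf Val}_{\leq \mu}(H)$. The hard part will be guaranteeing \emph{both} the weight-feasibility event and the value event simultaneously with nontrivial probability, since the value is a sum of $r$-wise products of independent Bernoullis and na\"ively conditioning on the weight bound can collapse it. This is where the $\eta^r$ loss enters: I expect to handle it either by scaling the inclusion probabilities by $\Theta(\eta)$ (trading a $\eta^r$ multiplicative loss in the value for Chernoff-strength concentration of the weight) or by a conditional second-moment argument giving the same $\eta^r$ slack. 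Polynomially many independent trials, outputting the best feasible labeling, conclude the proof.
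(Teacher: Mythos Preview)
Your outline matches the paper's proof: partition into heavy and light, enumerate over the heavy part of the optimum, blow up the light instance to uniform weights, call the black-box, and randomly project back. The paper's inner rounding (Lemma~\ref{lem:dksh-red-2}) is ``for each $v$, sample one copy uniformly and inherit its $0/1$ value,'' which is distributionally equivalent to your ``include $v$ independently with probability $x_v$.''

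There is one genuine imprecision. You run the black-box ``with target cardinality matching the residual budget'' but then assert the output value is at least $\alpha^{\uw}_{(\mu)}(\dksh_r)\cdot{\sf Val}_{\leq \mu}(H)$; this conflates two biases, since the residual bias on the light instance can range over $[0,\mu(1+\eta)/w(V\setminus T)]$ and the black-box only promises $\alpha^{\uw}_{(\cdot)}$ at whatever bias it is actually called. The paper handles this explicitly: it uses a much finer heavy threshold $w(i)>\mu^{10}$ (so $|T|\leq \mu^{-10}$ and one enumerates all $2^{|T|}$ labelings of $T$), separately disposes of the case $w(V\setminus T)<\mu\eta$ by labeling all light vertices $1$, and otherwise shows the residual bias $\delta_{\sigma_T}=(\mu(1+\eta)-w(\sigma_T))/w(V\setminus T)$ satisfies $\delta_{\sigma_T}\geq \mu\eta$. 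The $\eta^r$ then comes from the bias-comparison Lemma~\ref{lem:comp} relating $\alpha^{\uw}_{(\delta_{\sigma_T})}$ to $\alpha^{\uw}_{(\mu)}$, \emph{not} from the rounding step. With the paper's fine threshold the light relative weights are at most $\mu^8$, so Hoeffding gives weight concentration with failure probability $e^{-\Omega(1/\mu^2)}$ and the rounding itself incurs no loss. Your scaling-by-$\eta$ fix would also work (it is essentially the sub-sampling direction of Lemma~\ref{lem:comp} applied post hoc), but you still owe an argument about the bias at which the black-box is invoked and why its guarantee transfers to $\alpha^{\uw}_{(\mu)}$.
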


The algorithm for the above theorem is described in Algorithm \ref{alg:ref-1}.

\begin{algorithm}[ht!]
	\SetAlgoLined
	{\bf Input}. An $\ldksh$-instance $H = (V,E,w,\mu)$ of arity $r$ satisfying\footnotemark $\|w\|_1 = 1$ \;
	Define
	\[
	T := \left\{i \in V \Big| w(i) > \mu^{10} \right\}.
	\]
	Note that by definition $|T| \leq (1/\mu)^{10}$\;
	\For{ every partial labeling $\sigma_T:\{0,1\}^T \to \{0,1\}$ such that $w(\sigma_T) \leq \mu$}		
	{
		\If{$w(V\setminus T) < \mu\eta$}		
		{\label{step:for-1}
			Construct labeling $\pi'_{\sigma_T}:V \to \{0,1\}$ as 
			\[
			\pi'_{\sigma_T}(i) = 
			\begin{cases}
				1  				& \mbox{ if } i \in V \setminus T \\
				\sigma_T(i) 	& \mbox{ if } i \in T
			\end{cases}
			\]
			Skip to next $\sigma_T$		\label{step:for-a}	\;
		}		
		Let $H_{\sigma_T} = (V \setminus T,E_{\sigma_T},w_{\sigma_T},\delta_{\sigma_T})$ be the induced $\ldksh$-instance of arity at most $r$ constructed as follows.\;
		\begin{mdframed}	
			{\bf Vertex Set}. The vertex set is $V \setminus T$. \\
			{\bf Edge Set}. The edge set is constructed as follows. For every edge $e \in E$ such that $\sigma_T(e|_T) = {\bf 1}$, introduce edge $e|_{T^c}$ on the vertex set $V \setminus T$, where $e|_{T} := e \cap T$ and similarly $e|_{T^c}$. Note that the truncation may introduce multiple copies of the same edge. We shall retain them all and treat the number of copies as the weight of the edge (also see Remark \ref{rem:weights}). \\
			{\bf Vertex Weights}. For every vertex $i \in V \setminus T$, set $w_{\sigma_T}(i) = w(i)/w(V \setminus T)$. Define
			\begin{equation}				\label{eqn:del-def}
				\delta_{\sigma_T} :=  \frac{\mu(1 + \eta) - w(\sigma_T)}{w(V \setminus T)}.
			\end{equation}
		\end{mdframed}
		
		Let $\pi_{\sigma_T}:V \setminus T \to \{0,1\}$ be the labeling of weight at most $\delta_{\sigma_{T}}$ obtained by running the $\alpha^{\sf uw}_{(\delta_{\sigma_T})}(\dksh_r)$-approximation algorithm from Lemma \ref{lem:dksh-red-2} on $H_{\sigma_T}$\label{step:dksh}\;
		Finally, let $\pi'_{\sigma_T}:= \sigma_T \circ \pi_{\sigma_T}$ be the concatenation of the two labelings\label{step:for-2}	 \;
	}
	Let $\tilde{\pi} : = \argmax_{\pi'_{\sigma_T}: w(\sigma_T) \leq \mu} {\sf Val}_{\pi'_{\sigma_T}} (H)$		\label{step:final}\;
	Return labeling $\tilde{\pi}$\;
	\caption{Weighted $\dksh$}
	\label{alg:ref-1}
\end{algorithm} 
\footnotetext{This is without loss of generality by rescaling the vertex weights accordingly.}
The key principle behind the above algorithm is the observation that for any $c \in (0,1)$, the set of vertices with relative weight larger than $\mu^c$ (i.e the set $T$) is at most ${\rm poly}(1/\mu)$, and hence for a constant $\mu$, one can enumerate over all feasible labelings for the set of large weight vertices in polynomial time and solve the instance induced for each possible guess (Lines \ref{step:for-1} - \ref{step:for-2}). Furthermore, without loss of generality, one may assume that the induced instance $H_{\sigma_T}$ on the remaining vertices has weight $\Omega(\mu)$ (otherwise the algorithm can trivially label all the remaining vertices as $1$ and be done), which along with the bounds on the weight of vertices in $V \setminus T$ implies that $H_{\sigma_T}$ has bounded relative weights. Since weighted instances with bounded weights are as easy as unweighted instances (we show this in Lemma \ref{lem:dksh-red-2}), we can solve the induced instances with approximation guarantee comparable to unweighted $\dksh$.

In the remainder of the section, we prove Theorem \ref{thm:dksh-weight} by analyzing Algorithm \ref{alg:ref-1}.

\subsection{Proof of Theorem \ref{thm:dksh-weight}}

We begin with a remark that sets up notation and conventions used in the analysis of the algorithm.

\begin{remark}[Weights in $H_{\sigma_T}$]			\label{rem:weights}
	Before we proceed with the analysis of Algorithm \ref{alg:ref-1}, we point out that the construction of $H_{\sigma_T}$ from the partial assignment $\sigma_T$ induces the following (possibly non-uniform) distribution over edges: draw a random edge $e \sim E$ conditioned on $\sigma_{T}(e|_{T}) = {\bf 1}$ and output $e|_{T^c}$. In particular, note that the resulting distribution may assign non-zero mass to the empty set edge in $V \setminus T$ -- we will treat such edges as being trivially satisfied. 
	
	Therefore, we use the notation $e \sim H_{\sigma_T}$ to denote the random draw from the above distribution, so as to distinguish from uniformly random sampling. Finally, it is useful to note that the value of a partial labeling $\pi:V \setminus T \to \{0,1\}$ on $H_{\sigma_T}$ is defined with respect to a random draw $e \sim H_{\sigma_T}$ instead of a uniformly random draw i.e., 
	\begin{equation}			\label{eqn:val-def}
		{\sf Val}_{\pi}(H_{\sigma_T}) = \Pr_{e \sim H_{\sigma_T}} \left[\pi(e) = {\bf 1}\right] = \Pr_{e \sim E| \sigma_T(e|_T) = {\bf 1}} \left[\pi(e|_{T^c}) = {\bf 1}\right].
	\end{equation}
\end{remark}

Let $\sigma^*:V \to \{0,1\}$ be the optimal labeling of weight at most $\mu$ achieving ${\sf Val}_{\leq \mu}(H)$ for $H$. We write $\sigma^* = \sigma^*_T \circ \pi^*_T$ where $\sigma^*_T:T \to \{0,1\}$ and $\pi^*_T:V \setminus T \to \{0,1\}$ are the restrictions of the labeling $\sigma^*$ to vertex sets $T$ and  $V \setminus T$ respectively. Since $w(\sigma^*_T) \leq w(\sigma^*) \leq \mu$, the {\em for loop} block (Lines \ref{step:for-1}-\ref{step:for-2}) will consider the iteration $\sigma_T = \sigma^*_T$. The rest of the proof will establish that the  $\sigma^*_T$-iteration will produce a labeling $\pi'_{\sigma^*_T}$ that satisfies the approximation guarantees claimed by the theorem statement -- this will suffice since the final labeling $\tilde{\pi}$ output by the algorithm is guaranteed to be at least as good as $\pi'_{\sigma^*_T}$. Overall, we will break the analysis into a set of cases (Lemmas \ref{lem:red-1}, \ref{lem:red-2}, and \ref{lem:red-3}), depending on whether $w(V \setminus T)$ is small or large, and then depending on whether the optimal labeling $\sigma^*$ covers a large fraction of edges induced on $T$ -- for each of these, we will show that $\pi'_{\sigma^*_T}$ is a labeling that satisfies the required guarantees of the theorem.

\begin{lemma}				\label{lem:red-1}
	Suppose $w(V \setminus T) < \mu \eta$. Then the labeling $\pi'_{\sigma^*_T}$ has relative weight at most $\mu(1 + \eta)$ and satisfies:
	\[
	{\sf Val}_{\pi'_{\sigma^*_T}}(H) \geq {\sf Val}_{\leq \mu}(H).
	\]
\end{lemma}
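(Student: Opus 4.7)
The plan is to use the specific form of $\pi'_{\sigma^*_T}$ produced in the case $w(V \setminus T) < \mu \eta$ (Line~\ref{step:for-a} of Algorithm~\ref{alg:ref-1}): namely, $\pi'_{\sigma^*_T}$ is defined to equal $\sigma^*_T$ on $T$ and to be identically $1$ on $V \setminus T$. The proof consists of two parts: verifying the relative-weight bound, and then verifying the value bound by pointwise domination of the optimal labeling $\sigma^*$.

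First I would check the weight. Since by convention $\|w\|_1 = 1$, the relative weight of $\pi'_{\sigma^*_T}$ equals its absolute weight, which splits as
\[
w\!\left(\pi'_{\sigma^*_T}\right) \;=\; w(\sigma^*_T) \,+\, w(V \setminus T).
\]
Since $\sigma^*_T$ is the restriction of the feasible labeling $\sigma^*$ to $T$, we have $w(\sigma^*_T) \le w(\sigma^*) \le \mu$, and by hypothesis $w(V \setminus T) < \mu \eta$. Hence $w(\pi'_{\sigma^*_T}) < \mu(1+\eta)$, which is the required feasibility bound. This also confirms that the iteration $\sigma_T = \sigma^*_T$ is considered by the for loop, since $w(\sigma^*_T) \le \mu$.

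For the value, the key observation is that $\pi'_{\sigma^*_T}$ dominates $\sigma^*$ pointwise: on $T$ they coincide, and on $V \setminus T$ we have $\pi'_{\sigma^*_T}(i) = 1 \ge \pi^*_T(i)$. For the $\ldksh$ value, an edge $e \in E$ contributes to ${\sf Val}_\sigma(H)$ iff $\sigma(e) = \mathbf{1}$, and such a pointwise inequality of Boolean labelings immediately implies that every edge satisfied by $\sigma^*$ is also satisfied by $\pi'_{\sigma^*_T}$. Therefore
\[
{\sf Val}_{\pi'_{\sigma^*_T}}(H) \;\ge\; {\sf Val}_{\sigma^*}(H) \;=\; {\sf Val}_{\leq \mu}(H),
\]
completing the proof. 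There is no genuine obstacle here; the only thing to be careful about is that feasibility uses the slack parameter $\eta$ exactly to absorb the extra mass $w(V \setminus T)$, which is precisely why the algorithm branches on the condition $w(V \setminus T) < \mu \eta$ in the first place.
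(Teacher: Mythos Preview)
Your proof is correct and essentially identical to the paper's own argument: both split the weight as $w(\sigma^*_T) + w(V\setminus T) \le \mu + \mu\eta$, and both derive the value bound from the support containment ${\rm supp}(\pi'_{\sigma^*_T}) \supseteq {\rm supp}(\sigma^*)$ (which you phrase as pointwise domination).
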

\begin{proof}
	Since $w(V \setminus T) < \mu \eta$, this case is addressed by lines \ref{step:for-1} - \ref{step:for-a} in the algorithm. Here Algorithm \ref{alg:ref-1} sets all the variables in $V \setminus T$ to $1$ in the partial labeling $\pi_{\sigma^*_T}$. Therefore, using the definition $\pi'_{\sigma^*_T} = \sigma^*_T \circ \pi_{\sigma^*_T}$, we have ${\rm supp}(\pi'_{\sigma^*_T}) \supseteq {\rm supp}(\sigma^*)$ and hence 
	\begin{equation}		\label{eqn:apprx1}
	{\sf Val}_{\pi'_{\sigma^*_T}}(H) \geq {\sf Val}_{\sigma^*}(H) = {\sf Val}_{\leq \mu}(H)
	\end{equation}
	On the other hand, note that we also have 
	\[
	w(\pi'_{\sigma^*_T}) = w(\sigma^*_T) + w(\pi_{\sigma^*_T}) \leq w(\sigma^*_T) +  w(V \setminus T) \leq  \mu(1 + \eta)
	\] 
	which along with \eqref{eqn:apprx1} concludes the proof.
\end{proof}

The above claim addresses the case $w(V\setminus T) < \mu \eta$, in the remainder, we shall analyze the case $w(V\setminus T) \geq \mu \eta$. Towards that, fixing $\sigma^*$ we first observe: 
\begin{align}
	{\sf Val}_{\leq \mu}(H) 
	&= \Pr_{e \sim E} \left[\sigma^*(e) = {\bf 1}\right] \non\\
	&= \Pr_{e \sim E} \left[\Big\{\sigma^*(e) = {\bf 1}\Big\} \wedge \Big\{e \subseteq T\Big\}\right] + \Pr_{e \sim E} \left[\Big\{\sigma^*(e) = {\bf 1}\Big\} \wedge \Big\{e \not\subseteq T\Big\}\right] \non	\\
	&= \Pr_{e \sim E} \left[\Big\{\sigma^*_T(e|_T) = {\bf 1}\Big\} \wedge \Big\{e \subseteq T\Big\}\right] + \Pr_{e \sim E} \left[\Big\{\sigma^*(e) = {\bf 1}\Big\} \wedge \Big\{e \not\subseteq T\Big\}\right], 	
	\label{eqn:red-rhs}
\end{align}
where in the RHS \eqref{eqn:red-rhs}, the first term corresponds to the set of edges induced by $T$ that are covered by $\sigma^*_T$, and the second term corresponds to the remaining edges that are covered by $\sigma^*$. Now, by averaging, at least one of the two probability terms in \eqref{eqn:red-rhs} is at least ${\sf Val}_{\leq \mu}(H)/2$. We address these cases in the next couple of lemmas. 

\begin{lemma}			\label{lem:red-2}
	Suppose $w(V \setminus T) \geq \mu \eta$ and 
	\[
	\Pr_{e \sim E} \left[\Big\{\sigma^*_T(e|_T) = {\bf 1}\Big\} \wedge \Big\{e \subseteq T\Big\}\right] \geq \frac{{\sf Val}_{\leq \mu}(H)}{2}.
	\]
	Then,  $\pi'_{\sigma^*_T}$ is a labeling with $w(\pi'_{\sigma^*_T}) \leq \mu(1 + \eta)$ satisfying
	\[
	{\sf Val}_{\pi'_{\sigma^*_T}}(H') \geq \frac{{\sf Val}_{\leq \mu}(H)}{2}.
	\]
\end{lemma}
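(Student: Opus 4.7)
The plan is to show directly that the labeling $\pi'_{\sigma^*_T} = \sigma^*_T \circ \pi_{\sigma^*_T}$, produced in the iteration $\sigma_T = \sigma^*_T$ of the for loop, inherits satisfaction of all ``inside-$T$'' edges from $\sigma^*$ while the additional mass from $\pi_{\sigma^*_T}$ is budgeted exactly to keep the total relative weight below $\mu(1+\eta)$. Observe that this case is the ``easy'' case of the three, because we do not invoke the \dksh{} approximation algorithm on $H_{\sigma^*_T}$ at all for the value guarantee: the edges witnessing the assumed lower bound are already entirely inside $T$, where $\pi'_{\sigma^*_T}$ agrees with $\sigma^*$ by construction.

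First, for the value bound, note that for any edge $e \subseteq T$, we have $\pi'_{\sigma^*_T}(e) = \sigma^*_T(e|_T) = \sigma^*(e)$, since $\pi'_{\sigma^*_T}$ agrees with $\sigma^*_T$ on $T$. Hence whenever $\{\sigma^*_T(e|_T) = \mathbf{1}\} \wedge \{e \subseteq T\}$ holds, the edge $e$ is also satisfied by $\pi'_{\sigma^*_T}$ in $H$. Therefore
\begin{equation*}
{\sf Val}_{\pi'_{\sigma^*_T}}(H) \;\geq\; \Pr_{e \sim E}\Big[\{\sigma^*_T(e|_T) = \mathbf{1}\} \wedge \{e \subseteq T\}\Big] \;\geq\; \frac{{\sf Val}_{\leq\mu}(H)}{2},
\end{equation*}
where the last step uses the hypothesis of the lemma. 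This already gives the approximation guarantee, and crucially does not depend on the quality of $\pi_{\sigma^*_T}$ returned in Step \ref{step:dksh}.

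Next, for the weight feasibility, I would decompose
\begin{equation*}
w(\pi'_{\sigma^*_T}) \;=\; w(\sigma^*_T) + w(\pi_{\sigma^*_T}),
\end{equation*}
where $w(\pi_{\sigma^*_T})$ is the (unnormalized) weight in $V$. Since $\pi_{\sigma^*_T}$ is returned by the approximation algorithm from Lemma \ref{lem:dksh-red-2} on $H_{\sigma^*_T}$, it has relative weight at most $\delta_{\sigma^*_T}$ with respect to $w_{\sigma^*_T}(\cdot) = w(\cdot)/w(V\setminus T)$; hence its weight with respect to $w$ is at most $\delta_{\sigma^*_T}\cdot w(V\setminus T) = \mu(1+\eta) - w(\sigma^*_T)$ by the definition \eqref{eqn:del-def} of $\delta_{\sigma_T}$. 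Adding $w(\sigma^*_T)$ gives $w(\pi'_{\sigma^*_T}) \leq \mu(1+\eta)$, as required.

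The only slightly subtle point, and so the ``main obstacle'' if any, is purely bookkeeping: one must be careful to distinguish between relative weights in $H$ (normalized by $w(V) = 1$) and relative weights in $H_{\sigma^*_T}$ (normalized by $w(V\setminus T)$), and to use the definition of $\delta_{\sigma^*_T}$ to cancel out the denominator $w(V\setminus T)$ when converting back to $H$. Apart from this, the lemma follows essentially from inspection of the construction of $\pi'_{\sigma^*_T}$ and of the instance $H_{\sigma^*_T}$, together with the assumption that the inside-$T$ edges already account for at least half of ${\sf Val}_{\leq\mu}(H)$. The remaining, more algorithmically substantive case (where most satisfied edges cross out of $T$) is handled separately by Lemma \ref{lem:red-3}.
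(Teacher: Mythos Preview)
Your proposal is correct and takes essentially the same approach as the paper: both argue the value bound by observing that edges contained in $T$ with $\sigma^*_T(e|_T)=\mathbf{1}$ are automatically satisfied by $\pi'_{\sigma^*_T}$, and both establish the weight bound by converting the relative-weight guarantee $w_{\sigma^*_T}(\pi_{\sigma^*_T})\le\delta_{\sigma^*_T}$ back to $w$ via the definition of $\delta_{\sigma^*_T}$ in \eqref{eqn:del-def}. Your additional commentary explaining why this is the ``easy'' case (no appeal to the \dksh\ approximation quality) is a helpful clarification not made explicit in the paper.
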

\begin{proof}
Note that since $w(V \setminus T) \geq \mu \eta$, this case is addressed using Lines $7$-$8$ of the algorithm. Furthermore $\pi'_{\sigma^*_T} = \sigma^*_T \circ \pi_{\sigma^*_T}$, where $\pi_{\sigma^*_T}$ is guaranteed to satisfy $w_{\sigma^*_T}(\pi_{\sigma^*_T}) \leq \delta_{\sigma^*_T}$. Hence, 
\begin{equation}				\label{eqn:w-bound}
	w(\pi'_{\sigma^*_T}) = w(\sigma^*_T) + w(\pi_{\sigma^*_T}) \leq w(\sigma^*_T) + w(V \setminus T) \cdot \delta_{\sigma^*_T} \leq \mu(1 + \eta), 
\end{equation}
where the first inequality is using the fact $w(\pi_{\sigma^*_T}) = w_{\sigma^*_T}(\pi_{\sigma^*_T})w(V \setminus T) \leq \delta_{\sigma^*_T}w(V\setminus T)$, and the second inequality is using the definition of $\delta_{\sigma^*_T}$ in \eqref{eqn:del-def}. On the other hand,
\begin{align}
	{\sf Val}_{\pi'_{\sigma^*_T}}(H) 
	&= \Pr_{e \sim E} \left[\pi'_{\sigma^*_T}(e) = {\bf 1}\right] 			\non\\
	&= \Pr_{e \sim E} \left[\left\{\sigma^*_T(e|_T) = {\bf 1} \right\}\wedge \left\{\pi_{\sigma^*_T}(e|_{T^c}) = {\bf 1}\right\}\right] 		\non\\
	&\geq \Pr_{e \sim E} \left[\left\{\sigma^*_T(e|_T) = {\bf 1} \right\}\wedge \left\{\pi_{\sigma^*_T}(e|_{T^c}) = {\bf 1}\right\} \wedge \left\{e \subseteq T\right\}\right] 			\non\\
	&= \Pr_{e \sim E} \left[\left\{\sigma^*_T(e|_T) = {\bf 1} \right\}\wedge \left\{e \subseteq T\right\}\right] 		\non\\
	&\geq \frac12 {\sf Val}_{\leq \mu}(H),					\label{eqn:case-1}
\end{align}
where the last step is due to the setting of the lemma. Now we argue the weight bound for $\pi'_{\sigma^*_T}$. Since \eqref{eqn:w-bound} and \eqref{eqn:case-1} establish the guarantees of the lemma, we are done.
\end{proof}

Now we have the final lemma which deals with the case where $w(V \setminus T) \geq \mu \eta$ and the second term of \eqref{eqn:red-rhs} is large.

\begin{lemma}			\label{lem:red-3}
	Suppose $w(V \setminus T) \geq \mu \eta$ and 
	\begin{equation}				\label{eqn:set-1}
	\Pr_{e \sim E} \left[\Big\{\sigma^*(e) = {\bf 1}\Big\} \wedge \Big\{e \not\subseteq T\Big\}\right] 	\geq \frac{{\sf Val}_{\leq \mu}(H)}{2}.
	\end{equation}
	Then,  $\pi'_{\sigma^*_T}$ is a labeling with $w(\pi'_{\sigma^*_T}) \leq \mu(1 + \eta)$ satisfying
	\[
	{\sf Val}_{\pi'_{\sigma^*_T}}(H') \gtrsim \eta^r\alpha^* \frac{{\sf Val}_{\leq \mu}(H)}{2}.	
	\]
\end{lemma}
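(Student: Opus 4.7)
The plan is to mirror the proof of Lemma \ref{lem:red-2}, but this time exploiting the approximation algorithm nontrivially on the sub-instance $H_{\sigma^*_T}$. The weight bound $w(\pi'_{\sigma^*_T}) \leq \mu(1+\eta)$ is immediate from the same calculation \eqref{eqn:w-bound} used in Lemma \ref{lem:red-2}: by construction $w(\pi_{\sigma^*_T}) \leq \delta_{\sigma^*_T}\cdot w(V\setminus T)$, and $\delta_{\sigma^*_T}$ is defined in \eqref{eqn:del-def} precisely so that $w(\sigma^*_T) + \delta_{\sigma^*_T}\cdot w(V\setminus T) = \mu(1+\eta)$. All the work is therefore in the value bound.

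The key intermediate object is the restricted labeling $\pi^*_T$, which will serve as a feasible witness for the sub-instance's optimum. Since $w(\sigma^*_T)+w(\pi^*_T)=w(\sigma^*)\leq \mu$, the relative weight of $\pi^*_T$ in $H_{\sigma^*_T}$ is $(\mu - w(\sigma^*_T))/w(V\setminus T) \leq \delta_{\sigma^*_T}$, so it is feasible. Setting $p := \Pr_{e\sim E}[\sigma^*_T(e|_T) = {\bf 1}]$ and expanding $\sigma^* = \sigma^*_T \circ \pi^*_T$ (under the empty-suffix-trivially-satisfied convention of Remark \ref{rem:weights}), a direct calculation yields the two identities
\[
p\cdot{\sf Val}_{\pi^*_T}(H_{\sigma^*_T}) = \Pr_{e\sim E}[\sigma^*(e)={\bf 1}] \qquad\text{and}\qquad p\cdot{\sf Val}_{\pi_{\sigma^*_T}}(H_{\sigma^*_T}) = {\sf Val}_{\pi'_{\sigma^*_T}}(H).
\]
Combined with $\Pr_{e\sim E}[\sigma^*(e)={\bf 1}] \geq {\sf Val}_{\leq\mu}(H)/2$ (which follows trivially from the hypothesis \eqref{eqn:set-1}), the first identity yields ${\sf Val}_{\pi^*_T}(H_{\sigma^*_T}) \geq {\sf Val}_{\leq\mu}(H)/(2p)$. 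The approximation guarantee of Lemma \ref{lem:dksh-red-2} applied to $H_{\sigma^*_T}$ at bias $\delta_{\sigma^*_T}$ then gives ${\sf Val}_{\pi_{\sigma^*_T}}(H_{\sigma^*_T}) \geq \alpha^{\sf uw}_{(\delta_{\sigma^*_T})}(\dksh_r)\cdot{\sf Val}_{\pi^*_T}(H_{\sigma^*_T})$, and multiplying by $p$ while invoking the second identity collapses everything to
\[
{\sf Val}_{\pi'_{\sigma^*_T}}(H) \;\geq\; \alpha^{\sf uw}_{(\delta_{\sigma^*_T})}(\dksh_r)\cdot \frac{{\sf Val}_{\leq\mu}(H)}{2}.
\]

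What remains is to replace $\alpha^{\sf uw}_{(\delta_{\sigma^*_T})}$ by $\alpha^*=\alpha^{\sf uw}_{(\mu)}$ at the cost of $\eta^r$ via Lemma \ref{lem:comp}. The crucial input is $\delta_{\sigma^*_T}\geq \mu\eta$, which follows from $w(\sigma^*_T)\leq \mu$ and $w(V\setminus T)\leq 1$ inserted into \eqref{eqn:del-def}. For $\delta_{\sigma^*_T}\geq 1$ the conclusion is trivial, since $\pi_{\sigma^*_T}\equiv {\bf 1}$ on $V\setminus T$ is then feasible and achieves value $1$ on $H_{\sigma^*_T}$; in the remaining regime $\delta_{\sigma^*_T}\in[\mu\eta, 1)$, taking $\ell := \lceil \mu/\delta_{\sigma^*_T}\rceil$ and invoking Lemma \ref{lem:comp} with parameters $(\ell,\delta_{\sigma^*_T})$ produces $\alpha^{\sf uw}_{(\delta_{\sigma^*_T})}(\dksh_r) \gtrsim_r \eta^r \alpha^*$, which combines with the previous display to give the stated bound. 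The main technical subtlety I anticipate is the sub-regime $\delta_{\sigma^*_T}\in(\mu,1)$, where Lemma \ref{lem:comp} applied naively only yields a $(\mu/\delta_{\sigma^*_T})^r$ factor rather than $\eta^r$; this is handled by observing that the extra weight slack permits running the $\alpha^*$-approximation algorithm at bias $\mu$ directly on $H_{\sigma^*_T}$, whose output is automatically feasible at the larger bias $\delta_{\sigma^*_T}$, so that the resulting labeling witnesses the required approximation.
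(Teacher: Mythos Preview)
Your approach is essentially the paper's: show $\pi^*_T$ is feasible for $H_{\sigma^*_T}$ with value at least ${\sf Val}_{\leq\mu}(H)/(2p)$ (the paper's Claim \ref{cl:h-bound}), invoke the guarantee of Lemma \ref{lem:dksh-red-2} on $H_{\sigma^*_T}$, then unwind via the identity $p\cdot {\sf Val}_{\pi_{\sigma^*_T}}(H_{\sigma^*_T}) = {\sf Val}_{\pi'_{\sigma^*_T}}(H)$ and compare $\alpha^{\sf uw}_{(\delta_{\sigma^*_T})}$ to $\alpha^{\sf uw}_{(\mu)}$ through Lemma \ref{lem:comp}.

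There is one genuine omission. To invoke Lemma \ref{lem:dksh-red-2} on $H_{\sigma^*_T}$ you must verify its bounded-weights hypothesis $\|w_{\sigma^*_T}\|_\infty \leq (\delta_{\sigma^*_T})^{8}$ (or at least a comparable bound), and this is precisely the place where the lemma's hypothesis $w(V\setminus T)\geq \mu\eta$ enters the argument: the paper bounds $\|w_{\sigma^*_T}\|_\infty = \max_{i\in V\setminus T} w(i)/w(V\setminus T) \leq \mu^{10}/(\mu\eta) \leq \mu^{8}$ using $\eta\geq\mu$. As written, your proof never uses $w(V\setminus T)\geq \mu\eta$ at all (your derivation of $\delta_{\sigma^*_T}\geq \mu\eta$ only uses $w(V\setminus T)\leq 1$), so this check is the missing ingredient rather than a cosmetic omission.

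On your final paragraph: the paper simply asserts $\alpha^{\sf uw}_{(\delta_{\sigma^*_T})}\gtrsim_r \eta^r\alpha^{\sf uw}_{(\mu)}$ from $\delta_{\sigma^*_T}\geq\mu\eta$ and Lemma \ref{lem:comp}, without separating out the sub-regime $\delta_{\sigma^*_T}\in(\mu,1)$; your concern there is legitimate, but your proposed fix (``run the $\alpha^*$-algorithm at bias $\mu$ on $H_{\sigma^*_T}$'') does not actually establish $\alpha^{\sf uw}_{(\delta_{\sigma^*_T})}\geq \alpha^{\sf uw}_{(\mu)}$, since the output only competes against ${\sf Val}_{\leq\mu}(H_{\sigma^*_T})$ while $\pi^*_T$ may have relative weight up to $\delta_{\sigma^*_T}>\mu$. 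Both the paper and your sketch are informal at this point.
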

\begin{proof}
	Note that this case is again handled using Lines $7$-$8$ fo the algorithm. Our first step here is to show that the optimal value of the induced instance $H_{\sigma^*_T}$ is large. 
	\begin{claim}				\label{cl:h-bound}
		Suppose \eqref{eqn:set-1} holds. Then,
		\[
		{\sf Val}_{\leq \delta_{\sigma^*_T}}(H_{\sigma^*_T}) \geq \frac{{\sf Val}_{\leq \mu}(H)}{2\Pr_{e \sim E}\left[\sigma^*_T(e|_T) = {\bf 1}\right]},
		\]
		where $\delta_{\sigma^*_T}$ is defined as in \eqref{eqn:del-def}.
	\end{claim}
	\begin{proof}
		Let $E'$ be the subset of edges $e \in E$ satisfying $\sigma^*_T(e|_T) = {\bf 1}$ and let $\cE$ denote the event that $e \in E'$ for a random draw of $e \sim E$. Note that $\sigma^*$ has weight at most $\mu$ which implies that $w(\pi^*_T) + w(\sigma^*_T) \leq \mu$ and hence
		\[
		w_{\sigma^*_T}(\pi^*_T) = \frac{w(\pi^*_T)}{w(V \setminus T)} \leq \frac{\mu - w(\sigma^*_T)}{w(V \setminus T)} \leq \delta_{\sigma^*_T}, \qquad\qquad\textnormal{(From \eqref{eqn:del-def})}
		\]
		which along with the observation $\|w_{\sigma^*_T}\|_1 = 1$ implies that $\pi^*_T$ is a labeling of relative weight at most $\delta_{\sigma^*_T}$ in $H_{\sigma^*_T}$. Hence,
		\begin{align*}
			{\sf Val}_{\leq \delta_{\sigma^*_T}}(H_{\sigma^*_T}) \geq 
			{\sf Val}_{\pi^*_T}(H_{\sigma^*_T})
			& = \Pr_{e \sim H_{\sigma^*_T}} \left[\pi^*_T(e) = {\bf 1}\right] \\
			& = \Pr_{e \sim E|\cE} \left[\pi^*_T(e|_{T^c}) = {\bf 1}\right] \tag{Using \eqref{eqn:val-def}} \\
			& = \frac{\Pr_{e \sim E} \left[\big\{\sigma^*_T(e|_T) = {\bf 1} \big\} \wedge \big\{\pi^*_T(e|_{T^c}) = {\bf 1}\big\}\right]}{\Pr_{e \sim E}\left[\cE\right]} \\
			& \geq \frac{\Pr_{e \sim E} \left[\big\{\sigma^*_T(e|_T) = {\bf 1} \big\} \wedge \big\{\pi^*_T(e|_{T^c}) = {\bf 1}\big\} \wedge  \big\{e \not\subseteq T\big\}\right]}{\Pr_{e \sim E}\left[\cE\right]} \\
			& = \frac{\Pr_{e \sim E} \left[\sigma^*(e) = {\bf 1}, e \not\subseteq T\right]}{\Pr_{e \sim E}\left[\cE\right]} \\
			& \geq \frac{{\sf Val}_{\leq \mu}(H)}{2\Pr_{e \sim E}\left[\sigma^*_T(e|_T) = {\bf 1}\right]}	
		\end{align*}
		where the last inequality follows from \eqref{eqn:set-1} and the definition of the event $\cE$.
	\end{proof}
	
	Now recall that in the setting of the lemma we have $w(V \setminus T) \geq \mu \eta$. This along with the definition of $w_{\sigma^*_T}$ and the bound $\eta \geq \mu$ implies that: 
	\[
	\|w_{\sigma^*_T}\|_\infty = \frac{\max_{i \in V \setminus T} w(i)}{w(V \setminus T)} \leq \frac{\mu^{10}}{\mu\eta} \leq \mu^8,
	\]
	i.e, the vertex weights of $H_{\sigma^*_T}$ are $\mu^8$-bounded. On the other hand, we also have: 
	\begin{equation}				\label{eqn:del-bound}
		\delta_{\sigma^*_T} \geq \frac{\mu(1 + \eta) - w(\sigma^*_T)}{w(V \setminus T)} \geq \mu(1 + \eta) - \mu \geq \mu\eta.
	\end{equation}
	Therefore, in Step \ref{step:dksh}, the $\ldksh$ algorithm from Lemma \ref{lem:dksh-red-2} on $H_{\sigma^*_T}$ returns a labeling $\pi_{\sigma^*_T}:V \setminus T \to \{0,1\}$ of relative weight at most $\delta_{\sigma^*_T}$ in $H_{\sigma^*_T}$ satisfying:
	\begin{align}				
		{\sf Val}_{\pi_{\sigma^*_T}}(H_{\sigma^*_T}) 	
		&\geq \alpha^{\uw}_{(\delta_{\sigma^*_T})}(\dksh_r) \cdot {\sf Val}_{\leq \delta_{\sigma^*_T}}(H_{\sigma^*_T}) \tag{Lemma \ref{lem:dksh-red-2}}		\non \\
		&\geq \alpha^{\uw}_{(\delta_{\sigma^*_T})}(\dksh_r) \cdot \frac{{\sf Val}_{\leq \mu}(H)}{2\Pr_{e \sim E}\left[\sigma^*_T(e|_T) = {\bf 1}\right]}  \tag{Claim \ref{cl:h-bound}}		\non \\
		&\gtrsim_r \eta^r \alpha^{\uw}_{(\mu)}(\dksh_r) \cdot \frac{{\sf Val}_{\leq \mu}(H)}{2\Pr_{e \sim E}\left[\sigma^*_T(e|_T) = {\bf 1}\right]}    \label{eqn:pi-val}
	\end{align}
	where the last inequality uses the fact that $\delta_{\sigma^*_T} \geq \mu \eta$ from \eqref{eqn:del-bound} and Lemma \ref{lem:comp}. Then the final concatenated labeling $\pi'_{\sigma^*_T} = \sigma^*_T \circ \pi_{\sigma^*_T}$ satisfies:
	\begin{align}
		{\sf Val}_{\pi'_{\sigma^*_T}}(H)
		& = \Pr_{e \sim E} \left[\pi'_{\sigma^*_T}(e) = {\bf 1}\right]  \non \\
		& = \Pr_{e \sim E} \left[\sigma^*_T(e|_T) = {\bf 1} \wedge \pi_{\sigma^*_T}(e|_{T^c}) = {\bf 1}\right] \non\\
		& = \Pr_{e \sim E} \left[\sigma^*_T(e|_T) = {\bf 1} \right] \Pr_{e \sim E| \sigma^*_T(e|_T) = 1} \left[ \pi_{\sigma^*_T}(e|_{T^c}) = {\bf 1}\right] 		\non\\
		& = \Pr_{e \sim E} \left[\sigma^*_T(e|_T) = {\bf 1} \right] \cdot {\sf Val}_{\pi_{\sigma^*_T}}(H_{\sigma^*_T}) 		\tag{Using \eqref{eqn:val-def}} 		\non\\
		& \gtrsim_r \Pr_{e \sim E} \left[\sigma^*_T(e|_T) = {\bf 1} \right]  \cdot \eta^r \alpha^* \frac{{\sf Val}_{\leq \mu}(H)}{2\Pr_{e \sim E}\left[\sigma^*_T(e|_T) = {\bf 1}\right]}			\tag{Using \eqref{eqn:pi-val}}		\non \\
		& =  \eta^r\alpha^* \frac{{\sf Val}_{\leq \mu}(H)}{2},				\label{eqn:val-f}
	\end{align}
	where $\alpha^* := \alpha^{\uw}_{(\mu)}(\dksh_{r})$. 
	
	Finally, using arguments identical to \eqref{eqn:w-bound} we have that $w(\pi'_{\sigma^*_T}) \leq \mu(1 + \eta)$ which along with \eqref{eqn:val-f} finishes the proof of the lemma.
	
\end{proof}

{\bf Putting Things Together}. Note that the settings of Lemmas \ref{lem:red-1}, \ref{lem:red-2} and \ref{lem:red-3} exhaustively cover all cases possible. In each case, the lemmas establish that the labeling $\pi'_{\sigma^*_T}$ has relative weight at most $\mu(1 + \eta)$ and it satisfies:
\begin{equation}			\label{eqn:val-p}
{\sf Val}_{\pi'_{\sigma^*_T}}(H) \gtrsim_r \eta^r\alpha^* \frac{{\sf Val}_{\leq \mu}(H)}{2}.
\end{equation}
Now we observe that the labeling $\tilde{\pi}$ (from Line \ref{step:final}) returned by the algorithm satisfies 
\begin{align*}
{\sf Val}_{\tilde{\pi}}\left(H\right) 
= \max_{\sigma_T: w(\sigma_T) \leq \mu} {\sf Val}_{\pi'_{\sigma_T}}(H) 
\geq {\sf Val}_{\pi'_{\sigma^*_T}}(H) 
\gtrsim_r \eta^r  \alpha^\uw_{(\mu)}(\dksh_r) \cdot \frac{{\sf Val}_{\leq \mu}(H)}{2},			
\end{align*}
Furthermore, since $\tilde{\pi} = \pi'_{\sigma_T}$ for some $\sigma_T$ satisfying $w(\sigma_T) \leq \mu$, we must have $w(\tilde{\pi}) = w(\sigma_T)\leq \mu(1+\eta)$ where the bound for $w(\sigma_T)$ can be established using arguments identical to Lemmas \ref{lem:red-1}, \ref{lem:red-2} and \ref{lem:red-3}. The above arguments put together conclude the proof of the theorem.

\subsection{$\dksh$ With Bounded Weights}

The following lemma is folklore that reduces $\dksh$ instances with bounded relative weights to $\dksh$ instance with uniform vertex weights.

\begin{lemma}				\label{lem:dksh-red-2}
	For any fixed $\eta \in (\mu^2,1)$, there exists an algorithm which on input $\ldksh$-instances $H = (V,E,w,\mu)$ of arity $r$ with $\|w\|_1 = 1$ and $\|w\|_\infty \leq \mu^{8}$, outputs a labeling $\sigma:V \to \{0,1\}$ of relative weight $\mu'$ satisfying
	\[
	{\sf Val}_{\sigma}(H) \geq \alpha^{\sf uw}_{(\mu)}(\dksh_{r}) \cdot {\sf Val}_{\leq \mu}(H),
	\]
	and $|\mu' - \mu| \leq \mu \eta$.
\end{lemma}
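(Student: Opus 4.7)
The plan is to reduce the bounded-weight $\ldksh$ instance $H$ to an unweighted $\dksh_r$ instance $H'$ via vertex duplication plus random edge sampling, invoke the blackbox unweighted approximation algorithm on $H'$, and decode via randomized rounding. Fix parameters $N,M$ polynomial in $|V|,1/\mu,1/\eta$, set $n_i:=\lfloor N\cdot w(i)\rfloor$, and form $V':=\{(i,j):i\in V,\,j\in[n_i]\}$ with $V'_i$ denoting the copies of $i$. For each hyperedge $e=(i_1,\ldots,i_r)\in E$, sample $M$ independent tuples $(j_1,\ldots,j_r)\in\prod_\ell[n_{i_\ell}]$ uniformly at random and add the corresponding hyperedges to $E'$. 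Run the $\alpha^{\sf uw}_{(\mu)}(\dksh_r)$-approximation algorithm on $H':=(V',E')$ with $k':=\lfloor\mu|V'|\rfloor$ to obtain $S'\subseteq V'$, then set $x_i:=|S'\cap V'_i|/n_i$ and sample $\sigma(i)\sim\mathrm{Bern}(x_i)$ independently.

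The analysis has three components. First, the lift $S^\star:=\bigcup_{\sigma^\star(i)=1}V'_i$ of the $H$-optimal $\sigma^\star$ satisfies $|S^\star|\leq k'$ up to integer rounding, and deterministically induces every sampled lift of any edge $e$ with $\sigma^\star(e)={\bf 1}$, so ${\sf Val}_{S^\star}(H')={\sf Val}_{\sigma^\star}(H)={\sf Val}_{\leq\mu}(H)$; hence the blackbox returns $S'$ with ${\sf Val}_{S'}(H')\geq\alpha^{\sf uw}_{(\mu)}(\dksh_r)\cdot{\sf Val}_{\leq\mu}(H)$. Second, a Chernoff bound per original edge combined with a union bound over all $2^{|V'|}$ subsets yields, for $M$ polynomially large, that whp over the construction of $E'$ every $S'\subseteq V'$ simultaneously satisfies
\[
{\sf Val}_{S'}(H')=\Ex_{e\sim E}\Bigl[\prod_{\ell\in[r]}x_{i_\ell}\Bigr]\pm o(1)=\Ex_\sigma[{\sf Val}_\sigma(H)]\pm o(1).
\]
Third, the expected weight satisfies $\Ex[w(\sigma)]=\sum_i x_i w(i)=|S'|/N\pm|V|/N=\mu\pm o(\mu\eta)$, and since the $\sigma(i)$'s are independent with $\|w\|_\infty\leq\mu^8$, Hoeffding's inequality forces $|w(\sigma)-\mu|\leq\mu\eta/2$ with subexponentially small failure probability. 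To deterministically extract a good $\sigma$, one derandomizes the rounding via the method of conditional expectations; alternatively, a Markov argument on $1-{\sf Val}_\sigma\in[0,1]$ combined with polynomially many independent repetitions suffices, at the cost of a constant factor absorbed into the eventual $\gtrsim_r$ notation when this lemma is invoked in Theorem~\ref{thm:dksh-weight}.

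The main obstacle is the uniform concentration in the second step: union-bounding over $2^{|V'|}$ subsets forces $M=\tilde\Omega(|V'|)$, so $N$ must be balanced small enough that $N\cdot M$ stays polynomial while simultaneously large enough that the rounding error $|V|/N$ is well below $\mu\eta$ and $\sum_i x_i n_i$ lies within $|V|$ of $N\mu$. The hypothesis $\eta\geq\mu^2$, combined with $\|w\|_\infty\leq\mu^8$, provides just enough slack: the Hoeffding noise scale $\|w\|_\infty/\mu=\mu^7\ll\eta$ makes the bias concentration of $w(\sigma)$ around $\mu$ within the permitted $\mu\eta$ tolerance essentially automatic, so the entire reduction runs in polynomial time while preserving the blackbox approximation ratio.
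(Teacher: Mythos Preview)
Your approach is correct and essentially parallel to the paper's, with one structural difference in how the lifted hypergraph $H'$ is built. The paper does not sample edges: for each $e=(i_1,\ldots,i_r)\in E$ it adds \emph{all} $\prod_t \ell(i_t)$ lifted hyperedges, each with edge weight $1/\prod_t \ell(i_t)$. With this exact lift one has the \emph{identity}
\[
\Ex_\sigma\big[{\sf Val}_\sigma(H)\big] \;=\; {\sf Val}_{S'}(H')
\]
for every $S'\subseteq V'$ (a two-line computation unwinding the rounding and the edge weights), so no Chernoff bound and no union bound over $2^{|V'|}$ subsets is needed. The decoding step and the Hoeffding argument for $|w(\sigma)-\mu|\leq\mu\eta$ using $\|w\|_\infty\leq\mu^8$ are then identical to yours. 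Your random-edge-sampling route buys you a genuinely edge-unweighted $H'$ (so the blackbox is invoked literally on an unweighted instance, whereas the paper silently applies it to an edge-weighted one), at the cost of the extra concentration machinery. One small point to tighten in your sketch: the additive $\pm o(1)$ from the uniform concentration must actually be driven below $\alpha^{\sf uw}_{(\mu)}(\dksh_r)\cdot{\sf Val}_{\leq\mu}(H)$, which can be as small as ${\rm poly}(\mu)$; this just means choosing $M$ polynomially larger in $1/\mu$, but it should be said.
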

\begin{proof}
	
	For ease of notation, we denote $\alpha^* := \alpha^{\sf uw}_{(\mu)}(\dksh_{r})$. Now consider the algorithm in Algorithm \ref{alg:ref-2}.
	
	\begin{algorithm}[ht!]
		\SetAlgoLined
		
		{\bf Input}. An $\ldksh$-instance $H = (V,E,w,\mu)$ of arity $r$ \;
		Choose $N \in \mathbbm{N}$ to be large enough\footnotemark such that $w(i) \cdot N \in \mathbbm{N}$ for every $i \in V$\;
		For every $i \in V$, denote $\ell(i) := w(i) \cdot N$\;
		Construct a new unweighted $\dksh$ instance $H' = (V',E')$ with $k = \mu|V'|$ as follows		\label{step:cons-1}\; 
		{\bf Vertex Set}. For every vertex $i \in V$, introduce a cloud of vertices  $\cC_i := \{(i,j)\}_{j \in [\ell(i)]}$ corresponding to vertex $i$. The final vertex set is $V' := \cup_{i \in V} \cC_i$\;
		{\bf Edge Set}. For every edge $e = (i_1,\ldots,i_r) \in E$ and $(j_1,\ldots,j_r) \in \times^{r}_{t = 1} [\ell(i_t)]$, introduce edge $\{(i_1,j_1),\ldots,(i_r,j_r)\}$ in $E'$ with weight $1/\prod_{t \in [r]}\ell(i_t)$			\label{step:cons-2}\;
		Run the $\alpha^{\uw}_{(\mu)}(\dksh_r)$-approximation algorithm on $H'$ and let $\sigma':V' \to \{0,1\}$ be the labeling of relative weight $\mu$ returned by the algorithm\;
		Sample $\sigma:V \to \{0,1\}$ as follows: for every $i \in V$ sample $x(i) \sim [\ell(i)]$ independently, and set $\sigma(i) = \sigma'(i,x(i))$ 		\label{step:sig-f}\;
		Return labeling $\sigma$\;
		\caption{Bounded Weights $\dksh$}
		\label{alg:ref-2}
	\end{algorithm} 
	\footnotetext{We assume that the reciprocals of the weights are polynomial in the instance size.}
	In the above construction, note that the new hypergraph is hyperedge weighted as well, and henceforth, we shall use $e \sim H'$ to denote a random draw of a hyperedge according to the hyperedge weights. The following related observation will be useful in the rest of the analysis.
	
	\begin{observation}				\label{obs:h-samp}
		A random draw of an edge $e \sim H'$ can be simulated using the following process:
		\begin{itemize}
			\item Sample edge $(i_1,\ldots,i_r) \sim E$.
			\item Sample $(j_t)_{t \in [r]} \sim \times^r_{t = 1}\left[\ell(i_t)\right]$.
			\item Output edge $\{(i_1,j_1),\ldots,(i_r,j_r)\}$.
		\end{itemize}
		Consequently, the value of an assignment $\sigma':V' \to \{0,1\}$ in $H'$ can be expressed as:
		\[
		{\sf Val}_{\sigma'}(H') = \Ex_{(i_1,\ldots,i_r) \sim E} \Pr_{(j_t)_{t \in [r]} \sim \times^r_{t = 1} [\ell(i_t)]} \left[\bigwedge_{t \in [r]} \sigma'(i_t,j_t) = 1\right].
		\]
	\end{observation}
	\begin{proof}
		Let $\kappa:E' \to \mathbbm{R}^+$ denote the hyperedge relative weight function for $H'$. Then by definition, for any edge $e = \{(i_1,j_1),\ldots,(i_r,j_r)\} \in E'$ the relative weight of $e$ can be re-expressed as: 
		\[
		\kappa(e) = \frac{\frac{1}{\prod_{t \in [r]} \ell(i_t)}}{\sum_{(i'_t)^r_{t = 1} \in E'} \sum_{(j'_t)^r_{t = 1} \in \times_{t \in [r]} [\ell(i'_t)]} \frac{1}{\prod_{t \in [r]} \ell(i'_t)}} = \frac{1}{|E|} \prod_{t \in [r]}\frac{1}{\ell(i_t)},
		\]
		which is exactly the probability with which edge $e$ is sampled according to the process in the statement i.e., sampling according to the relative edge weights is equivalent to sampling according to the above process. The second point of the observation now follows directly from the first.
	\end{proof}

	Towards analyzing Algorithm \ref{alg:ref-2}, we begin with following claim which shows that optimal value of instance $H'$ is at least the optimal value of $H$.
	
	\begin{claim}			\label{cl:Hp-val}
		The instance $H'$ constructed in lines \ref{step:cons-1} - \ref{step:cons-2} satisfies
		\[
		{\sf Val}_{(\mu)}(H') \geq {\sf Val}_{\leq \mu}(H),
		\]
		where ${\sf Val}_{(\mu)}(H')$ is the optimal value of $\dksh$ instance $H'$ achieved using labelings of relative weight exactly $\mu$.
	\end{claim}
	\begin{proof}
		Let $\sigma^*:V \to \{0,1\}$ be an optimal labeling for $H$ of weight at most $\mu$ that attains ${\sf Val}_{\leq \mu}(H)$. Then we construct $\sigma_1:V' \to \{0,1\}$ as 
		\[
		\sigma_1(i,j) :=  \sigma^*(i), \qquad\qquad\forall~i \in V, j \in [\ell(i)].
		\]
		We claim that $\sigma_1$ has relative weight at most $\mu$ in $H'$. To see this, observe that $H'$ has uniform weights and hence we can bound the relative weight of $\sigma_1$ in $H'$ as:
		\begin{align*}
		\mu_1 := \frac{\sum_{a \in V'} \sigma_1(a)}{|V'|} = \frac{\sum_{i \in V} \sum_{j \in [\ell(i)]} \sigma_1(i,j)}{|V'|}
		&=\frac{\sum_{i \in V}\ell(i) \cdot \sigma_1(i)}{\sum_{i \in V}\ell(i)} \\
		&= \frac{\sum_{i \in V} w(i) \sigma^*(i)}{\sum_{i \in V}w(i)}	\tag{since $\ell(i) = w(i) \cdot N$} \\
		&\leq \mu,
		\end{align*}
		which implies that the labeling $\sigma_1$ is of relative weight at most $\mu$ in the instance $H'$. Next, since $H$ is uniformly weighted, we can further extend $\sigma_1$ to a labeling $\sigma_2$ whose relative weight is exactly $\mu$ in $H'$ by setting $(\mu - \mu_1)$-fraction of zeros to ones in $\sigma_1$. Observe that since ${\rm supp}(\sigma_2) \supseteq {\rm supp}(\sigma_1)$, the resulting labeling must satisfy ${\sf Val}_{\sigma_2}(H') \geq {\sf Val}_{\sigma_1}(H')$. Hence,		
		\begin{align*}
			{\sf Val}_{(\mu)}(H')
			\geq {\sf Val}_{\sigma_2}(H') 
			&\geq {\sf Val}_{\sigma_1}(H') \\
			&= \Pr_{e \sim H'}\left[\sigma_1(e) = {\bf 1}\right] 			\\
			& = \Ex_{(i_1,\ldots,i_r) \sim E} ~~ \Pr_{(j_t)_{t \in [r]} \sim \times^r_{t = 1} [\ell(i_t)]} \left[\bigwedge_{t \in [r]}\sigma_1(i_t,j_t) = 1\right] 		\tag{Observation \ref{obs:h-samp}}\\
			& = \Ex_{(i_1,\ldots,i_r) \sim E} ~~ \Pr_{(j_t)_{t \in [r]} \sim \times^r_{t = 1} [\ell(i_t)]} \left[\bigwedge_{t \in [r]}\sigma^*(i_t) = 1\right] 		\tag{Definition of $\sigma_1$}\\
			& = \Pr_{(i_1,\ldots,i_r) \sim E}  \left[\bigwedge_{t \in [r]}\sigma^*(i_t) = 1\right] \\
			& = {\sf Val}_{\leq \mu}(H). 
		\end{align*}
	\end{proof}
	Therefore, the above claim implies that in Line $7$ the algorithm computes the intermediate labeling $\sigma':V' \to \{0,1\}$ of weight $\mu|V'|$ satisfying
	\begin{equation}				\label{eqn:val-2}
		{\sf Val}_{\sigma'}(H') \geq \alpha^* \cdot {\sf Val}_{(\mu)}(H') \geq \alpha^* \cdot {\sf Val}_{\leq \mu}(H),
	\end{equation}
	where the last step is due to Claim \ref{cl:Hp-val} and $\alpha^* := \alpha^{\uw}_{(\mu)}(\dksh_r)$. Now we argue guarantees for the final labeling $\sigma$ (from Line \ref{step:sig-f}) using a couple of claims.
	
	\begin{claim}				\label{cl:red-1}
		The distribution over labeling $\sigma$ satisfies
		\[
		\Ex_{\sigma} \Big[{\sf Val}_\sigma(H)\Big] \geq \alpha^* \cdot {\sf Val}_{\leq \mu}(H)
		\]
	\end{claim}
	\begin{proof}
	Observe that 
	\begin{align}
		&\Ex_{\sigma} \Big[{\sf Val}_\sigma(H)\Big] \\
		&=\Ex_{\sigma} \Ex_{e = (i_1,\ldots,i_r) \sim E}\left[\mathbbm{1}\left(\bigwedge_{t \in [r]} \sigma(i_t) = 1 \right)\right] 	\non\\
		& = \Ex_{e = (i_1,\ldots,i_r) \sim E}\Ex_{\sigma} \left[\mathbbm{1}\left(\bigwedge_{t \in [r]} \sigma(i_t) = 1 \right)\right] 	\non\\
		& = \Ex_{e = (i_1,\ldots,i_r) \sim E} \frac{1}{\prod_{t \in [r]}\ell(i_t)} \sum_{(j_t)_{t \in [r]} \in \times^r_{t = 1} [\ell(i_t)]} \left[\mathbbm{1}\left(\bigwedge_{t \in [r]} \sigma'(i_t,j_t) = 1 \right)\right] 		\tag{Definition of $\sigma'$}		\non\\
		& = \Ex_{e = (i_1,\ldots,i_r) \sim E} \Ex_{(j_t)_{t \in [r]} \sim \times^r_{t = 1}[\ell(i_t)]} \left[\mathbbm{1}\left(\bigwedge_{t \in [r]} \sigma'(i_t,j_t) = 1 \right)\right] 		\non\\
		& = \Ex_{e' \sim H'} \left[\mathbbm{1}\left(\bigwedge_{i \in e'} \sigma'(i) = 1 \right)\right] 	\tag{Observation \ref{obs:h-samp}}	\non\\
		& = {\sf Val}_{\sigma'}(H') 		\non\\
		& \geq \alpha^* \cdot {\sf Val}_{\leq \mu}(H),		\label{eqn:apprx} 
	\end{align}
	where the last step is due to \eqref{eqn:val-2}. 
	\end{proof}
	Next we show that $\sigma$ has relative weight close to $\mu$ with high probability.
	\begin{claim}			\label{cl:red-2}
	With probability at least $1 - e^{-0.5/\mu^2}$ we have $|\mu - w(\sigma) | \leq \mu \eta$.
	\end{claim} 
	\begin{proof}
	We use the definition of the distribution over $\sigma$ (given $\sigma'$) and observe:
	\begin{align*}
		\Ex_{\sigma}\Big[w(\sigma)\Big] = \sum_{i \in V}\Ex_{\sigma}\left[w(i)\sigma(i)\right] = 
		\sum_{i \in V}  w(i)\left(\frac{\sum_{j \in [\ell(i)]} \sigma'(i,j) }{\ell(i)}\right) 
		\overset{1}{=}  \sum_{i \in V} \frac{\sum_{j \in [\ell(i)]}  \sigma'(i,j) }{N} 
		\overset{2}{=} \mu,	
	\end{align*}
	where step $1$ uses $\ell(i) = w(i) \cdot N$ by definition and step $2$ follows from the guarantee on $\sigma'$. Furthermore, $\sigma(1),\ldots,\sigma(|V|)$ are independent $\{0,1\}$ random variables. Hence using Hoeffding's inequality we get that 
	\begin{align}
		\Pr_\sigma\left[ \left|\sum_{i \in V}w(i)\sigma(i) - \mu \right| > \mu\eta\right]
		&\leq \exp\left(-\frac{\mu^2\eta^2}{2\sum_{i \in V} w(i)^2}\right) 		\non\\
		&\leq \exp\left(-\frac{\mu^2\eta^2}{2\mu^{8}\sum_{i \in V} w(i)}\right) 		\tag{Since $\|w\|_\infty \leq \mu^{8}$}	\non\\
		&= \exp\left(-\frac{\eta^2}{2\mu^{6}}\right)									\tag{since $\|w\|_1  = 1$} \non \\
		&\leq \exp\left(-\frac{1}{2\mu^{2}}\right) 				\label{eqn:weight}
	\end{align}
	where the last step is due to $\eta \geq \mu^2$ in the setting of the lemma. 
	\end{proof}
	
	{\bf Combining the Guarantees}. Note that Claim \ref{cl:red-1} implies that
	\[
	\Ex_{\sigma}\left[{\sf Val}_{\sigma}(H)\right] \geq \alpha^* {\sf Val}_{\leq \mu}(H),
	\]
	and ${\sf Val}_{(\mu)}(H) \geq \mu^r$. Then by averaging, for any constant $\epsilon \in (0,1)$, with probability at least $\epsilon\alpha^* \mu^r$ we have 
	\[
	{\sf Val}_{\sigma}(H) \geq (1 - \epsilon)\alpha^*\cdot{\sf Val}_{\leq \mu}(H).
	\]
	On the other hand, Claim \ref{cl:red-2} implies that with probability at least $1 - e^{-0.5(1/\mu)^2}$. the labeling $\sigma'$ has weight in $[\mu(1 - \eta),\mu(1 + \eta)]$. Hence, with probability at least $\epsilon\alpha^* \mu^r/2$, the algorithm returns a labeling $\sigma$ which has weight $\mu(1 \pm \eta)$ with value at least $(1 - \epsilon)\alpha^* {\sf Val}_{(\mu)}(H)$. Therefore repeating the algorithm for at least $(1/\epsilon\mu)^r$-independent rounds returns a labeling with the desired guarantees w.h.p.
	
\end{proof}

\subsection{Proofs of Theorem \ref{thm:pred-cond} and \ref{thm:pred-approx}}

We now prove Theorems \ref{thm:pred-cond} and \ref{thm:pred-approx} using the lemmas proved from the previous sections. Before we proceed, we shall need the following additional elementary observations.

\begin{observation}			\label{obs:triv}
	There exists a $1$-approximation algorithm for $\dksh_0$.
\end{observation}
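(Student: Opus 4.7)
The plan is to observe that the arity $r = 0$ case is degenerate. When $r = 0$, every hyperedge is the empty tuple $()$, and for any labeling $\sigma:V \to \{0,1\}$, the restriction $\sigma(e)$ is the empty string, which trivially equals ${\bf 1}_0$ (the empty all-ones string). Hence every labeling satisfies every constraint, so ${\sf Val}_\sigma(H) = 1$ for any $\sigma$, and in particular ${\sf Val}_{(\mu)}(H) \leq 1$ with equality witnessed by any $\sigma$ of relative weight exactly $\mu$.

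The algorithm is therefore the following: on input a $\dksh_0$ instance $H = (V,E)$ with bias parameter $\mu$, simply output an arbitrary subset $S \subseteq V$ with $|S| = \mu|V|$ (assuming $\mu|V|$ is an integer; otherwise round). The output satisfies ${\sf Val}_S(H) = 1 = {\sf Val}_{(\mu)}(H)$, giving approximation ratio $1$ in linear time.

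The only potential subtlety is interpretation: one must verify that the convention ``empty conjunction $= 1$'' is consistent with the definition of ${\sf Val}_\sigma(H) = \Pr_{e \sim E}[\sigma(e) = {\bf 1}_r]$ given earlier in the preliminaries. Under the standard convention this is immediate, and there is no real obstacle. The observation is included purely for completeness, so that later statements invoking $\alpha^{\uw}_{(\mu)}(\dksh_{\|\beta\|_0})$ for $\beta \in \cM_\psi$ with $\|\beta\|_0 = 0$ (i.e., $\beta = {\bf 0}$) remain well-defined and evaluate to $1$.
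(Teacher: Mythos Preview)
Your proof is correct and takes essentially the same approach as the paper: both observe that $\dksh_0$ has only empty-set edges, which are trivially satisfied by any labeling, so any feasible output achieves value $1$. Your write-up simply spells out the convention about the empty string a bit more explicitly.
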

\begin{proof}
	$\dksh_0$ is the trivial CSP whose constraint set consists of empty set edges which are trivially satisfied by any labeling, and hence it admits a $1$-approximation algorithm.
\end{proof}

\begin{observation}				\label{obs:cover}
	There exists a $(1 - 1/e)$-approximation algorithm for uniformly weighted $\dksh_1$ instances for any $\mu$. Consequently,
	\[
	\alpha^{\uw}_{(\mu)}(\dksh_1) \geq 1 - 1/e.
	\] 
\end{observation}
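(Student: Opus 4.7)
The plan is to reduce $\dksh_1$ to the classical Max-$k$-Coverage problem and appeal to the greedy $(1 - 1/e)$-approximation of Nemhauser, Wolsey, and Fisher. In an arity-$1$ hypergraph $H = (V, E)$, every hyperedge is a singleton $\{v\}$ for some $v \in V$, so the objective of picking $S \subseteq V$ with $|S| = \mu |V|$ maximizing $|\{e \in E : e \subseteq S\}|$ is naturally a coverage-type objective.

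First I would set up the correspondence: for each $v \in V$, let $A_v := \{e \in E : e = \{v\}\}$ denote the (multi-)set of singleton edges on $v$. The $\dksh_1$ objective then rewrites as $\bigl|\bigcup_{v \in S} A_v\bigr|/|E|$, which is exactly Max-$k$-Coverage with ground set $E$ and sets $\{A_v\}_{v \in V}$ under the cardinality constraint $|S| \le \mu |V|$. Since this is a monotone submodular maximization problem under a cardinality constraint, applying the standard greedy algorithm yields a labeling $\sigma : V \to \{0, 1\}$ with $w(\sigma) = \mu$ satisfying ${\sf Val}_\sigma(H) \geq (1 - 1/e) \cdot {\sf Val}_{(\mu)}(H)$, which immediately implies $\alpha^{\uw}_{(\mu)}(\dksh_1) \geq 1 - 1/e$.

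I expect no real obstacle here. In fact, the coverage function in this case is even linear (each singleton $\{v\}$ is covered iff $v \in S$), so $\dksh_1$ admits an exact polynomial-time algorithm: simply sort vertices by the number of singleton edges incident to them in $E$ and take the top $\mu |V|$. The loose $(1 - 1/e)$-bound stated in the observation is thus more than sufficient, and the only minor care needed is ensuring $k = \mu |V|$ is integral, which is handled by the standard rounding conventions used elsewhere in the paper. The utility of this observation lies in its combination with Theorem \ref{thm:pred-approx}: a bias-independent constant approximation for $\dksh_1$ is exactly what is required to argue the ``if'' direction of Theorem \ref{thm:pred-cond}, namely that whenever $\cM_\psi \subseteq \cS_{\leq 1}$, the predicate $\psi$ admits a bias-independent approximation.
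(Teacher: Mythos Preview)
Your proposal is correct and takes essentially the same approach as the paper: both cast $\dksh_1$ as monotone submodular maximization under a cardinality constraint and invoke the greedy $(1-1/e)$ guarantee. You additionally observe that the objective is in fact linear (the sets $A_v$ are disjoint), so the problem is exactly solvable by sorting; the paper does not make this sharper remark, but it is valid and only strengthens the conclusion.
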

\begin{proof}
	Let $H(V,E)$ be an instance of $\dksh_1$. Then $E$ is a multi-set of singleton subsets of $V$ i.e., each element of $e \in E$ is actually a vertex of $V$. Let $w:V \to \mathbbm{Z}_+$ denote the function that maps $v \in V$ to the number of occurrences of $v$ in $E$. Then note that the extension of the function $w:2^{V} \to \mathbbm{Z}_{+}$ is submodular. Furthermore, the optimization problem corresponding to $\dksh_1$ is equivalent to: 
	\begin{eqnarray*}
		\textnormal{Maximize} 		& 	w(S) \\
		\textnormal{Subject to}		& 	|S| = k			   
	\end{eqnarray*} 
	Since $w$ is a submodular, the above is an instance of submodular function maximization subject to cardinality constraints, for which the greedy algorithm is known to yield a $(1 - 1/e)$-approximation guarantee (Exercise 2.10~\cite{shmoys-williamson}).
\end{proof}

Using the above observation, now we establish Theorems \ref{thm:pred-cond} and \ref{thm:pred-approx}.

\begin{proof}[Proof of Theorem \ref{thm:pred-cond}] 
Recall that $\alpha_{\leq \mu}(\psi)$ is the optimal approximation guarantee for biased CSP instances with predicate $\psi$ and bias constraint $\mu$. Now, suppose $\cM_{\psi} \subseteq \cS_0 \cup \cS_1$. Then using Lemma \ref{lem:at-most-0} we have 
\begin{align*}
\alpha_{\leq \mu}(\Psi) 
&  \gtrsim_r \min_{\beta \in \cM_{\psi}}\alpha^{\uw}_{(\mu)}\left(\dksh_{\|\beta\|_0}\right) \\
&  \geq \min \left\{\alpha^\uw_{(\mu)}\left(\dksh_{0}\right),\alpha_{(\mu)}\left(\dksh_{1}\right)\right\} \\
& \geq \min\{1,1 - 1/e\} = 1 - 1/e,
\end{align*}
where the penultimate inequality uses Observations \ref{obs:triv} and \ref{obs:cover}. On the other hand, suppose $\cM_\psi \not\subset \cS_0 \cup \cS_1$. Then, fix a $\beta \in \cM_\psi \setminus (\cS_0 \cup \cS_1)$. Furthermore, let $\alpha^{\rm bound}_{\leq\mu}(\dksh_i)$ denote the bias approximation curve for $\dksh$ instances whose relative weights are bounded by $\mu^{10}$ (as in the setting of Lemma \ref{lem:dksh-red-2}). Then, 
\begin{align*}
	\lim_{\mu \to 0} \alpha_{\leq \mu}(\psi) 
	&\lesssim_r \lim_{\mu \to 0} \alpha^\uw_{\leq r'\mu}\left(\dksh_{\|\beta\|_0}\right) 		\tag{Lemma \ref{lem:pred-hard} where $r' \in [2,r]$} \\
	&= \lim_{\mu \to 0} \alpha^{\sf bound}_{\leq r'\mu}\left(\dksh_{\|\beta\|_0}\right) 		\tag{Lemma \ref{lem:dksh-red-2}}\\
	&= 0,					
\end{align*}
where the last step uses Theorem \ref{thm:dksh-hard}.
\end{proof}

\begin{proof}[Proof of Theorem \ref{thm:pred-cond}]
	Firstly, using Lemma \ref{lem:at-most-0} we have 
	\begin{equation}			\label{eqn:pred-0}
	\alpha_{\leq \mu}(\psi) \gtrsim_r \min_{\beta \in \cM_\psi}\alpha^{\uw}_{(\mu)}\left(\dksh_{\|\beta\|_0}\right).
	\end{equation}
	On the other hand, using Lemma \ref{lem:pred-hard}, we know that for every $\beta \in \cM_\psi$, there exists $r_{\beta} \in [2,r]$ for which we have 
	\begin{equation}			\label{eqn:pred-1}
	\alpha_{\leq \mu}(\psi) \lesssim_r \alpha^\uw_{(r_{\beta}\mu)}\left(\dksh_{\|\beta\|_0}\right)
	\lesssim_r \alpha^\uw_{(\mu)}\left(\dksh_{\|\beta\|_0}\right)
	\end{equation}
	where the last inequality follows using the comparison lemma (Lemma \ref{lem:comp}).  Therefore, using \eqref{eqn:pred-1} for every $\beta \in \cM_\psi$ we get that 
	\begin{equation}				\label{eqn:pred-3}
	\alpha_{\leq \mu}(\psi) \lesssim_r \min_{\beta \in \cM_\psi}\alpha^\uw_{(\mu)}\left(\dksh_{\|\beta\|_0}\right).
	\end{equation}
	Combining the bounds from \eqref{eqn:pred-0} and \eqref{eqn:pred-3} gives us both directions of the desired inequality and hence concludes the proof of the theorem.
\end{proof}

\part{Hardness of Approximation}

\section{Additional Technical Preliminaries}

In this section we introduce the technical preliminaries used in our hardness reductions and their analysis. For the most part, we will follow the notation from \cite{Mos10} for the various concepts discussed here.

\subsection{Correlated Probability Spaces}

A finite probability space $(\Omega,\gamma)$ is identified by a finite set $\Omega$ with a measure $\gamma$ over the elements in $\Omega$. A joint correlated probability space over the product set $\prod^r_{i = 1} \Omega_i$ will be denoted as $(\prod^r_{i = 1} \Omega_i,\gamma)$, where $\gamma$ is now a measure over the elements from the Cartesian product set $\prod^r_{i = 1} \Omega_i$. If the sets are all identical, i.e., $\Omega_1 = \cdots = \Omega_r = \Omega$, then for simplicity, we shall denote $\prod^r_{i = 1} \Omega_i = \Omega^r$. 

{\bf Operations on Probability Spaces}. Given a pair of probability spaces $(\Omega_1,\gamma_1),(\Omega_2,\gamma_2)$, we use $(\Omega_1,\gamma_1) \otimes (\Omega_2,\gamma_2) = (\prod_{i = 1,2} \Omega_i, \gamma_1 \otimes \gamma_2)$ to denote the corresponding product probability space, where for any element $(\omega_1,\omega_2) \in \Omega_1 \times \Omega_2$, we associate the measure $(\gamma_1 \otimes \gamma_2)(\omega_1,\omega_2) = \gamma_1(\omega_1)\gamma_2(\omega_2)$. Furthermore, when $(\Omega_1,\gamma_1) = (\Omega_2,\gamma_2) = (\Omega,\gamma)$, for simplicity we denote the corresponding product measure as $(\Omega^2,\gamma^2)$. These conventions are naturally extended to $R$-wise product spaces. 

{\bf Correlation}. Given a pair of correlated spaces $(\prod_{i = 1,2} \Omega_i,\bgamma)$, the correlation between $\Omega_1$ and $\Omega_2$ induced by measure $\bgamma$ is defined as 
\[
\rho(\Omega_1,\Omega_2;\bgamma) = \max_{\substack{f_i \in L_2(\Omega_i) \\ \|f_i\|_2 = 1}} {\rm Cov}_{\bgamma}(f,g) 
\]
We can extend the above definition into $r$-ary correlated spaces $(\prod_{i \in [r]}\Omega_i,\bgamma)$ as 
\[
\rho(\Omega_1,\ldots,\Omega_r;\bgamma) = \max_{i \in [r]} \rho\left( \prod_{j \in [r] \setminus \{i\}} \Omega_j,\Omega_i;\bgamma\right)  
\]
We shall also need the following natural family of pairwise $\rho$-correlated distributions:
\begin{definition}[Distribution $\cA_{r,\rho}(\Omega,\gamma)$]			\label{defn:dist}
	Given a probability space $(\Omega,\gamma)$, the distribution $\cA_{r,\rho}(\Omega,\gamma)$ is the distribution over random variables  $(\omega_1,\ldots,\omega_r)$ supported over $\Omega^r$, generated using the following process.
	\begin{itemize}
		\item With probability $\rho$, sample $\omega \sim \gamma$ and set $\omega_i = \omega$ for every $i \in [r]$.
		\item With probability $1 - \rho$, sample $\omega_i \sim \gamma$ independently for every $i \in [r]$.
	\end{itemize}
\end{definition}

\subsection{Fourier Analysis}				\label{sec:fourier}

Let $(\Omega,\gamma)$ be a finite probability space. Then it is well known that the set of real valued function $f:\Omega \to \mathbbm{R}$ forms a vector space with respect to the usual addition of functions and scalar multiplication. Furthermore, one can equip the vector space with the inner product with respect to the measure $\gamma$ as 
\[
\langle f,g \rangle_\gamma \defeq \Ex_{x \sim \gamma}\left[f(x)g(x)\right]. 
\]
Therefore, given the above setup, one can construct an orthonormal basis for the vector space of such functions. In particular, we will be interested in the so called {\em Fourier Basis} which we define formally below.

\begin{definition}[Fourier Basis~\cite{OD14}]
	For any probability space $(\Omega,\gamma)$, there exists an orthonormal basis $\phi_0,\phi_1,\ldots,\phi_{|\Omega| - 1}: \Omega \to \mathbbm{R}$  with $\phi_0 \equiv 1$ under which any function $f \in L^2(\Omega,\gamma)$ can be uniquely expressed as 
	\[
	f = \sum_{\sigma = 0}^{|\Omega| - 1} \wh{f}(\sigma)\phi_\sigma.
	\]
	Here $\{\wh{f}(\sigma)\}$ are referred to as the Fourier coefficients of the function $f$. 
\end{definition} 

Furthermore, one can naturally extend the above to the setting of product probability spaces. For brevity denote $\ell = |\Omega| - 1$ and $\mathbbm{Z}_{\leq \ell} := \{0,1,\ldots,\ell\}$. Then given a product probability space $(\Omega^t, \gamma^{t})$, the corresponding Fourier basis is identified with $\{\phi_\sigma\}_{\sigma \in \mathbbm{Z}^t_{\leq \ell}}$ where for a given $\sigma \in \mathbbm{Z}^t_{\leq \ell}$, we have  
\[
\phi_{\sigma}(x) \defeq \prod_{i \in [t]} \phi_{i}(x_i) \ \ \ \ \forall x \in \Omega^t.
\]
Analogously, any function $f \in L^2(\Omega^t,\gamma^t)$ admits a unique representation in the Fourier basis.
\[
f = \sum_{\sigma \in \mathbbm{Z}^{t}_{\leq \ell}} \wh{f}(\sigma) \phi_\sigma.
\]

{\bf Influence}. Given a function $f \in L^2(\Omega^t,\gamma^t)$, the influence of the $i^{th}$ coordinate -- denoted by $\Inf{i}{f}$ -- if formally defined as
\begin{equation}				\label{eqn:inf-defn}
\Inf{i}{f} \defeq \Ex_{x \sim \gamma^t}\left[{\rm Var}_i[f]\right],
\end{equation}
i.e., it measures the dependence of the function $f$ on the coordinate $i$. It is well known that influence admits the following closed form expression
\[
\Inf{i}{f} = \sum_{\sigma : \sigma(i) \neq 0} \wh{f}(\sigma)^2
\]

{\bf Correlated Sampling and Noise Operators}. Given an element $x \in \Omega^t$ and $\rho \in (0,1)$, a vector $y$ is a $\rho$-correlated copy of $x$ in the space $(\Omega,\gamma)$ if it is generated using the following process. For every $i \in [t]$, do the following independently.
\begin{enumerate}
\item W.p. $\rho$, set $y_i = x_i$.
\item W.p. $1 - \rho$, sample $y_i \sim \gamma$.
\end{enumerate}
We will denote the above sampling process as $y \underset{\rho}{\sim } x$. With this, we are ready to define the natural family of noise operators on $L^2(\Omega^t,\gamma^t)$.
\begin{definition}[Noise Operator $T_\rho$]
For any $\rho \in (0,1)$, the noise operator $T_\rho$ is a stochastic functional on $L^2(\Omega^t,\gamma^t)$ which is defined as follows.
\[
T_\rho f(x) = \Ex_{y \underset{\rho}{\sim} x} \left[f(x)\right],    \qquad\qquad\forall x \in \Omega^t.
\]
\end{definition}
The following property of the noise operator is well known (e.g., \cite{KKMO07}).

\begin{lemma}[Bounded Influential Coordinates]			\label{lem:inf-size}
	For any function $f: \Omega^t \to [0,1]$ and $\eta,\tau \in (0,1)$ we have 
	\[
	\left|\left\{i \in [t] : \Inf{i}{T_{1 - \eta}f} \geq \tau \right\} \right| \leq \frac{1}{\eta \tau}.
	\]
\end{lemma}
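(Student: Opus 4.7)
The plan is to prove the stated bound via a standard Fourier-analytic argument on the influences of $T_{1-\eta}f$, exploiting the decay of Fourier coefficients under the noise operator together with Parseval's identity.

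First I would recall that the noise operator acts diagonally in the Fourier basis described in Section \ref{sec:fourier}: for every $\sigma \in \mathbbm{Z}^t_{\leq \ell}$, we have $\widehat{T_\rho f}(\sigma) = \rho^{|\sigma|}\widehat{f}(\sigma)$, where $|\sigma|$ denotes the number of coordinates $i$ with $\sigma(i) \neq 0$. Combining this with the Fourier expression for influence given in \eqref{eqn:inf-defn}, I would write
\[
\Inf{i}{T_{1-\eta}f} \;=\; \sum_{\sigma:\sigma(i) \neq 0} (1-\eta)^{2|\sigma|}\,\widehat{f}(\sigma)^2.
\]

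Next I would sum this identity over all coordinates $i \in [t]$ and exchange the order of summation. Each monomial $\widehat{f}(\sigma)^2$ is counted exactly $|\sigma|$ times in the resulting sum, giving
\[
\sum_{i \in [t]} \Inf{i}{T_{1-\eta}f} \;=\; \sum_{\sigma \neq 0} |\sigma|\,(1-\eta)^{2|\sigma|}\,\widehat{f}(\sigma)^2.
\]
To bound the scalar $|\sigma|(1-\eta)^{2|\sigma|}$ uniformly in $|\sigma|$, I would use the elementary inequality $x(1-\eta)^{2x} \leq x e^{-2\eta x} \leq \tfrac{1}{2e\eta} \leq \tfrac{1}{\eta}$ valid for all $x \geq 0$, so that $|\sigma|(1-\eta)^{2|\sigma|} \leq 1/\eta$ term by term.

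Plugging in this bound and applying Parseval's identity together with the fact that $f$ takes values in $[0,1]$ (so $\|f\|_2^2 \leq 1$), I obtain
\[
\sum_{i \in [t]} \Inf{i}{T_{1-\eta}f} \;\leq\; \frac{1}{\eta}\sum_{\sigma}\widehat{f}(\sigma)^2 \;=\; \frac{\|f\|_2^2}{\eta} \;\leq\; \frac{1}{\eta}.
\]
The conclusion of the lemma follows immediately by a counting argument: if $k$ coordinates each contribute at least $\tau$ to the left-hand side, then $k \tau \leq 1/\eta$, and so $k \leq 1/(\eta \tau)$. There is no real obstacle here; the main subtlety is just verifying the scalar inequality $x(1-\eta)^{2x} \leq 1/\eta$, which is a one-line calculus check on the maximum of $x e^{-2\eta x}$.
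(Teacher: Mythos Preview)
Your argument is correct and is exactly the standard Fourier-analytic proof of this folklore fact. The paper itself does not give a proof: it simply states the lemma as ``well known (e.g., \cite{KKMO07})'' and moves on, so there is nothing to compare against beyond noting that your write-up matches the usual proof one finds in the literature.
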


{\bf Noise Stability Bounds}. We shall use the following noise stability bound for our soundness analysis.

\begin{restatable}{rethm}{stab}			\label{thm:ks-stab}
	Let $(\Omega,\gamma)$ be a finite probability space and let $(\Omega^r,\bgamma)$ be the $r$-ary correlated probability space corresponding to distribution $\cA_{r,\rho}(\Omega,\gamma)$ (as in Definition \ref{defn:dist}). Let $\alpha := \min_{\omega \in \Omega^r}\bgamma(\omega)$. Then for every $\nu \in (0,1)$ there exists $\tau = \tau(\nu,r,\alpha)$ such that the following holds. Let $f:\Omega^R \to [0,1]$ be a function in $L_2(\Omega^R,\gamma^R)$ satisfying
	\[
	\max_{i \in [R]} \Inf{i}{f} \leq \tau.
	\]
	Furthermore, let $\mu:= \Ex_{\bomega\sim\gamma^R}\left[f(\bomega)\right]$ satisfy $\mu \leq 2^{-r}$. Then for any $\rho \leq 1/(C'r^2\log(1/\mu))$ we have 
	\[
	\Ex_{({\bomega}_1,\ldots,{\bomega}_r) \sim \bgamma^R} \left[\prod_{i \in [r]} f(\bomega_i)\right] \leq 3\mu^r + \nu,
	\]
	where $C'>0$ is an absolute constant.
\end{restatable}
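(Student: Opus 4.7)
The plan is to follow the standard invariance-then-Gaussian-stability recipe, extending the argument of Khot--Saket from the biased Boolean cube to an arbitrary finite probability space $(\Omega,\gamma)$. Observe first that the $r$-wise correlation structure of $\cA_{r,\rho}(\Omega,\gamma)$ is especially simple: each pair of coordinates is $\rho$-correlated, and the distribution can be written as a convex combination of the fully coupled measure (weight $\rho$) and the product measure $\gamma^{\otimes r}$ (weight $1-\rho$). Consequently, bounds that only depend on the pairwise correlation should transfer through.

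First I would smooth $f$ by applying a small noise operator $T_{1-\eta}f$, where $\eta=\eta(\nu,r)$ is chosen so that the $L_2$ error from smoothing is absorbed into the final $\nu$-slack. Since $T_{1-\eta}$ is a contraction on each copy and commutes with the correlated sampling (because the correlation structure factorizes over coordinates), replacing $f$ by $T_{1-\eta}f$ changes the $r$-wise expectation by at most $O(r\eta^{1/2})$, and moreover $T_{1-\eta}f$ is essentially of degree $D=O(\log(1/\eta)/\eta)$ up to a further small $L_2$ tail. Lemma~\ref{lem:inf-size} also gives that only $O(1/(\eta\tau))$ coordinates have influence exceeding $\tau$ after this smoothing.

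Next I would invoke the multidimensional invariance principle of Mossel for correlated ensembles (Theorem 1.12 of \cite{Mos10}), replacing the atoms in each $(\Omega,\gamma)$ by a Gaussian ensemble matching the first and second moments of the Fourier basis of $L_2(\Omega,\gamma)$. Because the invariance principle requires a lower bound $\alpha$ on the atom probabilities and bounded degree $D$, the permitted influence threshold becomes $\tau=\tau(\nu,r,\alpha)$ with explicit but complicated dependence inherited from the smoothing parameter $\eta$. The output is a truncated Gaussian surrogate $\tilde f$ with $\E[\tilde f]=\mu$ and
\[
\Ex_{(\bomega_1,\dots,\bomega_r)\sim\bgamma^R}\Brac{\prod_{i\in[r]} f(\bomega_i)}
\;\le\;
\Ex_{(\bxi_1,\dots,\bxi_r)}\Brac{\prod_{i\in[r]}\tilde f(\bxi_i)} + \nu/2,
\]
where $(\bxi_i)$ are jointly Gaussian with the same pairwise correlation $\rho$ as $\cA_{r,\rho}$.

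Finally I would apply a multidimensional Borell isoperimetric inequality (in the form stated, e.g., in Theorem 2.10 of \cite{KS15}) to conclude that the Gaussian expectation is maximized by a halfspace of volume $\mu$, yielding the iterated $r$-ary Gaussian stability $\Gamma^{(r)}_{\rho}(\mu)$. The key analytic fact, already proven in \cite{KS15}, is that $\Gamma^{(r)}_{\rho}(\mu)\le 2\mu^r$ whenever $\rho\le 1/(C' r^2\log(1/\mu))$ and $\mu\le 2^{-r}$. Combined with the $\nu/2$ invariance slack this delivers the claimed $3\mu^r+\nu$ bound. The main technical obstacle is the usual one with invariance over general product spaces: the parameters $\eta$ (smoothing), $D$ (degree), and $\tau$ (influence threshold) must be chosen in the right order so that the $\alpha$-dependent error from Mossel's theorem is $o(\nu)$ while preserving a low-degree truncation whose Gaussian version is close in $L_2$ to the original; this is why $\tau$ is an inexplicit function of $(\nu,r,\alpha)$ rather than of $(\nu,r)$ alone.
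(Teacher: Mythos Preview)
Your proposal is correct and follows the same high-level strategy as the paper (invariance principle followed by the iterated Gaussian stability bound $\Gamma^{(r)}_\rho(\mu)\le 3\mu^r$ from \cite{KS15}). The difference is one of packaging: the paper observes that Theorem~2.10 of \cite{KS15} is already stated for arbitrary finite correlated spaces $(\prod_i\Omega_i,\bgamma)$ with minimum atom probability $\alpha$ and correlation $\rho(\Omega_1,\dots,\Omega_r;\bgamma)\le\rho$, so it simply verifies that $\cA_{r,\rho}(\Omega,\gamma)$ has leave-one-out correlation at most $\rho$ (immediate from the definition, since with probability $1-\rho$ all $r$ coordinates are independent) and then cites that theorem as a black box to obtain $\Ex\prod_i f(\bomega_i)\le \Gamma^{(r)}_\rho(\mu)+\nu$, followed by Lemma~2.4 of \cite{KS15} for $\Gamma^{(r)}_\rho(\mu)\le 3\mu^r$. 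You instead sketch the internals of that black box (smoothing, degree truncation, Mossel's invariance, Borell), which is more work than needed but of course yields the same conclusion. One minor wording slip: what you call ``Theorem~2.10 of \cite{KS15}'' in your final step is not just the Borell inequality but the full invariance-plus-stability statement you have been building toward; the paper separates the two ingredients as Theorem~\ref{thm:ks-stab1} and Lemma~\ref{lem:ks-stab1}.
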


The above bound was established for the $R$-ary hypercube in \cite{KS15}, however we need the above general version for our application to Theorem \ref{thm:dksh-hard}, although the bound follows as is using the techniques of \cite{KS15}. We include a proof of it in Section \ref{sec:stab} for the sake of completeness.

\subsection{Real Extensions for functions $[R]^t \to [R]$}			\label{sec:r-ary}

Let $F:[R]^t \to [R]$ be a function define on $R$-ary cube. Then, as is standard, one may equivalently express $F$ as $f:[R]^t \to \Delta_R$ where $\Delta_R$ is the $R$-simplex, such that for any $x \in [R]^t$ we have $f(x) = e_{F(x)}$; here $e_i$ is the $i^{th}$ standard basis vector for any $i \in [r]$. Furthermore, one can interpret $f = (f^{(1)},\ldots,f^{(R)})$ as a vector function, where $f^{(j)}$ is the $j^{th}$ coordinate function. 

{\bf Folding over $[R]^t$.} Given long code $f:[R]^t \to [R]$, we define the folded long code $\tilde{f}:[R]^t \to [R]$ as 
\begin{equation}		\label{eqn:folded}
\tilde{f}(x) \defeq {f}(x \oplus_R (-x_1 \cdot {\bf e}_1)) + x_1 
\end{equation}
The following properties of folded long codes are well known.

\begin{proposition}(for e.g., see~\cite{KKMO07})				\label{prop:fold}
	The following properties hold for long codes defined over $[R]^t$. 
	\begin{itemize}
		\item If $f:[R]^t \to [R]$ is a dictator function, then $f$ is folded i.e., $f = \tilde{f}$.
		\item If $f$ is folded, then $\Ex_{x \sim [R]^t}\left[f(x)\right] = 1/R$.
	\end{itemize}
\end{proposition}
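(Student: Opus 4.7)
The plan is to unwrap the definition of the folding operation in~\eqref{eqn:folded} and carry out a direct calculation; both claims are essentially mechanical once the definition is parsed, so the main work is executing the substitutions cleanly.

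For the first claim, I will substitute a (first-coordinate) dictator $f(x) = x_1$ into the folding formula. The shift $x \oplus_R (-x_1 \cdot {\bf e}_1)$ zeros out the first coordinate of $x$, so evaluating $f$ there returns $0$; adding back $x_1$ recovers $f(x)$, giving $\tilde f(x) = 0 + x_1 = f(x)$ and hence $\tilde f \equiv f$.

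For the second claim, I will first rewrite the folding identity $\tilde f \equiv f$ in the equivalent shift-equivariant form
\[
f(y \oplus_R c\cdot{\bf e}_1) \;=\; f(y) + c \pmod{R}
\qquad\text{for every } y \in [R]^t \text{ with } y_1 = 0 \text{ and every } c \in [R].
\]
Then I will partition $[R]^t$ into the $R^{t-1}$ cosets of the cyclic subgroup generated by ${\bf e}_1$. On each such coset, as $c$ ranges over $[R]$ the value of $f$ attains every element of $[R]$ exactly once, so for each $j \in [R]$ the component function $f^{(j)}(x) = \mathbbm{1}[F(x) = j]$ (from Section~\ref{sec:r-ary}) averages to $1/R$ on that coset, and therefore $\Ex_{x \sim [R]^t}\bigl[f^{(j)}(x)\bigr] = 1/R$ after averaging across all cosets.

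The only point that needs care is the translation between the scalar view $F : [R]^t \to [R]$ used in stating the folding and the vector view $f : [R]^t \to \Delta_R$ with $f = e_{F(\cdot)}$ needed to interpret $\Ex[f(x)] = 1/R$ as a componentwise uniformity statement; there is no genuine obstacle beyond this bookkeeping.
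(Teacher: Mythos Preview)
The paper does not prove this proposition (it only cites \cite{KKMO07}), so there is no in-paper proof to compare against. Your argument for the second item is correct and standard: the folding identity forces $f$ to hit each value of $[R]$ exactly once on every coset of the shift subgroup, so each coordinate function $f^{(j)}$ averages to $1/R$.

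There is, however, a genuine gap in your first item. You verify foldedness only for the first-coordinate dictator $f(x)=x_1$, whereas the proposition (and its use in Section~\ref{sec:ug-comp}, where $f_v = \Lambda_{\sigma(v)}$ for arbitrary $\sigma(v)\in[t]$) requires it for \emph{every} dictator $f(x)=x_i$. The gap is tied to your reading of ${\bf e}_1$ in~\eqref{eqn:folded} as the first standard basis vector: under that reading the claim is in fact \emph{false} for $i\neq 1$, since one gets $\tilde f(x) = x_i + x_1$. The intended folding---the standard one from \cite{KKMO07}---shifts by the all-ones vector, i.e.\ $\tilde f(x) = f\bigl(x \oplus_R (-x_1)\cdot{\bf 1}\bigr) + x_1$; with this definition the general-dictator case is immediate, $\tilde f(x) = (x_i - x_1) + x_1 = x_i$. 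Your second-item argument then needs the cosets of $\langle {\bf 1}\rangle$ rather than $\langle e_1\rangle$, but its structure is otherwise unchanged.
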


\section{Hardness for D$k$SH}

Our hardness result is established using a factor preserving reduction from \smallsetexpansion~which we define formally below. Given a regular graph $G = (V,E)$, and a subset $S \subset V$, the edge expansion of $S$ in $G$, denoted by $\phi_G(S)$, is defined as 
\[
\phi_G(S) := \frac{\Pr_{(i,j) \sim G}\big[i \in S, j \notin S\big]}{\min\{{\sf Vol}(S),{\sf Vol}(S^c)\}}
\]
where ${\rm Vol}(S)$ is the weight of the set $S$ with respect to the stationary measure of a random walk in $G$. The \smallsetexpansion~problem in our setup is defined as follows.

\begin{definition}[$(\epsilon,\delta,M)$-SSE]
	An $(\epsilon,\delta,M)$-SSE instance is characterized by a {\em regular} graph $G = (V,E)$ where the objective is to distinguish between the following two cases.
	\begin{itemize}
		\item {\bf YES Case}. There exists $S \subset V$ such that ${\sf Vol}(S)= \delta$ and $\phi_G(S) \leq \epsilon$.
		\item {\bf NO Case}. For every subset $S \subset V$ such that ${\sf Vol}(S) \in \left[\frac{\delta}{M},M \delta\right]$ we have $\phi_G(S) \geq 1 - \epsilon$.
	\end{itemize}
\end{definition}

We shall use the instances given by the following version of SSEH as our starting point.

\begin{conjecture}[Small Set Expansion Hypothesis~\cite{RST12}]					\label{conj:sseh}
	For every $\epsilon \in (0,1), M \in [1,1/\sqrt{\epsilon}]$, there exists $\delta = \delta(\epsilon,M)$ such that $(\epsilon,\delta,M)$-SSE is \NP-hard.
\end{conjecture}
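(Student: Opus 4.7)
This statement is the Small Set Expansion Hypothesis of Raghavendra-Steurer-Tetali, which remains open; the plan below sketches the natural attack that most closely mirrors established hardness machinery, and I will be explicit about the barriers that prevent it from going through unconditionally. The overall strategy is a ``dictatorship-test-composed-with-outer-verifier'' reduction, analogous in shape to the framework recalled in Section \ref{sec:outer} of this paper but producing an SSE instance rather than a biased CSP instance. First I would start from a canonical NP-hard gap problem, most plausibly a gap Label Cover instance $\mathcal{L}$ with soundness $2^{-\Omega(k)}$ after $k$-fold parallel repetition, and introduce for each vertex $v$ of $\mathcal{L}$ a long-code table $f_v : \{0,1\}^R_\mu \to \{0,1\}$. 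The reduced graph $G$ has vertex set $V(\mathcal{L}) \times \{0,1\}^R$; its edges are sampled by drawing a Label Cover edge $(u,v)$ with constraint $\pi_{uv}$, drawing $x \sim \{0,1\}^R_\mu$, and producing a $(1-\epsilon)$-correlated copy $y$ under the bias-$\mu$ distribution, finally outputting the edge between $(u,x)$ and $(v, \pi_{uv}(y))$. The target parameters are $\delta \approx \mu$, noise $\epsilon$ matching the desired expansion threshold, and gap $M$ built into the soundness window.

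Second, completeness is easy: if $\mathcal{L}$ is satisfiable by a labeling $\sigma$, then taking each $f_v$ to be the dictator $f_v(x)=x_{\sigma(v)}$ produces a set of volume $\mu$ whose expansion under the noisy hypercube distribution is at most $\epsilon$, giving the YES case. Soundness is the crux: one must show that any set $S \subseteq V(G)$ of volume in $[\delta/M, M\delta]$ with expansion below $1-\epsilon$ can be ``decoded'' into an assignment for $\mathcal{L}$ satisfying a non-trivial fraction of constraints. For each $v$ let $f_v$ be the indicator of $S$ restricted to the $v$-slice; the low-expansion hypothesis translates into
\[
\Ex_{(u,v)} \Ex_{x,y} \bigl[f_u(x)\,f_v(\pi_{uv}(y))\bigr] \geq (1-\epsilon)\,\Ex_v\Ex_x[f_v(x)],
\]
and a multi-function Borell / ``Majority is Stablest''-type invariance argument, applied in the biased regime and combined with the influence-decoding of Khot-Kindler-Mossel-O'Donnell, would force most $f_v$ to have an influential coordinate, yielding a randomized labeling that satisfies more Label Cover constraints than the outer soundness allows.

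The main obstacle, and the reason this conjecture is still open, is making the soundness robust across the entire $M$-fold volume window $[\delta/M, M\delta]$ for arbitrary $M \leq 1/\sqrt{\epsilon}$. Standard noisy-hypercube Borell bounds control the expansion of sets of a \emph{fixed} prescribed volume; extending them uniformly across a polynomial window while preserving the influence-decoding step appears to require either (i) a quantitatively ``scale-free'' isoperimetric inequality in Gaussian space whose error degrades only polynomially in $M$, or (ii) a short-code / Grassmann-based reduction in the spirit of the recent $2$-to-$2$ Games theorem, which replaces the long code by a low-degree object better suited to expansion analysis. A secondary obstacle is that the outer verifier must itself carry small-set-expansion structure, not merely unique-games structure, since otherwise the reduction composes into the (known-to-be-weaker) Unique Games Conjecture rather than SSEH; one plausible route is a bootstrapping argument that amplifies a weaker SSE gap into the conjectured one, but no such amplification is presently known. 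I would therefore regard a complete proof of the conjecture as contingent on substantial new progress on either robust biased Borell inequalities or on pushing the Grassmann-graph expansion machinery from $2$-to-$2$ games to the genuinely small-set-expansion regime.
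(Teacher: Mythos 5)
This statement is not a theorem of the paper at all: it is the Small Set Expansion Hypothesis, which the paper states as an unproved conjecture (citing~\cite{RST12}) and uses as a hardness assumption. There is therefore no ``paper's own proof'' for you to match, and you have correctly recognized this — your write-up explicitly declines to claim a proof and instead outlines the standard dictatorship-test attack together with the barriers (the $M$-fold volume window, the need for an outer verifier that already carries small-set-expansion structure). That is exactly the right call. Two small remarks on the sketch itself. First, the completeness step as you stated it is slightly off: if each $f_v$ is a dictator $x_{\sigma(v)}$ on the $\mu$-biased cube, the resulting set has volume $\mu$ and small expansion \emph{only after} the noise, label-cover consistency, and the choice of $\mu$ relative to $\epsilon$ are tuned together; ``expansion at most $\epsilon$'' does not fall out for free and in fact controlling it is where the $\mu \log(1/\mu)$ tradeoffs enter, as Section~\ref{sec:outer} of the paper discusses. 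Second, the circularity you flag — that the natural outer verifier for SSE is itself an SSE-type object — is indeed the central structural obstruction and is the same reason the paper, like \cite{RS10,RST12}, takes SSEH as an axiom rather than deriving it; you have diagnosed the situation accurately.
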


Our main result here is the following factor preserving reduction from \smallsetexpansion~to D$k$SH.

\begin{theorem}						\label{thm:dksh-redn}
	Let $r \geq 2$ be an integer and $\mu \in (0,1)$ be such that $\mu < 2^{-r}$ . Then there exists $M = M(\mu,r)$ and $\epsilon = \epsilon(\mu,r)$, depending only on $\mu$ and $r$ for which the following holds. Let $\delta = \delta(\epsilon,M)$ be as in Conjecture \ref{conj:sseh}. Then there exists a polynomial time reduction from a $(\epsilon,\delta,M)$-SSE instance $G = (V,E)$ to instance $H = (V_H,E_H,w)$ of D$k$SH such that the following holds:
	\begin{itemize}
		\item {\bf Completeness}. If $G$ is a YES instance, then there exists a set $S \subset V_{H}$ such that ${\sf Vol}(S) = \mu$ and $w(E_H(S)) \geq C_1 r^{-3} \mu\log(1/\mu)$.
		\item {\bf Soundness}. If $G$ is a NO instance, then for every set $S \subset V$ such that ${\sf Vol}(S) = \mu$ and $w(E_G[S]) \leq 4\mu^{r}$.
	\end{itemize}
	where $C_1 > 0$ is an absolute constant independent of $r$ and $\mu$.   
\end{theorem}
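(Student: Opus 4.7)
The plan is to instantiate and analyze the composition described in Figure \ref{fig:noise-test1}, which combines the biased noisy-hypercube dictatorship test of Section \ref{sec:dict} with the noise-operator-with-leakage outer verifier of \cite{RST12}. First I fix parameters: set $\rho^2 := 1/(C' r^2 \log(1/\mu))$ where $C'$ is the constant from Theorem \ref{thm:ks-stab}; pick $\nu \ll \mu^r$ and take the corresponding influence threshold $\tau = \tau(\nu, r, \mu)$ from that theorem; pick a leakage parameter $\beta$ small enough that $T_\beta$ has a large spectral gap on non-expanding components of the SSE graph; and finally choose $\epsilon, M$ in Conjecture \ref{conj:sseh} so that the outer verifier supplies the mixing and list-decoding guarantees demanded by the soundness analysis. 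The output hypergraph $H$ has vertex set $V \times \Omega^t$ with $\Omega = \{0,1\}_\mu \times \{\bot, \top\}_\beta$, product weight on the coordinates, and hyperedges distributed as in Figure \ref{fig:noise-test1}.

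For the completeness direction, assume $G$ has a set $S^* \subset V$ of volume $\delta$ with $\phi_G(S^*) \leq \epsilon$. On each $v \in S^*$ I place the first-coordinate dictator, namely $f_v(\bomega)$ equals the first $\{0,1\}_\mu$-bit of $\bomega$; and $f_v \equiv 0$ otherwise. After the natural rescaling, the resulting subset $S \subset V_H$ has ${\sf Vol}(S) = \mu$. The fraction of hyperedges induced by $S$ decomposes as the product of (i) the probability that $A$ and all noisy neighbors $B_i'$ stay inside $S^*$, which is $\gtrsim_r \delta$ using non-expansion of $S^*$ under $T_\beta$, and (ii) the probability that every $x_i'$ has first coordinate equal to $1$, which is at least $\mu \rho^2$ by the coupled construction of $\cD^*$. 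Together these yield a density on the order of $\mu \rho^2 /r$, matching $\mu /(r^3 \log(1/\mu))$ up to absolute constants, which is the completeness term needed to feed into Theorem \ref{thm:dksh-hard}.

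For the soundness direction, assume the NO case of SSE and that an assignment $\{f_v\}$ with global bias $\mathbb{E}_{v, \bomega}[f_v(\bomega)] = \mu$ induces more than $4\mu^r$ of the hyperedges. Two features of the outer verifier are used in sequence. First, by folding $T_\beta$ into the averaging step, the spectral gap of the leakage-noise operator on the SSE graph in the NO case implies that a $1-o(1)$ fraction of the averaged long codes $g_v$ have bias in $\mu(1 \pm \nu)$, upgrading the global bias to a near-local one. Second, for any fixed coordinate $j \in [t]$, the set of $v$ with $\Inf{j}{T_{1-\beta} g_v} \geq \tau$ forms a set of intermediate volume that must expand in the NO case; this lets one conclude that only a negligible fraction of $v$ carry any $\tau$-influential coordinate. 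With these two reductions in hand, for every remaining vertex $v$ the distribution of the $r$ queries $x_i'$ into $g_v$ is pairwise $\rho^2$-correlated, i.e.\ of the form $\cA_{r, \rho^2}$ as in Definition \ref{defn:dist}, so Theorem \ref{thm:ks-stab} bounds its contribution to the acceptance by $3\mu^r + \nu$. Summing over $v$ yields at most $4\mu^r$, contradicting the assumption.

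The main technical obstacle is the joint calibration of the inner and outer parameters: the influence threshold $\tau(\nu, r, \mu)$ supplied by Theorem \ref{thm:ks-stab} has to be small enough that list-decoded influential coordinates translate into a non-expanding set of intermediate volume in $G$, which fixes $\beta$ and in turn the SSE parameters $\epsilon, M$. The coordinate-wise coupled noise distribution $\cD^*$ of Section \ref{sec:dict} is the essential design choice: it produces pairwise $\rho^2$-correlated queries (so soundness follows from Theorem \ref{thm:ks-stab}) while giving a completeness rate of $\mu \rho^2$ rather than the $\mu \rho^r$ that fully independent $\rho$-noise would incur. This exchange is what produces the completeness-to-soundness ratio $\mu^{r-1}/(r^3 \log(1/\mu))$ that the reduction delivers.
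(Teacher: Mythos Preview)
Your high-level plan (coupled-noise dictatorship test composed with the \cite{RST12} leakage framework) is correct, but the completeness argument as written has a genuine gap that breaks the quantitative claim.

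You take the vertex set to be $V \times \Omega^t$ and, in the YES case, place the first-coordinate dictator on each $v \in S^*$ and zero elsewhere. With this choice the set $S$ has volume $\delta\mu$, not $\mu$; your ``natural rescaling'' is not explained and cannot repair this, because $\delta = \delta(\epsilon,M)$ is a function of the SSE parameters, which in turn depend on $\mu$ and $r$. Even granting the volume issue, your own accounting gives acceptance probability $(\gtrsim_r \delta)\cdot(\mu\rho^2)$, which is off from the target $\mu\rho^2/r$ by a factor of $\delta$---and $\delta$ can be arbitrarily small. So the reduction as you describe it does not yield the stated completeness value $C_1 r^{-3}\mu/\log(1/\mu)$.

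The fix the paper uses (and which is essential to the \cite{RST12} framework) is to work over the tensor power $V^R$ with $R = \Theta(1/(\beta\delta))$, so that the long code is a single function $f:V^R\times\{0,1\}^R\times\{\bot,\top\}^R\to\{0,1\}$. In completeness one does \emph{not} fix a single dictator coordinate globally; instead, for each $(A,z)\in V^R\times\{\bot,\top\}^R$ one sets $f(A,x,z)=x(i^*(A,z))$ where $i^*(A,z)$ is the (unique, if it exists) index $i$ with $(A(i),z(i))\in S^*\times\{\top\}$. This assignment has bias exactly $\mu$ for every $(A,z)$, and the choice of $R$ makes the probability of a unique intersection $\Theta(1/r)$---independent of $\delta$. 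The delicate part of the completeness analysis (Lemma \ref{lem:match-ix} in the paper) is then to track, conditioned on this unique-intersection event, the probability that all $r$ of the $(B'_j,z'_j)$ preserve the same intersection index after the graph walk, the $\rho$-coupling, the $(1-\eta)$-noise, and the leakage operator $M_z$; this is where the constants $C$, $\beta=\mu^{Cr}$, $\eta=\beta^2/r$ are calibrated. Your soundness sketch is morally right but also has to be carried out over $V^R$: the influence-decoding step (the paper's Lemma \ref{lem:sse-dec}) uses permutation invariance of the long code on $V^R$ together with the SSE-to-UG labeling lemma of \cite{RST12}, not simply ``the set of $v$ with an influential coordinate is non-expanding''.
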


\subsection{The PCP Verifier for D$k$SH Hardness}

Let $G = (V,E)$ be a $(\epsilon,\delta,M)$-SSE instance (as in Conjecture \ref{conj:sseh}) where parameters $\epsilon,M$ are set later (see below Figure \ref{fig:dksh-test}). Before we describe our reduction, we need to introduce some additional noise operators used in the reduction.

\begin{itemize}
	\item {\bf Graph Walk Operator}. For a given graph $G = (V,E)$ and a vertex $A \in V$, we sample $B \sim G_{\eta}(A)$ as follows. W.p $1 - \eta$, we sample $B$ by performing a $1$-step random walk on $G$ from $A$, and with probability $\eta$ we sample $B$ from the stationary distribution of the random walk on $G$.
	\item {\bf Noisy Leakage Operator}. For $z \in \{\bot,\top\}$ and $(A,x) \in V \times \{0,1\}_\mu$, we sample $(A',x') \sim M_z(A,x)$ as follows. If $z = \top$ we set $(A',x') = (A,x)$, otherwise we sample $(A',x') \sim V \times \{0,1\}_\mu$.
\end{itemize}
Furthermore, for any integer $R$ and any $z \in \{\bot,\top\}^R$ we define $G^{\otimes R}_\eta$ and $M_z$ to be the corresponding $R$-wise tensored operators. The above operators are both standard in SSE based reductions. In particular, the noisy graph walk operator is useful for the influence decoding step of SSE (see Lemma \ref{lem:sse-dec}), the noisy leakage operators are responsible for large spectral gap of the averaging operator, and allows for good local bias control (see Lemma \ref{lem:bias-conc}). Now we are ready to describe the reduction -- here, the distribution over hyperedges is given by the dictatorship test in Figure \ref{fig:dksh-test}. 

\begin{figure}[ht!]
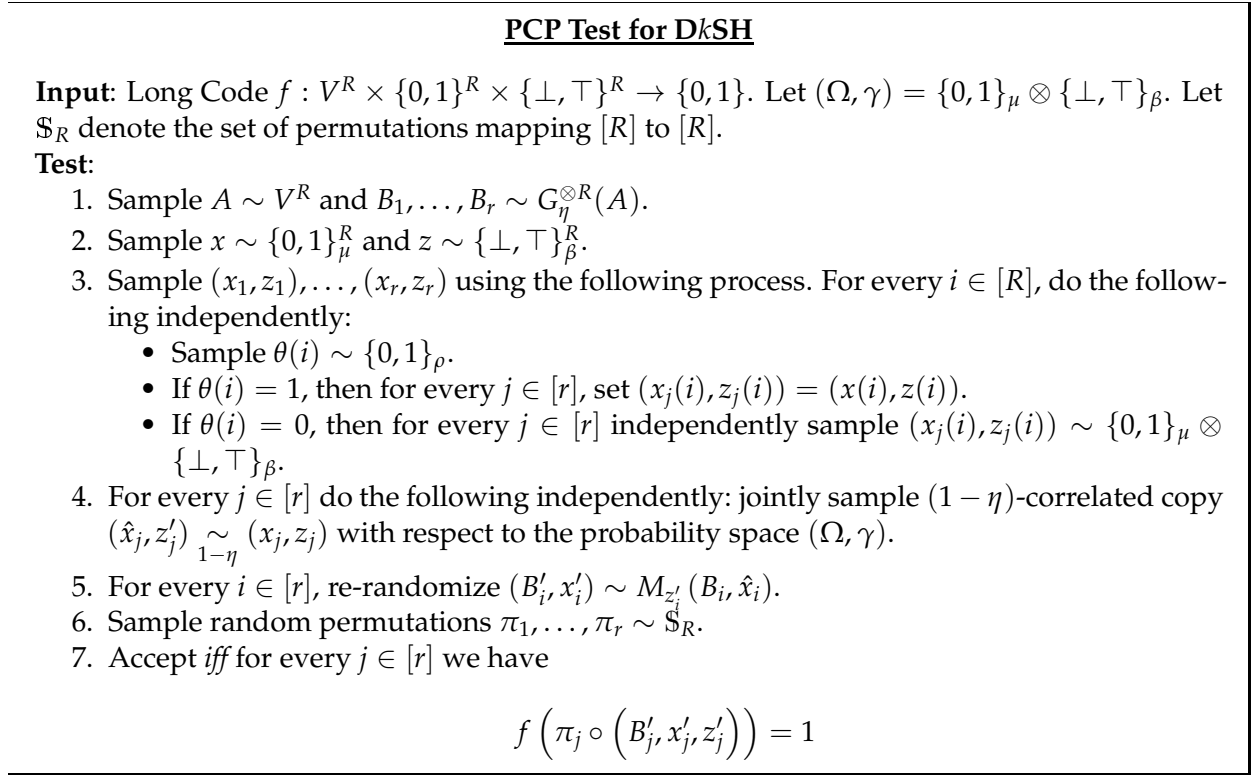

	\begin{mdframed}
		\begin{center}
			\underline{\bf PCP Test for D$k$SH} \\[10pt]
		\end{center}	
		{\bf Input}: Long Code $f:V^R \times \{0,1\}^R \times \{\bot,\top\}^R \to \{0,1\}$. Let $(\Omega,\gamma) = \{0,1\}_\mu \otimes \{\bot,\top\}_\beta$. Let $\mathbbm{S}_R$ denote the set of permutations mapping $[R]$ to $[R]$.\\
		{\bf Test}:
		\begin{enumerate}
			\item Sample $A \sim V^R$ and $B_1,\ldots,B_r \sim G^{\otimes R}_\eta(A)$.
			\item Sample $x \sim \{0,1\}^R_\mu$ and $z \sim \{\bot,\top\}^R_\beta$.
			\item Sample $(x_1,z_1),\ldots,(x_r,z_r)$ using the following process. For every $i \in [R]$, do the following independently:
			\begin{itemize}
				\item Sample $\theta(i) \sim \{0,1\}_\rho$.
				\item If $\theta(i) = 1$, then for every $j \in [r]$, set $(x_j(i),z_j(i)) = (x(i),z(i))$.
				\item If $\theta(i) = 0$, then for every $j \in [r]$ independently sample $(x_j(i),z_j(i)) \sim \{0,1\}_\mu \otimes \{\bot,\top\}_\beta$. 
			\end{itemize}
			 \label{step:xz-set}
			\item For every $j \in [r]$  do the following independently: jointly sample $(1 - \eta)$-correlated copy $(\hat{x}_j,z'_j) \underset{1 - \eta}{\sim} (x_j,z_j)$ with respect to the probability space $(\Omega,\gamma)$.
			\item For every $i \in [r]$, re-randomize $(B'_i,x'_i) \sim M_{{z}'_i}\left(B_i,\hat{x}_i\right)$.
			\item Sample random permutations $\pi_1,\ldots,\pi_r \sim \mathds{S}_R$.
			\item Accept {\em iff} for every $j \in [r]$ we have  
			\[
			f\left(\pi_j\circ \left(B'_j,x'_j,z'_j\right)\right) = 1
			\]
		\end{enumerate}
	\end{mdframed}
	\caption{PCP Verifier for {\sc Densest}-$k$-{\sc SubHypergraph}}
	\label{fig:dksh-test}
\end{figure}

{\bf Parameters of the Reduction}. We set the parameters for the reduction and its analysis as follows.

\begin{itemize}
	\item $C$ is a fixed constant from Lemma \ref{lem:match-ix}.
	\item $\rho = 1/(2C'r^2\log(1/\mu))$ where $C'$ is the constant from Theorem \ref{thm:ks-stab}.
	\item $\beta = \mu^{Cr}$.
	\item $\nu = \min\{\rho^2 \mu e^{-Cr},C\mu^r\}/10$.
	\item $\alpha = \rho \mu^r$.
	\item $\tau = \tau(\nu,r,\alpha)$ as in Theorem \ref{thm:ks-stab}.
	\item $\eta = \beta^2/r$.
	\item $\epsilon = \frac{\beta^2\nu^4\eta^4\tau^6}{2^{24} r^2}$.
	\item $M = 1/\sqrt{\epsilon} = \frac{2^{12}r}{\beta\nu^2\eta^2\tau^3}$.
	\item $R = 1/(r\beta \delta$).
\end{itemize}	

Now we analyze the above reduction.

\subsection{Completeness Analysis for \dksh~Reduction}		\label{sec:dksh-comp}

Let $S \subseteq V$ be the set of volume $\delta$ such that $\phi_G(S) \leq \epsilon$ guaranteed by the YES Case guarantee of $G$. Towards arguing completeness, we begin by setting up some notation. For every $(A,z) \in V^R \times \{\bot,\top\}^R$ define the set $\Pi(A,z) \subseteq [R]$ as 
\[
\Pi(A,z) := \left\{i \in [R] | (A(i),z(i)) \in S \times \{\top\}\right\}.
\]
Furthermore, we define a mapping $i^* : V^R \times \{\bot,\top\}^R \to [R]$ as follows:
\[
i^*(A,z) = 
\begin{cases}
	i & \mbox{ if } \Pi(A,z) = \{i\}, i \in [R] \\
	1 & \mbox{ otherwise }.
\end{cases}
\]
We will find it convenient to define the set $V_{\rm good} := \{(A,z) \in V^R \times \{\bot,\top\}^R | |\Pi(A,z)| = 1\}$ i.e., it is the set of $R$-tuples that intersect uniquely with $S \times \{\top\}$ -- note that for every $(A,z) \in V_{\rm good}$, the corresponding $i^*(A,z)$ value is given at the unique index at which $(A,z)$ intersects with $S \times \{\top\}$. For the remainder of this section, for ease of notation we shall denote  $\cA = (A,z)$. Analogously, for every $j \in [r]$ we shall define $\cB_j := (B_j,z'_j)$ and $\cB'_j = (B'_j,z'_j)$.  The key lemma towards establishing completeness is the following which shows that with constant probability over the choices of $(B'_j,x'_j,z'_j)_{j \in [r]}$, the corresponding triples have matching $\Pi(\cdot)$ sets, in addition to satisfying some other properties.

\begin{lemma}				\label{lem:match-ix}
	There exists a large constant $C>0$ such that 	
	\[
	\Pr\left[ \Big\{\cA \in V_{\rm good}\Big\}  \wedge \Big\{{\theta}(i^*(\cA)) = 1\Big\} \wedge \Big\{\forall j \in [r], \ \Pi(\cA) = \Pi(\cB'_j)\Big\} 
	\right] \geq 0.5 e^{-2C} r^{-1}\rho. 
	\]
\end{lemma}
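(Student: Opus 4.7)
Decompose the target probability as $\Pr[E_0]\cdot\Pr[E_1\mid E_0]\cdot\Pr[E_2\mid E_0\cap E_1]$, where $E_0 := \{\cA\in V_{\rm good}\}$, $E_1 := \{\theta(i^*(\cA)) = 1\}$, and $E_2 := \{\forall j\in[r],\ \Pi(\cA)=\Pi(\cB'_j)\}$. The first two factors are immediate: the coordinates of $(A,z)$ are i.i.d.\ and each lies in $S\times\{\top\}$ with probability $\beta\delta$, so $\Pr[E_0] = R\beta\delta(1-\beta\delta)^{R-1}$, which by the choice $R = 1/(r\beta\delta)$ is at least $(1/r)\,e^{-1/r}\geq 1/(er)$ for $r\geq 2$; and since $\theta$ is drawn independently of $\cA$, $\Pr[E_1\mid E_0] = \rho$. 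The rest of the argument bounds $\Pr[E_2\mid E_0\cap E_1]$ below by an absolute constant.

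The key step is a conditional-independence factorization. Let $\omega := (A,z,\theta,x)$ be the shared randomness produced by Steps 1--3 of Figure~\ref{fig:dksh-test}, and $\omega_i := (A(i),z(i),\theta(i),x(i))$. Tracing through the test, conditional on $\omega$ all of the remaining randomness --- the fresh $(x_j(i),z_j(i))$ at $\theta(i)=0$ positions, the correlated copies $(\hat x_j,z'_j)\underset{1-\eta}{\sim}(x_j,z_j)$, the graph walks producing $B_j$, and the leakage rerandomizations $(B'_j,x'_j)\sim M_{z'_j}(B_j,\hat x_j)$ --- is independent across both $i\in[R]$ and $j\in[r]$, and by symmetry the quantity $p(i\mid\omega_i) := \Pr[(B'_j(i),z'_j(i))\in S\times\{\top\}\mid\omega_i]$ does not depend on $j$. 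Consequently,
\[
\Pr[E_2\mid\omega] \;=\; p(i^*\mid\omega_{i^*})^{r}\,\prod_{i\neq i^*}\bigl(1 - p(i\mid\omega_i)\bigr)^{r}.
\]
Taking $\Ex[\cdot\mid E_0\cap E_1]$ and applying Jensen's inequality to $x\mapsto x^r$ and $x\mapsto(1-x)^r$ (both convex on $[0,1]$), together with independence of the $\omega_i$'s across $i$ under $E_0\cap E_1$, reduces the task to bounding $\Ex[p(i^*)\mid E_0\cap E_1]$ and $\Ex[p(i)\mid E_0]$ for $i\neq i^*$.

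For the $i^*$ factor: under $E_0\cap E_1$, $A(i^*)$ is uniform on $S$ by regularity of $G$ and $\phi_G(S)\leq\epsilon$, so a step of the graph walk from $A(i^*)$ stays in $S$ with average probability $\geq 1-\epsilon$; combined with $z_j(i^*)=z(i^*)=\top$ and the $(1-\eta)$-correlated sampling of $z'_j(i^*)$, this yields $\Ex[p(i^*)\mid E_0\cap E_1]\geq(1-\eta)(1-\eta-\epsilon)\geq 1-2\eta-\epsilon$, and hence $(\cdot)^r\geq 1-O((\eta+\epsilon)r)\geq 0.9$ for our parameters (since $\eta r=\beta^2$ and $\epsilon r$ are both negligible). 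For the non-$i^*$ factor: the noise operators preserve the marginal $V\times\{\bot,\top\}_\beta$ on $(B'_j(i),z'_j(i))$, so unconditionally $\Pr[(B'_j(i),z'_j(i))\in S\times\{\top\}]=\beta\delta$; conditioning on $\{(A(i),z(i))\notin S\times\{\top\}\}$ only inflates this by a factor $1/(1-\beta\delta)\leq 2$, giving $\Ex[p(i)\mid E_0]\leq 2\beta\delta$ and $(1-2\beta\delta)^{r(R-1)}\geq e^{-4\beta\delta rR}=e^{-4}$ using the crucial identity $\beta\delta rR=1$. Combining these gives $\Pr[E_2\mid E_0\cap E_1]\geq 0.9\,e^{-4}$, and multiplying by $\Pr[E_0\cap E_1]\geq\rho/(er)$ yields the claim for any $C\geq 3$ (consistent with the other places where $C$ enters, such as $\beta=\mu^{Cr}$). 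The main obstacle is spotting the conditional-independence factorization above: once it is in place, everything reduces to a clean Jensen-plus-product computation, with the parameter balance $\beta\delta rR=1$ precisely calibrated to keep the ``no other hits'' factor bounded away from zero uniformly in $r$.
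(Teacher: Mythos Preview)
Your proof is correct and takes a genuinely different, cleaner route than the paper's argument.

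The paper first reduces from $\cB'_j$ to $\cB_j$ using the leakage-preservation Claim~\ref{cl:match-ix}, then conditions on the full \emph{configuration} $\cQ=(Q_0,Q_1,Q_2,i^*_A)$ recording, for every coordinate, whether $(A(i),z(i))$ lies in $S\times\{\bot\}$, $S^c\times\{\top\}$, or $S^c\times\{\bot\}$. Each region is bounded by a separate argument (via the $z$-coordinate for $Q_0$, via the expansion of $S$ for $Q_1$, and a combination for $Q_2$), and a Chernoff bound on $|Q_0|$ plus an expectation bound on $|Q_1|$ (Observation~\ref{obs:bounds}) are used to glue everything together; the constant $e^{-2C}$ arises from $(1-2r\beta)^{C/(r\beta)}$ after bounding $|Q_0|\leq C/(r\beta)$.

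You instead work directly with $\cB'_j$, observe that all noise operators are stationary so that unconditionally $\Pr[(B'_j(i),z'_j(i))\in S\times\{\top\}]=\beta\delta$, and use the conditional-independence product formula plus Jensen to collapse the entire ``non-$i^*$'' part into $(1-2\beta\delta)^{r(R-1)}\geq e^{-4}$, exploiting the calibration $r\beta\delta R=1$. This replaces the three-region case analysis and the concentration step with a single unified inequality. The price is that your constant is fixed at roughly $e^{-5}$ rather than being parametrized by $C$, but since the paper ultimately only needs \emph{some} absolute constant (and $C$ is declared ``a fixed constant from Lemma~\ref{lem:match-ix}''), this is no loss. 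Your approach also makes transparent why the argument does not actually use any fine structure of the SSE instance for the $i\neq i^*$ coordinates: stationarity of the marginal is all that is needed there, with the expansion bound $\phi_G(S)\leq\epsilon$ invoked only at the single coordinate $i^*$.
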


\begin{proof}
	Define the set $\cS_\top := S \times \{ \top \}$. For ease of notation we shall denote $i^*_{A} = i^*(A,z)$ and $i^*_{B_j} = i^*(B_j,z'_j)$ for every $j \in [r]$. We shall also find it convenient to define the following events: 
	\begin{gather*}
	\cE_0 := \left\{\forall j \in [r], \Pi(\cA) = \Pi(\cB_j)\right\},\qquad\qquad \cE_1 := \left\{\theta(i^*_A) = 1\right\}, \\
	\cE_2 := \cE_1 \wedge \left\{\forall j \in [R]  \ : \ (B_j(i^*_A),z'_j(i^*_A)) \in \cS_\top \right\},  \\
	\cE_{\rm good} := \Big\{\cA \in V_{\rm good}\Big\}.
	\end{gather*}
	
	Our first observation here is that conditioned on the event $\Pi(\cA) = \Pi(\cB_j)$ for every $j \in [r]$, we have $\Pi(\cA) = \Pi(\cB'_j)$ for every $j \in [r]$ with probability $1$ (using Claim \ref{cl:match-ix}). Therefore, it suffices to track the probability of the event $\Pi(\cA) = \Pi(\cB_j)$ for every $j \in [r]$ along with the events $\cE_1,\cE_{\rm good}$.  Hence we proceed as follows: 
	\begin{align}
	&\Pr_{(\cA,\{\cB_j\})} \left[\cE_0 \wedge \cE_1 \wedge \cE_{\rm good}\right] 		\non \\
	&= \Pr\left[\cA \in V_{\rm good}\right] \Pr\left[ \cE_0 \wedge \cE_1 \Big| \cA \in V_{\rm good}\right] 		\non\\
	&= \left(\sum_{i \in [R]} \Pr_{(A,z) \sim V^R \times \{\bot,\top\}^R_\beta}\left[\Big\{(A(i),z(i)) \in \cS_\top \Big\} \wedge 
	\Big\{\forall j \neq i, (A(j),z(j)) \notin \cS_\top\Big\}\right]\right) \Pr\left[ \cE_0 \wedge \cE_1 \Big| \cA \in V_{\rm good}\right] 	\non\\
	&= R{\beta \delta}\left(1- {\beta \delta}\right)^{R-1} \Pr\left[  \cE_0 \wedge \cE_1 \Big| \cA \in V_{\rm good}\right] 	\non\\
	&\geq \frac{e^{-1}}{r} \Pr\left[\cE_0 \wedge \cE_1 \Big| \cA \in V_{\rm good}\right],		\label{eqn:prob-rhs} 
	\end{align}
	where the last step follows from our choice of $R = 1/(r\beta\delta)$. The rest of the proof bounds the probability term in \eqref{eqn:prob-rhs}. For a fixed choice of $(A,z)$, define sets $Q_0,Q_1$ and $Q_2$ as follows.
	\begin{gather*}
		Q_0 \defeq \left\{i \in [R] \Big| (A(i),z(i)) \in S \times \{\bot\}\right\} \qquad\quad\quad 
		Q_1 \defeq \left\{i \in [R] \Big| (A(i),z(i)) \in S^c \times \{\top\} \right\} \\
		Q_2 \defeq \left\{i \in [R] \Big| (A(i),z(i)) \in S^c \times \{\bot\} \right\} \\
	\end{gather*} 
	Together, we shall refer to the above as the configuration, and denoted it by $\cQ := (Q_0,Q_1,Q_2,i^*_A)$ where recall that $i^*_A = i^*(A,z)$. Proceeding to expand the probability term from \eqref{eqn:prob-rhs}: 
	\begin{align}
		&\Pr\left[\cE_0 \wedge \cE_1 \Big| \cA \in V_{\rm good}\right] \\
		&= \Pr\left[  \left\{\forall \ j \in [r], \ \Pi\left(\cA\right) = \Pi\left(\cB_j\right)\right\} 
		\wedge \left\{{\theta}{(i^*_A)} = 1 \right\} \Big| \cA \in V_{\rm good}\right] 		\non\\
		& = \Ex_{\cQ | \cE_{\rm good}}\Ex_{\cA|\cQ}\Pr_{\cB| \cA} \left[\left\{\forall \ j \in [r], \  \Pi\left(\cA\right) = \Pi\left(\cB_j\right) \right\} \wedge \left\{{\theta}{(i^*_A)} = 1\right\}\right]  \non\\
		& =\Ex_{\cQ | \cE_{\rm good}}\Ex_{\cA|\cQ} \Pr_{\cB| \cA} \left[  \cE_2 
		\mbox{ and } \bigwedge^2_{\ell = 0} \Big\{ \forall \ j \in [r], \forall \ i \in Q_\ell, \ (B_j(i),z'_j(i)) \notin \cS_\top \Big\} \right]  \non\\
		&=\Ex_{\cQ | \cE_{\rm good}}\left[\Ex_{\cA|\cQ}\left(\Pr_{\cB| \cA} \left[\cE_2 \right]\right) 
		\cdot\prod^{2}_{\ell = 0}\left(\Ex_{\cA|\cQ}\Pr_{\cB| \cA} \left[ \forall \ j \in [r], \forall \ i \in Q_\ell, \ (B_j(i),z'_j(i)) \notin \cS_\top \right]\right) \right] 		\label{eqn:prob-rhs1}
	\end{align}
	where the last equality follows from the following observations:
	\begin{itemize}
		\item Fixing $\cQ$, the random variables $(A(i),z(i))$ are all independent for distinct $i \in [R]$.
		\item For a fixing of $(A,z)$, for every $i \in [R]$,  the random variable $(B_j(i),z'_j(i))_{j \in [r]}$ is dependent only of $(A(i),z(i))$ and is independent of the random variables $\{(B_j(i'),z'_j(i'))\}_{i' \neq i}$.
		\item The events $\cE_2$ and the events corresponding to the sets $Q_0, Q_1$ and $Q_2$ are all supported on disjoint coordinates in $[R]$, and hence they are all independent conditioned on a fixing of $\cQ$. 
	\end{itemize}
	Now for a fixed instantiation of $\cQ$, we bound each of the inner expectation terms from \eqref{eqn:prob-rhs1} individually. A useful observation here is that fixing $\cQ$ completely determines $z$, and but still leaves open the choice of each $A(i)$ within subsets $S$ or $V \setminus S$.
	
	{\bf Bounding probability of $\cE_2$}. We begin by bounding the probability of the event $\cE_2$:
	\begin{align}
		&\Ex_{\cA|\cQ} \Pr_{\cB| \cA} \left[ \Big\{\forall\ j \in [r], \  (B_{j}(i^*_A),z'_{j}(i^*_A)) \in \cS_\top\Big\} 
		\wedge \Big\{\theta(i^*_A) = 1\Big\} \right] 		\non\\
		& = \Ex_{\cA|\cQ} \Pr_{\cB| \cA}\left[\Big\{\forall \ j \in [r] : (B_{j}(i^*_A),z'_{j}(i^*_A)) \in S \times \{\top\} \Big\} \wedge \Big\{ \theta(i^*_A) = 1 \Big\} \right] 			\non\\
		& \overset{1}{=} \Ex_{A|\cQ} \left[ \Pr_{(B_j(i^*_A))^{r}_{j= 1} \sim G^{R}_\eta(A(i^*_A))} \left[\forall \ j \in [r]:  B_j(i^*_A) \in S\Big| A(i^*_A) \in S \right] \right]   		\non\\
		& \hspace{4cm} \times \Ex_{z|\cQ} \left[\Pr_{\theta(i^*_A)} \left[\theta(i^*_A) = 1 \right] \Pr_{\{z'_j\}^r_{j = 1} | z} \left[\forall \ j \in [r] : z'_{j}(i^*_A) = \top \Big|\theta(i^*_A) = 1 \right] \right]  		\label{eqn:st-1}\\
		& \overset{2}{\geq} \left(1 - r\left(\eta + \epsilon\right)\right) (1 - \eta r) \Pr_{\theta(i^*_A)}\left[ \theta(i^*_A) = 1 \right] 		\label{eqn:st-2}\\
		&\overset{3}{\geq} 0.9 \rho. 	\label{eqn:dksh-0}
	\end{align}
	We argue the above steps in the following way. For step $1$, we observe that fixing $\cQ$, the conditional distributions of $\{B_j(i^*_A)\}^r_{j = 1}$ and  ${\theta}({i^*_A}),\{z'(i^*_A)_j\}^r_{j = 1}$ are independent of each other. For step $2$, the lower bound on the first and the third probability terms can be derived as follows. For the first term, note that conditioning on $\cQ$, $A(i^*_A)$ is a randomly chosen vertex in $S$. Furthermore, recall that sampling $B_j(i^*_A) \sim G_{\eta}({A(i^*_A)})$ results in the following distribution on $B_j(i^*_A)$ : with probability $\eta$,  it is a randomly chosen vertex in $G$ (since $G$ is regular) and with probability $1 - \eta$, it is a $1$-step random walk on $G$ from $A(i^*_A)$. Hence for any fixed $j \in [r]$,
	\[
	\Pr_{B_j(i^*_A) \sim G_\eta(A(i^*_A))} \left[B_j(i^*_A) \notin S\Big| A(i^*_A) \in S \right] 
	\leq \eta + \Pr_{B_j(i^*_A) \sim G(A(i^*_A))}\left[B_j(i^*_A) \notin S \Big| A(i^*_A) \in S\right] 
	\leq \eta + \epsilon,
	\]
	where the second term is bounded using $\phi_G(S) \leq \epsilon$. Hence, the bound on the first term in step $1$ (i.e, \eqref{eqn:st-1}) follows by a union bound over all $j \in [r]$. For the second term in step $2$~\eqref{eqn:st-2}, we first observe that since $\cQ$ is realized conditioned on the event $\cA \in V_{\rm good}$, we have $|\Pi(A,z)| = 1$ and hence $z(i^*_A) = \top$ using the definition of the map $i^*$. Furthermore, 	
	conditioned on ${\theta}(i^*_A) = 1$ we have $z_j(i^*_A) = z(i^*_A)$ with probability $1$ (see Step \ref{step:xz-set} of Figure \ref{fig:dksh-test})
	and hence $z_j(i^*_A) = \top$ for every $j \in [r]$. Finally, fixing $(z_j(i^*_A))_{j \in [r]}$, each $z'_j(i^*_A)$ is a $(1 - \eta)$ correlated copy of $z_j(i^*_A)$. Hence, 
	\[
	\Pr_{\{z'_j\}^r_{j = 1}} \left[\forall \ j \in [r] : z'_{j}(i^*_A) = \top \Big| \theta(i^*_A) = 1\right]
	\geq 1 - \sum_{j \in [r]} \Pr_{{z'_j}(i^*_A) \corr{(1 - \eta)} z_j(i^*_A)} \left[ z'_{j}(i^*_A) \neq z_j(i^*_A)\right] \geq 1 - \eta r.
	\]
	Finally, step $3$ \eqref{eqn:dksh-0} follows using the fact that $\theta(i^*_A) \sim \{0,1\}_{\rho}$ under our test distribution. 
	
	{\bf Bounding $\ell = 0$ from \eqref{eqn:prob-rhs1}}. Recall that from the definition of $Q_0$, for every $i \in Q_0$ we have $(A(i),z(i)) \in S \times \{\bot\}$. Hence we can lower bound:  
	\begin{align}
	&\Ex_{\cA|\cQ}\Pr_{\cB| \cA} \left[ \forall \ j \in [r], \ \forall \ i \in Q_0 : (B_j(i),z'_j(i)) \notin S \times \{\top\} \right]  \non\\
	&\geq \Ex_{\cA|\cQ}\Pr_{\cB| \cA} \left[ \forall \ j \in [r], \forall \ i \in Q_0 : z'_j(i) = \bot \right] 		\non\\
	&\overset{1}{=} \Ex_{\cA|\cQ}\left[\prod_{i \in Q_0} \Pr_{\cB| \cA} \left[ \forall \ j \in [r], z'_j(i) = \bot \Big| \ i \in Q_0 \right] \right]		\non\\
	&{=} \Ex_{\cA|\cQ}\left[\left(1 - \Pr_{\cB| \cA} \left[ \exists \ j \in [r],  z'_j(i) = \top \Big| \ i \in Q_0  \right]\right)^{|Q_0|} \right] 			\non\\
	&{=} \Ex_{\cA|\cQ}\left[\left(1 - \Pr_{\{z'_j\}^r_{j = 1}} \left[ \exists \ j \in [r],  z'_j(i) = \top \Big| \ z(i) = \bot \right]\right)^{|Q_0|} \right] 		\tag{Using Defn. of $Q_0$}\non\\
	&{\geq} \Ex_{\cA|\cQ}\left[\left(1 - \sum_{j = 1}^r\Pr_{\{z'_j\}^r_{j = 1}} \left[ z'_j(i) = \top \Big| \ z(i) = \bot \right]\right)^{|Q_0|} \right] 		\non\\
	&\overset{2}{\geq} \left(1 - r(\eta + \beta(1 - \rho))\right)^{|Q_0|} 		\non \\
	&\overset{3}{\geq} \left(1 - 2r\beta(1 - \rho)\right)^{|Q_0|} 			\label{eqn:q0-val}
	\end{align}
	where step $1$ is again using the fact that the random variables $(B_j(i),z'_j(i))_{j \in [r]}$ are independent for different $i \in [R]$ for a fixing of $\cQ$. Step $2$ can be argued as follows. For any $i \in Q_0$ we can bound 
	\begin{align}
	\Pr_{\{z'_j\}^r_{j = 1}} \left[ z'_j(i) = \top \Big| z(i) = \bot\right]
	& = \Ex_{z_j(i) \underset{\rho}{\sim}{z}(i)} \ \  \Pr_{z'_j(i) \underset{1 - \eta}{\sim }z_j(i)} \left[ z'_j(i) = \top \Big| z(i) = \bot \right] \non \\
	& \leq \eta + (1 - \rho)\beta.				\label{eqn:z'-bound}
	\end{align} 
	Finally step $3$ uses the fact that $\eta = \beta^2/r$ and hence $\eta \leq \beta(1 - \rho)$. 
	
	{\bf Bounding $\ell = 1$ from \eqref{eqn:prob-rhs1}}. Since $i \in Q_1$ implies $(A(i),z(i)) \in S^c \times \{\top\}$, here we can use the bound on the expansion of $S$. Observe that 
	\begin{align}
		&\Ex_{\cA|\cQ}\Pr_{\cB| \cA} \left[ \forall j \in [r], \ \forall \ i \in Q_1 : B_j(i) \notin S \times \{\top\} \right] 		\non\\
		&\geq \Ex_{\cA|\cQ}\left(1 - \sum_{i \in Q_1}\sum_{j \in [r]} \Pr_{\cB| \cA} \left[ B_j(i) \in S \big| \ i \in Q_1\right]\right) 		\non\\
		&= \Ex_{\cA|\cQ}\left(1 - \sum_{i \in Q_1}\sum_{j \in [r]} \Pr_{\cB| \cA} \left[ B_j(i) \in S \big| \ A(i) \notin S\right]\right) 		\non\\
		&\overset{1}{=} 1 - r|Q_1|\Pr_{B_j(i) \sim G_{\eta}(A(i))}\left[B_j(i) \in S | A(i) \notin S\right] 		\label{eqn:comp-step}\\
		&{=} 1 - r|Q_1|\left(\eta \Pr_{B(i) \sim G} \left[B(i) \in S\right] + (1 - \eta) \Pr_{B_j(i) \sim G(A(i))}\left[B_j(i) \in S | A(i) \notin S\right]\right) 		\non\\
		&\overset{2}{\geq} 1 - r|Q_1|\left(\delta \eta + \frac{\Pr_{B_j(i) \sim G(A(i))}\left[B(i) \in S \wedge  A(i) \notin S\right]}{\Pr\left[A(i) \notin S\right]}\right) 		\non\\
		&\overset{3}{\geq} 1 - r|Q_1|( \delta \eta + \delta \epsilon/(1 - \delta)) \non \\
		&\geq 1 - 2r|Q_1| \delta (\epsilon + \eta), 		\label{eqn:q1-val}
	\end{align}
	where step $1$ can be argued as follows. Fixing $Q_1$,  $A(i)$ is identically distributed for every $i \in Q_1$. Furthermore, observe that $B_j(i)$'s are sampled independently fixing $A(i)$, it follows that for every $j \in [r]$, and $i \in Q_1$ the marginal distribution of $B_j(i)$ conditioned on $Q_1$ is identical. 	Therefore, the probability terms inside the summation over $i,j$ are  all identical quantities. For step $2$, we observe that $G_\eta$ preforms a random walk on $G$ with probability $1 - \eta$ and returns a completely random vertex with probability $\eta$. Step $3$ uses the bound on the expansion of $S$ and the fact that ${\sf Vol}(S) = \delta$. 
	
{\bf Bounding $\ell = 2$ from \eqref{eqn:prob-rhs1}}. We bound this term by combining the arguments for the $\ell = 0,1$ terms: 
	\begin{align}
		&\Ex_{\cA|\cQ}\Pr_{\cB| \cA} \Big[ \forall \ j \in [r], \ \forall \ i \in Q_2 : (B_j(i),z'_{j}(i)) \notin S \times \{\top\} \Big]	\non \\
		&\overset{1}{\geq} 1 - r|Q_2|\Pr\Big[(B_j(i),z'_{j}(i)) \in S \times \{\top\} \Big| (A(i),z(i)) \in S^c \times \{\bot\}\Big] 		\non\\
		&= 1 - r|Q_2|\Pr_{z'_j(i) \underset{\rho(1 - \eta)}{\sim}(z(i))}\left[z'_j(i)  = \top \Big| z(i) = \bot\right] 
		\Pr_{B_j(i) \sim G_\eta(A(i))}\Big[B_j(i) \in S | A(i) \notin S\Big] 		\non\\
		&\overset{2}{\geq} 1 - \left(\frac{1}{\beta \delta}\right)\left(\eta + \beta (1 - \rho)\right)			
		\left(\delta \eta + \frac{\Pr\left[B_j(i) \in S \wedge  A(i) \notin S\right]}{\Pr\left[A(i) \notin S\right]}\right) 		\non\\
		&\overset{3}{\geq} 1 - \left(\frac{1}{\beta \delta}\right)\left(\eta + \beta (1 - \rho)\right)			
		\left(\delta \eta + \epsilon \delta/(1 - \delta)\right) 			\non\\
		&\overset{4}{\geq} 1 - \left(2\delta^{-1} \Big(\delta \eta + \epsilon \delta/(1 - \delta)\Big)\right) 	\non\\
		&\geq 1 - 4(\epsilon + \eta). 		\label{eqn:q2-val}
	\end{align}
	Here step $1$ uses an argument similar to step in \eqref{eqn:comp-step}. In step $2$ we trivially upper bound $|Q_2| \leq R = 1/(r\beta \delta)$ and the probability terms are bounded using \eqref{eqn:z'-bound}. For step $3$, we use the bound on the expansion of $S$, and step $4$ again follows from $2\eta \leq \beta(1 - \rho)$ using our choice of parameters. 
	
{\bf Putting Things Together}. Denote  the event $|Q_0| \leq C/(\beta r)$ as $\cE_Q$. Combining the bounds from \eqref{eqn:dksh-0},\eqref{eqn:q0-val}, \eqref{eqn:q1-val} and \eqref{eqn:q2-val} and plugging them in \eqref{eqn:prob-rhs1} we get that 
	\begin{align*}
		&\Ex_{\cQ|\cE_{\rm good}}\left[\Ex_{\cA|\cQ}\Pr_{\cB| \cA} 
		\Big[\cE_0 \wedge \cE_1\Big]\right] \\
		&\geq \Ex_{\cQ | \cE_{\rm good}}\Ex_{\cA|\cQ}\left(\Pr_{\cB| \cA} \left[\cE_2 \right]\right) 
		\cdot\prod^{2}_{\ell = 0}\left(\Ex_{\cA|\cQ}\Pr_{\cB| \cA} \left[ \forall \ j \in [r], \forall \ i \in Q_\ell, \ (B_j(i),z'_j(i)) \notin \cS_\top \right]\right) 		\tag{From \eqref{eqn:prob-rhs1}} \\
		&\geq \Ex_{\cQ|\cE_{\rm good}}\left[0.9\rho \cdot \Big(1 - 2r\beta(1 - \rho)\Big)^{|Q_0|}\Big(1 - 2r|Q_1| (\epsilon + \eta) \delta\Big)(1 - 4(\epsilon + \eta))\right]		\tag{From \eqref{eqn:dksh-0},\eqref{eqn:q0-val}, \eqref{eqn:q1-val},\eqref{eqn:q2-val}} \\
		&\overset{1}{\geq} \Ex_{\cQ|\cE_Q,\cE_{\rm good}}\left[0.8\rho\Big(1 - 2r\beta(1 - \rho)\Big)^{|Q_0|}\Big(1 - 2r|Q_1|\delta(\eta + \epsilon )\Big)\right] - e^{-C/(8\beta r)} \\
		&\overset{2}{\geq} \Ex_{\cQ|\cE_Q,\cE_{\rm good}}\left[0.8\rho\Big(1 - 2r\beta(1 - \rho)\Big)^{C/(\beta r)}\Big(1 - 2r|Q_1|\delta(\eta + \epsilon )\Big)\right] - e^{-C/(8\beta r)} \\
		&{\geq} 0.8\rho e^{-2C}\Ex_{\cQ|\cE_Q,\cE_{\rm good}}\Bigg[~1 - 2r|Q_1|\delta (\eta + \epsilon )~\Bigg] - e^{-C/(8\beta r)} \\
		&= 0.8\rho e^{-2C}\Big(1 - 2r\Ex_{\cQ|\cE_Q,\cE_{\rm good}}\big[|Q_1|\big]\delta (\eta + \epsilon )\Big) - e^{-C/(8\beta r)} \\
		&\overset{3}{\geq} 0.9\rho e^{-2C}\left(1 - \frac{12}{\delta} \cdot \delta(r\eta + r\epsilon )\right) - e^{-C/(8\beta r)} \\
		&= 0.8\rho e^{-2C}\left(1 - 12(r\epsilon + r\eta) \right) - e^{-C/(8\beta r)} \\
		&\overset{4}{\geq} 0.5\rho e^{-2C}.
	\end{align*}
	Here step $1$ uses the first item of Observation \ref{obs:bounds}, step $2$ is due to the fact that the event $\cE_Q$ implies $|Q_0| \leq C/r\beta$. Step $3$ follows from  the second item of Observation \ref{obs:bounds}. Finally, in step $4$, the first term dominates the second due to $\rho =1/{C'r^2\log(1/\mu)}$ and $\beta \ll \mu$. Plugging in the above bound into \eqref{eqn:prob-rhs} finishes the proof.
\end{proof}

{\bf Labeling Strategy}.

Consider the following labeling strategy. For every $(A,z) \in V^R \times \{\top,\bot\}^R$, we define the corresponding long code as $f_{A,z}: \{0,1\}^R \to \{0,1\}$ as $f_{A,z} = \chi_{i^*(A,z)}$ i.e., $f_{A,z}(x) := x(i^*(A,z))$. Note that this assignment satisfies
\begin{align*}
	\Ex_{(A,x,z) \sim V^R \times \{0,1\}^R_\mu \times \{\bot,\top\}^R_\beta}\left[f(A,x,z) \right] 
	& = \Ex_{(A,z) \sim V^R \times \{\bot,\top\}^R_\beta} \Ex_{x \sim \{0,1\}^R_\mu} \left[f_{A,z}(x)\right] \\
	& = \Ex_{(A,z) \sim V^R \times \{\bot,\top\}^R_\beta} \Ex_{x \sim \{0,1\}^R_\mu} \left[x(i^*(A,z))\right] \\
	& = \mu,
\end{align*}
i.e, $f$ indicates a set of relative weight $\mu$ in the hypergraph defined in Figure \ref{fig:dksh-test}. Now we bound the fraction of edges induced by the set indicated $f$, which is the same as the probability of the test accepting this assignment. Let the events $\cE_0,\cE_1,\cE_{\rm good}$ be as in the proof of Lemma \ref{lem:match-ix} and define $\cE := \cE_0 \wedge \cE_1 \wedge \cE_{\rm good}$. Then,
\begin{align}
\Pr\left[\textrm{Test Accepts}\right] 
& = \Pr\left[\forall \ j \in [r] : f\left(\pi_j\circ\left(B'_j,x'_j,z'_j\right)\right) = 1\right] 			\non\\
& \geq \Pr\left[\cE\right]  \Pr\left[\forall \ j \in [r] : f\left(\pi_j\circ\left(B'_j,x'_j,z'_j\right)\right) = 1
\Big| \cE\right] 		\non\\
& = \Pr\left[\cE\right]  \Pr\left[\forall \ j \in [r] : f\left(B'_j,x'_j,z'_j\right) = 1
\Big| \cE \right] 	\tag{Claim \ref{cl:perm}}	\non\\
& \overset{1}{\geq} 0.5\rho r^{-1} e^{-2C} \Pr\left[\forall \ j \in [r] : f\left(B'_j,x'_j,z'_j\right) = 1
\Big| \cE \right] 		\non\\
& \overset{2}{=} 0.5\rho e^{-2C}r^{-1}\Pr\left[\forall \ j \in [r] :  x'_j({i^*(\cB'_j)}) = 1 \Big| \cE \right] 		\non\\
& \geq 0.5\rho e^{-2C}r^{-1}\Pr\left[\forall \ j \in [r] : x({i^*_A}) = x'_j({i^*(\cB'_j)}) = 1 \Big| \cE \right] 		\non\\
& \overset{3}{=} 0.5\rho e^{-2C}r^{-1}\Pr\left[\forall \ j \in [r] : {x}({i^*_A}) = \hat{x}_j(i^*(\cB'_j)) = 1 \Big| \cE \right]		\non\\
& \overset{4}{=} 0.5\rho e^{-2C}r^{-1}\Pr\left[\forall \ j \in [r] : {x}({i^*_A}) = \hat{x}_j({i^*_A}) = 1 \Big| \cE \right]		\non\\
& \overset{5}{\geq} 0.5\rho e^{-2C}r^{-1}\left(\Pr\left[{x}(i^*_A) =  1 \Big| \cE \right] - r\eta \right) 		\non\\
& \overset{6}{=} 0.5\rho e^{-2C}r^{-1}\left(\Pr\Big[{x}(i^*_A) = 1  \Big] - r\eta \right) 		\non\\
& = 0.5e^{-2C} (\mu - r\eta)\rho r^{-1} 			\non\\
& \overset{7}{\geq} 0.5e^{-2C} \rho \mu r^{-1}/2.			\label{eqn:sse-comp}
\end{align}

We explain the various steps above. Step $1$ lower bounds the first probability expression using Lemma \ref{lem:match-ix}. Step $2$ follows from our definition of $f$. Step $3$ follows from the observation that conditioned on the event $\cE$, for every $j \in [R]$, we have $|\Pi(\cB'_j)| = 1$ and hence $z'_j(i^*(\cB'_j)) = \top$ using the definition of $i^*(\cB'_j)$. This in turn implies that $\hat{x}_j(i^*(\cB'_j)) = x'_j(i^*(\cB'_j))$ with probability $1$. Step $4$ is by observing that conditioned on $\cE$ we have $i^*(\cB'_j) = i^*_A$ for every $j \in [r]$. Step $5$ follows by combining the following observations:
\begin{itemize} 
	\item Conditioned on $\cE$, we have $\theta(i^*_A) = 1$ and hence $x_j(i^*_A) = x(i^*_A)$ for every $j \in [r]$ with probability $1$.
	\item For every $j \in [r]$, $\hat{x}_j(i^*_A)$ is an independent $(1 - \eta)$-correlated copy of $x_j(i^*_A)$.
\end{itemize}
Therefore, combining the two above observations we have 
\begin{align*}
&\Pr\left[\forall j \in [r] : \hat{x}_j(i^*_A) = x(i^*_A) = 1 \Big| \cE\right] \\
&\geq \Pr\left[\forall j \in [r] : {x}_j(i^*_A) = x(i^*_A) = 1 \Big| \cE\right] - \Pr\left[\exists j \in [r] : \hat{x}_j(i^*_A) \neq  x_j(i^*_A) \Big| \cE\right] \\
&\geq \Pr\left[ x(i^*_A) = 1 \Big| \cE\right] - \eta r.
\end{align*}
Step $6$ uses the observation that $x$ is independent of the variables $A,\{B_j\}_{j \in [r]},\{\theta(i)\}_{i \in [R]}$ which are variables that determine the events $\cE$. Finally step $7$ follows from our choice of $\eta$. 

Summarizing, we showed that if $G$ is a YES instance, then there exists a set of volume $\mu$ which induces at least $\Omega(r^{-3}/\log(1/\mu))$-weight of hyperedges in the hypergraph output by the test.

\subsection{Miscellaneous Lemmas for Completeness Analysis}

\begin{claim}				\label{cl:match-ix}
	For any triple $(A,x,z)$, and any $(A',x',z)$ generated by sampling $(A',x') \sim M_z(A,x)$, we have $\Pi(A',z) = \Pi(A,z)$ with probability $1$. Consequently, if $|\Pi(A,z)| = 1$, then $i^*(A,z) = i^*(A',z)$ with probability $1$.
\end{claim}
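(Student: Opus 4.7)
The plan is to observe that $\Pi(A,z)$ depends only on $z$ and on the values $\{A(i) : z(i) = \top\}$, and then invoke the defining property of the noisy leakage operator $M_z$ to see that the operator preserves exactly this information.

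More concretely, I would first unfold the definition: $i \in \Pi(A,z)$ iff $(A(i),z(i)) \in S \times \{\top\}$, which in particular requires $z(i) = \top$. Hence $\Pi(A,z) = \{ i \in [R] : z(i) = \top \text{ and } A(i) \in S\}$. Next I recall from the description of $M_z$ (on the space $V \times \{0,1\}_\mu$, applied coordinatewise for $z \in \{\bot,\top\}^R$) that when sampling $(A',x') \sim M_z(A,x)$, for every coordinate $i \in [R]$ with $z(i) = \top$ we have $(A'(i),x'(i)) = (A(i),x(i))$ deterministically, while for coordinates with $z(i) = \bot$ we resample $(A'(i),x'(i))$ freshly. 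In particular, $A'(i) = A(i)$ on every coordinate $i$ with $z(i) = \top$, with probability $1$.

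Combining these two observations, for any such $(A',x',z)$ in the support of $M_z(A,x)$ we get
\[
\Pi(A',z) = \{i \in [R] : z(i) = \top \text{ and } A'(i) \in S\} = \{i \in [R] : z(i) = \top \text{ and } A(i) \in S\} = \Pi(A,z),
\]
which proves the first assertion.

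For the consequent, suppose $|\Pi(A,z)| = 1$. By the definition of the map $i^*(\cdot,\cdot)$ on the set $V_{\rm good}$, $i^*(A,z)$ is the unique element of $\Pi(A,z)$. From what we just proved, $|\Pi(A',z)| = |\Pi(A,z)| = 1$, so $(A',z) \in V_{\rm good}$ as well, and $i^*(A',z)$ is the unique element of $\Pi(A',z) = \Pi(A,z)$. Therefore $i^*(A',z) = i^*(A,z)$ with probability $1$. There is no real obstacle here; the argument is essentially just bookkeeping that the noise operator $M_z$ is designed to freeze exactly the coordinates that the set $\Pi(\cdot,z)$ sees.
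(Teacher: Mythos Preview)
Your proof is correct and follows essentially the same approach as the paper: both arguments hinge on the observation that $\Pi(A,z)$ depends only on the coordinates $i$ with $z(i)=\top$, and that $M_z$ deterministically preserves $A(i)$ on exactly those coordinates. The paper phrases this as a two-direction case analysis while you write the set equality directly, but the content is identical.
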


\begin{proof}
	To begin with, fix an $i \in \Pi(A,z)$. Then using the definition of $\Pi(A,z)$, we must have $z(i) = \top$ and hence $(A'(i),z(i)) = (A(i),z(i)) \in S \times \{\top\}$ with probability $1$. Conversely, let us look at an $i \notin \Pi(A,z)$. We now consider two cases:
	
	{\bf Case (i)}: Suppose $z(i) = \bot$. Then with probability $1$, $(A'(i),z(i)) \notin S \times \{\top\}$.
	
	{\bf Case (ii)}: Suppose $z(i) = \top$. Since $i \notin \Pi(A,z)$ we must have $A(i) \notin S$. Again, since $z(i) = \top$, it follows that $A'(i) = A(i) \notin S$ with probability $1$.    
	
	Combining cases (i) and (ii), we get that for every $i \in [R] \setminus \Pi(A,z)$ we must have $(A'(i),z(i)) \notin S \times \{\top\}$. The above observations together imply that $i \in \Pi(A,z)$ if and only if $i \in \Pi(A',z)$ and hence the first claim follows. The second claim follows directly using the definition of the map $i^*$.
	
\end{proof}

 \begin{claim}				\label{cl:perm}
 	Conditioned on events $\cE = \cE_0 \wedge \cE_1 \wedge \cE_{\rm good}$, for any $j \in [R]$ and any permutation $\pi:[R] \to [R]$ we have $f(\pi\circ(B'_j,x'_j,z'_j)) = f(B'_j,x'_j,z'_j)$. 
 \end{claim}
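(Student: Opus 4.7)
The plan is to show that conditioning on $\cE$ pins down a unique distinguished coordinate in each argument of $f$, and that the labeling $f$ is (by construction) invariant under a permutation of all three tuples acting on the coordinate set $[R]$.

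First, I would verify that under $\cE$ we have $|\Pi(\cB'_j)| = 1$ for every $j \in [r]$. The event $\cE_{\rm good}$ gives $|\Pi(\cA)| = 1$, and $\cE_0$ gives $\Pi(\cA) = \Pi(\cB_j)$, so $|\Pi(\cB_j)| = 1$. Since $(B'_j, x'_j) \sim M_{z'_j}(B_j, \hat{x}_j)$, Claim \ref{cl:match-ix}, applied with the triple $(B_j,\hat{x}_j,z'_j)$ (so $z'_j$ plays the role of $z$ in the claim), gives $\Pi(\cB'_j) = \Pi(\cB_j)$, hence $|\Pi(\cB'_j)| = 1$ as required. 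Write $i^*_j := i^*(\cB'_j)$ for the unique element.

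Second, I would unpack the action of $\pi$. Under the natural convention $\pi\circ(B'_j,x'_j,z'_j) := (B'_j\circ\pi,\,x'_j\circ\pi,\,z'_j\circ\pi)$, with $(B'_j\circ\pi)(i)=B'_j(\pi(i))$ and analogously for the other two coordinates, we have
\[
\Pi(\pi\circ\cB'_j) \;=\; \bigl\{i\in[R] : (B'_j(\pi(i)),z'_j(\pi(i)))\in S\times\{\top\}\bigr\} \;=\; \pi^{-1}\bigl(\Pi(\cB'_j)\bigr) \;=\; \{\pi^{-1}(i^*_j)\}.
\]
In particular, $i^*(\pi\circ\cB'_j) = \pi^{-1}(i^*_j)$.

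Finally, plugging into the definition $f(A,x,z) = x(i^*(A,z))$ of the completeness assignment,
\[
f\bigl(\pi\circ(B'_j,x'_j,z'_j)\bigr) \;=\; (x'_j\circ\pi)\bigl(i^*(\pi\circ\cB'_j)\bigr) \;=\; x'_j\bigl(\pi(\pi^{-1}(i^*_j))\bigr) \;=\; x'_j(i^*_j) \;=\; f(B'_j,x'_j,z'_j),
\]
as claimed. There is no real obstacle here; the only thing to be careful about is the bookkeeping in step one to ensure that Claim \ref{cl:match-ix} is applied with the correct identification of $z$-variables, so that the noisy-leakage step $M_{z'_j}$ is confirmed to preserve the $\Pi$-set of $\cB_j$ on the way to $\cB'_j$.
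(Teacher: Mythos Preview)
Your proof is correct and follows essentially the same approach as the paper: first use $\cE$ together with Claim~\ref{cl:match-ix} to conclude $|\Pi(\cB'_j)|=1$, then track how the distinguished index transforms under $\pi$, and finally evaluate $f$ using its definition as the dictator at that index. The only cosmetic difference is that the paper adopts the opposite convention for the permutation action (so that $i^*(\pi\circ\cB'_j)=\pi(i_j)$ rather than $\pi^{-1}(i_j)$), but the computation closes up in exactly the same way.
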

\begin{proof}
	Since $\cE$ holds, using Claim \ref{cl:match-ix} we have 
	\[
	|\Pi(B'_j,z'_j)| = |\Pi(B_j,z'_j)| = |\Pi(A,z)| = 1
	\] 
	for every $j \in [r]$. Now let $\Pi(B'_j,z'_j) = \{i_j\}$. Then note that for any permutation $\pi:[R] \to [R]$ we have $\Pi(\pi \circ (B'_j,z'_j)) = \{\pi(i_j)\}$ and hence $i^*(\pi \circ(B'_j,z'_j)) = \pi(i_j)$. Then,
	\[
	f\left(\pi \circ(B'_j,x'_j,z'_j)\right) = 
	f_{\pi \circ(B'_j,z'_j)}(\pi(x'_j)) = \chi_{\pi(i_j)}(\pi(x'_j)) = \chi_{i_j}(x'_j) = f(B'_j,x'_j,z'_j).
	\]
\end{proof}
	
\begin{observation}				\label{obs:bounds}
	Let $\cE_Q$ and $\cE_{\rm good}$ denote the events $|Q_0| \leq C/(\beta r)$ and $\cA \in V_{\rm good}$ respectively. Then,
	\begin{equation}				\label{eqn:stmt}
	\Pr_{\cQ|\cE_{\rm good}} \left[|Q_0| \geq \frac{Ce}{r\beta}\right] \leq e^{-C/(8\beta r)} 
	\qquad\qquad \textnormal{and} 	\qquad\qquad
	\Ex_{\cQ|\cE_Q,\cE_{\rm good}} \left[|Q_1| \right] \leq  \frac{2e}{\delta}.
	\end{equation}
\end{observation}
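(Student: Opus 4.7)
The plan is to reduce both inequalities to routine concentration and conditional-expectation estimates, once the correct i.i.d.\ structure under the conditioning on $\cE_{\rm good}$ is identified. Unconditionally the coordinates $\{(A(i),z(i))\}_{i\in[R]}$ are independent, each with $A(i)$ uniform on $V$ and $z(i) \sim \{\bot,\top\}_\beta$. Setting $X_i := \mathbbm{1}[(A(i),z(i)) \in S \times \{\top\}]$, the $X_i$'s are i.i.d.\ Bernoulli with parameter $\beta\delta$ and $\cE_{\rm good} = \{\sum_i X_i = 1\}$. By symmetry across coordinates, conditioning on $\cE_{\rm good}$ is equivalent to first drawing $i^*_A \in [R]$ uniformly, placing $(A(i^*_A),z(i^*_A))$ uniformly in $S \times \{\top\}$, and drawing the other $R-1$ coordinates i.i.d.\ from the single-coordinate law conditioned on $X_i = 0$. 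Under that conditioned law,
\[
\Pr[(A(i),z(i)) \in S \times \{\bot\}] = \frac{(1-\beta)\delta}{1-\beta\delta} \le 2\delta \qquad \text{and} \qquad \Pr[(A(i),z(i)) \in S^c \times \{\top\}] \le 2\beta,
\]
using $\beta\delta \le 1/2$. Hence $|Q_0|$ and $|Q_1|$ conditioned on $\cE_{\rm good}$ are sums of $R-1$ i.i.d.\ Bernoullis with conditional expectations at most $2R\delta \le 2/(r\beta)$ and $2R\beta \le 2/(r\delta)$ respectively, since $R = 1/(r\beta\delta)$.

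For the first inequality, I would apply the Chernoff tail bound $\Pr[X \ge T] \le (e\mu/T)^T$ with $\mu \le 2/(r\beta)$ and threshold $T := Ce/(r\beta)$, which gives $\Pr[|Q_0| \ge T \mid \cE_{\rm good}] \le (2/C)^{Ce/(r\beta)} \le \exp(-C/(8\beta r))$ as soon as $C$ exceeds a small absolute constant (for instance $C \ge 3$ suffices, since then $e\log(C/2) \ge 1/8$). We are free to make $C$ as large as needed because $C$ is exactly the constant entering the definition of $\cE_Q$ in Lemma \ref{lem:match-ix}.

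For the second inequality I would combine the first with a crude bound on conditional expectation. The first inequality yields $\Pr[\cE_Q \mid \cE_{\rm good}] \ge 1 - e^{-C/(8\beta r)} \ge 1/2$, and therefore
\[
\Ex\bigl[|Q_1| \,\big|\, \cE_Q, \cE_{\rm good}\bigr] \le \frac{\Ex[|Q_1| \mid \cE_{\rm good}]}{\Pr[\cE_Q \mid \cE_{\rm good}]} \le \frac{2/(r\delta)}{1/2} = \frac{4}{r\delta} \le \frac{2e}{\delta},
\]
where the final step uses $r \ge 2$. The only genuinely nontrivial step in the whole argument is the exchangeability-based i.i.d.\ decomposition under $\cE_{\rm good}$; after that both claims follow from standard Chernoff bounds and Markov-style inequalities for conditional expectations, so I do not anticipate any real obstacle.
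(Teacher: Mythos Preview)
Your approach is correct and close in spirit to the paper's, with one small slip and one genuine (but minor) methodological difference worth noting.

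\textbf{The slip.} In deducing the second inequality you invoke ``the first inequality yields $\Pr[\cE_Q \mid \cE_{\rm good}] \ge 1/2$.'' But $\cE_Q$ is the event $|Q_0| \le C/(\beta r)$, whereas the first inequality controls the tail only at the \emph{larger} threshold $Ce/(r\beta)$; the first inequality as stated does not bound $\Pr[\cE_Q^c \mid \cE_{\rm good}]$. The fix is immediate: your Chernoff argument applies equally well at the threshold $C/(\beta r)$ (with $e\mu/T \le 2e/C$), and for $C$ a large enough absolute constant this still gives a bound of the form $e^{-\Omega(C/(\beta r))}$, which is all you need for $\Pr[\cE_Q \mid \cE_{\rm good}] \ge 1/2$.

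\textbf{The methodological difference.} The paper does \emph{not} identify the conditional i.i.d.\ structure under $\cE_{\rm good}$. Instead it works unconditionally: it bounds $\Ex[|Q_0|] \le 1/(r\beta)$ and $\Ex[|Q_1|] \le 1/(r\delta)$, applies Chernoff to the unconditional $|Q_0|$, and then converts to conditional statements by dividing through by lower bounds on $\Pr[\cE_{\rm good}] \gtrsim 1/(er)$ and $\Pr[\cE_Q \wedge \cE_{\rm good}] \gtrsim 1/(er)$. Your exchangeability decomposition is cleaner --- it avoids picking up the spurious $re$ factor that the paper then has to absorb by taking $\beta$ small --- but the two arguments are otherwise the same combination of a Chernoff tail bound for $|Q_0|$ and a Markov-type inequality $\Ex[\,\cdot \mid E] \le \Ex[\,\cdot\,]/\Pr[E]$ for $|Q_1|$.
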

\begin{proof}
From the choice of the test distribution, randomizing over the choice of $\cQ$ we have 
\[
\Ex_{\cQ}\left[|Q_0|\right] = R \Pr_{\cQ}\Big[(A(i),z(i)) \in S \times \{\bot\}\Big] = \delta(1 - \beta)R \leq 1/(r\beta).
\]
where in the last step we use $R = 1/(r \beta \delta)$. Furthermore, since each $(A(i),z(i))$ are independent for distinct choices of $i \in [R]$, the random variables $\mathbbm{1}(i \in Q_0)$ are i.i.d Bernoulli random variables. Therefore using Chernoff Bound, randomizing over the choice of $Q_0$, we get that	
\begin{equation}			\label{eqn:q0-conc}
	e^{-C/(4r\beta)} \geq \Pr_{\cQ}\left[|Q_0| \geq \frac{C}{\beta r}\right] \geq \left(\frac{e^{-1}}{r}\right) \Pr_{\cQ}\left[|Q_0| \geq \frac{C}{r\beta} \Big| A \in V_{\rm good}\right] 
\end{equation}	
which on rearranging gives us that the last probability expression is at most $e^{-C/(8\beta r)}$ whenever $\beta$ is small enough as function of $r$. For the second item of \eqref{eqn:stmt} observe that 
\begin{equation}			\label{eqn:q-step1}
	\Ex_{\cQ}\left[|Q_1|\right] = R \Pr_{\cQ}\left[(A(i),z(i)) \in S^c \times \{\top\}\right] = (1 - \delta)\beta R \leq 1/(\delta r).
\end{equation}
Furthermore, note that 
\begin{align}			
\Pr\Big[\cE_Q \wedge \cE_{\rm good}\Big] 
\geq \Pr\big[\cE_{\rm good}\big] - \Pr\big[\cE^c_Q\big]  	
&\geq R\beta\delta(1 - \beta \delta)^{R-1} - e^{-C/8r\beta} 	\non \\
&\geq \frac{e}{r}  - e^{-C/8r\beta} 			\non \\
&\geq \frac{1}{2er} \label{eqn:q-step2}
\end{align}
where the second inequality step follows from the first item of this lemma. Therefore, combining \eqref{eqn:q-step1} and \eqref{eqn:q-step2} we get that 
\[
\frac{1}{\delta r} \geq \Ex_{\cQ}\left[|Q_1|\right] \geq \frac{1}{2er} \cdot \Ex_{\cQ| \cE_Q,\cE_{\rm good}} \left[|Q_1|\right],
\]
which on rearranging gives us the second inequality of the lemma.
\end{proof}

\subsection{Soundness analysis for \dksh~reduction}

Let $G = (V,E)$ be a NO instance as in the setting of Theorem \ref{thm:dksh-redn}. Let $f:V^R \times \{0,1\}^R \times \{\bot,\top\}^R \to \{0,1\}$ be an assignment satisfying the global constraint 
\begin{equation}
\Ex_{A \sim V^R} \Ex_{x \sim \{0,1\}^R_\mu} \Ex_{z \sim \{\bot,\top\}^R_\beta} \left[f(A,x,z)\right] = \mu.				\label{eqn:bias-cond}
\end{equation}
To begin with, we observe that we can arithmetize the probability of the test accepting as: 
\begin{align}
\Pr\Big[\mbox{ Test Accepts }\Big]
 &= \Pr\left[\forall \ j \in [r] : f\left(\pi_j \circ \left(B'_j,x'_j,z'_j\right)\right) = 1\right] \non\\
 & = \Ex_{A \sim V^R}  \Ex_{(B'_j,x'_j,z'_j)^{r}_{j = 1}}  
 \Ex_{\pi_1,\ldots,\pi_r \sim \mathds{S}_R}\left[\prod_{j = 1}^r f\left(\pi_j \circ \left(B'_j,x'_j,z'_j\right)\right) \right] \non \\
 & = \Ex_{A \sim V^R}\Ex_{(\hat{x}_j,z'_j)^r_{j = 1}}\left[\prod_{j=1}^r \Ex_{B_j \sim G^{\otimes R}_\eta(A)} \Ex_{(B'_j,x'_j) \sim M_{z'_j}(B_j,\hat{x}_j)}\Ex_{\pi_j \sim \mathds{S}_R} \left[f\left(\pi_j \circ \left(B'_j,x'_j,z'_j\right)\right) \right] \right]. \label{eqn:dks-rhs}  
\end{align}
Symmetrizing over the (i) the noisy random walk over $G^{\otimes R}_\eta$ (ii) the action of the $M_{z_j}$ operator and (iii) the choice of random permutation, for every $A \in V^R$, we shall define the averaged function $g_{A}:\{0,1\}^R \times \{\bot,\top\}^R \to [0,1]$ as 
\[
g_{A}(x,z) := \Ex_{B \sim G^{\otimes R}_\eta(A)} \Ex_{(B',x') \sim M_{z}(B,x)} \Ex_{\pi \sim \mathds{S}_R} \left[f\left(\pi \circ \left(B',x',z\right)\right)\right].
\]
Using the above definition, we can rewrite \eqref{eqn:dks-rhs} as:
\begin{align}
&  \Ex_{A \sim V^R}\Ex_{(\hat{x}_j,z'_j)^r_{j = 1}}\left[\prod_{j=1}^r \Ex_{B_j \sim G^{\otimes R}_\eta(A)} \Ex_{(B'_j,x'_j) \sim M_{z'_j}(B_j,\hat{x}_j)}\Ex_{\pi_j \sim \mathds{S}_R} \left[f\left(\pi_j \circ \left(B'_j,x'_j,z'_j\right)\right)  \right] \right] \non\\
& =  \Ex_{A \sim V^R}\Ex_{(\hat{x}_j,z'_j)^r_{j = 1}}\left[\prod_{j=1}^r g_{A}(\hat{x}_j,z'_j) \right] 		\non\\
& =  \Ex_{A \sim V^R}\Ex_{(x_j,z_j)^r_{j = 1}}\left[\prod_{j=1}^r \Ex_{(\hat{x}_j,z'_j) \underset{1 - \eta}{\sim} (x_j,z_j)} g_{A}(\hat{x}_j,z'_j) \right]  \non	\\
& =  \Ex_{A \sim V^R}\Ex_{(x_j,z_j)^r_{j = 1}}\left[\prod_{j=1}^r T_{1 - \eta} g_{A}({x}_j,z_j) \right], 		
\label{eqn:dks-rhs1}
\end{align}
where $T_{1 - \eta}$ is the $(1 - \eta)$-correlated noise operator in the probability space $\{0,1\}^R_\mu \otimes \{\bot,\top\}^R_\beta$.

{\bf Invariance Principle Step}. Define the probability space $(\Omega,\gamma) = \{0,1\}_\mu \otimes \{\bot,\top\}_\beta$. Furthermore, define the set $V^1_{\rm nice} \subseteq V^R$ as 
\[
V^1_{\rm nice} := \left\{A \in V^R \Big| \max_{i \in [R]}\Inf{i}{T_{1 - \eta}g_A} \leq \tau\right\}
\]
where the influences are defined with respect to the probability space $(\Omega,\gamma)$. Since $G$ is a NO instance, by a standard influence decoding argument, we can show that most averaged functions $\{g_A\}_{A \in V^R}$ must have small influential coordinates. We show this formally in the following lemma.

\begin{lemma}			\label{lem:sse-dec1}
	Suppose $G$ is a NO instance as in the setting of Theorem \ref{thm:dksh-redn}. Then we have $|V^1_{\rm nice}| \geq (1 - \nu)|V^R|$.
\end{lemma}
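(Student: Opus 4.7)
The statement is the standard SSE influence-decoding step, so I would prove it by contradiction, invoking the decoding machinery from~\cite{RST12} (captured as Lemma \ref{lem:sse-dec} in this paper). Suppose, for contradiction, that $|V^1_{\rm nice}|/|V^R| < 1-\nu$; then a $\nu$-fraction of $A \in V^R$ have some coordinate $i \in [R]$ with $\Inf{i}{T_{1-\eta}g_A} > \tau$. My goal is to decode, from the set of such ``bad'' $A$'s, a subset $S' \subseteq V$ whose volume lies in $[\delta/M, M\delta]$ and whose edge expansion is at most $1-\epsilon$, contradicting the NO-instance assumption on $G$.

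\textbf{Unraveling the averaging structure.} The averaged function
\[
g_A(x,z) = \Ex_{B \sim G_\eta^{\otimes R}(A),\ (B',x') \sim M_z(B,x),\ \pi}\!\left[f\big(\pi \circ (B',x',z)\big)\right]
\]
is produced by composing (i) the graph walk operator $G_\eta^{\otimes R}$ acting coordinate-wise on the $V$-component of $A$, (ii) the leakage operator $M_z$ on $V \times \{0,1\}_\mu$, and (iii) a uniformly random coordinate permutation. Expanding $f$ in the Fourier basis over $(V \times \Omega)^R$, the Fourier mass of $g_A(x,z)$ at a coordinate $i$ on $\Omega$ is, after marginalizing over $z \sim \{\bot,\top\}^R_\beta$, controlled by the eigenvalues of $G_\eta$ acting on the $V$-component of $f$'s Fourier coefficients at that coordinate. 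Hence a large influence $\Inf{i}{T_{1-\eta} g_A} > \tau$ forces the Fourier mass of $f$ at coordinate $i$, projected onto the $V$-factor, to be supported on eigenvectors of $G_\eta$ with eigenvalue at least $1 - O(\epsilon)$ (up to losses from the $(1-\eta)$-noise and the $\beta$-averaging from $M_z$).

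\textbf{Decoding step and parameter check.} Aggregating this spectral concentration over the $\nu$-fraction of bad $(A, i_A)$ pairs and applying Lemma \ref{lem:sse-dec} produces a subset $S' \subseteq V$ with $\Vol(S') \in [\delta/M, M\delta]$ and $\phi_G(S') \leq 1-\epsilon$. The specific choices $\epsilon = \beta^2\nu^4\eta^4\tau^6/(2^{24} r^2)$ and $M = 1/\sqrt{\epsilon}$ are tuned precisely so that the quantitative bound of the decoding lemma (which loses polynomial factors in $\beta,\nu,\eta,\tau,r$) places the decoded set in the forbidden volume range with the required non-expansion, contradicting the NO-instance hypothesis and proving $|V^1_{\rm nice}|\geq(1-\nu)|V^R|$.

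\textbf{Main obstacle.} The principal technical subtlety is that the leakage operator $M_z$ couples the $V$ and $\{0,1\}_\mu$ coordinates, breaking the naive tensor-product structure of the Fourier decomposition, so one cannot simply separate the $V$-averaging from the $\Omega$-averaging. I would handle this by first averaging over $z$: the composite $\mathbb{E}_z[M_z]$ acts on the $V$-component of each coordinate as a convex combination of $G_\eta$ (with weight $\beta$) and a complete resampling (with weight $1-\beta$), which has a clean spectral description. This reduces the analysis of influential coordinates of $T_{1-\eta} g_A$ to the standard spectral-expansion framework on $V$, enabling the direct application of Lemma \ref{lem:sse-dec}.
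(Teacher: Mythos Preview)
Your approach has a genuine conceptual gap. You are conflating two separate structures: the Fourier/influence structure on $\Omega^R = (\{0,1\}_\mu \times \{\bot,\top\}_\beta)^R$, where the influences $\Inf{i}{T_{1-\eta}g_A}$ live, and the spectral structure of $G_\eta$ on $V$, which enters only at the very end via Lemma~\ref{lem:sse-label}. Your claim that a large influence forces Fourier mass ``projected onto the $V$-factor'' to concentrate on large eigenvectors of $G_\eta$ does not type-check: the influence at coordinate $i$ measures variation of $g_A(x,z)$ in $(x(i),z(i)) \in \Omega$ and says nothing directly about the spectrum of $G$ acting on $V$. Your proposed handling of the leakage operator is also off --- $M_z$ does not involve $G_\eta$ at all, and $\mathbb{E}_z[M_z]$ is simply the noise operator $T_\beta$ on $(V \times \{0,1\}_\mu)^R$, not a mixture involving $G_\eta$. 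Finally, Lemma~\ref{lem:sse-dec} does not take ``spectral concentration'' as input; its hypothesis is just a family of permutation-invariant functions, so the analysis you describe is not what is needed to invoke it.

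The paper's proof is much simpler and sidesteps all of this. The key move is to absorb the $M_z$-averaging and the random permutation into a new family
\[
\bar{f}_B(x,z) := \Ex_{(B',x') \sim M_z(B,x)}\,\Ex_{\pi \sim \mathbb{S}_R}\big[f(\pi \circ (B',x',z))\big],
\]
so that $g_A = \Ex_{B \sim G_\eta^{\otimes R}(A)}[\bar{f}_B]$ is a pure graph-walk average of functions on $\Omega^R$. One checks in one line that $\{\bar{f}_B\}$ is permutation invariant (immediate from the inner averaging over $\pi$) and applies Lemma~\ref{lem:sse-dec} verbatim. The proof of Lemma~\ref{lem:sse-dec} itself is the standard label-list decoding: convexity gives $\Inf{i}{T_{1-\eta}g_A} \leq \Ex_B \Inf{i}{T_{1-\eta}\bar{f}_B}$, so an influential coordinate of $g_A$ is influential for a $\tau/2$-fraction of neighbors' $\bar{f}_B$; a randomized labeling from the two influential-coordinate lists then satisfies the hypothesis of Lemma~\ref{lem:sse-label}, yielding a set of the forbidden volume with expansion below $1-\epsilon$ and contradicting the NO case. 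No Fourier expansion over $V$ is needed anywhere.
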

We defer the proof of the above lemma to Section \ref{sec:sse-dec}. Furthermore, for $A \in V^R$ let $\mu_A := \Ex_{(x,z) \sim \gamma^R} \left[T_{1 - \eta}g_A(x,z)\right]$ and define the set $V^2_{\rm nice}$ as 
\[
V^2_{\rm nice} := \left\{A \in V^R \Big| \mu_A \in \mu(1 \pm \mu)\right\}
\]
Analogously, we have the following lemma which bounds the size of $V^2_{\rm nice}$.
\begin{lemma}				\label{lem:vp-nice}
	The set $V^2_{\rm nice}$ as defined above satisfies $|V^2_{\rm nice}| \geq (1 - \mu^{2r})|V|^R$.
\end{lemma}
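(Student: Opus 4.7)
The plan is a second moment argument combined with Chebyshev's inequality, leveraging the spectral gap of the coordinate-wise averaging operator induced by the random leakage mask $z$. Define the operator $\mathcal{L}:L^2(V\times\gamma)\to L^2(V)$ by
\[
(\mathcal{L}\phi)(A) := \Ex_{z\sim\{\bot,\top\}_\beta}\Ex_{B\sim G_\eta(A)}\Ex_{(B',x')\sim M_z(B,x)}\bigl[\phi(B',x',z)\bigr],
\]
and let $F(A,x,z) := \Ex_\pi[f(\pi\circ(A,x,z))]$ denote the permutation-symmetrization of $f$. Since $T_{1-\eta}$ preserves means, $\mu_A = \Ex_{(x,z)}[g_A(x,z)] = (\mathcal{L}^{\otimes R}F)(A)$.

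First, $\Ex_A[\mu_A] = \mu$: regularity of $G$ makes $G_\eta^{\otimes R}$ preserve the uniform distribution on $V^R$, and applying the random leakage $M_z$ and the random permutation $\pi$ also preserves uniformity, so marginally $(\pi\circ(B',x',z))$ is uniformly distributed on $V^R\times\gamma^R$, and the global constraint \eqref{eqn:bias-cond} yields $\Ex_A[\mu_A]=\Ex[f]=\mu$.

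Next I bound ${\rm Var}_A(\mu_A) = \|\mathcal{L}^{\otimes R}F - \mu\|_2^2$ via Fourier analysis of $\mathcal{L}$. Decompose $L^2(V\times\gamma) \cong L^2(V)\otimes L^2(\{0,1\}_\mu)\otimes L^2(\{\bot,\top\}_\beta)$ in the tensor Fourier basis $e_i\otimes\chi^a\otimes\psi^b$, where $\chi,\psi$ are the nontrivial normalized characters and $a,b\in\{0,1\}$. A direct computation gives $\mathcal{L}(e_i\otimes\chi^a\otimes\psi^b)=0$ whenever $a=1$ (since $\Ex_x[\chi(x)]=0$), while for $a=0$ one finds $\mathcal{L}(e_i\otimes 1\otimes\psi^b) = c_{i,b}\, e_i$ with $c_{0,0}=1$, $c_{0,1}=0$, and for $i\geq 1$, $c_{i,0} = \beta\lambda_i^{(\eta)}$ and $c_{i,1}=\sqrt{\beta(1-\beta)}\,\lambda_i^{(\eta)}$, where $|\lambda_i^{(\eta)}|\leq 1$ are the eigenvalues of $G_\eta$. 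In particular $c_{i,0}^2+c_{i,1}^2 = \beta(\lambda_i^{(\eta)})^2 \leq \beta$ for every $i\geq 1$. Tensorizing and applying Cauchy--Schwarz coordinate-wise over the $b$-modes, every non-constant Fourier multi-index $I\in\{0,1,\ldots\}^R$ contributes at most $\beta^{|\{j:I_j\geq 1\}|}\sum_b\widehat{F}(I,0,b)^2$ to $\|\mathcal{L}^{\otimes R}F\|_2^2$. Summing and applying Parseval,
\[
{\rm Var}_A(\mu_A) \leq \beta\cdot\|F\|_2^2 \leq \beta\cdot\Ex[F] = \beta\mu,
\]
where $\|F\|_2^2\leq\Ex[F]$ uses $F\in[0,1]$.

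Finally Chebyshev's inequality gives
\[
\Pr_A\bigl[|\mu_A-\mu|>\mu^2\bigr] \leq \frac{{\rm Var}_A(\mu_A)}{\mu^4} \leq \frac{\beta}{\mu^3} = \mu^{Cr-3} \leq \mu^{2r},
\]
which holds whenever the constant $C$ in $\beta=\mu^{Cr}$ satisfies $Cr\geq 2r+3$ (in particular $C\geq 4$, and the parameter setup in the excerpt enforces $C$ much larger). The main technical point is the spectral gap computation on $\mathcal{L}$, which pins down the $\sqrt{\beta}$ contraction per non-constant coordinate---this is precisely where the noisy leakage operator does its work; the remaining steps are routine Parseval plus Chebyshev.
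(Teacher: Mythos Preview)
Your proof is correct and follows essentially the same strategy as the paper's: establish $\Ex_A[\mu_A]=\mu$, bound the variance of $\mu_A$ by $\beta\mu$, and finish with Chebyshev. The only difference is packaging: the paper invokes the variance bound as a black box (Lemma~6.7 of \cite{RST12}, restated in the paper as Lemma~\ref{lem:bias-conc}), whereas you reprove it directly via the single-coordinate spectral computation on the leakage operator. Your Fourier calculation is correct---the key identity $c_{i,0}^2+c_{i,1}^2=\beta(\lambda_i^{(\eta)})^2\le\beta$ for $i\ge1$ is exactly the $\sqrt\beta$-contraction per nontrivial coordinate that underlies the \cite{RST12} lemma, and the tensorization plus Cauchy--Schwarz over the $b$-modes cleanly yields ${\rm Var}_A(\mu_A)\le\beta\,\|F\|_2^2\le\beta\mu$.

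Two minor remarks. First, your displayed definition of $\mathcal{L}$ omits the average over $x\sim\{0,1\}_\mu$; the intended operator must average over $x$ as well (otherwise $x$ is free on the right-hand side), and indeed your subsequent claim that $a=1$ modes vanish only makes sense with that average in place. Second, the paper's parameter instantiation goes through the \cite{RST12} statement with $\gamma=\beta^{1/4}$, giving the bound $\beta^{1/2}\le\mu^{2r}$ (hence needing $C\ge4$), which is numerically the same regime you arrive at via $\beta/\mu^3\le\mu^{2r}$.
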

\begin{proof}
	It suffices to show that 
	\[
	\Pr_{A \sim V^R} \left[|\mu_A - \mu| \geq \mu^2\right] \leq \mu^{2r}.
	\]
	Towards that, using the bias constraint from \eqref{eqn:bias-cond} we observe that: 
	\begin{align*}
		\Ex_A\big[\mu_A\big] 
		&= \Ex_A \Ex_{(x,z) \sim \gamma^R} \big[T_{1 - \eta} g_A(x,z)\big] \\
		&= \Ex_A \Ex_{B \sim G^{\otimes R}_\eta(A)} \Ex_{(x,z) \sim \gamma^R} \Ex_{(\hat{x},z') \underset{1 - \eta}{\sim} (x,z)} \Ex_{(B',x') \sim M_{z'}(B,\hat{x})} \Ex_{\pi \sim \mathbbm{S}_R}\big[f\left(\pi \circ\left(B',x',z'\right)\right)\big] \\
		&= \Ex_A \Ex_{(x,z) \sim \gamma^R}\big[f\left(A,x,z\right)\big] \\
		& = \mu,
	\end{align*}
	the expected bias of a random averaged long code is $\mu$. To show concentration around the expectation, we shall use the following key lemma from \cite{RST12}:
	\begin{lemma}[Lemma 6.7 \cite{RST12}]						\label{lem:bias-conc}
		Let $\{f_A\}_{A \in V^R}$ be a set of functions $f_A : \Omega^R \to [0,1]$. Furthermore, define $g_A$ as 
		\[
		g_A(x,z) \defeq \Ex_{B \sim G^{\otimes R}_\eta(A)} \Ex_{(B',x') \sim M_z(B,x)}\Ex_{\pi \sim \mathds{S}_R} \left[f\left(\pi \circ \left(B',x',z\right)\right) \right],
		\]
		and let $\mu_A := \Ex_{(x,z) \sim \Omega}\left[T_{1 - \eta}g_A(x,z)\right]$. Then for every $\gamma \geq 0$ we have 
		\[
		\Pr_{A \sim V^R} \left[\Big|\mu_A - \Ex_A\mu_A \Big| \geq \gamma \sqrt{\Ex_A \mu_A}\right] \leq \frac{\beta}{\gamma^2}.
		\]
	\end{lemma}
	We point out that \cite{RST12} actually states the above for the random variables $\Ex_{(x,z)} g_A(x,z)$, but it is equivalent to the version stated above since $\Ex_{(x,z)} g_A(x,z) = \Ex_{(x,z)}T_{1 - \eta}g_A(x,z)$ for every $A \in V^R$.
	
	Now instantiating Lemma \ref{lem:bias-conc} with $\gamma = \beta^{1/4}$ yields:
	\[
	\Pr_{A \sim V^R}\left[|\mu_A - \mu| \geq \mu^2\right] \leq \beta^{1/2} \leq \mu^{2r}
	\]
	where the last inequality follows from our choice of $\beta$. 
\end{proof}

Let $V_{\rm nice} := V^1_{\rm nice} \cap V^2_{\rm nice}$ i.e., it is the set of vertices in $V^R$ for which (i) the corresponding averaged long code $g_A$ has small influences and (ii) the average value $\mu_A$ is close to $\mu$. The next lemma is the key technical step of the soundness analysis which says that for any fixing of $A \in V_{\rm nice}$, the corresponding expectation term in \eqref{eqn:dks-rhs1} can be bounded by $O(\mu^r) + \nu$.
\begin{lemma} 		\label{lem:dksh-stab}
	For every $A \in V_{\rm nice}$ we have 
	\[
	\Ex_{(x_j,z_j)^r_{j = 1}}\left[\prod_{j=1}^r T_{1 - \eta} g_{A}({x}_j,z_j) \right] \leq 3.5\mu^r + \nu.
	\]
\end{lemma}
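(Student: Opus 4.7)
\textbf{Proof plan for Lemma \ref{lem:dksh-stab}.} The plan is to apply the multidimensional noise stability bound (Theorem \ref{thm:ks-stab}) to the averaged, smoothed function $h_A := T_{1-\eta} g_A : \Omega^R \to [0,1]$, where $(\Omega,\gamma) = \{0,1\}_\mu \otimes \{\bot,\top\}_\beta$. The first observation is that the pairs $(x_j, z_j)_{j \in [r]}$ appearing in the statement are drawn exactly from the $R$-fold product of the correlated distribution $\cA_{r,\rho}(\Omega,\gamma)$ from Definition \ref{defn:dist}: by construction in Step \ref{step:xz-set} of Figure \ref{fig:dksh-test}, each coordinate $i \in [R]$ is sampled independently; with probability $\rho$ the coordinate is ``matched'' across all $j$ and with probability $1-\rho$ it is resampled i.i.d.\ from $\gamma$. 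Thus, if we denote this joint law by $\bgamma^R$, we have
\[
\Ex_{(x_j,z_j)^r_{j=1}}\left[\prod_{j=1}^r T_{1-\eta} g_A(x_j, z_j)\right] = \Ex_{(\bomega_1,\ldots,\bomega_r) \sim \bgamma^R}\left[\prod_{j=1}^r h_A(\bomega_j)\right].
\]

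Next, I will verify the three hypotheses of Theorem \ref{thm:ks-stab} for the function $h_A$. First, the influence bound: since $A \in V^1_{\rm nice} \subseteq V_{\rm nice}$, we have $\max_i \Inf{i}{h_A} \leq \tau$, where $\tau = \tau(\nu, r, \alpha)$ is the value guaranteed by Theorem \ref{thm:ks-stab}. Second, the expectation bound: since $A \in V^2_{\rm nice}$, $\mu_A := \Ex_{(x,z) \sim \gamma^R}[h_A(x,z)]$ satisfies $\mu_A \leq \mu(1+\mu) \leq 2^{-r}$ (using the hypothesis $\mu < 2^{-r}$ together with $\mu$ small enough so that $\mu(1+\mu) \leq 2^{-r}$, which is without loss of generality by trivially decreasing the target bias). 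Third, the correlation bound: we need $\rho \leq 1/(C' r^2 \log(1/\mu_A))$; since $\mu_A \leq \mu(1+\mu) \leq 2\mu$ we have $\log(1/\mu_A) \leq \log(1/\mu) + 1 \leq 2\log(1/\mu)$, and so the parameter choice $\rho = 1/(2C' r^2 \log(1/\mu))$ comfortably satisfies the requirement.

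Applying Theorem \ref{thm:ks-stab} with the slack parameter $\nu$ (the same $\nu$ used to define $\tau$) therefore yields
\[
\Ex_{(\bomega_1,\ldots,\bomega_r) \sim \bgamma^R}\left[\prod_{j=1}^r h_A(\bomega_j)\right] \leq 3\mu_A^r + \nu.
\]
It remains to convert $3\mu_A^r$ into $3.5 \mu^r$. Using $\mu_A \leq \mu(1+\mu)$ we have $\mu_A^r \leq \mu^r (1+\mu)^r \leq \mu^r \exp(r\mu) \leq \mu^r(1 + 2r\mu)$, provided $r\mu \leq 1/2$ (which holds since $\mu < 2^{-r}$). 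Hence $3\mu_A^r \leq 3\mu^r + 6r\mu^{r+1}$, and for $\mu$ sufficiently small (specifically $\mu \leq 1/(12r)$, automatic under our parameter regime), the second term is bounded by $0.5 \mu^r$, giving the desired bound $3.5\mu^r + \nu$.

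\textbf{Main obstacle.} The proof is essentially a bookkeeping verification once Theorem \ref{thm:ks-stab} is in place. The only subtlety I foresee is making sure the parameter $\nu$ used inside Theorem \ref{thm:ks-stab}, the parameter $\tau$ defining $V^1_{\rm nice}$, and the expectation deviation defining $V^2_{\rm nice}$ are chained together consistently (in particular, $\tau$ must be chosen as a function of $\nu$, $r$, and the minimum-mass parameter $\alpha$ of $\bgamma$, which is why the parameter setup fixes $\alpha = \rho \mu^r$ before choosing $\tau$). I will double-check that $\alpha$ is indeed a valid lower bound on $\min_{\omega \in \Omega^r} \bgamma(\omega)$: the minimum atom of $\bgamma$ corresponds to the ``all resampled independently'' case on a single coordinate with the least-likely symbol, giving at least $(1-\rho)(\mu\beta)^r \geq \rho \mu^r$ for our parameter choices, so the bound is valid and Theorem \ref{thm:ks-stab} applies as stated.
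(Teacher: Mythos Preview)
Your approach is exactly the paper's: identify the test distribution on $(x_j,z_j)_{j\in[r]}$ as the $R$-fold product of $\cA_{r,\rho}(\Omega,\gamma)$, verify the hypotheses of Theorem~\ref{thm:ks-stab} for $h_A = T_{1-\eta}g_A$ using membership in $V_{\rm nice} = V^1_{\rm nice}\cap V^2_{\rm nice}$, and then convert $3\mu_A^r$ into $3.5\mu^r$.

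One bookkeeping fix: your verification of the correlation condition uses the wrong side of the $\mu_A$ interval. From $\mu_A \leq 2\mu$ you cannot conclude $\log(1/\mu_A) \leq 2\log(1/\mu)$; the inequality goes the other way. What you need is the \emph{lower} bound $\mu_A \geq \mu(1-\mu) \geq \mu/2$, which gives $1/\mu_A \leq 2/\mu$ and hence $\log(1/\mu_A) \leq \log(1/\mu)+1 \leq 2\log(1/\mu)$, so that $\rho = 1/(2C'r^2\log(1/\mu)) \leq 1/(C'r^2\log(1/\mu_A))$ as Theorem~\ref{thm:ks-stab} requires. (The paper uses the two-sided bound $\mu_A\in[\mu-\mu^2,\mu+\mu^2]$ in precisely this way.) Your check that $\alpha$ lower-bounds the minimum atom of $\bgamma$ also has the inequality $(1-\rho)(\mu\beta)^r \geq \rho\mu^r$ going the wrong way for the given parameters, but this is harmless: any valid lower bound on the minimum atom (e.g.\ $(1-\rho)(\mu\beta)^r$ itself) only feeds into the choice of $\tau$ and does not affect the rest of the argument.
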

\begin{proof}
	Firstly, observe that the variables $(x_j(i),z_j(i))_{j \in [r]}$ are independent for distinct $i \in [R]$ and for each $i \in [R]$, the random variables $(x_1(i),z_1(i)),\ldots,(x_r(i),z_r(i))$ are jointly distributed as $\cA_{r,\rho}(\Omega,\gamma)$ under the test distribution (as in Definition \ref{defn:dist}). Furthermore, using the definition of $V_{\rm nice}$ it follows that  
	\[
	\max_{i \in [R]} \Inf{i}{T_{1 - \eta}g_A} \leq \tau
	\]
	where $\tau = \tau(\nu,r,\alpha)$ is chosen as in Theorem \ref{thm:ks-stab}. Furthermore, $\mu_A = \Ex_{(x,z) \sim \gamma^R} \left[T_{1 - \eta} g_A(x,z)\right]$ satisfies $\mu_A \in [\mu - \mu^2, \mu + \mu^2]$, and hence 
	\[
	\rho = \frac{1}{2C' r^2 \log (1/\mu)} \leq \frac{1}{C'r^2 \log(1/\mu_A)}.
	\]
	Hence, the function $T_{1 - \eta}g_A$ on the distribution $\gamma^R$ along with our choice of $\rho$ satisfies the conditions of Theorem \ref{thm:ks-stab}. Therefore, instantiating Theorem \ref{thm:ks-stab} with $f = T_{1 - \eta}g_A$ on the probability space $(\Omega,\gamma)$ we get that 
	\[
	\Ex_{(x_j,z_j)^r_{j = 1}}\left[\prod_{j=1}^r T_{1 - \eta} g_{A}({x}_j,z_j) \right] \leq 3\mu^r_A + \nu \leq 3.5\mu^r + \nu, 
	\]
	where the last inequality uses $\mu \leq 2^{-r}$.
\end{proof}
Therefore, continuing with bounding \eqref{eqn:dks-rhs1} we have:
\begin{align}
	&\Ex_{A \sim V^R}\Ex_{(x_j,z_j)^r_{j = 1}}\left[\prod_{j=1}^r T_{1 - \eta} g_{A}({x}_j,z_j) \right] \\
	& \leq \Ex_{A \sim V_{\rm nice}}\Ex_{(x_j,z_j)^r_{j = 1}}\left[\prod_{j=1}^r T_{1 - \eta} g_{A}({x}_j,z_j) \right] + \nu + \mu^{2r} \tag{Lemma \ref{lem:sse-dec1}, \ref{lem:vp-nice}}		\non\\
	& \leq 3.5 \mu^r + \nu + \nu + \mu^{2r}		\tag{Lemma \ref{lem:dksh-stab}} 		\non\\
	& \leq 4\mu^r 		\label{eqn:dks-rhs2}
\end{align}
where the last inequality follows using our choice of $\nu$.

{\bf Cleaning Up}. Finally, stitching together the bounds from \eqref{eqn:dks-rhs},\eqref{eqn:dks-rhs1} and \eqref{eqn:dks-rhs2} we get that 
\begin{align}
\Pr\big[\mbox{ Test Accepts }\big]
& \overset{\eqref{eqn:dks-rhs} + \eqref{eqn:dks-rhs1}}{=} \Ex_{A \sim V^R}\Ex_{(x_j,z_j)^r_{j = 1}}\left[\prod_{j=1}^r T_{1 - \eta} g_{A}(x_j,z_j) \right] 	\non\\
& \overset{\eqref{eqn:dks-rhs2}}{\leq} 4\mu^r, 				\label{eqn:sse-sound}
\end{align}
which concludes the soundness analysis. 

\subsection{Proof of Theorem \ref{thm:dksh-redn}}

Let $G = (V,E)$ be a $(\epsilon,\delta,M)$-\smallsetexpansion~instance as in the statement of Theorem \ref{thm:dksh-redn}. Then if $G$ is a YES instance from the completeness analysis (see \eqref{eqn:sse-comp}) we have that there exists an assignment to the long code tables $\{f_A\}_{A \in V^R}$ such that $\Ex_{(A,x,z)} f(A,x,z) = \mu$ which passes the test with probability at least $\Omega(r^{-3}\mu/\log(1/\mu))$. On the other hand, if $G$ is a NO instance, then the soundness analysis (see \eqref{eqn:sse-sound}) shows that for every assignment to long code tables $\{f_A\}_{A \in V^R}$ satisfying $\Ex_{(A,x,z)} f(A,x,z) = \mu$ passes the test with probability at most $O\left(\mu^r\right)$. Combining the two directions completes the proof of Theorem \ref{thm:dksh-redn}.
 
\section{Auxiliary Lemmas for Theorem \ref{thm:dksh-redn}} \label{sec:sse-dec}

The following lemma is used to bound the fraction of vertices in $V^R$ for which the averaged functions $g_A:\Omega^R \to [0,1]$ have influential coordinates.

\begin{lemma}		\label{lem:sse-dec}
	Let $(\Omega^R,\gamma^R)$ be a product probability space. Let $\{f_A\}_{A \in V^R}$ be functions defined on the product probability space $f_A:\Omega^R \to [0,1]$ that are permutation invariant i.e., for every $A \in V^R, \omega \in \Omega^R$ and permutation $\pi:[R] \to [R]$ we have $f_{\pi(A)}(\pi(\omega)) = f_A(\omega)$. Furthermore, define the averaged functions $g_A := \Ex_{B \sim G^{\otimes R}_\eta(A)} f_A$. Then if $G$ is a NO instance (as in the statement of Theorem \ref{thm:dksh-redn}), we have 
	\[
	\Pr_{A \sim V^R} \left[\max_{i \in [R]} \Inf{i}{T_{1 - \eta} g_A} \geq \tau\right] < \frac{\nu^2}{8r}.
	\]
\end{lemma}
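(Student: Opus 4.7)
I would prove the lemma by contradiction, following the SSE-decoding blueprint of \cite{RST12}. Suppose $p := \Pr_A[\max_i \Inf{i}{T_{1-\eta}g_A} \geq \tau] \geq \nu^2/(8r)$; the plan is to extract from this a subset $S \subseteq V$ with volume in $[\delta/M, M\delta]$ and $\phi_G(S) < 1-\epsilon$, contradicting the assumption that $G$ is an $(\epsilon,\delta,M)$-SSE NO-instance.

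First I would reduce to a single coordinate using permutation invariance. Since $g_{\pi(A)}(\pi(\omega)) = g_A(\omega)$ for every $\pi \in \mathds{S}_R$, we have $\Inf{i}{T_{1-\eta}g_A} = \Inf{\pi^{-1}(i)}{T_{1-\eta}g_{\pi(A)}}$, so the distribution of $\Inf{i}{T_{1-\eta}g_A}$ is the same for every $i \in [R]$ when $A \sim V^R$. Averaging over a uniformly random $\pi$ and using that $\pi(A)$ has the same law as $A$,
\[
\Pr_A\Big[\Inf{1}{T_{1-\eta}g_A} \geq \tau\Big]
= \tfrac{1}{R}\,\Ex_A\Big|\Big\{i : \Inf{i}{T_{1-\eta}g_A} \geq \tau\Big\}\Big|
\geq p/R,
\]
so at least a $(p/R)$-fraction of $A$ have coordinate $1$ $\tau$-influential in $T_{1-\eta}g_A$.

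Next I would perform the SSE decoding. Viewing $\{f_A\}_{A \in V^R}$ as a single function $f : V^R \times \Omega^R \to [0,1]$, the map $f \mapsto g$ is the tensor power $(W_\eta)^{\otimes R}$ acting on the $V^R$-block, where $W_\eta := (1-\eta) W_G + \eta J$ is the walk on $G$ with reset probability $\eta$ (and $J$ is projection onto the constant direction). Fixing $A_{-1} := (A(2),\ldots,A(R))$ and expanding $T_{1-\eta}g_A$ in the Fourier basis of $(\Omega^R,\gamma^R)$ packages the coefficients with $\sigma(1) \neq 0$ into a vector-valued function $h_{A_{-1}} : V \to \mathbb{R}^m$ for which
\[
\Inf{1}{T_{1-\eta}g_A}
= \big\| (W_\eta h_{A_{-1}})(A(1)) \big\|_2^2 .
\]
Integrating the event $\{\Inf{1}{T_{1-\eta}g_A} \geq \tau\}$ over $A(1) \sim V$ yields a bilinear form in $W_\eta$ of size $\gtrsim \tau p/R$. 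Since $W_\eta$ has the same non-trivial eigenvectors as the walk on $G$ with eigenvalues multiplied by $1-\eta$, a lower bound of this form certifies a vector on $V$ with substantial mass on the eigenspaces of $W_G$ of eigenvalue close to $1$. The Cheeger-type converse for small-set expansion from \cite{RST12} (their Theorem~2.3 / the consequences used to derive Lemma~6.8 there) then produces from such a vector a set $S \subseteq V$ of volume in $[\delta/M, M \delta]$ with $\phi_G(S) < 1-\epsilon$.

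The main obstacle will be the quantitative bookkeeping in the decoding step: one must track how the two noise operators ($T_{1-\eta}$ on $\Omega^R$ and the $W_\eta$-averaging on $V^R$) combine, and ensure that the threshold at which the SSE NO-case guarantee kicks in matches the parameters $\epsilon = \beta^2\nu^4\eta^4\tau^6/(2^{24}r^2)$, $M = 1/\sqrt{\epsilon}$ chosen in the reduction. The cleanest route is to rewrite the $\Omega^R$-side Fourier coefficients as an $L^2$-bounded family of functions on $V^R$ indexed by $\sigma$ and invoke the relevant black-box from \cite{RST12} (the direct analogue of their Lemma~6.8); one needs only to check that nothing in that argument relied on $\Omega$ being the hypercube, which is straightforward since permutation invariance across coordinates is the only property used.
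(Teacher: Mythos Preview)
Your overall strategy (proof by contradiction, then decode a non-expanding small set from the influential coordinates) is the right one, and the permutation-invariance reduction to a single coordinate is a valid observation. But the decoding step as you describe it does not go through, and the paper's argument proceeds quite differently.

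The central problem is your spectral framing. You write that a lower bound on the averaged influence ``certifies a vector on $V$ with substantial mass on the eigenspaces of $W_G$ of eigenvalue close to $1$,'' and then invoke a ``Cheeger-type converse for small-set expansion.'' But the SSE NO-case guarantee does \emph{not} assert that $W_G$ has a spectral gap; NO instances can (and in the conjecturally hard regime, do) have many eigenvalues close to $1$ whose eigenvectors simply fail to correlate with indicators of small sets. Exhibiting such a vector therefore does not contradict the NO hypothesis, and no purely spectral Cheeger argument can close this gap. Separately, the identity $\Inf{1}{T_{1-\eta}g_A} = \|(W_\eta h_{A_{-1}})(A(1))\|_2^2$ is not correct as stated: the averaging operator on the $V^R$-block is $W_\eta^{\otimes R}$, which acts on all coordinates of $A$, so the dependence on $A(1)$ cannot be isolated into a single application of $W_\eta$ without further averaging that you never specify.

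What the paper actually does (and what the RST12 black-box really is) is influence decoding via label lists. For each $A$ one forms $L_{A,1} = \{i : \Inf{i}{T_{1-\eta}f_A} \geq \tau/2\}$ and $L_{A,2} = \{i : \Inf{i}{T_{1-\eta}g_A} \geq \tau/2\}$, both of size $O(1/(\eta\tau))$. If $g_A$ has a $\tau$-influential coordinate $i_A$, then by convexity of influence a $\tau/2$-fraction of $B \sim G_\eta^{\otimes R}(A)$ have $i_A \in L_{B,1}$. One then defines a randomized labeling $F : V^R \to [R]$ by sampling uniformly from one of these lists, and shows that the label-match probability (in the sense of the paper's Lemma~\ref{lem:sse-label}, which is Lemma~6.11 plus Claim~A.1 of \cite{RST12}) is at least $\Omega(\nu^2\eta^2\tau^3/r)$. \emph{That} lemma---not a spectral Cheeger inequality---is the black-box that converts a good labeling of $V^R$ into a set $S \subseteq V$ with volume in $[\delta/M, M\delta]$ and expansion below $1-\epsilon$, yielding the contradiction. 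Permutation invariance is used only to remove the outer random permutations from the match condition, not to reduce to a single coordinate; in particular the argument never pays the factor $1/R$ your reduction incurs.
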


The proof of the above uses the following lemma from \cite{RS10} which is based on the reduction from \ug~to \smallsetexpansion. Towards stating the lemma, let us introduce the noise operator $T^V_{1 - \eta}$ in the product space $V^R$ which is defined as follows. For $A \in V^R$, $\tilde{A} \sim T^V_{1 - \eta}(A)$ is sampled as follows. For every $i \in [R]$, do the following independently: with probability $1 - \eta$, set $\tilde{A}(i) = A(i)$, and with probability $1 - \eta$, set $\tilde{A}(i) \sim V$. 	

\begin{lemma}[Lemma 6.11 + Claim A.1~\cite{RST12}]				\label{lem:sse-label}
	Let $G$ be a graph $G = (V,E)$. Let $(\tilde{A},\tilde{B})$ be a distribution over vertex pairs generated as follows. Sample $A' \sim V^R$ and let $\tilde{A} \sim T^V_{1 - \eta'}(A')$ and let $\tilde{B} \sim G^{\otimes R}_{1 - \eta'}(A')$. Let $F:V^R \to [R]$ be an assignment satisfying the following.
	\begin{equation}			\label{eqn:sse-label}
	\Pr_{\tilde{A},\tilde{B}} \Pr_{\pi_{\tilde{A}},\pi_{\tilde{B}} \sim \mathds{S}_R} \left[\pi^{-1}_{\tilde{A}}\left(F\left(\pi_{\tilde{A}} \circ \tilde{A}\right)\right) = \pi^{-1}_{\tilde{B}}\left(F\left(\pi_{\tilde{B}} \circ \tilde{B}\right)\right) \right] \geq \zeta.
	\end{equation}
	Then there exists a set $S \subset V$  with ${\sf vol}(S) \in \left[\frac{\zeta}{16R},\frac{3}{\eta' R}\right]$ such that $\phi_G(S) \leq 1- \zeta/16$.
\end{lemma}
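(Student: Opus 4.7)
The plan is to follow the standard ``labeling to non-expanding set'' recipe used in Unique-Games-to-SSE reductions (as in \cite{RS10,RST12}). The proof proceeds in three stages: (i) fold the position-valued assignment $F$ into a distribution on $V$; (ii) translate the matching guarantee \eqref{eqn:sse-label} into graph-consistency for this distribution; (iii) extract $S$ by a threshold-rounding argument.

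\textbf{Stage 1: Folding $F$ into a distribution on $V$.} For each $A \in V^R$ and $\pi \in \mathds{S}_R$, define the ``chosen vertex'' $\kappa(A,\pi) := A(\pi^{-1}(F(\pi \circ A))) \in V$, namely: apply $F$ to the permuted tuple, pull the chosen index back through $\pi^{-1}$, and read off the vertex of $A$ at that index. This induces a probability distribution $p: V \to [0,1]$ via $p(v) := \Pr_{A \sim V^R,\ \pi \sim \mathds{S}_R}[\kappa(A,\pi)=v]$, with $\sum_v p(v) = 1$.

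\textbf{Stage 2: Translating the matching guarantee.} Let $i^*(A,\pi) := \pi^{-1}(F(\pi \circ A))$. By hypothesis, with probability at least $\zeta$ over jointly sampled $(\tilde A,\tilde B,\pi_{\tilde A},\pi_{\tilde B})$, we have $i^*(\tilde A, \pi_{\tilde A}) = i^*(\tilde B, \pi_{\tilde B}) =: i^\star$. Conditioned on this event, the pair $(\tilde A(i^\star), \tilde B(i^\star))$, which equals $(\kappa(\tilde A,\pi_{\tilde A}), \kappa(\tilde B,\pi_{\tilde B}))$, is distributed as a single coordinate of the correlated pair coming from $T^V_{1-\eta'}$ and $G^{\otimes R}_{1-\eta'}$ applied to $A'(i^\star)$. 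Up to the noise parameters this marginally behaves as a random $G$-edge, so the distribution on $V \times V$ induced by $(\kappa(\tilde A,\pi_{\tilde A}),\kappa(\tilde B,\pi_{\tilde B}))$ has mass $\Omega(\zeta)$ on $G$-edges.

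\textbf{Stage 3: Threshold rounding to extract $S$.} For $\theta \in (0,1]$ set $S_\theta := \{v \in V : p(v) \geq \theta\}$, so ${\sf vol}(S_\theta) \leq 1/\theta$ by Markov. On the other hand, for $(v_A,v_B) := (\kappa(\tilde A,\pi_{\tilde A}),\kappa(\tilde B,\pi_{\tilde B}))$, the boundary event $\{v_A \in S_\theta\} \wedge \{v_B \notin S_\theta\}$ can be controlled in terms of $\zeta$ using the reversibility of the $G$-walk and the matching guarantee from Stage 2. Sampling $\theta$ on a logarithmic scale $\theta \in \{2^{-k}\}$ for $k = 1,\ldots,O(\log R)$ and averaging, we locate some $\theta^\star$ so that $S := S_{\theta^\star}$ satisfies ${\sf vol}(S) \in \left[\zeta/(16R),\ 3/(\eta'R)\right]$ and $\phi_G(S) \leq 1 - \zeta/16$. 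The volume lower bound comes from the matching mass $\zeta$ and the range of $\theta$; the volume upper bound follows from the smoothing effect of $T^V_{1-\eta'}$, which keeps $p$ from concentrating on fewer than $\Omega(\eta' R)$ vertices of $V$.

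\textbf{Main obstacle.} The hard part is the simultaneous control of volume bounds \emph{and} non-expansion in Stage 3. The upper bound ${\sf vol}(S) \leq 3/(\eta'R)$ in particular is delicate: it relies on the fact that after the noise operator $T^V_{1-\eta'}$ acts, the folded distribution $p$ cannot be concentrated on too few vertices, and quantifying this requires a second-moment calculation on $p$ over the random permutations. The remaining argument is a standard SSE-style threshold analysis along the lines of \cite{RS10,RST12}.
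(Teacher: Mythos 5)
This lemma is imported verbatim from \cite{RST12} (it is even labeled ``Lemma 6.11 + Claim A.1~\cite{RST12}''); the paper gives no proof of it and simply uses it as a black box, combined with Observation~\ref{obs:d-match}, inside the proof of Lemma~\ref{lem:sse-dec}. So there is no in-paper argument to compare your sketch against.

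Taken on its own merits, your outline has the right shape for the \cite{RS10,RST12}-style ``labeling to non-expanding set'' conversion: fold $F$ to a distribution $p$ on $V$ via a decoded vertex $\kappa(A,\pi)$, argue that the matching hypothesis \eqref{eqn:sse-label} puts $\Omega(\zeta)$ mass on near-$G$-edges, and then threshold-round $p$. But the proposal stops exactly where the content of the cited lemmas begins: Stage 3 is asserted rather than proved, and you yourself flag it as the ``main obstacle.'' Two specific gaps. First, in Stage 2 you claim that, \emph{conditioned} on the indices matching, $(\tilde A(i^\star),\tilde B(i^\star))$ is distributed as a single coordinate of the unconditioned correlated pair; this is not immediate, since the matching event $i^*(\tilde A,\pi_{\tilde A})=i^*(\tilde B,\pi_{\tilde B})$ is a function of all $R$ coordinates and both permutations, so the conditioning can skew the marginal at the selected coordinate. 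Second, the volume upper bound $3/(\eta' R)$ is not a ``smoothing'' consequence in any obvious sense; in \cite{RST12} it comes from a quantitative anti-concentration argument for the decoded distribution (essentially bounding $\max_v p(v)$ by exploiting the independent per-coordinate resampling in $T^V_{1-\eta'}$), and the threshold $\theta^\star$ is chosen by a careful averaging over scales that trades off volume against the non-expansion bound. Your proposal names these moves but does not carry them out, so as written it is an outline that delegates precisely the part the cited lemma exists to settle.
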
 

We shall also need the following easy observation.

\begin{observation}				\label{obs:d-match}
Consider the distribution on $(\tilde{A},\tilde{B})$ from Lemma \ref{lem:sse-label} instantiated with $\eta'$ satisfying $1 - \eta' = \sqrt{1 - \eta}$. Then the distribution over $(A,B)$ sampled as $A \sim V^R$ and $G^{\otimes R}_{\eta}(A)$ is identical to that of $(\tilde{A},\tilde{B})$. 
\end{observation}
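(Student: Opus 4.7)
The plan is to reduce to a single coordinate via the product structure of both distributions, and then verify that the joint PMFs on $V \times V$ coincide.

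Both distributions factor over the $R$ coordinates. On the $(A,B)$-side this is immediate, since $A \sim V^R$ is a product measure and $G^{\otimes R}_\eta$ acts coordinatewise by definition. On the $(\tilde{A},\tilde{B})$-side, $A' \sim V^R$ is a product measure and both $T^V_{1-\eta'}$ and $G^{\otimes R}_{1-\eta'}$ are tensored single-coordinate operators, so the pair $(\tilde{A}(i),\tilde{B}(i))$ depends only on $A'(i)$ and is independent across $i$. Hence it suffices to verify the claim for one coordinate.

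In a single coordinate, $A$ is uniform on $V$ and $B \mid A$ is a walk step from $A$ with probability $1-\eta$ and a uniform resample with probability $\eta$, so the joint PMF is
\[
P(A=a,\,B=b) \;=\; \frac{1-\eta}{|V|}\,W(a,b) \;+\; \frac{\eta}{|V|^{2}},
\]
where $W$ is the random-walk transition kernel of $G$ (symmetric and doubly stochastic, since $G$ is regular). For the other side I condition on $A' \in V$. Given $A'$, the variables $\tilde{A}$ and $\tilde{B}$ are independent: $\tilde{A}$ equals $A'$ with probability $1-\eta'$ and is uniform otherwise, while $\tilde{B}$ is a walk step from $A'$ with probability $1-\eta'$ and is uniform otherwise. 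Partitioning into the four ``signal/noise'' cases and averaging over the uniform $A'$, regularity guarantees that any case involving at least one resample collapses to the uniform joint on $V \times V$ (here one uses that the walk preserves the uniform measure), while the ``signal/signal'' case contributes $\tfrac{1}{|V|}W(a,b)$. Weighting by case probabilities gives
\[
P(\tilde{A}=a,\,\tilde{B}=b) \;=\; (1-\eta')^{2}\,\frac{W(a,b)}{|V|} \;+\; \bigl(1-(1-\eta')^{2}\bigr)\,\frac{1}{|V|^{2}}.
\]

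Matching the two PMFs coefficient by coefficient reduces to the single identity $(1-\eta')^{2}=1-\eta$, which is precisely the choice $1-\eta' = \sqrt{1-\eta}$ stipulated in the observation. No serious technical obstacle is anticipated: the content is simply that the $\eta$-noise of a correlated walk admits a symmetric decomposition into two independent $\eta'$-noises applied to the two endpoints, and these compose multiplicatively in the retention probability, producing the square-root relation.
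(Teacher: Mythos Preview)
Your proof is correct. Both you and the paper reduce to a single coordinate via the product structure and then identify the two joint laws on $V \times V$; the difference is in how that identification is carried out. You condition on the hidden vertex $A'$, note that $\tilde A$ and $\tilde B$ are conditionally independent, and explicitly compute the four-case mixture to read off the coefficients $(1-\eta')^2$ and $1-(1-\eta')^2$ in front of $W(\cdot,\cdot)/|V|$ and $1/|V|^2$. The paper instead argues operator-theoretically: it uses reversibility of $T^V_{1-\eta'}$ to rewrite the conditional law of $A'$ given $\tilde A$, obtaining $\tilde B \sim T^V_{1-\eta'} \circ G \circ T^V_{1-\eta'}(\tilde A)$, and then invokes commutativity of $G$ and $T^V_{1-\eta'}$ (both reversible with respect to the uniform measure) to collapse this to $T^V_{(1-\eta')^2} \circ G = G_\eta$. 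Your argument is more elementary and self-contained; the paper's is slicker once one is comfortable with the Markov operator calculus and generalizes more readily to other noise compositions.
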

\begin{proof}
It suffices to prove the claim for $R = 1$. Firstly, note that in the setting of Lemma \ref{lem:sse-label}, the random vertex $\tilde{A}$ is marginally distributed uniformly over $V$. Since the random walk corresponding to operator $T^V_{1 - \eta'}$ is reversible, fixing $\tilde{A}$, observe that we have that $A'$ is distributed as $A' \sim T^V_{1-\eta'}(\tilde{A})$. Furthermore, fixing $A'$, we have $\tilde{B} \sim T^V_{1 - \eta'} \circ G(A')$. Overall, fixing $\tilde{A}$ we have $\tilde{B} \sim T^V_{1 - \eta'} \circ G \circ T^V_{1 - \eta'}(\tilde{A})$. Furthermore, since operators $T^V_{1 - \eta'}$ and $G$ are reversible, they commute, and hence we have 
\[
T^V_{1 - \eta'} \circ G \circ T^V_{1 - \eta'} = T^V_{1 - \eta'} \circ T^V_{1 - \eta'} \circ G = T^V_{(1 -\eta')^2} \circ G = G_\eta.
\]
Hence, we can equivalently think of the $(\tilde{A},\tilde{B})$ pair as being generated as $\tilde{A} \sim V$ and $\tilde{B} \sim G_\eta(\tilde{A})$ which establishes the claim.
\end{proof}
We now use the above to prove Lemma \ref{lem:sse-dec}.
\begin{proof}[Proof of Lemma \ref{lem:sse-dec}]	
	The proof of the lemma again goes through the standard influence decoding argument. For contradiction, assume that 
	\[
	\Pr_{A \sim V^R} \left[\max_{i \in [R]} \Inf{i}{T_{1 - \eta}g_A} \geq \tau\right] \geq \frac{\nu^2}{8r}
	\]
	For every $A \in V^R$, define the following sets
	\[
	L_{A,1} := \left\{ i \in [R] \Big| \Inf{i}{T_{1 - \eta} f_A}\geq \frac{\tau}{2} \right\}
	\ \ \ \ \ \ \textnormal{ and } \ \ \ \ \ 
	L_{A,2} := \left\{ i \in [R] \Big| \Inf{i}{T_{1 - \eta} g_A}\geq \frac{\tau}{2} \right\}.
	\]
	Now consider the following randomized construction of $F:V^R \to [R]$. For every $A \in V^R$ do the following randomly.
	\begin{itemize}
		\item W.p. $1/2$, if $L_{A,1} \neq \emptyset$, set $F(A) \sim L_{A,1}$, otherwise set $F(A)$ arbitrarily.
		\item W.p. $1/2$, if $L_{A,2} \neq \emptyset$, set $F(A) \sim L_{A,2}$, otherwise set $F(A)$ arbitrarily.
	\end{itemize}
	We now bound the expected value of the LHS of \eqref{eqn:sse-label} under the randomized construction of $F$. Towards that, define $V' \subset V^R$ to be the set of $A$'s for which $L_{A,2} \neq \emptyset$. For any such $A \in V^R$ with $L_{A,2} \neq \emptyset$, there exists coordinate $i_A \in [R]$ for which $\Inf{i_A}{T_{1 - \eta} g_A} \geq \tau$. Then using the convexity of influences we have 
	\[
	\tau \leq \Inf{i_A}{T_{1 - \eta}g_A} = \Inf{i_A}{\Ex_{B \sim G^{\otimes R}_\eta(A)}T_{1 - \eta}f_B} \leq \Ex_{B \sim G^{\otimes R}_\eta(A)} \left[\Inf{i_A}{T_{1 - \eta}f_B}\right], 
	\]
	which in turn by averaging implies that 
	\[
	\Pr_{B \sim G^{\otimes R}_\eta(A)} \left[\Inf{i_A}{T_{1 - \eta}f_B} \geq \frac{\tau}{2}\right] \geq \frac{\tau}{2}.
	\]
	Then for any such $A \in V'$, we identify $\cN(A) \subset V^R$ as the subset of vertices $B$ for which $i_A$ is at least $\tau$-influential in $T_{1 - \eta}f_B$. 
	
	Furthermore, it is folklore that for any function $g_A$ with variance at most $1$ (Lemma \ref{lem:inf-size}), we have 
	\[
	\left|\left\{i \in [R] : \Inf{i}{T_{1 - \eta}g_A} > \tau \right\}\right| \leq \frac{1}{\eta\tau}
	\]
	This implies that $|L_{A,i}| \leq 2/\tau\eta$ for any $A \in V^R$ and $i = 1,2$.  Now note that since the global assignment $f:V^R \times \Omega^R \to [0,1]$ is permutation invariant, for any $\pi:[R] \to [R]$ we have 
	\[
	f_{\pi(A)}(\omega) = f_{\pi(A)}\left(\pi \circ \pi^{-1}(\omega)\right) = f_A(\pi^{-1}(\omega)) \qquad\qquad \forall A \in V^R, \omega \in \Omega^R,
	\]
	which implies that 
	\[
	L_{\pi(A),i} = \left\{\pi(j) | j \in L_{A,i}\right\}
	\]
	for $i = 1,2$, i.e., the label lists are permutation invariant. Hence, for any fixed $A \in V^R$, and any fixed permutation $\pi:[R] \to [R]$, randomizing over the choice of $F$, we have that $\pi^{-1}(F(\pi \circ A))$ is identically distributed to $F(A)$. Equipped with this observation, we now proceed to bound the LHS of \eqref{eqn:sse-label} under the randomized assignment $F$. Let $(\tilde{A},\tilde{B})$ be distributed over $V^R \times V^R$ as in Lemma \ref{lem:sse-label} with $\eta'$ satisfying $1 - \eta' = \sqrt{1 - \eta}$. Then,
	\begin{align*}
	&\Ex_F\Ex_{(\tilde{A},\tilde{B})} \Ex_{\pi_{\tilde{A}},\pi_{\tilde{B}}} \left[\mathbbm{1} \left(\pi^{-1}_{\tilde{A}}\left(F\left(\pi_{\tilde{A}} \circ \tilde{A}\right)\right) = \pi^{-1}_{\tilde{B}}\left(F\left(\pi_{\tilde{B}} \circ \tilde{B}\right)\right) \right)\right]	\\
	&= \Ex_F\Ex_{A \sim V^R}\Ex_{B \sim G^{\otimes R}_\eta(A)} \Ex_{\pi_A,\pi_B} \left[\mathbbm{1} \left(\pi^{-1}_A\left(F\left(\pi_A \circ A\right)\right) = \pi^{-1}_B\left(F\left(\pi_B \circ B\right)\right) \right)\right]		\tag{Observation \ref{obs:d-match}} \\
	&\geq \Pr\left[A \in V'\right]\Ex_F\Ex_{A \sim V' }\Ex_{B \sim G^{\otimes R}_\eta(A)} \Ex_{\pi_A,\pi_B} \left[\mathbbm{1} \left(\pi^{-1}_A\left(F\left(\pi_A \circ A\right)\right) = \pi^{-1}_B\left(F\left(\pi_B \circ B\right)\right) \right)\right] \\ 
	&\geq \frac{\nu^2}{8r}\Ex_F\Ex_{A \sim V' } \Pr_{B \sim G^{\otimes R}_\eta(A)}\left[B \in \cN(A)\right]\Ex_{B \sim \cN(A)} \Ex_{\pi_A,\pi_B}\left[\mathbbm{1} \left(\pi^{-1}_A\left(F\left(\pi_A \circ A\right)\right) = \pi^{-1}_B\left(F\left(\pi_B \circ B\right)\right) \right)\right] \\ 
	&\geq \frac{\nu^2\tau}{16r}\Ex_{A \sim V' }\Ex_{B \sim \cN(A)} \Ex_F\left[\mathbbm{1} \left(F(A) = F(B) = i_A \right)\right] 		\tag{Permutation Invariance}\\
	&= \frac{\nu^2\tau}{16r}\Ex_{A \sim V' }\Ex_{B \sim \cN(A)} \Ex_F\left[\frac{1}{|L_{A,2}|}\cdot \frac{1}{|L_{B,1}|} \right] 	\\
	&\geq \frac{\nu^2\eta^2\tau^3}{c r}.  
	\end{align*}
	where $c = 64$. The above computation implies that there exists a labeling $F:V^R \to [R]$ for which the RHS of \eqref{eqn:sse-label} is at least $\zeta:= \nu^2 \eta^2\tau^3/cr$. Therefore, using Lemma \ref{lem:sse-label}, we get that there exists a set $S \subset V$ such that 
	\begin{align*}
		{\sf Vol}(S) &\in \left[\frac{\nu^2\eta^2\tau^3}{16cr\cdot R},\frac{3}{\nu^2 \eta' R}\right] \\
					 & \subset  \left[\frac{\beta\nu^2\eta^2\tau^3}{16cr} \cdot \delta,\frac{6\beta r}{\nu^2 \eta}\cdot \delta\right] 	\tag{Since $R = 1/(r\beta\delta)$,$\eta' \geq \eta/2$}  \non\\
					 &\subseteq  \left[\frac{\delta}{M} , M \delta\right], 		\tag{From our choice of $M$}		\non\\
	\end{align*}
	and  $\phi_G(S) \leq 1- \nu^2\eta^2\tau^3/(16 cr) < 1 - \epsilon$ from our choice of $\epsilon$, which contradicts the NO case guarantee of $G$.
\end{proof}

{\bf Proof of Lemma \ref{lem:sse-dec1}}. Using the above, we can complete the proof of Lemma \ref{lem:sse-dec1}.

\begin{proof}[Proof of Lemma \ref{lem:sse-dec1}]
	Recall that $V_{\rm nice}$ is defined as 
	\[
	V_{\rm nice} := \left\{A \in V^R \Big| \max_{i \in [R]} \Inf{i}{T_{1 - \eta}g_A} \geq \tau \right\}
	\]
	where $g_A = \Ex_{B \sim G^{\otimes R}(A)} \bar{f}_B$ are defined over the product probability $L_2(\Omega^R,\gamma^R)$ and $\bar{f}_A$ is defined as 
	\[
	\bar{f}_A(x,z) = \Ex_{(B',x')\sim M_z(B,x)} \Ex_{\pi \sim \mathbbm{S}_R}\left[f\left(\pi \circ (B',x',z)\right)\right].
	\]
	Note that by definition of $\bar{f}_A$, for any permutation $\pi':[R] \to [R]$ we have 
	\[
	\bar{f}_A(\pi'\circ(A,x,z)) = \bar{f}_A(A,x,z)
	\]
	i.e, $\{\bar{f}_A\}_{A \in V^R}$ is a family of permutation invariant functions satisfying the premise of Lemma \ref{lem:sse-dec}. Hence, using Lemma \ref{lem:sse-dec} we get that 
	\[
	\frac{|V^c_{\rm nice}|}{|V|^R} = \Pr_{A \sim V^R} \left[\max_{i \in [R]} \Inf{i}{T_{1 - \eta}g_A} \geq \tau \right] \leq \nu,
	\]
	which finishes the proof.
\end{proof}

\section{Hardness of \mkcsp}

We establish our hardness result using a factor preserving reduction from the following variant of \uniquegames.

\begin{definition}[$(\epsilon_c,\epsilon_s)$-\uniquegames]
	A \ug~instance $\cG(V_\cG,E_\cG,[t],\{\pi_e\}_{e \in E})$ is a $2$-CSP on the vertex set $V_{\cG}$, edge set $E_{\cG}$ and label set $[t]$. Each edge $e = (u,v)$ is identified with a bijection constraint $\pi_{u \to v}:[t] \to [t]$. A labeling of the vertices $\sigma: V_{\cG} \to [t]$ satisfies the edge constraint $e = (u,v)$ if and only if $\pi_{u \to v}(\sigma(u)) = \sigma(v)$. The objective of a {\sc Unique Games} instance is to find a labeling $\sigma:V_\cG \to [t]$ which satisfies the maximum fraction of constraints -- we will call this maximum fraction as as the value of the {\sc Unique Game} and denote it by ${\sf Opt}(\cG)$.   
	
	In particular, a $(\epsilon_c,\epsilon_s)$-\uniquegames~instance $\cG = (V,E,[t],\{\pi_{u \to v}\}_{(u,v) \in E})$ is the decision problem where the objective is to distinguish between the two cases:
	\[
	\textnormal{YES Case:} \ \ \ {\sf Opt}(\cG) \geq 1 - \epsilon_c
	\qquad\qquad\textnormal{and} \qquad\qquad 
	\textnormal{NO Case:} \ \ \ {\sf Opt}(\cG) \leq \epsilon_s.
	\]
\end{definition}

Our hardness reduction starts from the hard instance of \ug~using Khot's Unique Games Conjecture.

\begin{conjecture}[Unique Games Conjecture~\cite{Khot02a}]
	There exists constant $\epsilon_0 \in (0,1)$ such that the following holds. For every fixed choice of $\epsilon_c,\epsilon_s \in (0,\epsilon_0)$, there exists $t = t(\epsilon_c,\epsilon_s)$ such that the $(\epsilon_c,\epsilon_s)$-\uniquegames~problem is hard on label sets of size $t$.
\end{conjecture}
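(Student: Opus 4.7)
The final statement is Khot's Unique Games Conjecture, which is a famous open problem rather than a theorem with a known proof, so any honest ``proof proposal'' must begin by acknowledging that no proof is known and outlining the best current partial progress. The plan therefore is not to produce a full proof but to describe the most plausible line of attack, based on the machinery developed around the 2-to-2 (equivalently 2-to-1) Games Theorem of Khot--Minzer--Safra and Dinur--Khot--Kindler--Minzer--Safra, which is currently the strongest known result in this direction.

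The natural three-step plan is as follows. First, one would start from NP-hardness of a strong base problem, typically Gap-3LIN or smooth LabelCover with subconstant soundness obtained via parallel repetition. Second, one would compose this outer verifier with an inner dictatorship test that produces the desired bijective (unique) constraint structure on a label set of size $t(\epsilon_c, \epsilon_s)$. The natural gadget here is a Grassmann-code or short-code style inner verifier, whose soundness analysis reduces to proving an expansion/pseudorandomness statement about sets in Grassmann graphs; the recent line of work established exactly such an expansion theorem in the 2-to-2 regime. Third, one would verify that after composition the YES case preserves completeness $1 - \epsilon_c$ and the NO case preserves soundness $\epsilon_s$, both at the same label size $t$.

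The main obstacle, and the reason UGC remains open while the 2-to-2 Games Conjecture has been resolved, is fundamentally asymmetric: the Grassmann expansion machinery developed to date produces PCPs with $2$-to-$2$ constraints and \emph{imperfect} completeness, and translating this into a genuinely $1$-to-$1$ (unique) constraint with near-perfect completeness $1 - \epsilon_c$ appears to require either (i) a fundamentally different combinatorial expansion statement on a richer family of graphs that supports bijective rather than $2$-to-$2$ structure, or (ii) a new PCP composition technique that can amplify $2$-to-$2$ satisfaction into $1$-to-$1$ satisfaction without blowing up the soundness. Every known approach to (i) or (ii) runs into essentially the same barrier, namely that the natural inner verifier gadgets with bijective constraints do not admit a soundness analysis that is compatible with subconstant-influence decoding in the almost-satisfiable regime.

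Consequently, the realistic ``proof plan'' for this paper is the same as the one executed by the authors: treat UGC as an assumed hypothesis, cite Khot's original formulation, and build the subsequent reduction (Theorem~\ref{thm:csp-hard}) conditional on it. Any actual proof would require a major breakthrough beyond the scope of this work, and I would expect the decisive step to be the construction of a new family of Cayley-like or Grassmann-like graphs whose small-set expansion profile is compatible with bijective outer constraints and near-perfect completeness.
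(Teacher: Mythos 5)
You correctly recognize that this is a conjecture, not a theorem: the paper does not (and cannot) prove it, but simply states it, cites Khot~\cite{Khot02a}, and uses it as a hardness assumption in Theorem~\ref{thm:ug-redn} and Theorem~\ref{thm:csp-hard}. Your discussion of the 2-to-2 Games Theorem and the Grassmann-expansion barrier is accurate background on the state of the art, but it goes beyond anything in the paper; the paper's treatment is exactly the bare citation you describe in your final paragraph, so your proposal matches the paper's approach.
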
 

Our main result follows directly from the following theorem which gives a factor preserving reduction from \ug~to {\sc Max}-$k$-{\sc CSP}.

\begin{theorem}					\label{thm:ug-redn}
	The following holds for every $R,k$ such that $R > 2^k$. Let $\epsilon_c =2/10k$ and $\epsilon_s = (Rk)^{-4R^2k^2}$. Then there exists a polynomial time reduction from an $(\epsilon_c,\epsilon_s)$-\uniquegames~instance $\cG = (V,E,[t],\{\pi_e\}_{e \in E})$ to a {\sc Max}-$k$-CSP instance $\Psi(V,E,[R],\{\Pi_e\}_{e \in E})$ such that the following properties hold.
	\begin{itemize}
		\item {\bf Completeness}. If $\cG$ is a YES instance, then ${\sf Opt}(\Psi) \geq \frac{C_1}{k^2\log(R)} $.
		\item {\bf Soundness}. If $\cG$ is a NO instance, then ${\sf Opt}(\Psi) \leq C_2  R^{-(k-1)}$.  
	\end{itemize}
	Here $C_1,C_2 > 0$ are absolute constants independent of $R$ and $k$. 
\end{theorem}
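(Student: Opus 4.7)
The reduction follows the template of Theorem \ref{thm:dksh-redn}, with Unique Games replacing \smallsetexpansion~as the outer verifier and with the noisy-leakage machinery removed since no global cardinality constraint needs to be enforced. We begin with the dictatorship test for equality on the $R$-ary cube. Set $\rho^2 = 1/(c k^2 \log R)$ for a sufficiently large absolute constant $c$ and let $\cA$ be the pairwise $\rho^2$-correlated distribution from Definition \ref{defn:dist} on $[R]^k$ with uniform marginals. The test draws $(x_1, \ldots, x_k) \sim \cA^{\otimes t}$ coordinatewise and accepts a folded function $f : [R]^t \to [R]$ iff $f(x_1) = \cdots = f(x_k)$. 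Completeness is immediate: the $j$-th dictator $f(x) = x_j$ is accepted with probability at least $\rho^2$, since in each coordinate the $x_i$'s collapse to a common value with probability $\rho^2$. For soundness, decompose $f$ into its $R$ coordinate indicators $f^{(r)} : [R]^t \to \{0,1\}$, each with $\Ex f^{(r)} = 1/R \leq 2^{-k}$ by folding (Proposition \ref{prop:fold}) together with the assumption $R \geq 2^k$. Writing
\[
\Pr[\text{accept}] = \sum_{r \in [R]} \Ex \prod_{i \in [k]} f^{(r)}(x_i),
\]
and applying Theorem \ref{thm:ks-stab} to each summand with $\nu = R^{-k-2}$ and influence threshold $\tau = \tau(\nu, k, 1/R)$, bounds the total by $3 R^{-(k-1)} + R\nu \leq 4 R^{-(k-1)}$, provided no $f^{(r)}$ has a coordinate whose influence exceeds $\tau$.

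For the outer composition, introduce a folded long code $f_v : [R]^t \to [R]$ for every $v \in V_\cG$. The CSP $\Psi$ has variables indexed by $V_\cG \times [R]^t$, alphabet $[R]$, and one equality-type constraint per sample of the following distribution: pick a pivot $v \in V_\cG$, sample $k$ random neighbors $u_1, \ldots, u_k$ of $v$ together with the bijections $\pi_i := \pi_{v \to u_i}$, draw $(x_1, \ldots, x_k) \sim \cA^{\otimes t}$, and impose $f_{u_1}(x_1 \circ \pi_1^{-1}) = \cdots = f_{u_k}(x_k \circ \pi_k^{-1})$. In the YES case, a $(1-\epsilon_c)$-satisfying labeling $\sigma$ of $\cG$ yields the dictator assignment $f_v(x) = x_{\sigma(v)}$: a union bound over the $k$ edges $(v, u_i)$ shows that $\sigma$ satisfies all of them simultaneously with probability at least $1 - k\epsilon_c \geq 1/2$, under which the $k$ queried values all equal $x_i(\sigma(v))$ and the test accepts with probability at least $\rho^2$. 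Hence ${\sf Opt}(\Psi) \geq \tfrac{1}{2} \rho^2 = \Omega(1/(k^2 \log R))$.

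In the NO case, for each $v \in V_\cG$ form the averaged long code $g_v := \Ex_{u \sim N(v)}[f_u \circ \pi_{v \to u}^{-1}]$ and rerun the soundness analysis with $g_v$ in place of $f$. The conclusion is a dichotomy: either the acceptance probability is at most $4 R^{-(k-1)}$, or a nonnegligible fraction of vertices $v$ admits an indicator $g_v^{(r)}$ with a coordinate whose influence exceeds $\tau$. In the second case, convexity of influence (cf.\ Lemma \ref{lem:sse-dec}) transfers this structure to a matching fraction of UG edges $(u,v)$ that carry correlated influential coordinates in $f_u$ and $f_v$, and choosing a uniformly random label from the list of such coordinates -- of size $O(1/\tau)$ by Lemma \ref{lem:inf-size} -- satisfies an $\Omega(\tau^2/R)$-fraction of $\cG$'s constraints, contradicting the NO guarantee for any $\epsilon_s \leq (Rk)^{-4R^2 k^2}$. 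The principal obstacle is the $R$-ary form of the noise-stability estimate: Theorem \ref{thm:ks-stab} must be applied separately to each of the $R$ coordinate functions $f^{(r)}$ at the small bias $\mu = 1/R$, which forces the sub-polynomial setting of $\epsilon_s$ but keeps the completeness-to-soundness ratio at the advertised $\Omega(k^2 R^{-(k-1)} \log R)$.
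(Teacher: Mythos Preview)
Your approach matches the paper's: both compose a $k$-ary equality dictatorship test over the correlated distribution $\cA_{k,\rho}([R])$ with a \uniquegames~outer verifier, fold the long codes to force $\Ex f^{(r)}=1/R$, arithmetize acceptance as $\sum_{r\in[R]}\Ex\prod_j g_v^{(r)}(z_j)$, and apply Theorem~\ref{thm:ks-stab} coordinatewise before summing. The completeness bound $\Omega(\rho)$ and soundness bound $O(R^{-(k-1)})$ are obtained the same way.

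There is one genuine gap. You never introduce the extra $(1-\eta)$-correlated noise layer that the paper inserts in Step~5 of Figure~\ref{fig:csp-test} (sampling $z'_j\underset{1-\eta}{\sim}z_j$), and your decoding step relies on exactly the consequence of that layer. You invoke Lemma~\ref{lem:inf-size} to bound the label lists by ``$O(1/\tau)$'', but that lemma bounds $|\{i:\Inf{i}{T_{1-\eta}f}\ge\tau\}|\le 1/(\eta\tau)$; without applying $T_{1-\eta}$ somewhere in the construction, the raw influences $\Inf{i}{f_u^{(r)}}$ need not have any bounded-size level set (their sum can be as large as $t$), and the random-label decoding yields no nontrivial UG value. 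The paper fixes this by working throughout with $T_{1-\eta}g_v^{(i)}$ and setting $\eta=1/(k^2R)$, which costs only a $(1-\eta k)$ factor in completeness, makes Lemma~\ref{lem:inf-size} applicable with list size $O(1/(\eta\tau))$, and still lets Theorem~\ref{thm:ks-stab} be invoked on the noised functions. You should add this step (or equivalently switch to low-degree influences) and adjust the third argument of $\tau$ to the minimum atom of the correlated space, which is $\Theta(R^{-k})$ rather than $1/R$.
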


\subsection{The PCP verifier and its Analysis}

The hard instance of the CSP is going to be a hypergraph variant of {\sc UniqueGames}, and the test itself is similar to the test from Theorem \ref{thm:dksh-redn}, the key difference is that here the underlying gadget is the noisy $R$-ary hypercube instead of the $\mu$-biased hypercube. Furthermore, both the test and the analysis here are relatively simpler, since we work with {\sc Unique Games} as the outer verifier instead of \smallsetexpansion. Here the distribution over hyperedge constraints is given by the following dictatorship test. 

\begin{figure}[ht!]
	\begin{mdframed}
		\begin{center}
			\underline{\bf PCP Test for $k$-CSP} \\[5pt]
		\end{center}	
		{\bf Test}: 
		\begin{enumerate}
			\item Sample $v \sim V_\cG$ and $w_1,\ldots,w_k \sim N_\cG(v)$.
			\item Let $\tilde{f}_{w_1},\ldots,\tilde{f}_{w_k}:[R]^t \to [R]$ be the corresponding folded long codes as defined in \eqref{eqn:folded}.
			\item Sample $z \sim [R]^t$ uniformly.
			\item Sample $[R]^t$-valued random variables $z_1,\ldots,z_t$ as follows. For every $i \in [t]$ do the following independently:
			\begin{itemize}
				\item Sample $\theta(i) \sim \{0,1\}_\rho$.
				\item If $\theta(i) = 1$, then for every $j \in [k]$, set $z_j(i) = z(i)$.
				\item If $\theta(i) = 0$, then for every $j \in [k]$, sample $z_j(i) \sim [R]$ independently.
			\end{itemize}
			\item For every $i \in [k]$, let $z'_i \underset{1 - \eta}{\sim} z_i$ in the $[R]^t$ space.
			\item Accept {\em iff} for every $i,j \in [k]$ we have  
			\[
			\tilde{f}_{w_i}\left(\pi_{w_i \to v} \circ z'_i\right) = \tilde{f}_{w_j} \left( \pi_{w_j \to v} \circ z'_j\right)
			\]
		\end{enumerate}
	\end{mdframed}
	\caption{PCP Verifier for Hypergraph Unique Games}
	\label{fig:csp-test}
\end{figure}

The parameters used in the above reduction and its analysis are set to the following values.

\begin{itemize}
	\item $\rho = 1/(C'k^2\log(R))$ where $C'$ is as in Theorem \ref{thm:ks-stab}.
	\item $\nu = R^{-2k}$.
	\item $\eta = \frac{1}{k^2R}$.
	\item $\alpha = \rho R^{-k}$.
	\item $\tau = \tau(\nu,k,\alpha)$ as in Theorem \ref{thm:ks-stab}. 
\end{itemize}	

Note that our setting of $\tau$ depends only on $R$ and $k$. Now we analyze the above reduction.

\subsection{Completeness}				\label{sec:ug-comp}

Suppose ${\sf Opt}(\cG) = 1 - \epsilon_c$, and let $\sigma:V_\cG \to [t]$ be the labeling which achieves the optimal value. For every vertex $v \in V_\cG$, we define $f_{v} := \Lambda_{\sigma(v)}$ to be the $\sigma(v)^{th}$-dictator function. It is easy to see that with probability at least $1 - \epsilon_c k$ over the choices of vertices $v,w_1,\ldots,w_k$ in the test, we have $\sigma(v)  = \pi_{w_i \to v}(\sigma(w_i))$ for every $i \in [k]$. Furthermore, under the test distribution we have  
\begin{align*}
&\Pr_{\theta(\sigma(v))}\Big[\theta(\sigma(v)) = 1\Big] = \rho.
\end{align*}
Additionally, conditioned on $\theta(\sigma(v)) = 1$,  we have  $z'_{1}(\sigma(v)) = z'_2(\sigma(v)) = \cdots = z'_{k}(\sigma(v))$ with probability at least $1 - \eta k$, losing an additional factor of $\eta k$ due to the $(1 - \eta)$-correlated sampling of $z'_j{(\sigma(v))}$ from $z_j{(\sigma(v))}$. Therefore conditioned on the event ``all the permuted labels match'' i.e., $\pi_{w_i \to v} (\sigma(w_i)) = \sigma(v)$, we can bound the probability of the test accepting as:
\begin{align*}			
&\Pr\left[\forall \ i,j, \ \tilde{f}_{w_i}\left(\pi_{w_i \to v} \circ z'_i\right) = \tilde{f}_{w_j} \left( \pi_{w_j \to v} \circ z'_j\right)\right]  		\non\\
&=\Pr\left[\forall \ i,j, \ f_{w_i}\left(\pi_{w_i \to v} \circ z'_i\right) = f_{w_j} \left( \pi_{w_j \to v} \circ z'_j\right)\right]		\tag{Proposition \ref{prop:fold}}  		\non\\
&= \Pr\left[\forall \ i,j, \ z'_i(\sigma(v)) = z'_j(\sigma(v))\right] 		\non\\
&\geq \Pr_{{\theta}}\Big[\theta(\sigma(v)) = 1\Big]\Pr\left[\forall \ i,j, \ z'_i(\sigma(v)) = z'_j(\sigma(v)) \Big| \theta(\sigma(v))= 1\right] 		\non\\
& \geq \rho(1 - \eta k). 
\end{align*}
Since the above bound holds for any choice of $v,w_1,\ldots,w_j$ for which the permuted labels match, overall we have:
\begin{equation}		\label{eqn:ug-comp}
\Pr\left[\mbox{ Test Accepts }\right] \geq (1 - \epsilon_c k) \cdot{\rho (1 - \eta k)} \geq \frac{\rho}{2}
\end{equation}
where the last inequality follows from our choice of parameters $\epsilon_c,\eta$ and $k$. 

\subsection{Soundness}					\label{sec:ug-sound}

Let $\cG$ be a NO instance. Let $\{f_v\}_{v \in \cG}$ be a set of long codes. Since the test in Figure \ref{fig:csp-test} only queries positions with respect to the folded code, without loss of generality, we may assume that the long codes are folded and hence $\tilde{f}_v = f_v$ for every $v \in V_{\cG}$. Given a long code $f_v:[R]^t \to [R]$, as described in Section \ref{sec:r-ary}, we can write $f_v = \left(f^{(1)}_v,\ldots,f^{(R)}_v\right)$, where $f^{(j)}_v:[R]^t \to [0,1]$ is the $j^{th}$ coordinate function. Using the above interpretation, we arithmetize the probability of the test accepting as
\begin{align*}
	\Pr\left[\mbox{ Test Accepts}\right]
	&= \sum_{i = 1}^R \Ex_{v \sim V_{\cG}} \Ex_{w_1,\ldots,w_k \sim N_{\cG}(v)} \Ex_{z'_1,\ldots,z'_k} \left[\prod_{j \in [k]} f^{(i)}_{w_j}\left(\pi_{w_j \to v} \circ z_j\right)\right] \\
	&\overset{1}{=} \sum_{i = 1}^R \Ex_{v \sim V_{\cG}} \Ex_{w_1,\ldots,w_k \sim N_{\cG}(v)} \Ex_{z_1,\ldots,z_k}  \left[\prod_{j \in [k]}\Ex_{z'_i \underset{1 - \eta}{\sim} z_i }f^{(i)}_{w_j}\left(\pi_{w_j \to v} \circ z_j\right)\right] \\
	&= \sum_{i = 1}^R \Ex_{v \sim V_{\cG}} \Ex_{w_1,\ldots,w_k \sim N_{\cG}(v)} \Ex_{z_1,\ldots,z_k} \left[\prod_{j \in [k]} T_{1 - \eta} f^{(i)}_{w_j}\left(\pi_{w_j \to v} \circ z_j\right)\right]
\end{align*}
where step $1$ uses the fact fixing $z_i$, for every $i \in [k]$ every $z'_i$ is an independent $(1 - \eta)$-correlated copy of $z_i$ and in the last step, $T_{1 - \eta}$ denotes the $(1 - \eta)$ correlated noise operator for the inner product space corresponding to the uniform distribution over $[R]^t$. Now, for every $v \in V_{\cG}$, and coordinate $i \in [R]$, define the averaged function
\[
g^{(i)}_v(z) \defeq \Ex_{w \sim N_\cG(v)} \left[f^{(i)}_v\left(\pi_{v \to w} \circ z\right)\right].
\]
Then using the fact that for every $j \in [k]$, $w_j$ is an independently chosen random neighbor of $v$ we can write 
\begin{align}
	&\sum_{i = 1}^R \Ex_{v \sim V_{\cG}} \Ex_{w_1,\ldots,w_k \sim N_{\cG}(v)} \Ex_{z_1,\ldots,z_k} \left[\prod_{j \in [k]} T_{1 - \eta} f^{(i)}_{w_j}\left(\pi_{w_j \to v} \circ z_j\right)\right] \\
	&\sum_{i = 1}^R \Ex_{v \sim V_{\cG}}  \Ex_{z_1,\ldots,z_k} \left[\prod_{j \in [k]}\Ex_{w_j \sim N_{\cG}(v)} T_{1 - \eta} f^{(i)}_{w_j}\left(\pi_{w_j \to v} \circ z_j\right)\right] \\
	&= \sum_{i = 1}^R \Ex_{v \sim V_{\cG}} \Ex_{z_1,\ldots,z_k }\left[\prod_{j \in [k]} T_{1 - \eta} g^{(i)}_{v}\left(z_j\right)\right].			\label{eqn:csp-eq1} 
\end{align}

{\bf Invariance Principle Step}. As in the soundness analysis of Theorem \ref{thm:dksh-redn}, we shall use the fact that for most choices of $v \in V_\cG$, the corresponding averaged coordinate functions $g^{(1)}_{v},\ldots,g^{(R)}_v$ have low influences, and for every such collection of functions, we can bound the summation over expectation terms by $O(R^{-k+1})$. Towards that, define $V_{\rm nice} \subset V_{\cG}$ as 
\[
V_{\rm nice} := \left\{ v \in V_{\cG} \Big|  \max_{\ell \in [R]}\max_{i \in [t]} \Inf{i}{{T}_{1 - \eta} g^{(\ell)}_{v}} > \tau \right\},
\]
where parameters $\nu,\tau,\eta$ are as defined below Figure \ref{fig:csp-test} and $T_{1 - \eta}$ is the $(1 - \eta)$-correlated noise operator in the space $[R]^t$. Since $\cG$ is a NO instance, then for most choices of $v \in V_{\cG}$, the corresponding averaged long codes will have small influences (otherwise we can decode a good labeling for $\cG$). This is stated formally as the following lemma:
\begin{lemma}			\label{lem:ug-dec1}
	Suppose $\cG$ is a NO instance as in the setting of Theorem \ref{thm:ug-redn}. Then,
	\[
	\left| \left\{ v \in V_{\cG} \Big| \max_{\ell \in [R]} \max_{i \in [t]} \Inf{i}{{T}_{1 - \eta} g^{(\ell)}_{v}} > \tau \right\} \right| \leq \nu \cdot|V_\cG|.
	\]
\end{lemma}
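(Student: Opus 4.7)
The plan is to establish the contrapositive via a standard influence-decoding argument: assuming that the set of ``bad'' vertices
\[
V_{\mathrm{bad}} := \left\{ v \in V_\cG \,\middle|\, \max_{\ell \in [R]} \max_{i \in [t]} \Inf{i}{T_{1-\eta} g^{(\ell)}_v} \geq \tau \right\}
\]
has measure strictly more than $\nu$, I would construct from the long codes $\{f_v\}$ a randomized labeling $\sigma : V_\cG \to [t]$ whose expected value exceeds $\epsilon_s = (Rk)^{-4R^2k^2}$, contradicting the NO case assumption.

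For each vertex $v$, first I would define a candidate label list
\[
L_v := \bigcup_{\ell \in [R]} \left\{ j \in [t] : \Inf{j}{T_{1-\eta} f^{(\ell)}_v} \geq \tfrac{\tau}{2} \right\}.
\]
Applying Lemma \ref{lem:inf-size} to each of the $R$ coordinate functions $f^{(\ell)}_v$ (each bounded in $[0,1]$) gives $|L_v| \leq 2R/(\eta \tau)$ uniformly. Sample $\sigma(v)$ uniformly from $L_v$ when it is nonempty, and arbitrarily otherwise. Now for $v \in V_{\mathrm{bad}}$ with witness pair $(\ell_v, i_v)$, the key transfer step uses convexity of influence across the averaging: since $g^{(\ell_v)}_v(z) = \E_{w \sim N_\cG(v)}\bigl[f^{(\ell_v)}_w(\pi_{v \to w} \circ z)\bigr]$ and the map $z \mapsto \pi_{v \to w} \circ z$ permutes the coordinate influences of $f^{(\ell_v)}_w$, I would obtain
\[
\tau \;\leq\; \Inf{i_v}{T_{1-\eta} g^{(\ell_v)}_v} \;\leq\; \E_{w \sim N_\cG(v)}\! \Inf{\pi^{-1}_{v\to w}(i_v)}{T_{1-\eta} f^{(\ell_v)}_w}.
\]
By Markov, for at least a $\tau/2$ fraction of neighbors $w$ of $v$, the coordinate $j_w := \pi^{-1}_{v \to w}(i_v)$ satisfies $\Inf{j_w}{T_{1-\eta}f^{(\ell_v)}_w} \geq \tau/2$, placing $j_w$ in $L_w$. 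On each such edge $(v,w)$, since both lists have size at most $2R/(\eta\tau)$, the randomized $\sigma$ sets $\sigma(v) = i_v$ and $\sigma(w) = j_w$ (thereby satisfying the unique constraint $\pi_{v \to w}(\sigma(w)) = \sigma(v)$) with probability at least $(\eta\tau/(2R))^2$.

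Combining the three sources of loss, the expected fraction of constraints of $\cG$ satisfied by $\sigma$ is at least
\[
|V_{\mathrm{bad}}|/|V_\cG| \,\cdot\, \tfrac{\tau}{2} \,\cdot\, \left(\tfrac{\eta \tau}{2R}\right)^{\!2} \;>\; \frac{\nu \, \eta^2 \tau^3}{8R^2}.
\]
The concluding step is to verify numerically that this quantity exceeds $\epsilon_s$. Since $\tau = \tau(\nu, k, \alpha)$ depends only on $R$ and $k$ (through the choice of $\nu = R^{-2k}$, $\eta = 1/(k^2 R)$, $\alpha = \rho R^{-k}$), the right-hand side is a fixed function of $R$ and $k$, and by picking the constants in $\epsilon_s = (Rk)^{-4R^2 k^2}$ small enough, the strict inequality holds. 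The main obstacles are purely bookkeeping: tracking the bijection $\pi_{v \to w}$ consistently so that influences really do transfer via convexity, and confirming that the parameter choices beneath Figure \ref{fig:csp-test} are strong enough that $\nu\eta^2\tau^3/(8R^2) > \epsilon_s$; the conceptual structure is entirely the standard Unique-Games list-decoding reduction.
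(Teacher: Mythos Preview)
Your argument has one genuine gap: you need $\sigma(v) = i_v$ with probability at least $\eta\tau/(2R)$, which requires $i_v \in L_v$. But $i_v$ is by definition an influential coordinate of $T_{1-\eta}\, g^{(\ell_v)}_v$, whereas your list $L_v$ collects only influential coordinates of the \emph{un-averaged} functions $T_{1-\eta}\, f^{(\ell)}_v$. There is no reason these should coincide; indeed $g^{(\ell_v)}_v$ depends only on the long codes $\{f_w\}_{w \in N_\cG(v)}$ of the \emph{neighbors} of $v$, not on $f_v$ at all. So as written you cannot guarantee $i_v \in L_v$, and the probability bound $(\eta\tau/(2R))^2$ is unjustified.

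The paper (Lemma~\ref{lem:ug-dec}) repairs this in the standard way: it maintains \emph{two} lists at each vertex, one built from influential coordinates of $T_{1-\eta} g_v$ and one from $T_{1-\eta} f_v$, and the randomized labeling flips a fair coin to decide which list to sample $\sigma(v)$ from. With probability $1/4$ the center vertex $v$ draws from its $g$-list (which contains $i_v$) while the neighbor $w$ draws from its $f$-list (which contains $j_w$), and the rest of your computation goes through with an additional factor of $1/4$. Equivalently, you could simply enlarge $L_v$ to include influential coordinates of both $f^{(\ell)}_v$ and $g^{(\ell)}_v$, doubling the list-size bound. Apart from this missing ingredient (and the minor variation that the paper fixes a single $\ell$ by averaging rather than taking a union over $\ell$), your approach matches the paper's.
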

We defer the proof of the above lemma to Section \ref{sec:ug-dec}. As a next step, as in the soundness analysis of Theorem \ref{thm:dksh-redn}, we  will now show again that for every $v \in V_{\rm nice}$, we can bound the corresponding expectation term with $O(R^{-k + 1})$:
\begin{lemma}				\label{lem:csp-clt}
	For every $v \in V_{\rm nice}$ we have 
	\begin{equation}				\label{eqn:csp-clt}	
		\sum_{i = 1}^R \Ex_{(z_1,\ldots,z_k) }\left[\prod_{j \in [k]} T_{1 - \eta} g^{(i)}_{v}\left(z_j\right)\right]
		\leq 3.5 R^{-k + 1}.
	\end{equation}  
\end{lemma}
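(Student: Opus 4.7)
The plan is to apply Theorem~\ref{thm:ks-stab} separately to each coordinate function $T_{1-\eta} g^{(i)}_v$ and then sum the resulting bounds over $i \in [R]$. First, I would verify that the joint distribution on $(z_1,\ldots,z_k)$ produced by Steps 3--5 of Figure~\ref{fig:csp-test} (after absorbing the $(1-\eta)$-resampling into the operator $T_{1-\eta}$) is exactly the $t$-fold product of the distribution $\cA_{k,\rho}([R],\text{uniform})$ from Definition~\ref{defn:dist}: for each coordinate $i \in [t]$, independently with probability $\rho$ all the $z_j(i)$ are identified with a common uniform value, and with probability $1-\rho$ the $z_j(i)$ are drawn independently and uniformly from $[R]$. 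In particular, each $z_j$ is marginally uniform on $[R]^t$.

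Next I would check that for every $v \in V_{\rm nice}$ and every $i \in [R]$, the function $h_i := T_{1-\eta} g^{(i)}_v$ satisfies all hypotheses of Theorem~\ref{thm:ks-stab} with $(\Omega,\gamma)=([R],\text{uniform})$ and $r = k$. By the definition of $V_{\rm nice}$, every coordinate of $h_i$ has influence at most $\tau = \tau(\nu, k, \alpha)$ with $\alpha = \rho R^{-k}$ (matching the minimum-probability atom of $\cA_{k,\rho}([R],\text{uniform})$ up to constants). Since each long code is folded, Proposition~\ref{prop:fold} implies $\Pr_x[F_w(x)=i] = 1/R$ for every $w \in V_{\cG}$ and every $i \in [R]$, so $\Ex[g^{(i)}_v] = 1/R$, and the noise operator preserves means, giving $\mu_i := \Ex[h_i] = 1/R$. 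The bias condition $\mu_i \leq 2^{-k}$ is ensured by the assumption $R \geq 2^k$, and the noise bound $\rho = 1/(C' k^2 \log R) \leq 1/(C' k^2 \log(1/\mu_i))$ matches the requirement of Theorem~\ref{thm:ks-stab}.

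Applying Theorem~\ref{thm:ks-stab} to each $h_i$ yields
\[
\Ex_{(z_1,\ldots,z_k)}\left[\prod_{j=1}^{k} h_i(z_j)\right] \;\leq\; 3\mu_i^{k} + \nu \;=\; 3R^{-k} + \nu
\]
for every $i \in [R]$. Summing over $i$ bounds the left-hand side of \eqref{eqn:csp-clt} by $R(3R^{-k} + \nu) = 3R^{-k+1} + R\nu$. With $\nu = R^{-2k}$ we get $R\nu = R^{-2k+1} \leq \tfrac{1}{2}R^{-k+1}$ for $R \geq 2$ and $k \geq 2$, so the total is at most $3.5\,R^{-k+1}$, exactly as required.

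The proof is essentially a direct plug-in once the distributional identification is made. The only mildly delicate point is matching the product distribution on $(z_1,\ldots,z_k)$ to $\cA_{k,\rho}^{\otimes t}$ and confirming that the $\alpha$ parameter of the correlated space (used to select $\tau$ in Theorem~\ref{thm:ks-stab}) is consistent with the $\tau$ chosen in the parameter list below Figure~\ref{fig:csp-test}; this is why the parameters $\nu, \tau, \alpha$ were pre-set precisely so that the stability bound can be invoked coordinate-wise.
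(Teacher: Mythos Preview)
Your proposal is correct and follows essentially the same approach as the paper's proof: apply Theorem~\ref{thm:ks-stab} to each $T_{1-\eta}g^{(i)}_v$ using folding to get $\mu = 1/R$, obtain the per-coordinate bound $3R^{-k}+\nu$, and sum over $i\in[R]$ using $\nu = R^{-2k}$ to reach $3.5R^{-k+1}$. Your additional explicit verification that the test distribution on $(z_1,\ldots,z_k)$ is $\cA_{k,\rho}^{\otimes t}$ and that the $\alpha$ parameter is consistent is a nice touch that the paper leaves more implicit.
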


\begin{proof}
	Fix a vertex $v \in V_{\rm nice}$. We shall instantiate Theorem \ref{thm:ks-stab} as follows. Let $(\Omega,\gamma)$ denote the uniform distribution on $[R]$. Then $g^{(1)}_v,\ldots,g^{(R)}_v : \Omega^t \to [0,1]$ are all functions on the $t$-wise product probability space $L_2(\Omega^t,\gamma^t)$ satisfying
	\[
	\max_{j \in [t]} \Inf{j}{T_{1 - \eta}g^{(i)}_v} \leq \tau
	\]
	using the definition of $V_{\rm nice}$. Furthermore, by folding (see Proposition \ref{prop:fold}) we have $\mu:= \Ex_{\omega \sim \gamma^t}\left[T_{1  - \eta}g^{(i)}(\omega)\right] = 1/R$ for every $i \in [R]$ and hence our choice of $\rho$ satisfies 
	\[
	\rho = 1/(C'k^2\log(R)) = 1/(C'k^2\log(1/\mu)) .
	\]
	Therefore, for any $i \in [R]$, the function $f = T_{1 - \eta}g^{(i)}_v$ along with the choice of $\rho$ satisfies the conditions of Theorem \ref{thm:ks-stab} and hence we have
	\[
	\Ex_{(z_1,\ldots,z_k) }\left[\prod_{j \in [k]} T_{1 - \eta} g^{(i)}_{v}\left(z_j\right)\right]
	\leq 3R^{-k} + \nu. 
	\]
	Hence applying the above argument point-wise for every $i \in [R]$ we get that 
	\[
	\sum_{i \in [R]}\Ex_{(z_1,\ldots,z_k) }\left[\prod_{j \in [k]} T_{1 - \eta} g^{(i)}_{v}\left(z_j\right)\right]
	\leq 3R \left(R^{-k} + \nu\right) \leq 3.5R^{-k + 1},
	\]
	where the last step follows from $\nu \leq R^{-2k}$ from our choice of parameters.
\end{proof}	

Therefore, using the above we now proceed to upper bound \eqref{eqn:csp-eq1}.
\begin{align}
\sum_{i = 1}^R \Ex_{v \sim V_{\cG}} \Ex_{z_1,\ldots,z_k }\left[\prod_{j \in [k]} T_{1 - \eta} g^{(i)}_{v}\left(z_j\right)\right]
&\leq \Ex_{v \sim V_{\rm nice}}\Big[ 3R^{-k + 1}\Big] + 4R\nu \tag{Lemma \ref{lem:ug-dec}}  \non\\
&\leq 3.5R^{-k + 1} + 4R\nu. 		\tag{Lemma \ref{lem:csp-clt}}		\\
&\leq 4R^{-k + 1} \label{eqn:csp-eq2}
\end{align} 
where the last inequality again follows using $\nu \leq R^{-2k}$ from our choice of parameters.

{\bf Putting Things Together}. Therefore stitching together the bounds from \eqref{eqn:csp-eq1} and \eqref{eqn:csp-eq2}, we get that 
\begin{align}
\Pr\left[\mbox{ Test Accepts }\right]
&\overset{\eqref{eqn:csp-eq1}}{=} \sum_{i = 1}^R \Ex_{v} \Ex_{z_1,\ldots,z_k }\left[\prod_{j \in [k]} T_{1 - \eta} g^{(i)}_{v}\left(z_j\right)\right] \\
&\overset{\eqref{eqn:csp-eq2}}{\leq} 4R^{-k + 1},		\label{eqn:ug-sound}
\end{align}
where the last inequality follows from our choice of $\nu$.

\subsection{Proof of Theorem \ref{thm:ug-redn}}

Let $\cG = (V_{\cG},E_{\cG},[t],\{\pi_e\}_{e \in E})$ be a $(\epsilon_c,\epsilon_s)$-\uniquegames~instance with $\epsilon_c,\epsilon_s$ chosen as in Theorem \ref{thm:ug-redn}. Then, if $\cG$ is a YES instance, the completeness analysis (see Eq. \eqref{eqn:ug-comp}) shows that there exists a choice of assignment to the long code tables $\{f_v\}_{v \in V_{\cG}}$ which passes the test with probability at least $\rho/2 = \frac{1}{4 k^2 \log R}$. On the other hand, if $\cG$ is a NO instance, then the soundness analysis (see Eq. \eqref{eqn:ug-sound}) shows that for any assignment to the long code tables $\{f_v\}_{v \in \cG}$, the test accepts with probability at most $4R^{- k + 1}$. Combining the two guarantees establishes Theorem \ref{thm:ug-redn}.

\subsection{Proof of Lemma \ref{lem:ug-dec1}}			\label{sec:ug-dec}

\begin{lemma}[Folklore]					\label{lem:ug-dec}
	The following holds for any probability space $(\Omega,\mu)$ and $t \in \mathbbm{N}$ large enough. Let $\cG = (V,E,[t],\{\pi_{e}\}_{e \in E})$ be a NO instance with ${\sf Opt}(\cG) < \nu\eta^2\tau^3/(16Rk)$. Let $\{f_v\}_{v \in  V_{\cG}}$ be a family of long codes $f_v:\Omega^t \to [0,1]$ defined over product probability space $(\Omega^t,\mu^t)$. Furthermore, we define the averaged function as $g_v = \Ex_{w \sim N_\cG(v)}\left[\pi_{w \to v} \circ f_v \right]$. Then,
	\[
	\Pr_{v \sim V_{\cG}} \left[\max_{i \in [R]}\max_{j \in [t]} \Inf{j}{T_{1 - \eta}g^{(i)}_v} > \tau \right] \leq \nu.
	\]
\end{lemma}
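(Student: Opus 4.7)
This is the standard influence-decoding reduction to Unique Games soundness. I will assume for contradiction that the set
\[
V_{\rm bad} := \left\{v \in V_\cG \ :\ \max_{i \in [R], j \in [t]} \Inf{j}{T_{1-\eta} g^{(i)}_v} > \tau\right\}
\]
satisfies $|V_{\rm bad}| > \nu |V_\cG|$, and then exhibit a randomized labeling $\sigma: V_\cG \to [t]$ of $\cG$ whose expected value exceeds the stated threshold $\nu\eta^2\tau^3/(16Rk)$, contradicting the hypothesis on ${\sf Opt}(\cG)$.

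The first step is to define short label lists. For each $v \in V_\cG$, set
\[
L_v := \{j \in [t] : \exists\, i \in [R],\ \Inf{j}{T_{1-\eta} f^{(i)}_v} \geq \tau/2\}, \quad L'_v := \{j \in [t] : \exists\, i \in [R],\ \Inf{j}{T_{1-\eta} g^{(i)}_v} \geq \tau\}.
\]
By Lemma \ref{lem:inf-size} applied to each of the $R$ coordinate functions (each valued in $[0,1]$) and a union bound, $|L_v|, |L'_v| \leq 2R/(\eta\tau)$. The randomized labeling draws $\sigma(v)$ uniformly from $L_v$ with probability $1/2$ and uniformly from $L'_v$ with probability $1/2$, defaulting to a fixed label when the relevant list is empty.

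The key step is to propagate influence from $v$ back to its neighbors via convexity. Interpreting the bijection $\pi_{w \to v}: [t] \to [t]$ as a coordinate permutation on $[R]^t$, one has $\Inf{j}{T_{1-\eta}(f^{(i)}_w \circ \pi_{w \to v})} = \Inf{\pi^{-1}_{w \to v}(j)}{T_{1-\eta} f^{(i)}_w}$. Hence for any $v \in V_{\rm bad}$ with witnessing pair $(i_v, j_v)$, convexity of influence under averaging yields
\[
\tau \;<\; \Inf{j_v}{T_{1-\eta} g^{(i_v)}_v} \;\leq\; \Ex_{w \sim N_\cG(v)} \Inf{\pi^{-1}_{w \to v}(j_v)}{T_{1-\eta} f^{(i_v)}_w},
\]
and Markov's inequality then guarantees $\pi^{-1}_{w \to v}(j_v) \in L_w$ for at least a $\tau/2$ fraction of the neighbors $w$ of $v$. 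For each such ``good'' edge $(v,w)$, the independent events $\sigma(v) = j_v$ and $\sigma(w) = \pi^{-1}_{w \to v}(j_v)$ each occur with probability at least $(1/2)/\max(|L_v'|, |L_w|) \geq \eta\tau/(4R)$, and they jointly witness $\pi_{w \to v}(\sigma(w)) = \sigma(v)$ i.e. satisfy the edge. Summing the contribution over all bad $v$ and all good neighbors bounds the expected fraction of satisfied edges from below by $\nu \cdot (\tau/2) \cdot (\eta\tau/(4R))^2 = \nu\eta^2\tau^3/(32R^2)$.

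The main obstacle is purely bookkeeping: the raw decoding produces $R^2$ in the denominator, whereas the statement has $Rk$. In the application (Theorem \ref{thm:ug-redn}), the regime is $R > 2^k$, so these are polynomially related; I expect the gap is closed either by tightening the list-size bound using the vector-valued structure of the long code (an $O(\log R/(\eta\tau))$ bound on $|L_v|$ is available via entropy-influence arguments on $F_v$ taking values in $\Delta_R$) or by absorbing the difference into the choice of constants in the surrounding parameters (which are set with slack below Figure \ref{fig:csp-test}). Aside from reconciling these constants, the argument is routine and uses only convexity of influence, Markov's inequality, and Lemma \ref{lem:inf-size}.
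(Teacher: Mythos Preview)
Your overall shape is right---contradiction via influence decoding, convexity to push influence from $g_v$ to its neighbors' $f_w$, and a two-sided randomized list decoding---and this is exactly the route the paper takes. But the $R^2$ versus $Rk$ discrepancy you flag is not ``purely bookkeeping''; as written your bound $\nu\eta^2\tau^3/(32R^2)$ does \emph{not} exceed the hypothesized ${\sf Opt}(\cG) < \nu\eta^2\tau^3/(16Rk)$ in the regime $R > 2^k$ of Theorem~\ref{thm:ug-redn} (you would need $k > 2R$), so the contradiction fails. Neither of your proposed fixes is what is actually used.

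The missing idea is a pigeonhole step \emph{before} forming the lists. Instead of defining $L_v, L'_v$ as unions over all $i \in [R]$ (which forces the $R/(\eta\tau)$ list-size bound and hence the extra $R$), first average over the output coordinate: if a $\nu$-fraction of $v$ have $\max_{i,j}\Inf{j}{T_{1-\eta}g^{(i)}_v} > \tau$, then for some \emph{fixed} $\ell \in [R]$ at least a $\nu/R$-fraction of $v$ have $\max_j \Inf{j}{T_{1-\eta}g^{(\ell)}_v} > \tau$. Now run your entire argument with this single $\ell$: the lists $L_{v,1} = \{j : \Inf{j}{T_{1-\eta}g^{(\ell)}_v} \ge \tau/2\}$ and $L_{v,2} = \{j : \Inf{j}{T_{1-\eta}f^{(\ell)}_v} \ge \tau/2\}$ have size at most $2/(\eta\tau)$ by Lemma~\ref{lem:inf-size}, so the list-hitting probability is $\Omega((\eta\tau)^2)$ rather than $\Omega((\eta\tau/R)^2)$. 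You trade one factor of $R$ (from the pigeonhole) for two factors of $R$ (from the list sizes), netting $\nu\eta^2\tau^3/\Theta(R)$ as required. This is exactly the paper's argument.
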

\begin{proof}
	We again use the influence decoding argument. For contradiction, let us assume that 
	\[
	\Pr_{v \sim V_{\cG}} \left[\max_{\ell \in [R]}\max_{j \in [t]} \Inf{j}{T_{1 - \eta}g^{(\ell)}_v} > \tau \right] \geq \nu.
	\]
	Then by averaging, there exists a choice of $\ell \in [R]$ for which 
	\[
	\Pr_{v \sim V_{\cG}} \left[\max_{j \in [t]} \Inf{j}{T_{1 - \eta}g^{(\ell)}_v} > \tau \right] \geq \frac{\nu}{R}.
	\]
	Denote $V_{\rm bad} \subset V_{\cG}$ as the subset of vertices $v \in V_{\cG}$ for which 
	\[
	\max_{j \in [t]} \Inf{j}{T_{1 - \eta}g^{(\ell)}_v} > \tau.
	\]
	Now, for any fixed $v \in V_{\rm bad}$. Then $i_v \in [t]$ such that $\Inf{i_v}{T_{1 - \eta} g^{(\ell)}_v} \geq \tau$ and hence using the convexity of influences we have 
	\[
	\tau \leq \Inf{i_v}{\Ex_{w \sim N_{\cG}}\left[\pi_{v \to w} \circ T_{1 - \eta}f_w\right]} 
		\leq \Ex_{w \sim N_{\cG}(v)}\left[\Inf{i_v}{\pi_{v \to w} \circ T_{1 - \eta} f_w} \right]
	\]
	which again by averaging implies that 
	\begin{equation}			\label{eqn:large-inf-prob}
	\Pr_{w \sim N_{\cG}(v)} \left[\Inf{i_v}{ \pi_{v \to w} \circ T_{1 - \eta}f_w} \geq \frac{\tau}{2}\right] \geq \frac{\tau}{2}
	\end{equation}
	for any fixed choice of $v \in V_{\rm bad}$. For every $v \in V_{\rm bad}$, let $S(v) \subset N_{\cG}(v)$ be the subset of vertices for whic hwe have $\Inf{i_v}{\pi_{v \to w} \circ T_{1 - \eta}f_w} \geq \tau/2$. Now define the lists $L_{v,1}$ and $L_{v,2}$ as follows.
	\[
	L_{v,1} := \left\{i \in [t] \Big| \Inf{i}{T_{1 - \eta}g_v} \geq \frac{\tau}{2}\right\}
	\qquad\qquad\mbox{and}\qquad\qquad
	L_{v,2} := \left\{i \in [t] \Big| \Inf{i}{T_{1 - \eta}f_v} \geq \frac{\tau}{2}\right\}
	\]
	Note that again we must have $|L_{v,1}|,|L_{v,2}| \leq 2/\epsilon\eta$ for every $v \in V_{\cG}$. Now consider the following randomized decoding scheme. For every $v \in V_{\cG}$ do the following independently. Sample $a \sim \{1,2\}$. Then if $L_{v,a} \neq \emptyset$ then assign $\sigma(v) \sim L_{v,a}$, otherwise assign $\sigma(v)$ arbitrarily. Now we proceed to bound the expected fraction of constraints satisfied by this labeling:
	\begin{align*}
		&\Ex_{v \sim V_{\cG}} \Ex_{w \sim N_{\cG}(v)} \Ex_\sigma \left[ \sigma(v) = \pi_{w \to v}(\sigma(w))\right] \\
		&\geq \frac{\nu}{R}\Ex_{v \sim V_{\rm bad}} \Ex_{w \sim N_{\cG}(v)} \Ex_\sigma \left[ \sigma(v) = \pi_{w \to v}(\sigma(w))\right] \\
		&\geq \frac{\nu}{R}\cdot\frac{\tau}{2}\Ex_{v \sim V_{\rm bad}} \Ex_{w \sim S(v)} \Ex_\sigma \left[ \sigma(v) = \pi_{w \to v}(\sigma(w))\right] \\
		&\geq \frac{\nu}{R}\cdot\frac{\tau}{2}\Ex_{v \sim V_{\rm bad}} \Ex_{w \sim S(v)} \Pr_\sigma \left[ \sigma(v) = i_v \wedge \sigma(w) = \pi_{v \to w}(i_v)\right] \\
		&\geq \frac{\nu}{R}\cdot\frac{\tau}{2}\Ex_{v \sim V_{\rm bad}} \Ex_{w \sim S(v)}  \left[ \frac{1}{|L_{v,1}|}\cdot\frac{1}{|L_{v,2}|}\right] \\
		&\geq \frac{\nu}{R}\cdot\frac{\tau^3\eta^2}{8} \\ 
		&> {\rm Opt}(\cG)
	\end{align*}
	thus giving us a contradiction.
\end{proof}

\part{Approximation Algorithm for \dksh}

In this section, we give our approximation guarantee for {\sc Densest}-$k$-{\sc SubHypergraph} problems. Let $\delta_{k}(\cdot)$ (and $\delta^{(r)}_k(\cdot)$) denote the optimal value of the D$k$S problem (and the D$k$SH problem) on graphs and hypergraph of arity-$k$ respectively, subject to the constraint that the set is of size at most $k$. The following theorem states our guarantee for \dksh.

\dkshalg*

The above approximation guarantee matches the SSEH based lower bound from Theorem \ref{thm:dksh-hard} up to constant factors for every constant arity $r$. 
Our approximation algorithm for the above theorem will use the following known $\Omega(\mu \log(1/\mu))$-approximation guarantee for \dks~as a black-box.
\begin{theorem}[\cite{CHK11} + \cite{KKT15}]		\label{thm:dks-alg}
	The following holds for every $\mu \in (0,1)$. Given a graph $G = (V,E)$, there exists a randomized algorithm which r	uns in time $|V|^{{\rm poly}(1/\mu)}$ and outputs a set $S$ such that $|S| = \mu|V|$ and $|E_G[S]| \geq C\mu\log(1/\mu)\delta_{\mu|V|}(G)$, where $C$ is an absolute constant. 
\end{theorem}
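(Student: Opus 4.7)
The plan is to combine two ingredients from prior work: a reduction from $\dks$ with bias $k=\mu n$ to $\mcspt$ on alphabet size $R = \Theta(1/\mu)$ due to \cite{CHK11}, and the SDP-based $\Omega(\log R / R)$-approximation for $\mcspt$ on alphabet $[R]$ due to \cite{KKT15}. Concatenating these two factors yields $\Omega(\log(1/\mu) \cdot \mu)$, matching the claimed approximation ratio.

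Concretely, given the input graph $G=(V,E)$ on $n$ vertices with $k=\mu n$, I would first apply the \cite{CHK11} reduction to produce a $\mcspt$ instance $\Psi$ on alphabet $[R]$ with $R=\Theta(1/\mu)$. The variables of $\Psi$ correspond to $k$ ``slots'' induced by an appropriate (possibly randomized or enumerated) partition of $V$ into buckets of size $\approx 1/\mu$, the labels index vertices within a bucket, and the pairwise constraints encode the edge relation in $G$. Under this reduction, an assignment $\sigma$ is decoded to a vertex set $S_\sigma \subseteq V$ of size exactly $k$, and the number of satisfied constraints equals (up to normalization) the number of cross-bucket edges of $G[S_\sigma]$. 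The key property to invoke from \cite{CHK11} is that one can ensure
\[
\mathrm{Opt}(\Psi) \;\gtrsim\; \frac{\delta_{\mu n}(G)}{\binom{k}{2}},
\]
up to a universal constant, i.e.\ the reduction preserves the DkS objective up to constants after normalization. Then, running the algorithm of \cite{KKT15} on $\Psi$ outputs an assignment $\sigma^*$ with value at least $c \cdot (\log R / R) \cdot \mathrm{Opt}(\Psi)$ for some absolute constant $c$; decoding $\sigma^*$ to the vertex set $S_{\sigma^*} \subseteq V$ of size $k=\mu n$ gives
\[
|E_G[S_{\sigma^*}]| \;\gtrsim\; \frac{\log R}{R} \cdot \delta_{\mu n}(G) \;=\; \Omega(\mu \log(1/\mu)) \cdot \delta_{\mu n}(G),
\]
as required. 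The $n^{\mathrm{poly}(1/\mu)}$ runtime in the theorem accommodates both the randomized bucketing amplification in the reduction and the SDP solve in \cite{KKT15}.

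The main obstacle is the reduction step. A naive single random partition of $V$ into $k$ buckets of size $1/\mu$ places the optimal $k$-set $S^*$ into distinct buckets with probability only $\Theta(k!/k^k)$, which is exponentially small in $k$ and would not preserve $\delta_{\mu n}(G)$. The \cite{CHK11} argument sidesteps this either by averaging over sufficiently many (or derandomized) partitions and invoking an expectation argument on the number of cross-bucket edges retained from a near-optimal $S^*$, or by starting from the LP relaxation of $\dks$ and performing a more structured randomized rounding that already concentrates mass onto a good candidate subset. Once this approximation-preserving reduction is in hand, Steps 2 and 3 above are entirely black-box applications of \cite{KKT15}, so the core technical work is to carefully verify the reduction and the polynomial time enumeration/derandomization needed to reach the $|V|^{\mathrm{poly}(1/\mu)}$ running time promised by the theorem.
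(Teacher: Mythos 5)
Your proposal follows the same two-step plan the paper uses: reduce $\dks$ with $k=\mu n$ to a $\mcspt$ instance on alphabet $[1/\mu]$ via the random-bucketing reduction of \cite{CHK11}, then apply the $\Omega(\log R/R)$-approximation of \cite{KKT15} and decode the assignment back to a $\mu n$-vertex set. The ``main obstacle'' you flag is handled exactly as you anticipate: the paper's completeness argument never needs all of $S^*$ to land in distinct buckets --- it only needs, for each fixed edge $(i,j)\in E_G[S^*]$, that $i$ and $j$ are each the unique member of $S^*$ in their respective buckets, which happens with constant probability over a single random partition, so an expectation argument (plus standard repetition) suffices.
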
	
The above guarantee follows by immediately combining the results from \cite{CHK11} and \cite{KKT15}\footnote{This observation is due to anonymous reviewers}, although to the best of our knowledge, the above bound is not stated as is in the literature. For completeness, we derive the above bound in Appendix \ref{sec:dks-algo}.  

\section{$\Omega(\mu^{r-1}\log(1/\mu))$-Approximation Algorithm for \dksh}

In this section, we proves Theorem \ref{thm:dksh-alg}. The algorithm for the above theorem is almost identical to the algorithm for {\sc Max}-$k$-{\sc CSP}. We describe it in Figure \ref{fig:dksh-algo} for completeness.

\begin{figure}[ht!]
	\begin{mdframed}
		{\bf Input}: Weighted Hypergraph $H = (V,E,w)$. \\
		{\bf Algorithm}:
		\begin{enumerate}
			\item Construct a graph $G' = (V,E',w')$ as follows. For every hyperedge $e \in E$ and $S \in {e \choose 2}$ introduce an edge $e_S$ with weight
			\[
			w'(e) := \mu^{r - 2}\frac{w(e)}{{r \choose 2}}.
			\]
			\item Run the algorithm from Theorem \ref{thm:dks-alg} on $G$ with bias parameter $\mu$. Let $S \subset V$ denote the solution returned by the algorithm.
			\item {\bf Rounding}. Set $\alpha := 2/r$. For every $i \in V$, do the following independently:
			\begin{itemize}
				\item W.p. $\alpha$, set $x_i \gets \mathbbm{1}_S(i)$.
				\item W.p. $1 - \alpha$, let $x_i \sim \{0,1\}_\mu$.
			\end{itemize}
			\item Output the set $S'$ indicated by $x$.
		\end{enumerate}
	\end{mdframed}
	\caption{Algorithm for \dksh}
	\label{fig:dksh-algo}
\end{figure} 

\begin{proof}[Proof of Theorem \ref{thm:dksh-alg}]

The proof of the Theorem \ref{thm:dksh-alg} will require us to (i) bound the size of the set $S'$ returned by the algorithm and (ii) the bounding the expected weight of hyperedges induced by the set $S'$.

{\bf Bounding $|S'|$} Let $y := \mathbbm{1}_S$, and define $\tilde{\mu}_i = \alpha(y_i) + (1 - \alpha)\mu$. Furthermore, for every $i \in V$, let $Y_i := \mathbbm{1}(i \in S')$. Then, $\Ex\left[\sum_{i \in V} Y_i\right] = \mu|V|$. Since given a fixing of $y$,  $\{Y_i\}_{i \in V}$ are independent $0/1$ random variables, using Hoeffding's inequality we get that $\Pr\left[|S| \geq (1 + \epsilon)\mu|V|\right] \leq \exp(-\epsilon^2\mu|V|)$.

{\bf Bounding $\Ex\left[w\left(E_H[S']\right)\right]$}. We proceed to bound the expected number of induced hyperedges in the set $S'$. For every hyperedge $e$ and $T \in {e \choose 2}$, define the event $\cE_{e,S}$ as the event where (i) the vertices in $T$ were assigned values from $S$ and (ii) the vertices in $e \setminus T$ were assigned values by independently sampling from $\{0,1\}_\mu$. We shall analyze the expected weight contributed by a hyperedge $e$:
\begin{align}
w(e)\Pr_{S'}\left[e \in H[S']\right] 
&\geq w(e)\sum_{T \in {e \choose w}} \Pr\left[\cE_{e,T}\right]\Pr_S\left[e \in H[S]\Big| \cE_{e,T}\right] \\
&= \alpha^{2}(1 - \alpha)^{r-2}w(e)\sum_{T \in {e \choose 2}} \Pr_S\left[e \in H[S]\Big| \cE_{e,T}\right] \\
&= \alpha^{2}(1 - \alpha)^{r-2}w(e)\sum_{T \in {e \choose 2}} \mu^{r-2}\mathbbm{1}_{T \in E_G[S]} \\
&= {r \choose 2}\alpha^{2}(1 - \alpha)^{r-2}\sum_{T \in {e \choose 2}} \frac{\mu^{r-2}w(e)}{{r \choose 2}}\mathbbm{1}_{T \in E_G[S]} \\
&= {r \choose 2}\alpha^{2}(1 - \alpha)^{r-2}\sum_{T \in {e \choose 2}} w'(e) \mathbbm{1}_{T \in E_G[S]} 
\end{align}
Therefore, summing over all hyperedges $e$, the expected weight of induced hyperedges can be bounded as 
\begin{align*}
& {r \choose 2}\alpha^{2}(1 - \alpha)^{r-2}\sum_{e \in E}\sum_{T \in {e \choose 2}} w'(e) \mathbbm{1}_{T \in E_G[S]} \\
& =  {r \choose 2}\alpha^{2}(1 - \alpha)^{r-2}w'\left(G[S]\right)\\
& \overset{1}{\geq} \frac14 \cdot C\mu\log(1/\mu)\delta_{\mu|V|}(G) \\
& \overset{2}{\geq} \frac{C\mu\log(1/\mu)}{4}\cdot \mu^{r-2}\delta^{(r)}_{\mu|V|}(H) \\
& \gtrsim \mu^{r-1}\log(1/\mu){\sf Opt}_\mu(H) 
\end{align*}
Here step $1$ is achieved by setting $\alpha = 2/r$. For step $2$, fix the set $S^* \in V$ which achieves $\delta^{(r)}_{\mu|V|}(H)$. Then, we can bound the weight of edges induced by $S^*$ in the graph $G$ as 
\begin{align*}
w'\left(E_G[S^*]\right) 
&= \sum_{e \in E_H} \sum_{T \in \choose 2} \frac{w(e) \mu^{r-2}}{{r \choose 2}} \mathbbm{1}(e \subset S^*) \\
&\geq \mu^{r-2}\sum_{e \in E_H[S^*]} \sum_{T \in \choose 2} \frac{w(e)}{{r \choose 2}} \mathbbm{1}(e \subset S^*) \\
& = \mu^{r-2} \sum_{e \in E_H[S^*]} w(e) \\
& = \mu^{r-2}w\left(E_H[S^*]\right) = \delta^{(r)}_{\mu|V|}(H).
\end{align*} 
Since there exists a set $S$ of size $\mu|V|$ for which the weight of induced hyperedges in $G$ is at least $\mu^{r - 2}{\sf Opt}_\mu(H)$, the claim follows.

\end{proof}

\paragraph{Acknowledgements.}
We thank the anonymous reviewers for pointing us to the \dks~approximation algorithm using \cite{CHK11} and \cite{KKT15}, as well as the dictatorship test and noise stability bounds from \cite{KS15}.

\bibliographystyle{alpha}
\bibliography{main}

\part{Appendix}

\appendix
\section{Example for Remark \ref{rem:smooth}}				\label{sec:example}

Let $H = (V,E)$ be a uniformly weighted $\dksh_r$ instance with $k = \mu^{5}|V|$. Now we construct a $\ldksh_{r + 1}$ instance $H' = (V',E',w',\mu)$ as follows. The vertex set is $V' = V \cup \{x,y\}$ where $x,y$ are a pair of new vertices. The edge set is defined as $E' = \{e \cup\{y\} | e \in E\}$ i.e, we include every edge $e \in E$ in $E'$ and add the vertex $y$ to every vertex. Finally, we assign the following weights to the vertices:
\[
w(v) = 
\begin{cases}
	\mu^{5}/|V| & \mbox{ if } v \in V, \\
	\mu - \mu^{10} & \mbox{ if } v = y, \\
	1 - \mu - \mu^5 + \mu^{10} & \mbox{ if } v = x.
\end{cases}
\]
Now it is easy to verify that non-trivial solutions of weight at most $\mu$ in $H'$ admit a one-to-one correspondence with solutions of relative weight at most $\mu^5$ in $H$, and in particular $\alpha^*$ approximation algorithm for instances $H'$ yield a $\alpha^*$ approximation algorithm for $H$. Therefore, 
\[
\alpha^* \lesssim_r (\mu^5)^{r - 1} \log(1/\mu)
\]
On the other hand, if Theorem \ref{thm:pred-approx} holds for algorithms guaranteed to return solutions of weight at most $\mu$ (i.e., no multiplicative slack), then,
\[
\alpha^* \gtrsim_r \mu^{r} \log(1/\mu)
\]
which contradicts the above upper bound.

\section{Proof of Theorem \ref{thm:dks-alg}}			\label{sec:dks-algo}

Theorem \ref{thm:dks-alg} follows directly using the following results from \cite{CHK11} and \cite{KKT15}.

\begin{theorem}[Theorem 6~\cite{CHK11} restated]
	There exists a randomized polynomial time reduction from $\dks$~instances on $n$ vertices with $k = \mu n$ to {\sc Max}-$2$-{\sc CSP}-instances $\Psi(V',E',[R],\{\Pi_e\}_{e \in E'})$ on $|V'| = \mu n$ vertices with label set size $R = (1/\mu)$ satisfying 
	\[
	\frac{\delta_{\mu|V|}(G)}{2} \leq {\sf Opt}(\Psi) \leq \delta_{\mu|V|}(G) 
	\] 
	with high probability.
\end{theorem}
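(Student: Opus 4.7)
The plan is to establish the theorem via a randomized reduction that partitions $V(G)$ and encodes the choice of one vertex per partition class as a label in the CSP. Concretely, given the $\dks$ instance $G=(V,E)$ with $k=\mu n$, I will sample a uniformly random partition of $V$ into $k$ classes $V_1,\ldots,V_k$ each of size $R=1/\mu$ (assume $1/\mu\in\mathbb{Z}$ and $n=kR$; otherwise pad $V$ with isolated dummy vertices). The CSP instance $\Psi$ then has vertex set $V'=[k]$ and label set $[R]$; for every pair $i\neq j$ we include the constraint $\Pi_{(i,j)}$ defined by $\Pi_{(i,j)}(a,b)=1$ iff the $a$-th vertex of $V_i$ and the $b$-th vertex of $V_j$ form an edge of $G$. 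A labeling $\sigma:[k]\to[R]$ is thereby identified with the $k$-subset $S_\sigma=\{V_i[\sigma(i)]:i\in[k]\}$, and the fraction of satisfied constraints is exactly $|E[G[S_\sigma]]|/\binom{k}{2}$.

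The upper bound ${\sf Opt}(\Psi)\le \delta_{\mu n}(G)$ is immediate from this identification: every labeling produces a $k$-subset, and $|E[G[S_\sigma]]|$ is at most the maximum number of edges in any $k$-subset of $G$. For the lower bound ${\sf Opt}(\Psi)\ge \delta_{\mu n}(G)/2$ (which must hold w.h.p.\ over the random partition), I will fix an optimal $k$-subset $S^*$ with $|E[G[S^*]]|=\delta_{\mu n}(G)$ and argue that a randomly chosen labeling already captures at least half of $E[G[S^*]]$ in expectation. Specifically, I will consider the labeling $\sigma$ that, independently for each $i\in[k]$ with $S^*\cap V_i\neq\emptyset$, picks $\sigma(i)$ uniformly among the labels corresponding to $S^*\cap V_i$. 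For an edge $(u,v)\in E[G[S^*]]$, writing $A=|S^*\cap V_{g(u)}|$ and $B=|S^*\cap V_{g(v)}|$ where $g(\cdot)$ denotes the partition class, the probability that $(u,v)$ becomes a satisfied constraint is $\mathbf{1}[g(u)\neq g(v)]/(AB)$.

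The core calculation is therefore to lower-bound $\mathbb{E}[\mathbf{1}[g(u)\neq g(v)]/(AB)]$ uniformly over $(u,v)\in E[G[S^*]]$. Since $A=1+\sum_{w\in S^*\setminus\{u,v\}}\mathbf{1}[g(w)=g(u)]$ and similarly for $B$, under the uniform partition each of the indicator sums is hypergeometric with mean at most $(k-2)(R-1)/(n-1)\le 1$, so $\mathbb{E}[A],\mathbb{E}[B]\le 2$. By Jensen's inequality applied to the convex function $1/x$, $\mathbb{E}[1/A]\ge 1/2$ and similarly for $B$; moreover $A$ and $B$ are negatively associated under a uniform partition, which (since $1/x$ is decreasing) makes $1/A$ and $1/B$ positively correlated, yielding $\mathbb{E}[1/(AB)]\ge \mathbb{E}[1/A]\mathbb{E}[1/B]\ge 1/4$. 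Combined with $\Pr[g(u)\neq g(v)]\ge 1-1/R$, this gives $\mathbb{E}_{\text{part},\sigma}[|\text{captured edges}|]\ge \Omega(1)\cdot\delta_{\mu n}(G)$. The constant $1/2$ in the statement is obtained by sharpening: first showing that ${\sf Opt}(\Psi)\ge\mathbb{E}_\sigma[\cdot]$ (replacing the expected labeling by the true optimum) and then running the random partition construction independently $O(\log n)$ times and returning the best $\Psi$, so a Markov/concentration argument upgrades the expectation bound to a w.h.p.\ bound ${\sf Opt}(\Psi)\ge \delta_{\mu n}(G)/2$.

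The main obstacle will be making the constant tight at $1/2$ rather than a larger universal constant: the naive Jensen bound gives only $1/4$ from the occupancy terms and loses additional factors from the event $\{g(u)\neq g(v)\}$ and from converting expectation to w.h.p. Sharpening to $1/2$ requires either (i)~showing that the optimum CSP labeling strictly dominates the random labeling by a factor of $2$ (via a local-search or greedy refinement argument within each group), or (ii)~exploiting the fact that $|S^*|=k$ is exactly the number of groups, so the occupancy variables $A,B$ are concentrated around $1$ with $O(1/k)$ slack and the Jensen bound is close to tight. Either route is standard, and the rest of the argument, including the polynomial-time implementation of sampling the partition and writing down the CSP constraints, is routine.
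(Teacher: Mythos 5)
Your construction (random partition of $V$ into $k = \mu n$ classes of size $R = 1/\mu$, with labels in $[R]$ encoding the chosen vertex within a class and constraints inherited from $E(G)$) and your soundness direction (any labeling $\sigma$ decodes to a $k$-subset $S_\sigma$ with $|E_G[S_\sigma]| \geq$ number of satisfied constraints) match the paper's proof exactly. The completeness direction is where you diverge, and it contains a genuine error. You propose to lower-bound $\Ex\left[\mathbbm{1}[g(u)\neq g(v)]/(AB)\right]$ via Jensen plus negative association. The Jensen step $\Ex[1/A]\geq 1/\Ex[A]$ is fine, but the negative-association step is applied in the wrong direction: if $A$ and $B$ are negatively associated, then for \emph{any} pair of monotone functions $f,g$ (both increasing or both decreasing) one has $\mathrm{Cov}(f(A),g(B))\leq 0$. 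Writing $1/x=-\tilde h(x)$ with $\tilde h$ increasing gives $\mathrm{Cov}(1/A,1/B)=\mathrm{Cov}(\tilde h(A),\tilde h(B))\leq 0$, i.e.\ $\Ex[1/(AB)]\leq \Ex[1/A]\,\Ex[1/B]$ --- the \emph{opposite} of what you claim. So the centerpiece inequality $\Ex[1/(AB)]\geq 1/4$ does not follow. Your fallback route (ii) also does not work: $\Ex[A]\approx 2$ (since $A-1$ is hypergeometric with mean $(R-1)(k-1)/(n-1)\approx 1$), so $A$ does not concentrate near $1$.

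The paper sidesteps this entirely by restricting attention to the set $\hat S\subseteq S^*$ of vertices that are the \emph{unique} element of $S^*$ in their class, defining a deterministic labeling from $\hat S$, and lower-bounding $\Pr[A=1\wedge B=1]$ directly (a clean hypergeometric product). In your notation the pointwise bound $\mathbbm{1}[g(u)\neq g(v)]/(AB)\geq \mathbbm{1}[A=1,\,B=1]$ (the right side already forces $g(u)\neq g(v)$) reduces your quantity to exactly the paper's and removes the occupancy-inversion issue. You should replace the Jensen/NA step with this pointwise bound and then compute $\Pr[A=B=1]$; your last paragraph about ``sharpening to $1/2$'' via local search or concentration is then moot, since the constant comes out of that single hypergeometric calculation (together with the standard amplification of an in-expectation guarantee to a w.h.p.\ guarantee by repeating the random partition).
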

\begin{proof}
	For simplicity, we will assume that $1/\mu$ is an integer. The proof is via an elementary reduction from $\dks$ to {\sc Max}-$2$-{\sc CSP} instances. Given a \dks~instance $G = (V,E)$, construct a {\sc Max}-$2$-{\sc CSP} instance $\Psi(V',E',[R])$ with $R = 1/\mu$ as follows. 
	
	{\bf Vertex Set}. Consider a random partition of $V$ into $V_1 \uplus \cdots \uplus  V_\ell$ where $\ell = \mu n$ and each $V_i$ is a $1/\mu$ sized subset. For every $i \in [\ell]$, introduce a vertex $v_i$ in $V'$.
	
	{\bf Constraint Set}. For every edge $i \in [\ell]$, fix a bijection $\pi_i : V_i \to [R]$. Now for every $(i,j) \in {V' \choose 2}$, define the set of accepting labelings $\Pi_{(i,j)}$ as 
	\[
	\Pi_{(i,j)} := \left\{(\pi_i(a),\pi_j(b)) | (a,b) \in E\right\}
	\]
	
	The completeness and soundness of the above reduction are easily established.
	
	{\bf Completeness}. Suppose $S \subset V$ is the optimal $\mu|V|$ sized subset which achieves $\delta_{\mu|V|}(G)$. Let $\hat{S} \subset S$ be the subset of vertices where the partition $\uplus_{i \in [\ell]} V_i$ uniquely intersects with $S$ i.e.,
	\[
	\hat{S} := \Big\{i \in S \Big| i \in V_j \implies V_j \cap S = \{i\} \Big\}
	\]
	Then for any fixed $(i,j) \in E_G[S]$ we have 
	\[
	\Pr\left[i,j \in \hat{S}\right] = \prod_{j = 1}^{2r - 2}\left(\frac{(1 - \mu)n - j - 1}{n - j - 1}\right) \geq \frac12,
	\]
	and therefore $\Ex\left[|E_H[\hat{S}]|\right] \geq 0.5 |E_H[S]|$. Now, one can construct a labeling using $\hat{S}$ which satisfies at least $|E_H[\hat{S}]|$ constraints in $\Psi$ as follows. Let $\kappa : \hat{S} \to [\ell]$ be the mapping which identifies the vertices in $\hat{S}$ with the corresponding partition. Note that $\kappa$ is well defined due to the definition of $\hat{S}$. Now define the labeling $\sigma : V' \to [R]$ as follows:
	\[
	\sigma(i):= 
	\begin{cases}
		\pi_i(a) &\mbox{ if } S \cap V_i = a, \\
		1  & \mbox{ otherwise.}
	\end{cases}
	\]
	Observe that for any edge $(i,j) \in E_H[\hat{S}]$, then the corresponding constraint $(\kappa(i),\kappa(j))$ is satisfied by the labeling $\sigma$. Hence, the number of edges satisfied by $\sigma$ in expectation is at least $|E_H[S]|/2$.
	
	{\bf Soundness}. Fix a labeling $\sigma:V' \to [\ell]$ which achieves ${\sf Opt}(\Psi)$. Then construct $S$ from $\sigma$ as 
	\[
	S:= \left\{\pi^{-1}_i(\sigma(i)) \Big|i \in [\ell]\right\}
	\]
	Clearly, $S$ is a $\mu |V|$ sized subset. Now observe that whenever $\sigma$ satisfies an edge $(i,j) \in E'$, the corresponding pair $(\pi{-1}_i(\sigma(i)),\pi^{-1}_j(\sigma(j)))$ identifies a unique edge in $H$ whose both endpoints are in $S$. Therefore, 
	\[
	|E_H[S]| \geq \left|\{(i,j) \in E' | (\sigma(i),\sigma(j)) \in \Pi_{ij}\}\right| = {\sf Opt}(\Psi),
	\]
	from which the soundness follows.
	
\end{proof}

Combining the above with the following theorem immediately establishes Theorem \ref{thm:dks-alg}.

\begin{theorem}[Theorem 2.1~\cite{KKT15}]
	For every integer $R \geq 2$, there exists an efficient $\Omega(\log R/R)$-approximation algorithm for {\sc Max}-$2$-{\sc CSP} instances over label sets of size $R$. 
\end{theorem}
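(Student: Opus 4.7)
The plan is to follow the standard SDP-plus-rounding template for $\textsc{Max}\text{-}2\text{-}\textsc{CSP}$ on label set $[R]$. The first step is to write the natural SDP relaxation: for every vertex $v \in V$ and every label $i \in [R]$ introduce a vector $v_i$ together with a reference unit vector $v_0$, with constraints $\langle v_i, v_0\rangle = \|v_i\|^2$, $\langle v_i, v_j\rangle = 0$ for $i \neq j$, $\sum_{i \in [R]} v_i = v_0$, and the standard triangle/nonnegativity constraints $\langle u_i, v_j\rangle \geq 0$. The natural objective is $\sum_{(u,v) \in E}\sum_{(i,j) \in \Pi_{uv}} \langle u_i, v_j\rangle$, which is easily verified to upper bound ${\sf Opt}(\Psi)$ by plugging in the integral solution. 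Thus it suffices to show that one can efficiently round any feasible SDP solution of value $\mathrm{sdp}$ to an integral labeling of value $\Omega(\log R / R) \cdot \mathrm{sdp}$.

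For the rounding, my plan is to use a Gaussian threshold scheme parametrized by a threshold $t \asymp \sqrt{2\ln R}$. First sample a standard Gaussian direction $g$ of appropriate dimension. For each vertex $u$, define the ``activated set'' $S_u := \{ i \in [R] : \langle u_i / \|u_i\|, g\rangle \geq t\}$, and assign $u$ a label chosen from $S_u$ proportional to $\|u_i\|^2$; if $S_u = \emptyset$, fall back to a uniformly random label. Because the univariate Gaussian tail at $t = \sqrt{2\ln R}$ is $\Theta(1/(R\sqrt{\log R}))$, each individual label is activated with probability $\Theta(1/(R\sqrt{\log R}))$, which after accounting for the reweighting by $\|u_i\|^2$ and the normalization of the distribution on $S_u$ will contribute the desired $\Theta(\log R / R)$ factor. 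To smooth out cases where the SDP mass is concentrated on a few labels versus spread out, I will take the better of this Gaussian rounding and a plain independent rounding where each vertex samples label $i$ independently with probability $\|u_i\|^2$; combining the two (via a convex combination or by taking the best of two runs) handles the ``heavy'' and ``light'' regimes uniformly.

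The analysis proceeds constraint-by-constraint: for each edge $(u,v)$ with SDP contribution $c_{uv} = \sum_{(i,j)\in \Pi_{uv}} \langle u_i, v_j\rangle$, I will lower bound $\Pr[(\sigma(u),\sigma(v)) \in \Pi_{uv}]$ by $\Omega(\log R/R)\cdot c_{uv}$. The pairwise joint-activation probability $\Pr[i \in S_u \wedge j \in S_v]$ is controlled using a bivariate Gaussian tail estimate in terms of the correlation $\langle u_i, v_j\rangle/(\|u_i\|\|v_j\|)$, for which the relevant two-point Gaussian-tail bound yields $\Pr[i \in S_u, j \in S_v] \gtrsim (1/R) \cdot (\|u_i\|\|v_j\|)^{-1}\langle u_i, v_j\rangle$ after a short calculation using $t^2 \asymp \log R$. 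Aggregating over $(i,j) \in \Pi_{uv}$, accounting for the renormalization step that divides by $|S_u|\cdot|S_v|$, and using that in expectation $|S_u|$ is a constant multiple of $1$ gives the per-constraint bound. The main obstacle I anticipate is precisely this last step: correctly handling the renormalization so that the division by $|S_u|$ (whose typical value concentrates at $\Theta(1)$ only in expectation) does not cost an extra $\log R$ factor. This is typically handled either by conditioning on the ``good'' event that $|S_u|,|S_v|$ are of constant order (which holds with constant probability using a second-moment argument on the activation indicators, exploiting the orthogonality $\langle u_i, u_j\rangle = 0$ for $i \neq j$), or by a slight variant where one picks a single label via a hashing/max rule instead of uniformly from $S_u$, which removes the renormalization altogether at the cost of a more delicate bivariate Gaussian analysis.
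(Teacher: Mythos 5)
Your high-level plan (standard SDP relaxation plus Gaussian threshold rounding, hedged against independent rounding) is in the right family of techniques for large-alphabet $2$-CSPs, but the central quantitative claim of your analysis — the bivariate tail estimate $\Pr[\,i \in S_u \wedge j \in S_v\,] \gtrsim \tfrac1R\cdot\rho_{ij}$ with $\rho_{ij} := \langle u_i,v_j\rangle/(\|u_i\|\|v_j\|)$ — is false, and this is not a constant-factor slip. For a standard bivariate Gaussian pair $(X,Y)$ with correlation $\rho$ and a threshold $t$ chosen so that $\bar\Phi(t) = \Theta(1/R)$, the sharp asymptotic is
\[
\Pr[X \geq t,\; Y \geq t] \;=\; \Theta\!\left(\mathrm{poly}(\log R)\cdot R^{-2/(1+\rho)}\right),
\]
which for $\rho$ bounded away from $1$ is polynomially smaller than $\rho/R$ (e.g.\ at $\rho = 1/2$ the joint tail is $\tilde\Theta(R^{-4/3}) \ll 1/R$). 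More concretely, integrating the identity $\partial_\rho \Pr[X\geq t, Y\geq t] = \tfrac1{2\pi\sqrt{1-\rho^2}}e^{-t^2/(1+\rho)}$ from $0$ shows that for small $\rho$ one only gets $\Pr[X\geq t, Y\geq t] \approx \tfrac1{R^2} + \tfrac{\rho\log R}{R^2}$, a factor of roughly $R/\log R$ below what you assert. Since the per-constraint accounting
\[
\Pr\big[(\sigma(u),\sigma(v)) \in \Pi_{uv}\big] \;\gtrsim\; \tfrac{\log R}{R}\sum_{(i,j)\in\Pi_{uv}}\langle u_i,v_j\rangle
\]
is supposed to follow by summing exactly these joint-activation probabilities, the argument as written does not close; and the fallback independent rounding only gives a $1/R$-type factor on its own, so the ``take the better of the two'' step does not obviously patch the intermediate-$\rho$ regime without a genuinely finer case analysis that the proposal does not supply.

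There is also an internal inconsistency in the scheme itself: you normalize by $\|u_i\|$ in the activation rule, making each of the $R$ labels activated with the same probability $\bar\Phi(t)$, and you take $\bar\Phi(t) = \Theta\big(1/(R\sqrt{\log R})\big)$. That gives $\E[|S_u|] = \Theta(1/\sqrt{\log R}) = o(1)$, not ``a constant multiple of $1$'' as you later claim; by Markov, $S_u = \emptyset$ with probability $1-o(1)$, so the fallback uniformly-random label dominates the output and the algorithm degenerates to an $O(1/R^2)$ guarantee. Fixing this requires setting $t = \bar\Phi^{-1}(\Theta(1/R))$ (which is $\sqrt{2\ln R - \ln\ln R + O(1)}$, not $\sqrt{2\ln R}$); but even after this correction the joint-tail bound is still as above, so the gap in the preceding paragraph remains. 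In short, the proof as proposed does not work: the Gaussian-stability calculation at the heart of it has been misremembered, and the hedging by independent rounding neither repairs the small-$\rho$ loss nor is shown to cover the remaining regimes.
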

\section{Proof of Theorem \ref{thm:ks-stab}}				\label{sec:stab}

Here we shall prove Theorem \ref{thm:ks-stab}, which we restate here for convenience.

\stab*

Towards proving the above, we shall need some additional notation from \cite{Mos10} and \cite{KS15}. For any $\rho,\mu_1,\mu_2$, let $\Gamma_\rho(\mu_1,\mu_2)$ be the $(\mu_1,\mu_2)$ biased bilinear Gaussian stability defined as 
\[
\Gamma_\rho(\mu_1,\mu_2) = \Pr_{g_1 \underset{\rho}{\sim} g_2 } \Big[g_1 \leq \Phi^{-1}(\mu_1) \wedge g_2 \leq \Phi^{-1}(\mu_2)\Big].
\]
Using the above, we can define the iterative Gaussian stability $\Gamma_\rho(\mu_1,\ldots,\mu_r)$ as 
\begin{equation}		\label{eqn:gamma-def}
\Gamma_\rho(\mu_1,\ldots,\mu_r) = \Gamma_\rho\left(\mu_1,\Gamma_\rho(\mu_2,\ldots,\mu_r)\right)
\end{equation}
For brevity, we shall use $\Gamma^{(r)}_\rho(\mu)$ to denote $\Gamma_\rho(\mu_1,\ldots,\mu_r)$ where $\mu_i = \mu$ for every $i \in [r]$. 
We shall also need the following analytical tools from \cite{KS15} for bounding the expected value of products of low influence functions.
\begin{theorem}[Theorem 2.10~\cite{KS15} restated]			\label{thm:ks-stab1}
	Let $(\prod_{i \in [r]}\Omega_i,\bgamma) := (\Omega^r,\bgamma)$ be a $r$-ary correlated probability space such that for any $\omega \in \Omega^r$, we have $\alpha \leq \gamma(\omega)$ and $\alpha \leq 1/2$. Furthermore, suppose $\rho(\Omega_1,\ldots,\Omega_r;\bgamma) \leq \rho$. Then for every $\nu \in (0,1)$, there exists $\tau := \tau(\nu,r,\alpha)$ such that the following holds. Suppose $f:\Omega^R \to [0,1]$ is a function satisfying 
	\[
	\max_{i \in [R]} \Inf{i}{f} \leq \tau.
	\]
	Then,
	\[
	\Ex_{(\bomega_1,\ldots,\bomega_r) \sim \bgamma^R}\left[\prod_{i \in [r]} f(\bomega_i)\right]
	\leq \Gamma_{\rho}\left(\Ex\left[f(\bomega_1)\right],\ldots,\Ex\left[f(\bomega_r)\right]\right) + \nu,
	\]
	where $\bgamma^R$ is the $R$-wise product measure corresponding to $\bgamma$. 
\end{theorem}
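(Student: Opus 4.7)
The target statement is a two-stage noise-stability bound: it first needs the expectation $\Ex[\prod_i f(\bomega_i)]$ to be compared to a Gaussian analogue (via an invariance principle for correlated spaces), and then the Gaussian quantity must be shown to be small when both $\mu$ and $\rho$ are small. The tool already available in the excerpt is Theorem \ref{thm:ks-stab1}, which does exactly the first stage provided we can bound the ``maximal correlation'' $\rho(\Omega_1,\ldots,\Omega_r;\bgamma)$ of the joint distribution $\cA_{r,\rho}(\Omega,\gamma)$.

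The plan is as follows. First, I would verify that under $\cA_{r,\rho}(\Omega,\gamma)$ the correlation $\rho(\Omega_1,\ldots,\Omega_r;\bgamma)\le\rho$. The definition of $\cA_{r,\rho}$ (with probability $\rho$ all coordinates coincide and are drawn from $\gamma$, otherwise they are fully independent) implies that for any $i$ and any unit-norm $f_i\in L_2(\Omega_i)$ and $g\in L_2(\prod_{j\ne i}\Omega_j)$, the covariance $\mathrm{Cov}_{\bgamma}(f_i,g)$ equals $\rho\cdot\mathrm{Cov}(f_i,\bar g)$ where $\bar g$ is the diagonal pull-back; Cauchy--Schwarz then bounds this by $\rho$. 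The minimum atom probability $\alpha=\min_\omega\bgamma(\omega)$ is a function of $\rho$, $r$, and $(\Omega,\gamma)$, as required by \ref{thm:ks-stab1}. Plugging these in, for the given $\nu$ the theorem supplies a $\tau=\tau(\nu,r,\alpha)$ so that any $f:\Omega^R\to[0,1]$ with all influences $\le\tau$ satisfies
\[
\Ex_{(\bomega_1,\ldots,\bomega_r)\sim\bgamma^R}\!\left[\prod_{i\in[r]}f(\bomega_i)\right]\le \Gamma_\rho^{(r)}(\mu)+\nu.
\]

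Second, I would bound the iterated Gaussian stability $\Gamma_\rho^{(r)}(\mu)$ by $3\mu^r$ under the assumption $\rho\le 1/(C'r^2\log(1/\mu))$ with $\mu\le 2^{-r}$. Using the standard small-$\mu$ tail estimate $\Phi^{-1}(\mu)\sim-\sqrt{2\log(1/\mu)}$, a direct computation of the bivariate Gaussian probability $\Gamma_\rho(\mu_1,\mu_2)$ (for instance via the Mehler formula, or a change-of-variables that rotates the pair of $\rho$-correlated Gaussians into an independent pair) gives an estimate of the form
\[
\Gamma_\rho(\mu_1,\mu_2)\le \mu_1\mu_2\cdot\exp\!\Big(C''\rho\sqrt{\log(1/\mu_1)\log(1/\mu_2)}\Big).
\]
Unfolding the recursion \eqref{eqn:gamma-def}, an induction on $r$ then yields $\Gamma_\rho^{(r)}(\mu)\le \mu^r\exp(C''r^2\rho\log(1/\mu))$, so the hypothesis $\rho\le 1/(C'r^2\log(1/\mu))$ gives $\Gamma_\rho^{(r)}(\mu)\le\mu^r\cdot e^{C''/C'}\le 3\mu^r$ for $C'$ large enough depending on $C''$. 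Combining with the previous display proves the theorem.

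The main obstacle is the second step: controlling the iterated Gaussian stability uniformly in $r$ while maintaining the sharp leading constant $3$. The invariance-principle step is essentially a black-box appeal to \ref{thm:ks-stab1}, so its burden is purely notational (checking the pairwise correlation and the parameter dependencies). In contrast, the Gaussian estimate has to track two blow-up effects simultaneously — the exponent degradation from the bivariate estimate iterated $r$ times, and the $\log(1/\mu_j)$ factors inside the exponent, where the intermediate $\mu_j=\Gamma_\rho(\mu,\mu_{j+1})$ are themselves shrinking, so one should actually gain rather than lose from the recursion. Making this gain explicit (and thereby explaining the $r^2$ rather than just $r$ in the hypothesis on $\rho$) is the step where care is needed; the rest of the proof is bookkeeping.
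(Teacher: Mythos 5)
Your proposal does not prove the stated theorem. Theorem~\ref{thm:ks-stab1} is the invariance-principle ingredient, restated verbatim from Theorem~2.10 of \cite{KS15}; it says that for a \emph{general} $r$-ary correlated space with pairwise correlation at most $\rho$ and a function with all low-degree influences small, the $r$-fold product expectation is bounded by the iterated Gaussian stability $\Gamma_\rho(\cdots)$ plus $\nu$. The paper gives no proof of this statement --- it cites it as a black box --- and its actual proof in \cite{KS15} is a multilinear invariance argument (a coordinate-by-coordinate hybrid swapping discrete for Gaussian variables, controlled by influences and the pairwise correlation bound). Your write-up does not engage with that argument at all: instead you \emph{use} Theorem~\ref{thm:ks-stab1} as the first stage in a derivation of a \emph{different} result, namely Theorem~\ref{thm:ks-stab} (the $3\mu^r+\nu$ bound specialized to the distribution $\cA_{r,\rho}$). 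As a proof of Theorem~\ref{thm:ks-stab1} itself, that is circular.

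If what you actually intended to prove was Theorem~\ref{thm:ks-stab}, then your structure matches the paper's proof in Section~\ref{sec:stab} exactly: check that the pairwise correlation of $\cA_{r,\rho}$ is at most $\rho$, invoke Theorem~\ref{thm:ks-stab1}, then bound $\Gamma^{(r)}_\rho(\mu)\le 3\mu^r$. The one place you deviate is the last step: the paper simply cites Lemma~\ref{lem:ks-stab1} (Lemma~2.4 of \cite{KS15}) for the Gaussian bound, whereas you sketch a self-contained derivation by iterating a bivariate Gaussian tail estimate. Your sketch is plausible in outline, but two items need tightening. First, Theorem~\ref{thm:ks-stab1} requires a quantitative lower bound on the minimum atom probability $\alpha$, since $\tau$ depends on it; the paper computes $\alpha$ explicitly in terms of $\rho$ and $\kappa:=\min_\omega\gamma(\omega)$, while your proposal merely remarks that $\alpha$ is ``a function of $\rho,r,(\Omega,\gamma)$.'' Second, your claimed iteration bound $\exp(C''r^2\rho\log(1/\mu))$ is conservative: because the intermediate stabilities shrink roughly like $\mu^j$, the log-factors grow only linearly, and a sharper accounting gives an exponent of order $r^{3/2}\rho\log(1/\mu)$. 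Both are compatible with the hypothesis $\rho\le 1/(C'r^2\log(1/\mu))$, so the argument goes through, but the loose bound obscures where the $r^2$ in the hypothesis really comes from.
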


\begin{lemma}[Lemma 2.4~\cite{KS15} restated]			\label{lem:ks-stab1}
	There exists a constant $C > 0$ for which the following holds. 	Let $r \geq 2$ be a integer and $\mu_1,\ldots,\mu_r \in (0,1)$. Define $\mu^* = \min_{i \in [r]} \mu_i$. Then for any $\rho \leq 1/(2Cr^2\log(r/\mu^*))$ we have 
	\[
	\Gamma_\rho(\mu_1,\ldots,\mu_r) \leq 3\prod_{i \in [r]} \mu_i
	\]
\end{lemma}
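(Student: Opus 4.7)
The plan is to prove this iterated Gaussian stability bound by induction on $r$, powered by a tight bilinear (the $r=2$) estimate. First I would establish the bilinear bound
\[
\Gamma_\rho(\mu, \nu) \leq \mu\nu \cdot \bigl(1 + O(\rho\, t_\mu t_\nu)\bigr),
\]
where $t_\mu := -\Phi^{-1}(\mu)$ and $t_\nu := -\Phi^{-1}(\nu)$, valid for $\rho$ small and $\mu,\nu$ bounded away from $1$, and then iterate via the defining identity $\Gamma_\rho(\mu_1,\ldots,\mu_r) = \Gamma_\rho\bigl(\mu_1, \Gamma_\rho(\mu_2,\ldots,\mu_r)\bigr)$.

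For the bilinear step, I would represent $(g_1,g_2)$ as $(g_1,\rho g_1 + \sqrt{1-\rho^2}\, g_1')$ with $g_1'$ independent of $g_1$, so that
\[
\Gamma_\rho(\mu_1,\mu_2) = \mu_1 \cdot \Ex\Bigl[\Phi\bigl((-t_2 + \rho|g_1|)/\sqrt{1-\rho^2}\bigr) \,\Big|\, g_1 \leq -t_1\Bigr].
\]
Using Mill's ratio (i.e.\ $\Phi(-x) = (1+o(1))\phi(x)/x$ for $x \geq 1$), the integrand is at most $\mu_2 \cdot \exp(\rho t_2 |g_1|)$ up to lower-order corrections; and the conditional MGF of $|g_1|$ given $g_1 \leq -t_1$ has the exact form $e^{\lambda^2/2}\Phi(-(t_1-\lambda))/\Phi(-t_1)$, which for $\lambda = \rho t_2 \ll t_1$ evaluates to $e^{\rho t_1 t_2}(1+o(1))$. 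Combining these yields the bilinear estimate.

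To close the induction, set $\tilde\mu_r := \mu_r$ and $\tilde\mu_k := \Gamma_\rho(\mu_k, \tilde\mu_{k+1})$. FKG (positive correlation of downward events under $\rho \geq 0$) ensures $\tilde\mu_{k+1} \geq \prod_{i>k}\mu_i$, so that $t_{\tilde\mu_{k+1}} \leq 2\sqrt{r\log(r/\mu^*)}$, and the bilinear estimate at step $k$ contributes a multiplicative error
\[
1 + O\bigl(\rho\, t_{\mu_k}\, t_{\tilde\mu_{k+1}}\bigr) \leq 1 + O\bigl(\rho\sqrt{r}\,\log(r/\mu^*)\bigr) \leq 1 + O(1/(Cr^{3/2})),
\]
using the hypothesis $\rho \leq 1/(2Cr^2\log(r/\mu^*))$. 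Multiplying over the $r-1$ inductive steps gives $\Gamma_\rho(\mu_1,\ldots,\mu_r) \leq \prod_i \mu_i \cdot (1 + O(r^{-3/2}))^{r-1} \leq e^{O(r^{-1/2})} \prod_i \mu_i$, which is below $3\prod_i \mu_i$ once $C$ is chosen large enough.

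The main obstacle is constant management: since we iterate $r-1$ times, the per-step bilinear slack must be $o(1/r)$, so the quantity $\rho\, t_{\mu_k}\, t_{\tilde\mu_{k+1}}$ must be controlled at that scale. Because $\tilde\mu_k$ can decay as fast as $(\mu^*)^{r-k}$ during the induction, $t_{\tilde\mu_{k+1}}$ can grow as $\sqrt{r\log(1/\mu^*)}$, forcing $\rho$ to beat a $\sqrt{r}\log(1/\mu^*) \cdot r$ factor---this is precisely where the $r^2$ (rather than $r$) in the denominator of the hypothesis on $\rho$ comes from. Secondary care is required to track the lower-order Mill's ratio corrections and the $1/\sqrt{1-\rho^2}$ warping inside the bilinear estimate, but since $\rho = o(1)$ these only perturb implicit constants.
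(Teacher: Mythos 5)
The paper does not prove this lemma; it is imported verbatim as a restatement of Lemma~2.4 of Khot--Saket~\cite{KS15} and used as a black box inside the proof of Theorem~\ref{thm:ks-stab}. So there is no in-paper argument to compare against, and your proposal has to be judged on its own and against the original \cite{KS15} argument, which it very likely mirrors: bilinear estimate via the Gaussian regression $g_2 = \rho g_1 + \sqrt{1-\rho^2}\,g'$, Mill's ratio for the inner $\Phi$, the exact conditional MGF $e^{\lambda^2/2}\Phi(-t_1+\lambda)/\Phi(-t_1)$, and iteration using $\Gamma_\rho(\mu,\nu) \geq \mu\nu$ (Plackett/Slepian) to lower-bound the intermediates $\tilde\mu_k$. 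Your parameter bookkeeping is correct and even has slack: the analysis only requires $\rho \lesssim 1/(r^{3/2}\log(r/\mu^*))$, while the hypothesis supplies $1/(r^2\log(r/\mu^*))$.

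There is one concrete flaw in the bilinear step as written. The bound $\Gamma_\rho(\mu,\nu) \leq \mu\nu\bigl(1 + O(\rho\, t_\mu t_\nu)\bigr)$ fails near $\mu = 1/2$: there $t_\mu \to 0$, but the excess does not vanish linearly in $t_\mu$ --- for instance $\Gamma_\rho(1/2,1/2) = 1/4 + \arcsin(\rho)/(2\pi) = \tfrac14\bigl(1 + \Theta(\rho)\bigr)$. The Mill's-ratio expansion you invoke needs $t_\mu, t_\nu \gtrsim 1$. The correct per-step factor is of the form $1 + O\bigl(\rho\,(1+t_\mu)(1+t_\nu)\bigr)$, or one can peel off the trivial regime $\mu_i \geq 1/4$ before running the induction; with either fix the same bookkeeping closes. (In the paper's only application all $\mu_i = \mu \leq 2^{-r}$, so this regime never arises, but the lemma as restated allows $\mu_i \in (0,1)$.) You should also say a word about truncating the conditional MGF integral at $|g_1| = \Theta(t_\nu/\rho)$ so that the inner $\Phi$ stays in the Mill's-ratio range; the discarded Gaussian tail at scale $t_\nu/\rho$ is negligible against $\prod_i\mu_i$ for $\rho$ in the stated range. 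Aside from these points, the proof sketch is sound.
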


Now we are ready to prove Theorem \ref{thm:ks-stab}.

\begin{proof}
	To begin with we claim that $\rho(\Omega_1,\ldots,\Omega_r;\bgamma) = \rho$ in the setting of the theorem. To see this, fix a $j \in [r]$. Then we claim that the leave-one-out correlation with respect to $j$ satisfies 
	\[
	\rho(\prod_{i \neq j} \Omega_i,\Omega_j;\bgamma) \leq \rho.
	\]
	This is because with probability at least $1 - \rho$, the variables $(x_1,\ldots,x_r) \sim (\Omega^r,\bgamma)$ are all independent (from Definition \ref{defn:dist}). Hence, we have 
	\begin{equation}		\label{eqn:corr-bound}
	\rho(\Omega_1,\ldots,\Omega_r;\bgamma) = \max_{j \in [r]} \rho\left(\prod_{i \neq j}\Omega_i,\Omega_j;\bgamma\right) \leq \rho.
	\end{equation}
	Furthermore, let $\kappa = \min_{\omega \in \Omega} \gamma(\omega)$. Then any event in $(\prod_{i \in[r]} \Omega_i,\bgamma)$ happens with probability at least $\alpha = \rho \kappa^r \leq 1/2$. Finally, in the setting of the theorem we have 
	\[
	\max_{i \in [r]} \Inf{i}{f} \leq \tau.
	\]
	where $\tau = \tau(\nu,r,\alpha)$ as in Theorem \ref{thm:ks-stab1}. Therefore, instantiating Theorem \ref{thm:ks-stab1} with $f$ over the probability space $(\Omega^R,\bgamma)$ we get that 
	\begin{equation}			\label{eqn:stab-1}
	\Ex_{(\bomega_1,\ldots,\bomega_R) \sim \bgamma^R} \left[\prod_{i \in [r]} f(\bomega_i)\right]
	\leq \Gamma_\rho\left(\Ex\left[f(\bomega_1) \right],\ldots,\Ex\left[f(\bomega_r)\right]\right) + \nu 			
	= \Gamma^{(r)}_\rho\left(\mu\right) + \nu.
	\end{equation}
	Finally, note that our choice of $\rho$ satisfies $\rho \leq 1/(2C' r^2 \log(1/\mu))$. Hence using Lemma \ref{lem:ks-stab1} we can further upper bound
	\[
	\Gamma^{(r)}_\rho(\mu) \leq 3\mu^r.
	\]
	Plugging in the above bound into \eqref{eqn:stab-1} completes the proof.
\end{proof}

\end{document}